\documentclass[11pt]{article}
\usepackage{leo}
\usepackage[margin=1in]{geometry}

\title{Discrete Optimal Transport: Rapid Convergence of Simulated Annealing Algorithms}

\author{
  Yuchen He\thanks{School of Computer Science, Shanghai Jiao Tong University, Shanghai, China. Email: \texttt{yuchen\_he@sjtu.edu.cn}} \and
  Tianhui Jiang\thanks{Zhiyuan College, Shanghai Jiao Tong University, Shanghai, China. Email: \texttt{shrimp2004@sjtu.edu.cn}.} \and
  Sihan Wang\thanks{Zhiyuan College, Shanghai Jiao Tong University, Shanghai, China. Email: \texttt{wangsihan\_leo@sjtu.edu.cn}.} \and
  Chihao Zhang\thanks{School of Computer Science, Shanghai Jiao Tong University, Shanghai, China. Email: \texttt{chihao@sjtu.edu.cn}.}
}

\date{May 7, 2026}

\begin{document}

\maketitle

\thispagestyle{empty}

\begin{abstract}
    We develop a discrete optimal transport framework for analyzing simulated annealing algorithms on finite state spaces. Building on the discrete Wasserstein metric introduced by Maas (J.~Funct.~Anal., 2011), we define a generalized discrete Wasserstein-2 distance and the associated notion of \emph{discrete action} for paths of probability measures on graphs. Using these tools, we establish non-asymptotic convergence guarantees for simulated annealing: the KL divergence between the algorithm's output and the target distribution is controlled by the discrete action of the annealing path. This can be viewed as the discrete counterpart of the action-based analysis of annealed Langevin dynamics in continuous spaces by Guo, Tao, and Chen (ICLR 2025).

    As applications, we analyze simulated annealing for two fundamental models in statistical physics. For the \emph{mean-field Ising model}, we show that annealed single-site Glauber dynamics achieves $\varepsilon$ error in KL divergence in $O(n^5\beta^2/\varepsilon)$ steps at \emph{any} inverse temperature $\beta \ge 0$. For the \emph{mean-field $q$-state Potts model}, we show that annealed $(q-1)$-block Glauber dynamics achieves $\varepsilon$ error in $\mathrm{poly}(n, \beta, 1/\varepsilon)$ steps for all $\beta \ge \beta_{\mathsf{s}}=q/2$, the regime where the disordered phase has completely lost stability. In both cases, the key technical contribution is a polynomial upper bound on the discrete action, obtained by exploiting the symmetry of the model to reduce the analysis to a low-dimensional projected chain.
\end{abstract}


\setcounter{tocdepth}{1}
\newpage

\setcounter{page}{1}

\section{Introduction}

Sampling from a complex target distribution is a fundamental computational task across theoretical computer science, machine learning, and statistical physics. When the target distribution $\pi \propto e^{f}$ on $\bb R^d$ is non-log-concave, simple Markov chain Monte Carlo (MCMC) methods often suffer from slow mixing due to multimodality. A classical remedy is \emph{simulated annealing}: one constructs a continuous path of distributions $\tp{\pi_t}_{t \in [0, T]}$ that interpolates between an easy-to-sample distribution $\pi_0$ (e.g., with a log-concave density proportional to $e^{f_0}$ for some concave $f_0$) and the target $\pi_T = \pi$, and then simulates a Markov process whose stationary distribution at each time $t$ is $\pi_t$.

Intuitively, if the annealing schedule evolves $\pi_t$ sufficiently slowly, the Markov process at each infinitesimal time window $(t, t+h)$ has enough time to approximately equilibrate to $\pi_t$, so that the final output is close to $\pi_T = \pi$. This intuition has been made precise in recent work. In particular, \cite{GTC25} established the first non-asymptotic convergence guarantee for annealed Langevin Monte Carlo, showing that the oracle complexity for sampling from a $\beta$-smooth distribution $\pi$ on $\bb R^d$ up to $\varepsilon^2$ accuracy in KL divergence scales as $\widetilde{O}\tp{d\beta^2 \+A^2 / \varepsilon^6}$. Here, the key quantity $\+A$ is the \emph{action} of the annealing path $\tp{\pi_t}_{t \in [0, T]}$, defined as the integral of the squared metric derivative with respect to the Wasserstein-2 distance:
\[
    \+A\tp{\tp{\pi_t}_{t \in [0, T]}} \defeq \int_0^T \abs{\dot{\pi}}_t^2 \, \dd t, \quad \text{where } \abs{\dot{\pi}}_t \defeq \lim_{h \to 0} \frac{W_2\tp{\pi_{t+h}, \pi_t}}{\abs{h}}.
\]
This result reveals that the convergence complexity of simulated annealing is governed by the \emph{speed} at which the annealing path traverses the Wasserstein space: the slower the path moves, the smaller the action, and the fewer oracle calls are required.

A natural question then arises: can this action-based framework be extended to \emph{discrete} state spaces, where the target distributions of interest include models from statistical physics such as the Ising model and the Potts model? In these settings, the Markov process is typically a Glauber dynamics rather than a Langevin diffusion, and the state space $\Omega$ is finite but combinatorially large.

The main obstacle is that the Wasserstein-2 distance $W_2$, which plays a central role in the continuous theory, relies heavily on the geometry of $\bb R^d$ via the Benamou--Brenier formula \cite{BB00}. In continuous spaces, this formula characterizes $W_2$ through a fluid mechanics variational problem involving the continuity equation. On discrete state spaces, however, there is no intrinsic gradient structure, and the classical Benamou--Brenier formulation does not directly apply. Consequently, defining a meaningful notion of action for annealing paths on discrete spaces requires first constructing an appropriate Wasserstein-like metric.

This challenge has a rich history in the optimal transport and gradient flow literature. The seminal work of Jordan, Kinderlehrer, and Otto \cite{JKO98} established that the Fokker--Planck equation on $\bb R^d$ can be interpreted as a gradient flow of the entropy functional in the Wasserstein space, revealing a deep connection between optimal transport and the evolution of probability measures. Extending this gradient flow perspective to discrete settings has been a major research direction. Maas \cite{Maas11} introduced a Wasserstein-like metric on the space of probability measures over finite graphs by formulating a discrete Benamou--Brenier problem, using the logarithmic mean to define edge mobilities that reflect the Markov chain structure. Independently, Mielke \cite{Mielke11} and Chow, Huang, Li, and Zhou \cite{CHLZ12} developed closely related frameworks from the perspectives of reaction--diffusion systems and discrete Fokker--Planck equations, respectively. In this work, we build upon and generalize Maas's discrete Wasserstein framework by replacing the specific logarithmic mean with a broader class of edge capacity functions $c_\rho$, and use this generalized metric to define the action of annealing paths on discrete state spaces, thereby extending the action-based convergence analysis of \cite{GTC25} to discrete sampling algorithms.

\subsection{The Discrete Wasserstein Metric and Action}

On continuous spaces, the Wasserstein-2 distance admits a direct definition via optimal couplings: $W_2(\mu, \nu) = \inf_{\gamma \in \Gamma(\mu, \nu)} \tp{\int \|x - y\|^2 \, \dd \gamma(x,y)}^{1/2}$. On a discrete state space $\Omega$, one could similarly define an optimal transport distance given a ground metric on $\Omega$. However, such a definition is not suited for our purpose: the resulting metric reflects only the \emph{static} geometry of the ground space $\Omega$, and has no connection to the \emph{dynamics} of the Markov chain that drives the sampling algorithm.

In continuous spaces, this difficulty is resolved by the Benamou--Brenier formula \cite{BB00}, which provides an alternative, dynamical characterization of $W_2$. Rather than seeking a static coupling between $\mu$ and $\nu$ (the Lagrangian viewpoint), the Benamou--Brenier formula adopts an Eulerian viewpoint: it describes mass transport through a time-dependent velocity field $v_t(x)$ at each location $x$, subject to the continuity equation $\partial_t \pi_t + \nabla \cdot (\pi_t v_t) = 0$, which enforces mass conservation. Among all velocity fields that transport $\mu$ to $\nu$, the optimal one minimizes the total kinetic energy $\int_0^1 \int_{\+X} \|v_t(x)\|^2 \pi_t(x) \, \dd x \, \dd t$. Since any rotational (curl) component of $v_t$ only circulates mass without contributing to net transport, it wastes energy. Therefore, the energy-minimizing velocity field must be curl-free, i.e., $v_t = \nabla \psi_t$ for some scalar potential $\psi_t$, yielding
\[
    W_2^2(\mu, \nu) = \inf_{(\pi_t, \psi_t)} \int_0^1 \int_{\+X} \|\nabla \psi_t(x)\|^2 \pi_t(x) \, \dd x \, \dd t,
\]
where the infimum ranges over all paths $(\pi_t)_{t \in [0,1]}$ connecting $\pi_0 = \mu$ to $\pi_1 = \nu$ and all scalar potentials $\psi_t$ satisfying the continuity equation $\partial_t \pi_t + \nabla \cdot (\pi_t \nabla \psi_t) = 0$. On discrete state spaces, however, there is no intrinsic gradient structure, and the classical Benamou--Brenier formulation does not directly apply.

Maas \cite{Maas11} introduced a natural analogue of the Benamou--Brenier formulation for Wasserstein distance in discrete spaces. The key observation is that on a graph $(\Omega, E)$, mass flows along \emph{edges} rather than through points. To discretize the Benamou-Brenier formula, one must therefore lift the point-level integrand $\|\nabla \psi_t(x)\|^2 \pi_t(x)$ to an edge-level quantity for each edge $\{x, y\} \in E$. The gradient $\nabla \psi_t(x)$ is naturally replaced by the finite difference $\psi_t(y) - \psi_t(x)$. However, the density weight $\pi_t(x)$, which in the continuous formula lives at a single point, must now be \emph{interpolated} from the endpoint values $\pi_t(x)$ and $\pi_t(y)$ to produce an edge-level mobility. This interpolation is precisely the role of the mean function $\theta$ in Maas's framework: given a reversible Markov kernel $K$ with stationary distribution $\pi$, the edge weight takes the form $\theta(\pi_t(x)/\pi(x), \pi_t(y)/\pi(y)) \cdot K(x,y)\pi(x)$. Maas chose $\theta$ to be the logarithmic mean, which ensures that the resulting metric is compatible with the entropy functional, enabling the interpretation of the Fokker--Planck equation as a gradient flow of the entropy, which is the main purpose of \cite{Maas11}.

In our application --- bounding the convergence of annealing algorithms --- the gradient flow structure is not required. What matters is that the edge capacity faithfully reflects the dynamics of the Markov chain used for sampling. We therefore generalize Maas's framework by replacing his specific edge weight with a general edge capacity function $c_\rho(x,y)$ that depends on the current distribution $\rho \in \+P_{>0}(\Omega)$. We require only mild regularity conditions on $c_\rho$ --- non-negativity, symmetry, uniform boundedness, connectivity of the induced graph, and continuity in $\rho$ --- and define the discrete Wasserstein-2 distance $W_2(\mu, \nu)$ as the infimum of the integrated kinetic energy $\frac{1}{2} \int_0^1 \sum_{x,y} (\psi_t(x) - \psi_t(y))^2 c_{\pi_t}(x,y) \, \dd t$ over all paths and admissible potentials satisfying the discrete continuity equation. Building on this metric, we define the action of an annealing path $(\pi_t)_{t \in [0,T]}$ as the integral of the squared metric derivative, $\+A((\pi_t)) = \int_0^T |\dot{\pi}|_t^2 \, \dd t$, in direct analogy with the continuous case.

Although our metric is defined for general edge capacity functions, we show that it retains important structural properties when the capacity arises from a reversible Markov chain. In particular, we establish discrete \emph{transport--variance} and \emph{transport--entropy} inequalities (\zcref{thm:transport-variance-inequalities-for-Wc2,thm:transport-entropy-inequality-for-Wc2}) for the metric $W_{c,2}$ associated with a fixed capacity $c$: if the Markov chain satisfies a Poincar\'{e} inequality or a modified log-Sobolev inequality, then the Wasserstein distance $W_{c,2}$ is controlled by the $\chi^2$-divergence or the KL divergence, respectively. By bridging the local $W_2$-geometry of the curve, captured by the metric derivative $|\dot{\pi}|_t$, with the transport geometry induced by $W_{c_{\pi_t},2}$, these results provide discrete analogues of the classical transport--information inequalities in continuous spaces (see, e.g., \cite{Vil09}), and yield quantitative upper bounds on the metric derivative $|\dot{\pi}|_t$ in terms of the mixing properties of the underlying Markov chain (\zcref{thm:functional-inequalites-implies-good-metric-derivative}).


\subsection{The Annealing Algorithm and Main Results}

We now describe the annealing algorithm in the discrete setting and state our main results. Let \(\pi\) be a target distribution on a finite state space $\Omega$, and let $\tp{\pi_s}_{s \in [0, 1]}$ be an annealing path with $\pi_1 = \pi$. The continuous-time annealing algorithm simulates a time-inhomogeneous Markov process with transition rate kernels $\tp{p_s}_{s \in [0, 1]}$, where each $p_s$ is reversible with respect to $\pi_s$. The algorithm runs over a time horizon $[0, T]$: at each time $t$, it uses the transition rate kernel $p_{t/T}$. Thus, a larger $T$ corresponds to a slower, more deliberate traversal of the annealing path. 

Intuitively, the algorithm attempts to track a "moving target" along the annealing path: it starts from an initial distribution \(\pi_0\) that is easy to sample (at least approximately), and evolves according to a Markov process whose instantaneous dynamics \(p_{t / T}\) are tuned to the corresponding target \(\pi_{t / T}\). When the evolution is sufficiently slow (i.e. \(T\) is sufficiently large), one expects the time marginals of the process to remain close to \(\pi_{t / T}\) for all \(t \in \stp{0, T}\). In particular, the final output distribution \(\pi^{\!{ALG}}\) should be close to the desired target \(\pi_1 = \pi\).

Our first main result establishes that the convergence error of this algorithm is controlled by the action of the annealing path. The following is an informal statement; see \zcref{thm:annealing-ub} for the precise version.

\begin{theorem}[Informal]\label{thm:annealing-ub-main}
    Let $\pi$ be the target distribution and let $\tp{\pi_s}_{s \in [0, 1]}$ be an annealing path with $\pi_1 = \pi$. Suppose the algorithm starts from $\pi_0$ and uses transition rate kernels $\tp{p_s}_{s \in [0, 1]}$, each reversible with respect to $\pi_s$, that vary Lipschitz-continuously in $s$ (i.e., the entry-wise ratios $p_{s'}(x,y)/p_s(x,y)$ remain close to $1$ when $|s'-s|$ is small). Then the output distribution $\pi^{\!{ALG}}$ after running for time $T$ satisfies
    \[
    \-{KL}\tp{\pi \| \pi^{\!{ALG}}} = O\tp{\frac{\+A\tp{\tp{\pi_s}_{s \in [0, 1]}}}{T}},
    \]
    where $\+A\tp{\tp{\pi_s}_{s \in [0, 1]}}$ is the discrete action of the annealing path.
\end{theorem}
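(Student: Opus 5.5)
We follow the Girsanov-type argument of \cite{GTC25}, adapted to continuous-time jump processes. Reparametrize time by $t \in [0,T]$ and write $\tilde\pi_t \defeq \pi_{t/T}$. Let $\mathbb{P}$ be the path measure of the algorithm: it starts from $\pi_0$ and jumps with rates $p_{t/T}$. We construct a \emph{reference} path measure $\mathbb{Q}$ whose time-$t$ marginal is exactly $\tilde\pi_t$ for every $t$. By the data-processing inequality,
\[
\-{KL}\tp{\pi \,\|\, \pi^{\!{ALG}}} \le \-{KL}\tp{\mathbb{Q}\,\|\,\mathbb{P}}.
\]
It then suffices to bound the path-space KL divergence. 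This is a sum of the initial KL divergence, which vanishes because both measures start at $\pi_0$, and an integrated rate-discrepancy term given by the standard Girsanov formula for Markov jump processes.

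\textbf{Constructing $\mathbb{Q}$.} By the definition of the metric derivative through the discrete Benamou--Brenier problem, for almost every $s$ there is a potential $\psi_s$ such that the discrete continuity equation
\[
\partial_s \pi_s(x) = \sum_y \tp{\psi_s(y)-\psi_s(x)}\, c_{\pi_s}(x,y)
\]
holds with $\frac12\sum_{x,y}(\psi_s(x)-\psi_s(y))^2 c_{\pi_s}(x,y) = |\dot\pi|_s^2$. The sign convention here is chosen to match the paper's. After time rescaling, the marginal flow satisfies $\partial_t\tilde\pi_t = \frac1T \partial_s\pi_s$. We therefore take the rates of $\mathbb{Q}$ to be
\[
q_t(x,y) = p_{t/T}(x,y) + \frac{1}{T}\cdot\frac{c_{\pi_{t/T}}(x,y)\,\tp{\psi(y)-\psi(x)}_+}{\tilde\pi_t(x)},
\]
or a symmetric variant that splits the drift multiplicatively as $p\cdot\exp(\cdot)$ so that the rates stay nonnegative. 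Reversibility of $p_{t/T}$ with respect to $\tilde\pi_t$ makes the $p$-part leave $\tilde\pi_t$ invariant. The added part produces exactly the net flux $\frac1T\partial_s\pi_s$. Hence the forward equation of $\mathbb{Q}$ is solved by $\tilde\pi_t$.

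\textbf{Bounding the Girsanov term.} For jump processes,
\[
\-{KL}(\mathbb{Q}\|\mathbb{P}) = \int_0^T \sum_{x}\tilde\pi_t(x)\sum_y p\,\phi\tp{q/p}\,\dd t, \qquad \phi(r)=r\log r - r + 1 .
\]
Using $\phi(r)\le (r-1)^2$ for $r$ near $1$, or $\phi(r) \le (r-1)^2/r$-type bounds in general, each summand is at most
\[
\frac{1}{T^2}\cdot\frac{c^2(\psi(y)-\psi(x))^2}{\tilde\pi_t(x)\,p(x,y)} .
\]
For this to be controlled by the kinetic energy we need a comparison $c_{\pi_s}(x,y) \lesssim \pi_s(x)\,p_s(x,y)$. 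This is where the structural hypothesis enters: the capacity must be dominated by the Markov chain's own edge flow, which holds for a mean-type $c_\rho$ built from $p_s$. Granting that comparison, the integrand is $O\tp{T^{-2}|\dot\pi|^2_{t/T}}$. Integrating over $[0,T]$ and substituting $s=t/T$ gives
\[
\-{KL}(\mathbb{Q}\|\mathbb{P}) = O\tp{\frac{1}{T}\int_0^1 |\dot\pi|_s^2\,\dd s} = O\tp{\+A/T}.
\]

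\textbf{Main obstacle.} The difficulty is the regularity needed to make this rigorous on discrete space. First, $\psi_s$ must be chosen measurably in $s$, which calls for a measurable selection from the Benamou--Brenier minimizers. Second, the drift ratio $q/p$ must stay bounded so that the quadratic bound on $\phi$ is valid. If $T$ is small relative to $\max|\nabla\psi|$ this fails, and I would handle it by using the exponential parametrization of $q$, whose KL cost is controlled by the convex-dual (cosh-type) action, or simply by stating the result for $T$ larger than a threshold. The Lipschitz hypothesis on $s\mapsto p_s$ is what I would use for this step. It lets me discretize $[0,T]$ into windows on which $p$ is frozen up to a $1\pm o(1)$ factor. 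Within each window the comparison above can be applied with constants, and the errors from varying $p$ inside a window contribute only lower-order terms.
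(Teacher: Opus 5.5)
Your continuous-time argument is sound and is essentially the paper's. You build a reference chain with marginals $\pi_{t/T}$ from the optimal potential (equivalently, flux), compare path measures with the discrete Girsanov formula, and then apply data processing and time rescaling. The only real difference is the choice of reference rates. You use the additive drift $q=p+T^{-1}c\,(\psi(y)-\psi(x))_+/\tilde\pi_t(x)$, while the paper optimizes edge by edge, taking $q/p=\sqrt{1+\rho^2/4}\pm\rho/2$ with $\rho=J/c$. That is the exponential split you allude to; its cost is at most $\rho^2/4$, which only improves your constant $1$ to $\frac14$.

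Most of the obstacles you list are not there:
\begin{itemize}
\item The inequality $r\log r-r+1\le (r-1)^2$ holds for every $r\ge 0$ (multiply $\log r\le r-1$ by $r$). So no bound on $q/p$ and no threshold on $T$ is needed. The ``$(r-1)^2/r$-type'' bound you mention is in fact false for large $r$.
\item The comparison $c\lesssim\pi p$ is not an extra structural hypothesis. The action in this theorem is computed with $c_s(x,y)=\pi_s(x)p_s(x,y)$, so your summand is at most $T^{-2}c\,(\psi(y)-\psi(x))_+^2$, and you get $\mathrm{KL}\le\mathcal{A}/T$ directly.
\item No measurable selection is needed, because the admissible potential is unique up to an additive constant.
\item The continuity equation you wrote has the opposite sign to your drift, which moves mass toward larger $\psi$; replace $\psi$ by $-\psi$.
\end{itemize}

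The genuine gap concerns the Lipschitz hypothesis. In the paper it plays no role in the continuous-time bound. It is needed because the algorithm the statement is about (made precise as the discretized Poisson-clock implementation) runs frozen kernels $p'_s=p_{\lceil sN\rceil/N}$, not $p_s$. What must be bounded is therefore $\mathrm{KL}(\mathbb{Q}\,\|\,\mathbb{P}')$, whose Girsanov integrand is $q\log(q/p')-q+p'$.

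The mismatch between $p'$ and $p$ is paid at every instant, even without drift. For $q=p$ and $p'=(1+\delta)p$ the integrand is $(\delta-\log(1+\delta))\sum_x\pi(x)\sum_{y\ne x}p(x,y)$, which is of order $\delta^2$, so the cost grows linearly in $T$.

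The paper controls this with $\Psi^\delta(r)=r\log r-(1-\delta)r+1+\delta$, which gives the bound $\frac{(1+\delta)\mathcal{A}}{4T}+4\delta T\int_0^1\sum_{E}c_s\,\mathrm{d}s$. It then uses $\sum_E c_s\le\frac12$ from the unit-rate clock, and hence needs $\delta=O(\varepsilon/T)$. This is exactly the local-stability condition $|p_{s'}/p_s-1|\le\varepsilon/(6T)$, and it is why $N$ must grow with $T$.

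Your assertion that freezing $p$ up to a factor $1\pm o(1)$ on windows contributes only lower-order terms is false unless that $o(1)$ is tied to $1/T$. Your proposal also never sets up the comparison with the frozen-kernel process.
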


The key message of \zcref{thm:annealing-ub-main} is that the convergence complexity of the annealing algorithm is characterized by a single geometric quantity: the discrete action of the annealing path. Whenever the action is polynomially bounded, the algorithm achieves any fixed target accuracy in polynomial time. This extends the action-based framework of \cite{GTC25} from continuous spaces to discrete state spaces, and transforms the analysis of annealing algorithms from a traditional layer-by-layer warm-start mixing argument into a purely geometric problem.

As concrete applications, we apply this framework to two fundamental models in statistical physics: the mean-field Ising model (\zcref{subsection:mean-field-Ising}) and the mean-field Potts model (\zcref{subsection:mean-field-Potts}). In both cases, the key technical contribution is an upper bound on the discrete action of the natural annealing path (the linear interpolation of inverse temperatures), which is obtained by exploiting the symmetry of the model to reduce the exponentially large state space to a polynomial-sized projected space via \zcref{lemma:reduction-via-symmetry}.

\begin{theorem}[Mean-field Ising; informal version of \zcref{thm:final-complexity-mean-field-Ising}]
    Let $\mu_\beta$ be the mean-field Ising model on $\Omega = \set{\pm 1}^n$ at any inverse temperature $\beta \ge 0$. The annealed single-site Glauber dynamics with a linear schedule achieves $\-{KL}\tp{\mu_\beta \| \pi^{\!{ALG}}} \le \eps$ in $O\tp{n^5 \beta^2 / \eps}$ expected steps.
\end{theorem}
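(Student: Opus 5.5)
Combining \zcref{thm:annealing-ub-main} with a polynomial bound on the discrete action of the linear schedule $\beta_s = s\beta$ gives the result, and that bound is the main task. The path is $\pi_s = \mu_{s\beta}$ on $\set{\pm1}^n$, with $p_s$ the single-site Glauber rate kernel reversible w.r.t.\ $\pi_s$. The starting point $\pi_0$ is uniform and can be sampled exactly. The Lipschitz hypothesis on $p_s$ is checked directly. A Glauber rate $p_s(x,x^{\oplus i})$ is a logistic function of $s\beta\cdot m_i(x)$, where $|m_i|\le 1$ is the local field. So $\log p_s$ is $O(\beta)$-Lipschitz in $s$, and the ratios $p_{s'}/p_s$ are $1+O(\beta|s'-s|)$. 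The theorem then gives $\-{KL}\le O(\+A/T)$, and the step count is $T$ times the total jump rate $n$. The target $O(n^5\beta^2/\eps)$ therefore means I must show $\+A = O(n^4\beta^2)$, up to the conventions for rates and normalization.

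To bound the action I use \zcref{lemma:reduction-via-symmetry}. The measure $\mu_\beta$ and the Glauber capacities are invariant under coordinate permutations, so the path and its metric derivative should be computable on the projected chain on magnetization levels $k\in\{0,\dots,n\}$, where $k$ is the number of $+1$ spins. This is a birth--death chain with projected measure $\nu_s(k)\propto \binom nk \exp\tp{\frac{s\beta}{2n}(2k-n)^2}$ and symmetric edge capacities between $k$ and $k+1$. On a path graph the continuity equation has a unique flux. Writing $F_s(k)=\sum_{j\le k}\partial_s\nu_s(j)$, the squared metric derivative is at most $\sum_k F_s(k)^2/c_{\nu_s}(k,k+1)$. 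Here $\partial_s\nu_s(j) = \frac{\beta}{2n}\,\nu_s(j)\tp{(2j-n)^2-\mathbb E_{\nu_s}(2J-n)^2}$, so each term is bounded by $\beta\cdot O(n)$ times $\nu_s$-mass. Hence $|F_s(k)|\le O(\beta n)\min\set{\nu_s([0,k]),\nu_s([k+1,n])}$.

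The main obstacle is the denominator, i.e.\ lower bounding $c_{\nu_s}(k,k+1)$ against $F_s(k)^2$ at low-probability levels. In the low-temperature regime $\nu_s$ is bimodal with an exponentially small valley near $k=n/2$. There $c$ is exponentially small, but so is $F_s(k)$ by symmetry: $\nu_s$ is symmetric under $k\mapsto n-k$, so the flux through the middle vanishes and, near the valley, $F_s$ is controlled by the tail mass on the far side. I would use that the capacity is comparable to $\min\set{\nu_s(k),\nu_s(k+1)}$ times the flip rate, which is $\Omega(1)$ times $(n-k)/n$ or $(k+1)/n$ depending on conventions. I would then apply a Hardy/Muckenhoupt-type estimate: for a log-concave-on-each-side profile, $\nu_s([0,k])^2/\nu_s(k)$ is $O(n)\cdot\nu_s([0,k])$ because tails decay geometrically away from the modes. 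This needs $\nu_s$ to be unimodal on each half, which follows from the explicit ratio $\nu_s(k+1)/\nu_s(k)$, and edge effects at $k$ near $0$ or $n$ cost polynomial factors.

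Putting the pieces together gives $|\dot\pi|_s^2 \le O(\beta^2 n^2)\cdot O(n^2)$ uniformly in $s$. Integrating over $s\in[0,1]$ yields $\+A=O(n^4\beta^2)$. Choosing $T=\Theta(n^4\beta^2/\eps)$ and multiplying by $n$ for the expected number of Glauber updates gives the claim. If the Hardy step is lossy, my fallback is \zcref{thm:functional-inequalites-implies-good-metric-derivative} applied to the projected chain restricted to each symmetric half.
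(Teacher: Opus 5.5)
Your proposal is correct and reaches the same $O(n^5\beta^2/\varepsilon)$ bound, but the action estimate takes a different route from the paper's.

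\textbf{The paper's route.} It folds the magnetization chain a second time via $m\mapsto |m|$, which is another application of \zcref{lemma:reduction-via-symmetry}. The folded birth--death chain is then unimodal by \zcref{lemma:landscape-of-projected-mean-field-Ising}. The paper proves a Poincar\'e constant $n^3$ for it by canonical paths; the congestion bound is exactly the monotonicity estimate $\sum_{x\le z}\bar{\bar\pi}_s(x)\le n\,\bar{\bar\pi}_s(z)$ on the lighter side of each edge. The transport--variance inequality (\zcref{thm:functional-inequalites-implies-good-metric-derivative}) then converts this into $|\dot\pi|_s^2\le n^3\lVert\partial_s\log\bar{\bar\pi}_s\rVert^2_{L^2(\bar{\bar\pi}_s)}\le n^5\beta^2/16$.

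\textbf{Your route.} You stay on the unfolded path graph, where the admissible flux is unique, and bound $\sum_k F_s(k)^2/c(k,k+1)$ directly by a Hardy-type estimate. The $k\mapsto n-k$ symmetry of $\partial_s\nu_s$ kills the flux at the center, so in the valley $F_s(k)$ is controlled by the mass between $k+1$ and the middle. This is the folding done by hand.

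\textbf{Comparison.} Your argument works on the dual (flux) side and bypasses both the canonical-path theorem and the transport--variance inequality; it is the strategy the paper reserves for Potts. The paper's route is more modular, since any Poincar\'e bound for the folded chain plugs in. For a birth--death chain the two are essentially the same estimate and give the same order.

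\textbf{Justifications to adjust.} None of these breaks the argument.
\begin{itemize}
\item The profile $\nu_s$ is not log-concave on each half. In the low-temperature regime the increment $\log\nu_s(k+1)-\log\nu_s(k)$ is increasing near $k=n/2$, so $\nu_s$ is log-convex in the valley. Its tails also need not decay at a uniform geometric rate.
\item What you actually need is only unimodality on each half. This does follow from the explicit ratio, but it requires an argument; the paper uses strict concavity of the log-ratio $f(m)$ to get a single sign change.
\item Unimodality gives $\nu_s([0,k])\le (k+1)\,\nu_s(k)$ on the increasing side and $\nu_s([k+1,\lfloor n/2\rfloor])\le n\,\nu_s(k+1)$ on the decreasing side, which is the $O(n)$ factor. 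As literally written, $\nu_s([0,k])^2/\nu_s(k)=O(n)\,\nu_s([0,k])$ fails on valley edges. There you must use the mass toward the center, as your symmetry remark indicates.
\item For even $n$ the flux on the two central edges is $\mp\tfrac12\partial_s\nu_s(n/2)$ rather than $0$. This is harmless.
\end{itemize}

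\textbf{Normalization.} In the paper's convention (a uniformly random site per unit-rate ring), the projected capacity satisfies $\bar c_s\ge\frac{1}{2n}\min\{\nu_s(k),\nu_s(k+1)\}$ uniformly in $k$ (\zcref{prop:capacity-lower-bound-for-projected-chain-of-mean-field-Ising}). This gives action $O(n^5\beta^2)$ and expected step count equal to $T$ itself. Your per-site-rate convention removes the $1/n$ and multiplies the step count by $n$, which is equivalent. There is no further loss at $k$ near $0$ or $n$, and any extra polynomial factor there would spoil the exponent $5$.
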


\begin{theorem}[Mean-field Potts; informal version of \zcref{thm:final-complexity-mean-field-Potts}]{\label{thm:final-complexity-mean-field-Potts-informal}}
    Let $\mu_\beta$ be the $q$-state mean-field Potts model on $\Omega = [q]^n$ at inverse temperature $\beta \ge \beta_{\mathsf{s}} = q/2$. The annealed $(q-1)$-block Glauber dynamics with a suitable schedule achieves $\-{KL}\tp{\mu_\beta \| \pi^{\!{ALG}}} \le \eps$ in $\-{poly}\tp{n, \beta, 1/\eps}$ expected steps.
\end{theorem}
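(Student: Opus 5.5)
The plan is to apply the general annealing bound (the precise version of \zcref{thm:annealing-ub-main}, i.e.\ \zcref{thm:annealing-ub}) to the mean-field Potts model. Take the path $\pi_s=\mu_{\beta(s)}$, with $\beta(s)$ increasing from $0$, where $\mu_0$ is uniform on $[q]^n$ and can be sampled exactly, up to the target $\beta$. The kernel $p_s$ is the $(q-1)$-block Glauber dynamics at $\beta(s)$. Once the action $\mathcal A$ is polynomial, running for time $T=O(\mathcal A/\varepsilon)$ gives $\mathrm{KL}\le\varepsilon$. The number of steps is $T$ times the jump rate, which is $\mathrm{poly}(n)$, plus the discretization or uniformization cost. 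For this we check the Lipschitz hypothesis: block-Glauber transition ratios $p_{s'}(x,y)/p_s(x,y)$ are $\exp(O(|\beta(s')-\beta(s)|\cdot\mathrm{poly}(n)))$, since block updates change the Hamiltonian by $O(n)$ per unit $\beta$.

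The main work is bounding $\mathcal A$. By \zcref{lemma:reduction-via-symmetry}, the metric derivative of the symmetric path $\mu_{\beta(s)}$ on $[q]^n$ is bounded by the metric derivative of its projection onto the proportion vector $\sigma\in\Delta_q\cap\frac1n\mathbb Z^q$. This projected space is polynomial in size ($O(n^{q-1})$ points), and the edge capacities are the projected block-Glauber rates. To bound $|\dot\pi|_s^2$ there, I would use \zcref{thm:functional-inequalites-implies-good-metric-derivative}. It says the squared metric derivative is at most roughly $C_{\mathrm{PI}}(s)\cdot\mathrm{Var}_{\pi_s}(\partial_s\log\pi_s)$. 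Here $\partial_s\log\pi_s=\beta'(s)\bigl(H-\mathbb E_{\pi_s}H\bigr)$, where $H=\frac n2\sum_i\sigma_i^2$, so the variance is at most $\beta'(s)^2\,O(n^2)$. The alternative is to construct the potential $\psi$ directly on the projected chain by solving a discrete continuity (Poisson-type) equation, and to bound its energy via paths.

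The obstacle is the spectral gap of the projected chain as $\beta(s)$ crosses the critical window. Between $\beta_{\mathsf u}$ and $\beta_{\mathsf s}=q/2$ the measure is bimodal-in-type (disordered versus ordered), and the gap can be exponentially small. I would therefore use a schedule that is not a single global Poincaré argument. The first option is to note that the path need not stay at equilibrium in a globally mixing way. We only need a \emph{potential} $\psi_s$ realizing $\partial_s\pi_s$. The mass moving as $\beta$ increases flows from the disordered state toward the $q$ ordered states, and near $\beta_{\mathsf s}$ the disordered phase is not a metastable well with an exponentially small barrier. I would try to construct the flow explicitly along the one-dimensional radial coordinate $m=\max_i\sigma_i$ (after symmetrization over colors) and bound the energy by the Dirichlet form of a 1D birth--death chain, where the Hardy/Muckenhoupt-type estimate gives the gap via $\sum_{m}\pi(\le m)\pi(\ge m)/(\pi(m)K(m,m+1))$. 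For $\beta\le\beta_{\mathsf s}$ I expect this to be polynomial in $n$ only if the schedule spends most time where the free energy is unimodal. Above $\beta_{\mathsf s}$ the chain restricted to symmetric (color-averaged) functions has polynomial gap, because the ordered phases are related by symmetry and the potential can be chosen symmetric. I expect this careful bookkeeping near $\beta_{\mathsf s}$, and the choice of schedule compensating with $\beta'(s)$, to be the hardest step; the resulting bound is $\mathcal A=\mathrm{poly}(n,\beta)$, which combined with the first step gives the theorem.
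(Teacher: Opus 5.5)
\textbf{The gap is the direction of your schedule.} You start at $\beta=0$ and increase to a target $\beta\ge\beta_{\mathsf{s}}$. For $q\ge 3$ this forces the path through the first-order transition at $\beta_c\in(\beta_u,\beta_{\mathsf{s}})$. There the action is exponentially large for \emph{every} admissible flux and every reparametrization $\beta(\cdot)$. So neither replacing the Poincar\'e bound by a directly constructed potential, nor spending most of the time where the landscape is unimodal, can rescue it.

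Concretely, for $\beta$ in a constant window $[\beta_c-\delta,\beta_c+\delta]$, both the disordered and the ordered wells are local minima of the free energy, separated by a barrier of order $n$. Let $A$ be the set of sorted magnetization vectors on the ordered side of the saddle, and $\partial A$ the polynomially many edges crossing it.
\begin{itemize}
\item With a unit-rate clock, $c_s(x,y)=\pi_s(x)p_s(x,y)\le\pi_s(x)\wedge\pi_s(y)$, so $\sum_{e\in\partial A}c_s(e)=e^{-\Omega(n)}$ throughout the window.
\item Meanwhile $\pi_s(A)$ rises from $e^{-\Omega(n)}$ to $1-e^{-\Omega(n)}$.
\item For any admissible flux, $\frac{d}{ds}\pi_s(A)$ is the net flux through $\partial A$, so by Cauchy--Schwarz
\[
\Bigl|\frac{d}{ds}\pi_s(A)\Bigr|^2\le\Bigl(\sum_{e\in\partial A}\frac{J_s(e)^2}{c_s(e)}\Bigr)\Bigl(\sum_{e\in\partial A}c_s(e)\Bigr)\le e^{-\Omega(n)}\,|\dot\pi|_s^2 .
\]
\item Integrating over the window and applying Cauchy--Schwarz in $s$ gives $\mathcal{A}\ge e^{\Omega(n)}$.
\end{itemize}
Your remark that the disordered phase is not metastable near $\beta_{\mathsf{s}}$ is true but beside the point. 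The mass is transferred at $\beta_c$, where both phases are metastable. By the time $\beta$ reaches $\beta_{\mathsf{s}}$, the disordered phase already carries exponentially small mass. This is the same obstruction behind the torpid mixing of tempering in \cite{BR16}, and it is not an artifact of the analysis.

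\textbf{The missing idea is to run the schedule the other way.} The paper starts at $\beta_0=n\log q+\log(6/\varepsilon)$. There $\mu_{\beta_0}$ is within KL divergence $\varepsilon/3$ of the explicit mixture $\frac{\varepsilon}{6}\mathrm{Unif}(\Omega)+(1-\frac{\varepsilon}{6})\mathrm{Unif}(\Omega_0)$, where $\Omega_0$ is the set of $q$ monochromatic configurations. It then decreases $\beta$ linearly to the target, so $\beta(s)\ge q/2$ throughout and the path never leaves the ordered phases. In that regime the sorted magnetization space has unimodality within a sector: every state connects to a common $\mathbf{m}^*$ by a path of length at most $2n$, along which $\bar\mu_\beta$ drops by at most a factor $e^{q-1}$. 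Since $|\partial_s\bar{\bar{\pi}}_s|\le|\beta'(s)|\,n\,\bar{\bar{\pi}}_s$, a greedy path decomposition of $\partial_s\bar{\bar{\pi}}_s$ puts flux at most $n^{O(q)}|\beta'(s)|\,(\bar{\bar{\pi}}_s(\mathbf{m})\wedge\bar{\bar{\pi}}_s(\mathbf{m}'))$ on each edge.

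Your plan to use a Poincar\'e inequality for symmetric functions above $\beta_{\mathsf{s}}$ also does not work with what can be proved here. The available capacity lower bound for the projected block chain is only $\bar{c}_s(\mathbf{m},\mathbf{m}')\ge n^{-(2q-2)}\bar\pi_s(\mathbf{m})\bar\pi_s(\mathbf{m}')$, which is quadratic in $\pi$. That makes canonical-path congestion of order $1/\pi$. It leaves the flux cost $J^2/c$ polynomial, because $J$ is linear in $\pi$. So your alternative, constructing the flux directly via paths, is the right tool, but only once the schedule stays above $\beta_{\mathsf{s}}$. It yields $\mathcal{A}\le n^{Cq}(\beta_0-\beta)^2$, and hence $T=2\mathcal{A}/\varepsilon$ expected steps.
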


\begin{remark}
    We adopt block updates to eliminate an $e^{O(\beta)}$ factor from the convergence rate, which becomes essential when $\beta$ is allowed to grow with $n$. On the other hand, if one assumes oracle access to initialize the algorithm from $\mu_{\beta_0}$ for some \emph{constant} $\beta_0 \ge \beta_{\mathsf{s}}$, then our action-based framework implies that the annealed single-site Glauber dynamics converges to any target distribution $\mu_\beta$ with $\beta \ge \beta_{\mathsf{s}}$ in $\-{poly}(n^q, 1/\eps)$ time. We also conjecture that the $n^q$ factor can be improved to $n^{O(1)}$ with more careful analysis. This corroborates the insight of \cite[Theorem~1.4]{blanca2025mean}, which establishes $O(n \log n)$ mixing of the single-site Glauber dynamics targeting $\mu_\beta$ when initialized from $\mu_{\beta_0}$, for any $\beta > \beta_{\mathsf{s}}$ and $\beta_0 \ge 0$. Notably, our result additionally covers the critical case $\beta = \beta_{\mathsf{s}}$, which is unaddressed in \cite{blanca2025mean}. We nevertheless formulate the main theorem in terms of block dynamics, as our primary goal is to highlight the geometric perspective for general \(\beta\) and to provide a fully implementable annealing algorithm.
\end{remark}


For the mean-field Ising model, the guarantee holds at all temperatures. For the mean-field Potts model, it covers the regime $\beta \ge \beta_{\mathsf{s}}= q/2$. In the mean-field Potts phase diagram, the spinodal (inverse) temperature $\beta_{\mathsf{s}}$ is the threshold beyond which the disordered phase completely loses metastability (see \zcref{fig:potts-phase-diagram}). Thus, our result covers the deep low-temperature regime where only the ordered phases are stable.

\paragraph{Comparison with previous results}
We briefly survey the prior landscape for sampling from mean-field spin models and position our contributions within it. The comparison is summarized in \zcref{table:ising-comparison} and \zcref{table:potts-comparison}.

\begin{table}[htbp]
    \centering
    \small
    \renewcommand{\arraystretch}{1.15}
    \begin{tabular}{@{}lccl@{}}
        \toprule
        \textbf{Algorithm} & \textbf{Mixing time} & \textbf{Regime} & \textbf{Ref.} \\
        \midrule
        \multirow{3}{*}{Glauber (local)} & $\Theta(n \log n)$ & $\beta < 1$ & \multirow{3}{*}{\cite{LLP10}} \\
        & $\Theta(n^{3/2})$ & $\beta = 1$ & \\
        & $e^{\Omega(n)}$ & $\beta > 1$ & \\
        \midrule
        \multirow{3}{*}{Swendsen--Wang$^\dagger$} & $\Theta(1)$ & $\beta \lesssim 1$ & \multirow{3}{*}{\cite{LNNP14}} \\
        & $\Theta(n^{1/4})$ & $\beta \approx 1$ & \\
        & $\Theta(\log n)$ & $\beta \gtrsim 1$ & \\
        \midrule
        Parallel tempering & gap $\Omega(n^{-9})$ & all $\beta > 0$ & \cite{MZ03} \\
        \midrule
        \textbf{Annealing (ours)} & $O(n^5 \beta^2 / \eps)^\ddagger$ & \textbf{all} $\beta \ge 0$ & Thm.~\ref{thm:final-complexity-mean-field-Ising} \\
        \bottomrule
    \end{tabular}
    \caption{Sampling algorithms for the mean-field Ising model. Mixing times for Glauber dynamics are in single-site updates. $^\dagger$Swendsen--Wang mixing times are in cluster sweeps, each costing $\Theta(n)$ work. $^\ddagger$Expected number of single-site transition steps.}
    \label{table:ising-comparison}
\end{table}

For the mean-field Ising model, sampling is known to be tractable at all temperatures via the Jerrum--Sinclair FPRAS \cite{JS93}, and the Swendsen--Wang algorithm achieves $\Theta(\log n)$ mixing even for $\beta > 1$ \cite{LNNP14}. Our $O(n^5 \beta^2 / \eps)$ bound is quantitatively weaker, but serves as a proof of concept for our framework: it demonstrates that the action-based approach yields explicit polynomial guarantees for local updates through a unified geometric analysis, without relying on model-specific tricks such as the ``subgraph world'' representation or cluster updates.

\begin{table}[htbp]
    \centering
    \small
    \renewcommand{\arraystretch}{1.15}
    \begin{tabular}{@{}lccl@{}}
        \toprule
        \textbf{Algorithm} & \textbf{Mixing time} & \textbf{Regime} & \textbf{Ref.} \\
        \midrule
        \multirow{2}{*}{Glauber (local)} & $\Theta(n \log n)$ & $\beta < \beta_u$ & \multirow{2}{*}{\cite{CDL+12}} \\
        & $e^{\Omega(n)}$ & $\beta > \beta_u$ & \\
        \midrule
        Glauber + warm start & $O(n \log n)$ & $\beta > \beta_u,\, \beta \neq \beta_c$ & \cite{blanca2025mean} \\
        \midrule
        \multirow{4}{*}{Swendsen--Wang$^\dagger$} & $\Theta\tp{1}$ & $\beta < \beta_u$ & \multirow{4}{*}{\cite{GSV19}} \\
        & $\Theta(n^{1 / 3})$ & $\beta = \beta_u$ & \\
        & $e^{\Omega(n)}$ & $\beta_u < \beta < \beta_{\mathsf{s}}$ & \\
        & $\Theta(\log n)$ & $\beta \ge \beta_{\mathsf{s}}$ & \\
        \midrule
        Tempering & $e^{n^{\Omega(1)}}$ & large $\beta$ & \cite{BR16} \\
        \midrule
        Tempering + entropy dampening & $\mathrm{poly}\tp{n}$ & all $\beta \ge 0$ & \cite{BR16} \\
        \midrule
        \textbf{Annealing (ours)} & $\mathrm{poly}(n^q, \beta, 1/\eps)^\ddagger$ & $\beta \ge \beta_{\mathsf{s}} = q/2$ & Thm.~\ref{thm:final-complexity-mean-field-Potts} \\
        \bottomrule
    \end{tabular}
    \caption{Sampling algorithms for the mean-field Potts model ($\beta_u < \beta_c < \beta_{\mathsf{s}}$). $^\dagger$Swendsen--Wang mixing times are in cluster sweeps. $^\ddagger$Expected number of block-update transition steps.}
    \label{table:potts-comparison}
\end{table}



The Potts landscape ($q \ge 3$) is significantly richer due to the first-order phase transition, and general Potts sampling is \#BIS-hard \cite{GJ12}, so no black-box tractability result is available. The Swendsen--Wang dynamics exhibits non-monotone behavior, requiring exponential time in the intermediate regime $\beta_u < \beta < \beta_{\mathsf{s}}$ \cite{GSV19}. Standard simulated tempering also fails: \cite{BR16} proved torpid mixing for the mean-field Potts model because any temperature schedule must cross the discontinuous phase transition. Recent work by \cite{blanca2025mean} achieves $O(n \log n)$ mixing via carefully tuned product-measure initializations, but requires prior knowledge of the saddle structure.

Our main contribution in the Potts setting is to show that the action-based perspective of \zcref{thm:annealing-ub-main} --- where convergence reduces to bounding a single geometric quantity, the discrete action --- leads to an effective algorithm once the annealing path is designed to respect the phase structure of the model. Concretely, we use a \emph{reversed schedule}: starting at a very low temperature \(\beta_0 = \Theta\tp{n}\), where the distribution concentrates on \(q\) monochromatic configurations and is easy to approximate, and then gradually increasing the temperature up to the target $\beta \ge \beta_{\mathsf{s}}$. Because the path remains within the ordered phases throughout, it never crosses the discontinuous phase transition; moreover, the unimodal landscape within each ordered phase ensures that the discrete action stays polynomially bounded. In contrast to prior approaches that require model-specific mixing arguments --- such as saddle-escape analysis or carefully tuned warm-start initializations --- our convergence guarantee follows directly from controlling the geometry of the annealing path, illustrating the generality of the action-based framework, and the analysis may find applications in other problems.

The phase structure of mean-field Potts model is illustrated in \zcref{fig:potts-phase-diagram}.

\begin{figure}[!ht]
    \centering
    \includegraphics[width=\textwidth]{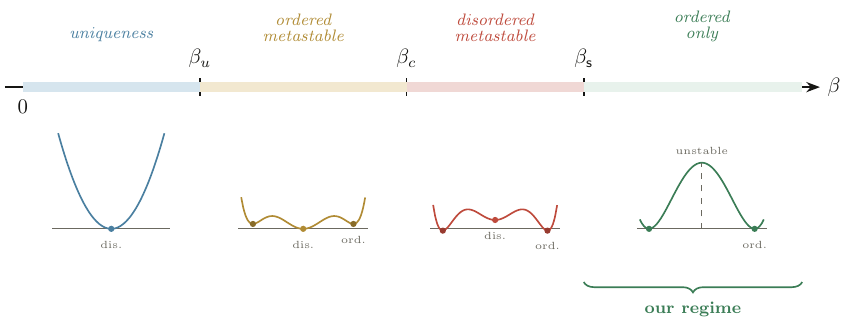}
    \caption{Phase diagram of the mean-field Potts model ($q \ge 3$). The top row shows four regimes separated by the thresholds $\beta_u < \beta_c < \beta_{\mathsf{s}} = q/2$. The bottom row schematically depicts the free energy landscape --- defined by the function $F_\beta(m) = -\frac{\beta}{2}\sum_i m_i^2 + \sum_i m_i \log m_i$, where the probability of observing the empirical color frequencies $\*m = (m_1, \ldots, m_q)$ scales as $\exp(-n F_\beta(m))$. The landscape is projected onto one dimension in each regime. ``Dis.'' and ``ord.'' mark the disordered and ordered equilibria, respectively. Our annealing algorithm operates in the regime $\beta \ge \beta_{\mathsf{s}}$ where the disordered phase has lost stability entirely.}
    \label{fig:potts-phase-diagram}
\end{figure}

\paragraph{Sketch of proofs}
The proofs of \zcref{thm:annealing-ub-main} and its applications proceed in three steps.

\emph{Step 1: From convergence to action.}
To prove \zcref{thm:annealing-ub-main}, we construct a \emph{reference Markov chain} whose time marginals are exactly the annealing path $(\pi_s)_{s \in [0,1]}$. We then compare this reference chain with the actual annealing algorithm via a \emph{discrete Girsanov theorem}, which bounds the KL divergence between the two chains' path measures. Optimizing over all possible reference chain constructions, we show that the cost of the best reference chain can be bounded by the discrete action $\+A((\pi_s))$. By the data processing inequality, the terminal error $\-{KL}(\pi \| \pi^{\!{ALG}})$ is at most this cost. Thus, bounding the convergence error reduces to bounding the action. Here we exploit a dual characterization of the metric derivative (\zcref{thm:metric-derivative-flux-formulation}): the squared speed $|\dot{\pi}|_s^2$ equals the minimum of $\sum_e J(e)^2/c_s(e)$ over all \emph{admissible fluxes} $J$ satisfying a discrete continuity equation, so bounding the action further reduces to constructing good fluxes.


\emph{Step 2: Symmetry reduction.}
Both mean-field models possess large symmetry groups: the Ising model is invariant under permutations of sites, and the Potts model is additionally invariant under permutations of colors. We show (\zcref{lemma:reduction-via-symmetry}) that the action is \emph{preserved} under symmetry-respecting projections. For Ising, the magnetization $M(\sigma) = \sum_i\sigma_i$ projects the exponentially large state space $\{\pm 1\}^n$ onto a one-dimensional chain with $O(n)$ states. For Potts, projecting onto the sorted magnetization vector reduces $[q]^n$ to a graph with $O(n^{q-1})$ vertices. In both cases, it suffices to construct admissible fluxes on the much smaller projected graph.

\emph{Step 3: Bounding the projected action.}
For the \emph{mean-field Ising model}, the projected chain is a one-dimensional birth-death chain that admits a polynomial \Poincare constant, which we establish via canonical paths. A \Poincare inequality directly implies a uniform upper bound on the metric derivative (\zcref{thm:functional-inequalites-implies-good-metric-derivative}), giving an action bound of $O(n^5\beta^2)$.

For the \emph{mean-field Potts model}, the projected chain does not have a polynomial \Poincare constant in general. We bypass the approach of bounding \Poincare constants entirely and instead use the dual flux characterization of action to directly construct a good admissible flux. Intuitively, bounding the \Poincare constant via canonical paths requires routing $\pi(x)\pi(y)$ units of flow between \emph{every} pair of states $(x,y)$, whereas the flux formulation only requires routing the specific mass change $\partial_s\pi_s$, which satisfies a continuity equation --- a much weaker requirement. By exploiting the unimodal landscape structure of the Potts distribution (\zcref{lemma:unimodality-within-one-sector-for-mean-field-Potts,lemma:flux-construction-via-path-decomposition}), we construct a flux whose cost is polynomially bounded, yielding a polynomial action bound.
\section{Preliminaries}\label{sec:preliminaries}
\subsection{Notations}
Throughout this paper, we consider a finite discrete state space $\Omega$. We define the probability simplex over $\Omega$ as
$$
    \+P(\Omega) = \left\{ p \in \mathbb{R}^{|\Omega|} \cmid p(x) \ge 0 \text{ for all } x \in \Omega, \sum_{x \in \Omega} p(x) = 1 \right\}.
$$
and define its positive interior as
$$    \+P_{>0}(\Omega) = \left\{ p \in \+P\tp{\Omega}: p(x) > 0 \text{ for all } x \in \Omega\right\}.
$$

For real-valued function \(f\), let \(\supp f\) denote its support, i.e., the set of points where \(f\) is nonzero.

For two distributions $\mu, \nu \in \+P\tp{\Omega}$, we say that $\mu$ is absolutely continuous with respect to $\nu$, denoted as $\mu \ll \nu$, if $\nu\tp{x} = 0$ implies $\mu\tp{x} = 0$ for all $x \in \Omega$. For $\mu \ll \nu$, the Kullback-Leibler (KL) divergence from $\nu$ to $\mu$ is defined as
\[
    \-{KL}\tp{\mu \| \nu} \defeq \sum_{x \in \Omega} \mu\tp{x} \log \frac{\mu\tp{x}}{\nu\tp{x}}.
\]

For a map $f: \Omega \to \Omega'$ and a signed measure $\mu$ on $\Omega$, the \emph{pushforward} of $\mu$ under $f$ is the signed measure $f_{\# \mu}$ on $\Omega'$ defined by $\tp{f_{\# \mu}}\tp{y} \defeq \sum_{x \in f^{-1}\tp{y}} \mu\tp{x}$ for all $y \in \Omega'$. In particular, if $\mu \in \+P\tp{\Omega}$, then $f_{\# \mu} \in \+P\tp{\Omega'}$.

We denote the Poisson distribution with rate $\lambda$ by $\!{Pois}\tp{\lambda}$. An \emph{inhomogeneous Poisson process} with time-varying rates $\tp{\lambda_t}_{t\geq 0}$ is a continuous-time counting process where the number of events occurring in any time interval $(s, r]$ follows $\!{Pois}\tp{\int_s^r \lambda_t \, \dd t}$. We also refer to the event times of this process as the rings of a time-varying \emph{Poisson clock} with rate $\tp{\lambda_t}_{t\geq 0}$.

For a right-continuous stochastic process $\tp{X_t}_{t \ge 0}$, we use $X_{t-}$ to denote the left limit of $X_t$ at time $t$, i.e., $X_{t-} \defeq \lim_{s \to t^-} X_s$. 

Let $\tp{M, d}$ be a metric space. A curve $\tp{\gamma_t}_{t \in [0, T]} \subseteq M$ is called \emph{absolutely continuous} if there exists a function $m \in L^1\tp{[0, T]}$ such that
\[
    d\tp{\gamma_s, \gamma_t} \le \int_s^t m\tp{r} \, \dd r, \quad \text{for all } 0 \le s \le t \le T.
\]
For an absolutely continuous curve, the \emph{metric derivative} $\abs{\dot{\gamma}}_t \defeq \lim_{h \to 0} d\tp{\gamma_{t+h}, \gamma_t} / \abs{h}$ exists for almost every $t \in [0, T]$ and satisfies $\abs{\dot{\gamma}}_t \le m\tp{t}$ a.e.; moreover, $\abs{\dot{\gamma}}$ is the smallest such $m$ in $L^1$ (see, e.g., \cite[Theorem 1.1.2]{AGS05}).

For an undirected, connected graph $G = \tp{\Omega, E}$, we denote an edge between $x$ and $y$ as $e = \set{x, y} \in E$, and write $x \sim y$ if $\set{x, y} \in E$.

\subsection{Markov Processes and Functional Inequalities}

In this paper, we study continuous-time inhomogeneous Markov processes on a discrete state space $\Omega$. Such a process $\tp{X_t}_{t \ge 0}$ is driven by a time-varying family of transition rate kernels, denoted by $\tp{p_t}_{t \ge 0}$. For any time $t$, the kernel $p_t \in \bb R^{\abs{\Omega} \times \abs{\Omega}}$ satisfies $p_t\tp{x, y} \ge 0$ for $x \neq y$, and the zero-row-sum condition $\sum_{y \in \Omega} p_t\tp{x, y} = 0$ for all $x \in \Omega$. 
We define the underlying graph of the transition kernel $p_t$ as $\tp{\Omega, E_t}$, where $E_t = \set{\set{x, y} \subseteq \Omega: p_t\tp{x, y} > 0 \text{ or } p_t\tp{y, x} > 0}$.

A time-inhomogeneous Markov process can also be described by transition rates $\tp{\lambda_t}_{t \ge 0}$ and transition matrices $\tp{P_t}_{t \ge 0}$. Suppose the process is at state $x$ at time $t$. Transition events are triggered by an inhomogeneous Poisson process with rate $\lambda_t$. Whenever the Poisson clock rings at time $t$, the process transitions to a target state $y$ sampled from the probability distribution $P_t\tp{x, \cdot}$.

Therefore, the transition rate kernel can be written as
\[
    p_t \defeq \lambda_t \tp{P_t - \!I_{\abs{\Omega}}},
\]
where $\!I_{\abs{\Omega}}$ is the identity matrix of size $\abs{\Omega} \times \abs{\Omega}$.
Note that the matrix $P_t$ may contain strictly positive diagonal entries $P_t\tp{x, x} > 0$. Consequently, while the Poisson clock triggers transition attempts at a total rate of $\lambda_t$, the true exit rate from state $x$ at time $t$ is $\sum_{y \neq x} p_t\tp{x, y} = \lambda_t\tp{1 - P_t\tp{x, x}} \le \lambda_t$.

The time evolution of the marginal distributions $\tp{\mu_t}_{t \ge 0}$ of the process is governed by the \emph{discrete Fokker--Planck equation} :
\begin{equation}\label{eq:discrete-FP}
    \partial_t \mu_t\tp{x} = \sum_{y \in \Omega} \mu_t\tp{y} p_t\tp{y, x} =\sum_{y \in \Omega} \tp{\mu_t\tp{y} p_t\tp{y, x} - \mu_t\tp{x} p_t\tp{x, y}} , \quad \forall x \in \Omega,
\end{equation}
where the second equality follows from the zero-row-sum condition. In matrix notation, this reads $\partial_t \mu_t = \mu_t \cdot p_t$, where $\mu_t$ is viewed as a row vector.
For a time-homogeneous Markov process, the transition rate kernel $p_t$ is independent of time, and we simply denote it by $p$. A probability distribution $\pi \in \+P\tp{\Omega}$ is called a stationary distribution for $p$ if $\sum_{x \in \Omega} \pi\tp{x} p\tp{x, y} = 0$ for all $y \in \Omega$. The transition kernel $p$ is said to be reversible with respect to $\pi$ if it satisfies the detailed balance condition:
\[
    \pi\tp{x} p\tp{x, y} = \pi\tp{y} p\tp{y, x}, \quad \text{for all } x, y \in \Omega.
\]
When $p$ is reversible with respect to $\pi$, we define the symmetric \emph{edge capacity} $c\tp{x, y} \defeq \pi\tp{x} p\tp{x, y}$ for any pair of states $x, y \in \Omega$.

For a real-valued function $f: \Omega \to \bb R$, the \emph{infinitesimal generator} associated with the transition kernel $p$ is the operator $\+L^p$ defined by
\[
    \+L^p f\tp{x} \defeq \sum_{y \in \Omega} p\tp{x, y} \tp{f\tp{y} - f\tp{x}}, \quad \forall x \in \Omega.
\]
The associated \emph{Dirichlet form} $\+E_p\tp{f, g}$ for two functions $f, g: \Omega \to \bb R$ is given by $\+E_p\tp{f, g} \defeq \inner{f}{-\+L^p g}_{\pi}$.

Functional inequalities, such as the \Poincare inequality and the modified log-Sobolev inequality (MLSI), are fundamental tools for bounding the mixing time of reversible Markov chains. The \Poincare constant $C_{\-{PI}}$ and the modified log-Sobolev constant $C_{\-{MLSI}}$ are defined as the optimal constants such that for all $f: \Omega \to \bb R$,
\[
    \Var[\pi]{f} \le C_{\-{PI}}\cdot \+E_p\tp{f, f},
\]
and for all $f: \Omega \to \bb R_{\geq 0}$,
\[
    \Ent[\pi]{f} \le C_{\-{MLSI}}\cdot \+E_p\tp{\log f, f},
\]
respectively, where $\Var[\pi]{f} \defeq \E[\pi]{f^2} - \tp{\E[\pi]{f}}^2$ is the variance, and $\Ent[\pi]{f} \defeq \E[\pi]{f \log f} - \E[\pi]{f} \log \E[\pi]{f}$ is the entropy.

\begin{proposition}
    Suppose that the transition kernel \(p\) is reversible with respect to \(\pi\). Then the Dirichlet form satisfies
    \[
    \+E_p\tp{f, g} = \frac{1}{2} \sum_{x, y \in \Omega} \tp{f\tp{y} - f\tp{x}} \tp{g\tp{y} - g\tp{x}} c\tp{x, y}.
    \]
    \label{prop:generator-and-dirichlet-form-of-discrete-chains}
\end{proposition}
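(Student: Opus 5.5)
We obtain the identity by expanding the definition $\+E_p(f,g) = \inner{f}{-\+L^p g}_{\pi}$ and then symmetrizing with detailed balance. Write
\[
\+E_p(f,g) = -\sum_{x} \pi(x) f(x) \sum_{y} p(x,y)\tp{g(y)-g(x)} = -\sum_{x,y} f(x)\tp{g(y)-g(x)} c(x,y),
\]
using $c(x,y) = \pi(x)p(x,y)$. The diagonal terms $x=y$ vanish because $g(y)-g(x)=0$ there. So the sum effectively runs over $x\neq y$, where $p(x,y)\ge 0$, and the diagonal entries of $p$ never enter.

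Next we relabel the summation variables $x \leftrightarrow y$. Detailed balance $\pi(x)p(x,y)=\pi(y)p(y,x)$ says exactly that $c$ is symmetric, $c(x,y)=c(y,x)$. Hence the same quantity also equals
\[
-\sum_{x,y} f(y)\tp{g(x)-g(y)} c(x,y) = \sum_{x,y} f(y)\tp{g(y)-g(x)} c(x,y).
\]
Averaging the two expressions gives
\[
\+E_p(f,g) = \frac12 \sum_{x,y}\tp{f(y)-f(x)}\tp{g(y)-g(x)} c(x,y),
\]
which is the claim.

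The only point needing care is the symmetrization step: relabeling is legitimate only because $c$ is symmetric, and that is where reversibility is used. Everything else is finite rearrangement of sums over the finite set $\Omega$, so there is no genuine obstacle.
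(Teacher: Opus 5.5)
Your proof is correct: expanding $\+E_p(f,g)=\inner{f}{-\+L^p g}_{\pi}$, discarding the diagonal terms (which vanish because $g(x)-g(x)=0$), and then symmetrizing via $c(x,y)=c(y,x)$ from detailed balance gives exactly the stated identity. The paper states this proposition without proof as a standard fact, and your computation is the standard verification it relies on.
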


\begin{proposition}
    Suppose that the transition kernel \(p\) is reversible with respect to \(\pi\). For any family of functions $\tp{f_t}_{t \ge 0}$ satisfying the heat equation $\partial_t f_t = \+L^p f_t$, we have
    \[
    \partial_t \Var[\pi]{f_t} = -2\+E_p\tp{f_t, f_t}.
    \]
    Furthermore, if $f_t > 0$, then
    \[
    \partial_t \Ent[\pi]{f_t} = -\+E_p\tp{\log f_t, f_t}.
    \]
    \label{prop:evolution-of-phi-divergence}
\end{proposition}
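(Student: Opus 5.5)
The plan is a direct differentiation of $\Var[\pi]{f_t}$ and $\Ent[\pi]{f_t}$ along the heat equation. The only inputs are the self-adjointness of $\+L^p$ in $L^2(\pi)$, which follows from reversibility, and the identification $\+E_p(f,g)=\inner{f}{-\+L^p g}_\pi$. Since $\Omega$ is finite, every quantity is a finite sum of smooth functions of $t$. Differentiation therefore commutes with summation, and no regularity issues arise.

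\emph{Variance.} First I will note that the mean is conserved:
\[
\partial_t \E[\pi]{f_t} = \sum_x \pi(x)\,\+L^p f_t(x) = \sum_{x,y}\pi(x)p(x,y)\bigl(f_t(y)-f_t(x)\bigr) = 0.
\]
This holds by the symmetry $c(x,y)=c(y,x)$ from detailed balance, or equivalently by $\inner{1}{\+L^p f}_\pi = \inner{\+L^p 1}{f}_\pi = 0$. Hence
\[
\partial_t \Var[\pi]{f_t} = \partial_t \E[\pi]{f_t^2} = 2\inner{f_t}{\+L^p f_t}_\pi = -2\+E_p(f_t,f_t),
\]
by the definition of the Dirichlet form. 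One could expand this into the edge-sum form of \zcref{prop:generator-and-dirichlet-form-of-discrete-chains}, but that is not needed.

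\emph{Entropy.} Assume $f_t>0$ and set $m=\E[\pi]{f_t}$, which is constant in $t$ by the computation above. Then $\partial_t\bigl(m\log m\bigr)=0$, and
\[
\partial_t \E[\pi]{f_t\log f_t} = \E[\pi]{(\log f_t + 1)\,\+L^p f_t} = \inner{\log f_t}{\+L^p f_t}_\pi + \inner{1}{\+L^p f_t}_\pi.
\]
The second term vanishes by mean conservation. It remains to check $\inner{\log f_t}{\+L^p f_t}_\pi = -\+E_p(\log f_t, f_t)$, which is exactly the definition $\+E_p(f,g)=\inner{f}{-\+L^p g}_\pi$ with $f=\log f_t$ and $g=f_t$. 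If one instead uses the symmetric edge formula, reversibility makes the two orderings agree.

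No step is a real obstacle. The only points needing care are (i) that reversibility gives $\pi$-stationarity, so the mean is conserved, and (ii) keeping the argument order in $\+E_p(\log f_t,f_t)$ consistent with the paper's definition of $\+E_p$. Positivity of $f_t$ ensures that $\log f_t$ is defined and smooth in $t$.
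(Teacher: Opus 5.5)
Your proof is correct: conservation of $\E[\pi]{f_t}$ (from $\pi$-stationarity, i.e.\ $\inner{1}{\+L^p f_t}_{\pi} = 0$) removes the mean terms, and both identities then follow directly from $\partial_t f_t = \+L^p f_t$ and the definition $\+E_p\tp{f, g} = \inner{f}{-\+L^p g}_{\pi}$. The paper states this proposition without proof as a standard fact, and your direct differentiation is the standard argument behind it.
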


\subsection{Canonical Paths}
The following theorem is a classical result in the analysis of Markov chains, see, e.g., \cite{LP17}, as a reference.
\begin{theorem}[\Poincare inequality via canonical paths]
    \label{thm:canonical-paths}
    Let $p$ be a reversible transition rate kernel on a finite graph $\tp{\Omega, E}$ with stationary distribution $\pi$ and edge capacity $c\tp{e} = \pi\tp{x}p\tp{x, y}$ for $e \defeq \set{x, y} \in E$. For each unordered pair of distinct states $\set{x, y} \subseteq \Omega$, let $\gamma_{xy}$ be a designated simple path connecting $x$ and $y$, and let $\abs{\gamma_{xy}}$ denote its length. Then, the \Poincare constant is bounded by the maximum congestion across all edges:
    \[
    \*{Var}_{\pi}\stp{f} \le \tp{\max_{e \in E} \frac{1}{c\tp{e}} \sum_{\substack{\set{x, y} \subseteq \Omega, \, x \neq y \\ e \in \gamma_{xy}}} \abs{\gamma_{xy}} \pi\tp{x}\pi\tp{y}} \+E_p\tp{f, f}, \quad \forall f: \Omega \to \bb R,
    \]
    where the sum is taken over all unordered pairs of distinct states whose canonical path $\gamma_{xy}$ traverses the edge $e$.
\end{theorem}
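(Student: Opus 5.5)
The plan is the standard canonical-paths argument: write the variance as a sum over unordered pairs of squared differences, expand each difference as a telescoping sum along $\gamma_{xy}$, apply Cauchy--Schwarz, and swap the order of summation to land on the Dirichlet form of \zcref{prop:generator-and-dirichlet-form-of-discrete-chains}.

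First we use the pair representation of the variance,
\[
\Var[\pi]{f} = \frac12\sum_{x,y\in\Omega}\pi(x)\pi(y)\tp{f(x)-f(y)}^2 = \sum_{\set{x,y}\subseteq\Omega,\,x\ne y}\pi(x)\pi(y)\tp{f(x)-f(y)}^2 ,
\]
which follows by expanding the square and using $\sum_x\pi(x)=1$; the diagonal terms vanish.

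Next, for a pair $\set{x,y}$, orient $\gamma_{xy}$ from $x$ to $y$ as $x=z_0,z_1,\dots,z_k=y$ with $k=\abs{\gamma_{xy}}$. Writing $\nabla f(e)=f(z_{i})-f(z_{i-1})$ for $e=\set{z_{i-1},z_i}$, the difference telescopes, $f(y)-f(x)=\sum_{e\in\gamma_{xy}}\nabla f(e)$ (with the sign fixed by the orientation). Cauchy--Schwarz then gives
\[
\tp{f(x)-f(y)}^2\le \abs{\gamma_{xy}}\sum_{e\in\gamma_{xy}}\tp{\nabla f(e)}^2 .
\]
The sign of $\nabla f(e)$ depends on the orientation, but its square does not, so this bound is well defined for unordered pairs.

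Substituting this into the variance and exchanging the order of summation, we get
\[
\Var[\pi]{f}\le\sum_{e\in E}\tp{\nabla f(e)}^2 \sum_{\set{x,y}:\,e\in\gamma_{xy}}\abs{\gamma_{xy}}\pi(x)\pi(y).
\]
Multiplying and dividing each edge term by $c(e)$ and bounding the ratio by its maximum over $e$ yields the congestion constant times $\sum_{e\in E}c(e)\tp{\nabla f(e)}^2$.

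Finally, by \zcref{prop:generator-and-dirichlet-form-of-discrete-chains}, $\+E_p(f,f)=\frac12\sum_{x,y}c(x,y)\tp{f(y)-f(x)}^2$. This double sum counts each unordered edge twice, so it equals $\sum_{e\in E}c(e)\tp{\nabla f(e)}^2$, which completes the proof.

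No step is genuinely hard. The only place requiring care is the factor-of-two bookkeeping between ordered and unordered pairs in the variance and between ordered and unordered edges in the Dirichlet form. Both conversions introduce the same factor $\frac12$, so the constants match exactly as stated.
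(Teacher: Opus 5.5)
Your proof is correct. The paper gives no proof of its own: it cites this as a classical result from \cite{LP17}, and your argument (pair representation of the variance, telescoping along $\gamma_{xy}$, Cauchy--Schwarz, exchange of summation, with correct factor-of-two bookkeeping for ordered versus unordered pairs and edges) is exactly the standard proof found there. The only implicit assumption is that $c(e)>0$ precisely on $E$, i.e.\ $p$ is supported on $E$; if $E$ were a proper subset of the support of $p$, your final ``equals'' would become ``$\le \mathcal{E}_p(f,f)$'', which still yields the claim.
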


\subsection{A Discrete Girsanov Theorem}

The Girsanov theorem is a standard tool for analyzing the error of sampling algorithms in continuous settings (e.g. see \cite[Sections~4.4,~6.2,~6.3]{Che26}). The Girsanov theorem in discrete spaces is also a fundamental result in stochastic analysis (see, e.g., \cite{Leo12,GLT23}). To tailor the result to our specific setting, we provide a self-contained proof in \zcref{sec:pf-of-Girsanov}.

\begin{theorem}\label{thm:discrete-Girsanov}
    Consider two time-inhomogeneous continuous-time Markov chains on the same finite state space \(\Omega\), with time-dependent transition rate kernels \(\tp{p_t}_{t \in \stp{0, T}}\) and \(\tp{q_t}_{t \in \stp{0, T}}\). Let \(\tp{\mu_t}_{t \in \stp{0, T}}\) and \(\tp{\nu_t}_{t \in \stp{0, T}}\) denote their respective time marginals. Assume that the two chains share the same underlying graph structure \(\tp{\Omega, E}\), i.e. both \(p_t\) and \(q_t\) are supported on \(E\). Further, assume that \(p_t\) is absolutely continuous with respect to \(q_t\) and the transition rates $\sum_{y \neq x} p_t(x, y)$ are uniformly bounded. Let \(\bb P^p\) and \(\bb P^q\) denote the corresponding path measures. Then
    \[
    \-{KL}\tp{\bb P^p \| \bb P^q} = \-{KL}\tp{\mu_0 \| \nu_0} + \int_0^T \sum_{x \in \Omega} \mu_t\tp{x} \sum_{y \neq x} \tp{p_t\tp{x, y} \log \frac{p_t\tp{x, y}}{q_t\tp{x, y}} - \tp{p_t\tp{x, y} - q_t\tp{x, y}}} \, \dd t.
    \]
\end{theorem}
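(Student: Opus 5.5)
The plan is to compute the Radon--Nikodym derivative $\dd\bb P^p/\dd\bb P^q$ explicitly on path space and take the expectation of its logarithm under $\bb P^p$. Since the rates are bounded, a path on $[0,T]$ almost surely has finitely many jumps. It is therefore described by its initial state $x_0$, the jump times $0<\tau_1<\dots<\tau_N\le T$, and the visited states $x_0\to x_1\to\dots\to x_N$. Write $r^p_t(x)\defeq\sum_{y\neq x}p_t(x,y)$ for the exit rate. The path measure $\bb P^p$ has density, with respect to the reference measure "counting measure on $(N,x_0,\dots,x_N)$ times Lebesgue measure on the jump times'',
\[
\mu_0(x_0)\prod_{k=1}^N p_{\tau_k}(x_{k-1},x_k)\exp\Bigl(-\int_0^T r^p_t(X_{t})\,\dd t\Bigr),
\]
and $\bb P^q$ has the analogous density. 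This is the standard construction: exponential holding times with time-varying rates, followed by jumps chosen proportionally to the rates. Absolute continuity $p_t\ll q_t$ on the common edge set $E$, together with $\mu_0\ll\nu_0$ (which we may assume, since otherwise both sides are $+\infty$), gives $\bb P^p\ll\bb P^q$, with
\[
\log\frac{\dd\bb P^p}{\dd\bb P^q}=\log\frac{\mu_0(X_0)}{\nu_0(X_0)}+\sum_{k}\log\frac{p_{\tau_k}(X_{\tau_k-},X_{\tau_k})}{q_{\tau_k}(X_{\tau_k-},X_{\tau_k})}-\int_0^T\bigl(r^p_t(X_t)-r^q_t(X_t)\bigr)\,\dd t.
\]

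Next we take $\bb E^{\bb P^p}$ term by term.
\begin{itemize}
\item The first term gives $\-{KL}(\mu_0\|\nu_0)$.
\item The third term gives $\int_0^T\sum_x\mu_t(x)\sum_{y\neq x}(p_t(x,y)-q_t(x,y))\,\dd t$ by Fubini, which is justified because the rates are bounded.
\item For the jump sum we use the compensator (Lévy system) identity: for a bounded measurable $F_t(x,y)$ vanishing on the diagonal,
\[
\bb E^{\bb P^p}\sum_{\text{jumps } \tau}F_\tau(X_{\tau-},X_\tau)=\int_0^T\sum_x\mu_t(x)\sum_{y\neq x}p_t(x,y)F_t(x,y)\,\dd t.
\]
Applied to $F=\log(p/q)$, this yields the entropy term.
\end{itemize}
Combining the three contributions gives the stated formula.

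The main obstacle is the rigor in the jump-sum step: the function $\log(p_t/q_t)$ may be unbounded, since $q_t$ can be small, and it equals $0\cdot\log 0$ where $p_t(x,y)=0$. I would handle this as follows.
\begin{itemize}
\item Define the integrand as $0$ whenever $p_t(x,y)=0$.
\item Prove the compensator identity first for bounded $F$. This follows by conditioning on $X_{t}$ over small time intervals; equivalently, $\sum_{\tau\le t}F-\int_0^t\sum_y p_s(X_s,y)F_s(X_s,y)\,\dd s$ is a martingale because jumps occur at rate $p_s(X_s,y)$ and $\bb E[N_T]<\infty$.
\item Extend to general $F$ by truncation, using monotone convergence separately on the positive and negative parts of $F$.
\end{itemize}
Note that the combined integrand $p\log(p/q)-p+q\ge0$, so the final identity holds in $[0,\infty]$ even if individual pieces diverge. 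If I wanted to avoid the martingale language, a fallback is to discretize time into steps of size $h$. In that setting we would use the chain rule for KL on the resulting discrete-time Markov chains, where each step has transition probabilities $\mathbf 1+hp_t$ versus $\mathbf 1+hq_t$. Expanding $\-{KL}$ of each one-step kernel to first order gives $h\sum_{y\ne x}(p\log\frac pq-p+q)+o(h)$, and we then pass $h\to0$ using lower semicontinuity together with the data-processing inequality.
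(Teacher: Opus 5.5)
Your proposal is correct and follows essentially the same route as the paper: write down the log-Radon--Nikodym derivative of the path measures (the paper cites Br\'emaud for it and splits off the initial law via the chain rule for KL), handle the drift term by Fubini, and convert the jump sum into an integral against the intensity $p_t(X_{t-},y)$ via the point-process compensator/martingale identity. Your truncation and monotone-convergence treatment of the possibly unbounded $\log(p_t/q_t)$ adds rigor that the paper's proof leaves implicit.
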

\section{Wasserstein Distance and Action in Discrete Space}
In this section, we introduce the notion of action and the associated Wasserstein metric for distributions on finite state spaces.
\subsection{The Wasserstein-2 Distance and Action in Continuous Space}

\label{subsection:W2-and-action-continuous}

To motivate our discrete construction, we briefly recall the continuous-space definitions and then describe Maas's discretization, which serves as the starting point for our generalization.

\paragraph{The Wasserstein-2 distance.}
Let $\mathcal{X} \subseteq \mathbb{R}^d$ be a continuous state space, and let $\+P\tp{\+X}$ denote the space of probability measures on $\mathcal{X}$ with finite second moments. The Wasserstein-2 distance between $\mu, \nu \in \+P\tp{\+X}$ is defined via optimal couplings:
$$    
    W_2(\mu, \nu) = \tp{\inf_{\gamma \in \Gamma(\mu, \nu)} \int_{\+X \times \+X} \|x - y\|^2 \, \dd \gamma(x,y)}^{1/2},
$$
where $\Gamma(\mu, \nu)$ is the set of all couplings of $\mu$ and $\nu$.

\paragraph{The Benamou--Brenier formula.}
As discussed in the introduction, the Benamou--Brenier formula \cite{BB00} provides an equivalent dynamical characterization via the continuity equation. The optimal velocity field for transporting $\mu$ to $\nu$ must be curl-free, hence of the form $\nabla \psi_t$ for a scalar potential $\psi_t$, yielding
\begin{equation}\label{eq:continuous-W2}
    W_2^2(\mu, \nu) = \inf_{\tp{\pi_t, \psi_t}_{t \in [0,1]}} \int_0^1 \int_{\+X} \|\nabla \psi_t(x)\|^2 \pi_t(x) \, \dd x \, \dd t,
\end{equation}
where the infimum is over all probability curves $(\pi_t)_{t \in [0,1]}$ with $\pi_0 = \mu$, $\pi_1 = \nu$, and all scalar potentials $\psi_t:\+X \to \bb R$ satisfying the continuity equation:
$$
    \partial_t \pi_t(x) + \nabla \cdot (\pi_t(x) \nabla \psi_t(x)) = 0.
$$

\paragraph{Metric derivative and action.}
For an absolutely continuous curve of probability measures $\{\pi_t\}_{t \in [0,1]}$ (see Section~\ref{sec:preliminaries} for the definition), the metric derivative at time $t$ is 
\[
    |\dot{\pi}|_t = \lim_{h \to 0} \frac{W_2(\pi_{t+h}, \pi_t)}{|h|}.
\]
The Benamou--Brenier formula implies the following pointwise characterization (see \cite[Theorem 8.3.1]{AGS05} or \cite[Theorem 5.14]{Fil15}):

\begin{lemma}\label{lem:metric-derivative-continuous}
For an absolutely continuous curve $\{\pi_t\}_{t \in [0,1]}$, its squared metric derivative is given by
$$
    |\dot{\pi}|_t^2 = \min_{\psi_t: \+X \to \bb R} \quad \int_{\+X} \|\nabla \psi_t(x)\|^2 \pi_t(x) \, \dd x,
$$
where the minimum is taken over all scalar potentials $\psi_t$ satisfying the continuity equation $\partial_t \pi_t + \nabla \cdot (\pi_t \nabla \psi_t) = 0$.
\end{lemma}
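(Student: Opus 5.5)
We prove the two inequalities $|\dot\pi|_t^2 \le \inf_\psi \int\|\nabla\psi\|^2\pi_t$ and $|\dot\pi|_t^2 \ge$ the value at a specific admissible $\psi_t$, working throughout with the Benamou--Brenier formula \eqref{eq:continuous-W2}. Beforehand we record its time-rescaled form: for any interval $[t,t+h]$ and any curve connecting $\pi_t$ to $\pi_{t+h}$ on that interval with potentials $\psi$ solving the continuity equation,
\[
    W_2^2(\pi_t,\pi_{t+h}) \le h\int_t^{t+h}\int_{\+X}\|\nabla\psi_s\|^2\pi_s\,\dd x\,\dd s .
\]
This follows by reparametrizing $[t,t+h]$ onto $[0,1]$: velocities scale by $h$, so kinetic energies scale by $h^2$, while $\dd s$ scales by $h$.

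\emph{Upper bound.} Let $\psi_s$ be any family of potentials satisfying the continuity equation along the curve itself, with $s\mapsto \int\|\nabla\psi_s\|^2\pi_s$ locally integrable. The rescaled inequality gives
\[
    W_2(\pi_t,\pi_{t+h})^2/h^2 \le \frac1h\int_t^{t+h}\|\nabla\psi_s\|^2_{L^2(\pi_s)}\,\dd s .
\]
Letting $h\to0$ and applying the Lebesgue differentiation theorem yields $|\dot\pi|_t^2\le \|\nabla\psi_t\|^2_{L^2(\pi_t)}$ for a.e.\ $t$. Taking the infimum over admissible $\psi_t$ gives one direction.

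\emph{Lower bound and attainment.} This is the substantive step, and it follows \cite[Thm.~8.3.1]{AGS05}. For an absolutely continuous curve we construct a velocity field $v_t$ with $\|v_t\|_{L^2(\pi_t)}\le|\dot\pi|_t$ a.e.\ that solves $\partial_t\pi_t+\nabla\cdot(\pi_tv_t)=0$ in the distributional sense. To do this, fix a test function $\varphi\in C_c^\infty$ and take optimal couplings $\gamma_h$ between $\pi_t$ and $\pi_{t+h}$. Then
\[
    \Bigl|\int\varphi\,\dd\pi_{t+h}-\int\varphi\,\dd\pi_t\Bigr| \le \int|\varphi(y)-\varphi(x)|\,\dd\gamma_h \approx \int|\nabla\varphi(x)\cdot(y-x)|\,\dd\gamma_h ,
\]
and Cauchy--Schwarz bounds the right-hand side by $\|\nabla\varphi\|_{L^2(\pi_t)}W_2(\pi_t,\pi_{t+h})$ up to $o(h)$. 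Consequently the linear functional $\nabla\varphi\mapsto \frac{\dd}{\dd t}\int\varphi\,\dd\pi_t$ is bounded by $|\dot\pi|_t$ on the subspace $\{\nabla\varphi\}\subset L^2(\pi_t;\mathbb R^d)$. By Riesz representation on the $L^2(\pi_t)$-closure of $\{\nabla\varphi:\varphi\in C_c^\infty\}$, there is a $v_t$ in this closure representing the functional with $\|v_t\|\le|\dot\pi|_t$; this $v_t$ solves the continuity equation. Being a limit of gradients, $v_t=\nabla\psi_t$ in the generalized sense, so its energy is at most $|\dot\pi|_t^2$. Combined with the upper bound, equality holds and the minimum is attained.

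\emph{Main obstacle.} The delicate part is measurability in $t$ and exchanging the derivative with the test-function pairing. For this we use a countable dense family of test functions and absolute continuity of the curve, which make $t\mapsto\int\varphi\,\dd\pi_t$ absolutely continuous. We also interpret ``$\nabla\psi_t$'' as an element of the tangent space $\overline{\{\nabla\varphi\}}^{L^2(\pi_t)}$. Minimality follows because any other solution $w_t$ differs from $v_t$ by a field that is $L^2(\pi_t)$-orthogonal to all gradients. Since $v_t$ lies in the closure of gradients, the Pythagorean identity gives $\|w_t\|^2=\|v_t\|^2+\|w_t-v_t\|^2\ge\|v_t\|^2$.
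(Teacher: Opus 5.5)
Your proposal is correct and is essentially the argument the paper relies on. The paper gives no proof of its own but cites \cite[Theorem 8.3.1]{AGS05}, whose coupling-plus-duality construction you follow for the lower bound, including minimality via orthogonality to the closure of gradients (which is also the right way to read ``$\nabla\psi_t$''). Your upper bound, via the rescaled Benamou--Brenier inequality and Lebesgue differentiation, is exactly how the paper proves the discrete analogue, Theorem~\ref{thm:metric-derivative-potential-formulation}. The one step to tidy is ``taking the infimum'' in the upper bound, since the exceptional null sets of uncountably many families need not have null union: apply the upper bound to your constructed $v$ itself and let the Pythagorean minimality handle every other competitor. Joint measurability of $v$ in $(t,x)$ is most easily secured, as in AGS, by doing the Riesz step once in the space--time space $L^2(\pi_t\otimes\dd t)$ rather than separately for each $t$.
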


The \emph{action} of the curve is then
$$    
    \+A\tp{\tp{\pi_t}_{t \in \stp{0, 1}}} \defeq \int_0^1 \abs{\dot{\pi}}_t^2 \, \dd t,
$$
which measures the total kinetic energy expended to transport $\pi_0$ to $\pi_1$ along the path. As shown in \cite{GTC25}, the action is the key geometric quantity controlling the convergence of annealing algorithms.

\paragraph{Maas's discretization.}
To extend this framework to a finite state space $\Omega$, one must discretize the Benamou--Brenier formula \eqref{eq:continuous-W2}. The central observation, due to Maas \cite{Maas11}, is that on a discrete graph, mass flows along \emph{edges} rather than through points. The gradient $\nabla \psi_t(x)$ is naturally replaced by the finite difference $\psi_t(y) - \psi_t(x)$ along an edge $\{x, y\}$. However, the point-level density weight $\pi_t(x)$ in the continuous integrand must be lifted to an edge-level quantity. Given a reversible Markov kernel $K$ with stationary distribution $\pi \in \+P_{>0}(\Omega)$, Maas defines the edge weight using a mean function $\theta$ applied to the endpoint densities:
\begin{equation}\label{eq:Maas-metric}
    W_{\theta}(\mu, \nu)^2 := \inf_{\tp{\pi_t, \psi_t}} \int_0^1 \frac{1}{2} \sum_{x, y \in \Omega} \tp{\psi_t\tp{x} - \psi_t\tp{y}}^2 \theta\tp{\frac{\pi_t\tp{x}}{\pi(x)}, \frac{\pi_t\tp{y}}{\pi(y)}} K(x,y)\pi(x) \, \dd t,
\end{equation}
subject to the discrete continuity equation:
\begin{equation*}\label{eq:Maas-CE}
    \partial_t \pi_t\tp{x} + \sum_{y \in \Omega} \tp{\psi_t\tp{y} - \psi_t\tp{x}}\theta \tp{\frac{\pi_t\tp{x}}{\pi(x)}, \frac{\pi_t\tp{y}}{\pi(y)}}K(x,y)\pi(x) = 0, \quad \forall x \in \Omega.
\end{equation*}
The choice of $\theta$ determines how the densities at the endpoints are interpolated. Maas chose the logarithmic mean $\theta(a, b) = (a - b) / (\log a - \log b)$, which ensures that the resulting metric is compatible with the entropy functional, enabling the interpretation of the Kolmogorov forward equation as a gradient flow of the entropy \cite{Maas11}. However, the logarithmic mean is not the only natural choice; different applications may require different interpolations.

\subsection{Our Generalized Discrete Framework}

\label{subsection:the-extension-to-discrete-space}

In our application --- bounding the convergence of annealing algorithms --- the gradient flow structure underlying Maas's choice of logarithmic mean is not essential. What we need is that the edge capacity faithfully reflects the dynamics of the Markov chain used for sampling. We therefore generalize the framework by replacing the specific edge weight in \eqref{eq:Maas-metric} with a general \emph{edge capacity function} $c_\rho:\Omega \times \Omega \to \bb R$ that depends on the current distribution $\rho \in \+P_{>0}(\Omega)$, subject to the following regularity conditions:
\begin{itemize}
    \item $c_{\rho}$ is non-negative and symmetric;
    \item $\sup_{\rho\in \+P_{>0}(\Omega)} \sum_{x,y \in \Omega} c_{\rho}(x,y) <\infty$;
    \item the underlying graph $\tp{\Omega, E_{\rho}=\set{(x,y) : c_{\rho}(x,y) > 0}}$ is connected;
    \item the mapping $\rho \mapsto c_{\rho}(x,y)$ is continuous with respect to the Euclidean metric on $\mathcal{P}_{>0}(\Omega)$.
\end{itemize}
This subsumes Maas's framework as a special case: setting $c_\rho(x, y) = \theta(\rho(x)/\pi(x), \rho(y)/\pi(y)) \cdot K(x,y)\pi(x)$ recovers \eqref{eq:Maas-metric}.

We now define the discrete Wasserstein-2 distance and action in terms of $c_\rho$.

\begin{definition}[Admissible potential]
    For a probability curve \(\tp{\pi_t}_{t \in \stp{0, T}} \subseteq \+P_{>0}(\Omega)\), we say a scalar function \(\psi_t: \Omega \to \bb R\) is an \emph{admissible potential} of \(\tp{\pi_t}_{t \in \stp{0, T}}\) at time \(t\), if it satisfies the discrete continuity equation:
    \[
    \partial_t \pi_t\tp{x} + \sum_{y \in \Omega} \tp{\psi_t\tp{y} - \psi_t\tp{x}} c_{\pi_t}\tp{x, y}= 0, \quad \forall x \in \Omega.
    \]
\end{definition}

\begin{definition}[Discrete Wasserstein-2 distance]\label{def:discrete-W2}
    For any two distributions $\mu,\nu \in \mathcal{P}_{>0}(\Omega)$, we define
    $$
        W_2(\mu, \nu) := \tp{\inf_{\tp{\pi_t, \psi_t}_{t \in [0,1]}} \int_0^1 \frac{1}{2} \sum_{x, y \in \Omega} \tp{\psi_t\tp{x} - \psi_t\tp{y}}^2 c_{\pi_t}\tp{x,y} \, \dd t}^{1/2},
    $$
    where the infimum is over all curves $(\pi_t)_{t \in [0,1]}$ in $\+P_{>0}(\Omega)$ with $\pi_0=\mu$, $\pi_1=\nu$, and all admissible potentials $(\psi_t)_{t \in [0,1]}$.
\end{definition}

We prove that $W_2$ is indeed a metric on $\+P_{>0}(\Omega)$ in \zcref{sec:pf-of-discrete-W2}.

\begin{remark}
While it is possible to define an extended metric on $\mathcal{P}(\Omega)$ as in \cite{Maas11}, we restrict our formulation to the strictly positive interior $\mathcal{P}_{>0}(\Omega)$, as our primary motivation is to define the action for bounding convergence rates, and the target distributions in this paper inherently have full support. We therefore bypass the discussion on metric completeness and related properties.
\end{remark}

For an absolutely continuous curve $\{\pi_t\}_{t \in [0,T]}$ with each $\pi_t \in \+P_{>0}(\Omega)$, the metric derivative at time $t$ is
$$
    \abs{\dot{\pi}}_t = \lim_{h \to 0} \frac{W_2(\pi_{t+h}, \pi_t)}{\abs{h}}.
$$

\begin{definition}[Action of a curve]
    For an absolutely continuous curve of probability measures $\{\pi_t\}_{t \in [0,T]}$ with each $\pi_t \in \+P_{>0}(\Omega)$, we define its \emph{action} as
    $$
        \+A\tp{\tp{\pi_t}_{t \in \stp{0, T}}} \defeq \int_0^T \abs{\dot{\pi}}_t^2 \, \dd t.
    $$
\end{definition}

Similar to the continuous case, we can characterize the metric derivative in terms of the admissible potentials. The proof of the following theorem is provided in \zcref{sec:pf-of-metric-derivative}.
\begin{theorem}
    For an absolutely continuous curve $\{\pi_t\}_{t \in [0,1]}$ with each $\pi_t \in \+P_{>0}(\Omega)$, its squared metric derivative is given by
    $$
        |\dot{\pi}|_t^2 = \min_{\text{admissible potentials } \psi_t} \quad \frac{1}{2} \sum_{x, y \in \Omega} \tp{\psi_t\tp{x} - \psi_t\tp{y}}^2 c_{\pi_t}\tp{x,y}
    $$
    for almost every $t \in [0,T]$. Consequently, the action of $\{\pi_t\}_{t \in [0,T]}$ can be expressed as
    $$
        \+A\tp{\tp{\pi_t}_{t \in \stp{0, T}}} = \int_0^T \min_{\text{admissible potentials } \psi_t} \quad \frac{1}{2} \sum_{x, y \in \Omega} \tp{\psi_t\tp{x} - \psi_t\tp{y}}^2 c_{\pi_t}\tp{x,y} \dd t.
    $$
    \label{thm:metric-derivative-potential-formulation}
\end{theorem}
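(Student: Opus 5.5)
We will prove matching upper and lower bounds on $|\dot\pi|_t$ at almost every $t$ where $\pi_t$ is differentiable (as a curve in $\bb R^{|\Omega|}$) and the metric derivative exists. For fixed $\rho\in\+P_{>0}(\Omega)$ we write $L_\rho$ for the weighted graph Laplacian, $(L_\rho\psi)(x)=\sum_y(\psi(x)-\psi(y))c_\rho(x,y)$, and $\|\psi\|_\rho^2=\frac12\sum_{x,y}(\psi(x)-\psi(y))^2c_\rho(x,y)=\langle\psi,L_\rho\psi\rangle$. With this notation the continuity equation reads $\partial_t\pi_t=L_{\pi_t}\psi_t$. Since $E_\rho$ is connected, $L_\rho$ is a bijection from functions modulo constants onto the mean-zero vectors. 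The vector $\partial_t\pi_t$ has mean zero, so an admissible potential exists and is unique up to constants. The minimum in the statement is therefore attained and equals $\langle \partial_t\pi_t, L_{\pi_t}^{+}\partial_t\pi_t\rangle=:\|\partial_t\pi_t\|_{-1,\pi_t}^2$. We will show $|\dot\pi|_t=\|\partial_t\pi_t\|_{-1,\pi_t}$ a.e.; the action formula then follows by integration.

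\emph{Upper bound.} We first establish local Euclidean comparability. On a neighborhood of $\pi_t$ whose closure lies in $\+P_{>0}(\Omega)$, continuity of $\rho\mapsto c_\rho$ and connectivity of each graph give that the smallest nonzero eigenvalue of $L_\rho$ is bounded below uniformly. This step needs care: continuity of $c_\rho$ only guarantees positivity of the edges present at $\pi_t$ in a neighborhood, but that already gives a connected subgraph, which suffices. Consequently $\|v\|_{-1,\rho}\le C|v|$ for mean-zero $v$. We then take the specific curve $(\pi_s)_{s\in[t,t+h]}$ with its unique admissible potentials and reparametrize to $[0,1]$. By Definition~\ref{def:discrete-W2} and time rescaling,
\[
W_2(\pi_t,\pi_{t+h})^2\le h\int_t^{t+h}\|\partial_s\pi_s\|_{-1,\pi_s}^2\,\dd s .
\]
Dividing by $h^2$ gives $|\dot\pi|_t\le\|\partial_t\pi_t\|_{-1,\pi_t}$ at Lebesgue points of $s\mapsto\|\partial_s\pi_s\|^2_{-1,\pi_s}$. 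An alternative is to use the straight segment $\pi_t+s(\pi_{t+h}-\pi_t)$, for which the $-1$-norm is continuous in $\rho$ by continuity of $L_\rho^+$ on mean-zero vectors; this avoids Lebesgue points but requires the a.e. differentiability of $\pi_t$. The Euclidean comparability also shows $W_2$ is locally Lipschitz with respect to the Euclidean metric, so absolutely continuous curves in $W_2$ relate to Euclidean ones. We will need the converse direction too.

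\emph{Lower bound; main obstacle.} Here we need that any near-optimal path from $\pi_t$ to $\pi_{t+h}$ stays near $\pi_t$ and costs at least roughly $\|\pi_{t+h}-\pi_t\|_{-1,\pi_t}^2$. We use a duality argument. Fix a test function $\phi$ and any admissible pair $(\rho_s,\psi_s)_{s\in[0,1]}$ joining $\pi_t$ to $\pi_{t+h}$. Then
\[
\langle\phi,\pi_{t+h}-\pi_t\rangle=\int_0^1\langle\phi,L_{\rho_s}\psi_s\rangle\,\dd s\le\int_0^1\|\phi\|_{\rho_s}\|\psi_s\|_{\rho_s}\,\dd s
\]
by Cauchy--Schwarz for the Dirichlet form. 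By the uniform bound $\sup_\rho\sum c_\rho<\infty$, $\|\phi\|_{\rho_s}$ is bounded globally. To get a sharp constant we need $\rho_s$ to stay close to $\pi_t$, so that $\|\phi\|_{\rho_s}\le(1+o(1))\|\phi\|_{\pi_t}$ by continuity. The same inequality with a global bound shows that the Euclidean distance $|\rho_s-\pi_t|$ is controlled by $\sqrt{\text{cost}}$, which is $O(h)$ since the cost is $O(h^2)$ by the upper bound. Hence near-optimal paths remain in a shrinking neighborhood. Taking the infimum gives $\langle\phi,\pi_{t+h}-\pi_t\rangle\le(1+o(1))\|\phi\|_{\pi_t}W_2(\pi_t,\pi_{t+h})$. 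We then divide by $h$, let $h\to0$, and choose $\phi=\psi_t$ the admissible potential, which yields $\|\psi_t\|_{\pi_t}^2\le\|\psi_t\|_{\pi_t}|\dot\pi|_t$.

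The delicate points are the localization step and the exchange of limits. In particular, this is where the uniform edge-sum bound and continuity of $c_\rho$ are essential. Absolute continuity of the curve also guarantees that $\pi_t$ is a.e. differentiable in the Euclidean sense, since $W_2$ dominates a multiple of the Euclidean distance via the duality bound above.
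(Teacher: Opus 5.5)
Your proposal is correct and follows essentially the same route as the paper. Both arguments obtain existence and uniqueness up to constants of the admissible potential from the kernel/image structure of $L_{\pi_t}$, and the upper bound by reparametrizing the curve itself on $[t,t+h]$ and invoking Lebesgue differentiation. For the lower bound, both test against $\phi$ and apply Cauchy--Schwarz for the Dirichlet form along a near-optimal path. They then localize that path using the bound $\|\mu-\nu\|_1\le\sqrt{2C_{\max}}\,W_2(\mu,\nu)$ together with continuity of $\rho\mapsto\|\phi\|_\rho$, and finish by choosing $\phi=\psi_t$.

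Your additional remarks are sound refinements of points the paper leaves implicit. The straight-segment variant of the upper bound works at every point of Euclidean differentiability, so it avoids needing local integrability of $\tau\mapsto\|\psi_\tau\|_{\pi_\tau}^2$ for Lebesgue points. Your observation that $W_2$-absolute continuity gives a.e.\ Euclidean differentiability is what guarantees that $\partial_t\pi_t$ exists in the first place.
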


\subsection{A Dual Flux Formulation}

While the potential formulation provides a rigorous definition of the metric derivative, a dual flux formulation is often more intuitive and easier to work with in practice.

\begin{definition}[Admissible flux]
    We say that \(J: \Omega \times \Omega \to \bb R\) is a flux on the graph \(\tp{\Omega, E}\) if it satisfies:
    \begin{itemize}
        \item \(J\tp{x, y} = 0\) whenever \(\set{x, y} \notin E\).
        \item \(J\) is anti-symmetric: \(J\tp{x, y} = -J\tp{y, x}\) for all \(x, y \in \Omega\).
    \end{itemize}
    Let \(\tp{\pi_t}_{t \in [0, T]} \subseteq \+P_{>0}(\Omega)\) be a curve of probability measures. We say that a flux \(J_t: \Omega \times \Omega \to \bb R\) is an admissible flux of \(\tp{\pi_t}_{t \in [0,1]}\) at time \(t\), if it satisfies the discrete continuity equation:
    \[
    \partial_t \pi_t + \divergence_{\-{d}} J_t = 0, \quad \forall x \in \Omega.
    \]
    where the discrete divergence operator on $\Omega$ is defined as
    \[
    \divergence_{\-{d}} J_t\tp{x} \defeq \sum_{y \in \Omega} J_t\tp{x, y}, \quad x \in \Omega.
    \]
\end{definition}
\begin{theorem}
    For an absolutely continuous curve $\{\pi_t\}_{t \in [0,1]}$ with each $\pi_t \in \+P_{>0}(\Omega)$, 
    \[
    \abs{\dot{\pi}}_t^2 = \min_{\text{admissible fluxes } J_t} \quad \frac{1}{2} \sum_{x, y \in \Omega} \frac{J_t\tp{x, y}^2}{c_{\pi_t}\tp{x, y}}.
    \]
    \label{thm:metric-derivative-flux-formulation}
\end{theorem}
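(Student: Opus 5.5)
We derive the flux formula from the potential formulation of \zcref{thm:metric-derivative-potential-formulation} by a finite-dimensional convex duality at a fixed time $t$. Fix $t$ with $\pi_t \in \+P_{>0}(\Omega)$ and write $c = c_{\pi_t}$, $\dot\pi = \partial_t \pi_t$. The constraint set on the flux side is the affine space of antisymmetric $J$ supported on $E_{\pi_t}$ with $\divergence_{\-d} J = -\dot\pi$. The objective is $\Phi(J) = \frac12\sum_{x,y} J(x,y)^2/c(x,y)$, with the convention $0/0=0$, and we restrict to $J$ vanishing off $E_{\pi_t}$. This is a strictly convex quadratic over a nonempty affine set: nonemptiness follows from connectivity of $(\Omega,E_{\pi_t})$ together with $\sum_x \dot\pi(x)=0$. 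Hence the minimum is attained, and the "min" in the statement is justified.

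\emph{Step 1 (potentials give fluxes).} Given an admissible potential $\psi$, set $J(x,y) \defeq (\psi(x)-\psi(y))\,c(x,y)$. This $J$ is antisymmetric, supported on $E_{\pi_t}$, and satisfies
\[
\divergence_{\-d}J(x) = \sum_y(\psi(x)-\psi(y))c(x,y) = -\dot\pi(x)
\]
by the continuity equation in the definition of admissible potential. Moreover $\Phi(J) = \frac12\sum(\psi(x)-\psi(y))^2c(x,y)$. Therefore $\min_J \Phi \le |\dot\pi|_t^2$.

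\emph{Step 2 (reverse inequality).} Let $J$ be any admissible flux and $\psi$ an admissible potential. Write $J = J_\psi + D$, where $J_\psi$ is the gradient flux from Step 1. Then $D$ is antisymmetric, supported on $E_{\pi_t}$, and divergence-free. Expanding gives
\[
\Phi(J) = \Phi(J_\psi) + \sum_{x,y}(\psi(x)-\psi(y))D(x,y) + \Phi(D).
\]
By antisymmetry of $D$, the cross term equals $2\sum_x \psi(x)\,\divergence_{\-d}D(x) = 0$. Hence $\Phi(J)\ge \Phi(J_\psi) = |\dot\pi|_t^2$, where the last equality holds for the minimizing $\psi$. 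This discrete Helmholtz orthogonality is the analogue of the curl-free argument in the continuous case.

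\emph{Step 3 (existence of an admissible potential).} Step 2 needs at least one admissible potential. Equivalently, the weighted graph Laplacian $L_c\psi(x)=\sum_y(\psi(x)-\psi(y))c(x,y)$ must have $-\dot\pi$ in its range. Since $L_c$ is symmetric with kernel equal to the constants (by connectivity), its range is exactly the set of zero-sum vectors, and $\dot\pi$ sums to zero. This holds wherever $\partial_t\pi_t$ exists, which is almost every $t$ by absolute continuity.

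The main obstacle I expect is bookkeeping rather than substance. The edges must be handled consistently: fluxes have to vanish where $c=0$, otherwise $\Phi=\infty$. The identity must also hold at exactly the a.e.\ set of times where \zcref{thm:metric-derivative-potential-formulation} applies, so the statement should be read "for almost every $t$".
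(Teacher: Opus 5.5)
Your argument is correct up to a sign slip. For the reverse inequality it takes a slightly different route from the paper.

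\textbf{The sign slip.} The paper's convention for an admissible potential is $\partial_t\pi_t(x)+\sum_y(\psi(y)-\psi(x))c_{\pi_t}(x,y)=0$. Under it, your flux $J(x,y)=(\psi(x)-\psi(y))c_{\pi_t}(x,y)$ satisfies $\sum_y J(x,y)=+\partial_t\pi_t(x)$. So it is not admissible. In Step 2 the difference $J-J_\psi$ would then have divergence $-2\partial_t\pi_t$ rather than $0$, and the orthogonality argument would not apply. Take $J_\psi(x,y)=(\psi(y)-\psi(x))c_{\pi_t}(x,y)$ instead, as the paper does. The cost is unchanged, and the cross term is the same expression up to sign, so it still vanishes. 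Step 3 is unaffected, since the range of $L_c$ is a subspace and contains $\partial_t\pi_t$ exactly when it contains $-\partial_t\pi_t$.

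\textbf{Comparison with the paper.} With that fixed, your Step 1 is exactly the paper's first direction. For the converse, the paper argues via KKT. A minimizing flux exists, and stationarity of the Lagrangian forces it to be a gradient flux, whose Lagrange multiplier then serves as an admissible potential of the same cost. You instead verify optimality of the gradient flux directly, through the discrete Helmholtz orthogonality $\Phi(J)=\Phi(J_\psi)+\Phi(J-J_\psi)$.

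\begin{itemize}
\item Your route needs no Lagrange multiplier theory.
\item It also yields uniqueness of the minimizer and an exact formula for the excess cost of any other admissible flux.
\item It sidesteps the bookkeeping in the stationarity condition. Differentiating in an antisymmetric pair actually gives $2J/c+\psi(x)-\psi(y)=0$, so the paper's multiplier is twice a potential.
\item The price is that you must supply existence of an admissible potential as input. Your Step 3 does this with the same range argument the paper gives in the proof of \zcref{thm:metric-derivative-potential-formulation}. In the paper's route, the multiplier provides it.
\end{itemize}

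Your remark that the identity should be read for almost every $t$, inherited from \zcref{thm:metric-derivative-potential-formulation}, is also correct.
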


\begin{proof}
    For any admissible potential \(\psi_t\), define the associated flux \(J_t\) as 
    \[
    J_t\tp{x, y} = \tp{\psi_t\tp{y} - \psi_t\tp{x}} c_{\pi_t}\tp{x, y}.
    \]
    Then \(J_t\) is anti-symmetric:
    \[
    J_t\tp{x, y} = -J_t\tp{y, x}, \quad \forall x, y \in \Omega,
    \]
    and satisfies the discrete continuity equation:
    \[
    \partial_t \pi_t\tp{x} +  \sum_{y \in \Omega} J_t\tp{x, y} = 0, \quad \forall x \in \Omega.
    \]
    Moreover, the objective values coincide:
    \[
    \frac{1}{2} \sum_{x, y \in \Omega} \frac{J_t\tp{x, y}^2}{c_{\pi_t}\tp{x, y}} = \frac{1}{2} \sum_{x, y \in \Omega} \tp{\psi_t\tp{y} - \psi_t\tp{x}}^2 c_{\pi_t}\tp{x, y}.
    \]
    Therefore, every admissible potential \(\psi_t\) induces an admissible flux \(J_t\) with the same cost. Taking the infimum over all admissible potentials yields
    \[
    \abs{\dot{\pi}}_t^2 = \min_{\text{admissible potential } \psi_t} \quad \frac{1}{2} \sum_{x, y \in \Omega} \tp{\psi_t\tp{y} - \psi_t\tp{x}}^2 c_{\pi_t}\tp{x, y} \ge \min_{\text{admissible flux } J_t} \quad \frac{1}{2} \sum_{x, y \in \Omega} \frac{J_t\tp{x, y}^2}{c_{\pi_t}\tp{x, y}}.
    \]

    Conversely, consider the optimization with respect to flux \(J_t\). Since the objective is a positive-definite quadratic function and the constraints are linear, there exists a unique optimal solution \(J_t^*\), and there exists Lagrange multipliers \(\psi_t^*\) such that the KKT conditions are satisfied. Define the Lagrangian as
    \[
    \+L\tp{J_t, \psi_t} \defeq \frac{1}{2} \sum_{x, y \in \Omega} \frac{J_t\tp{x, y}^2}{c_{\pi_t}\tp{x, y}} + \sum_{x \in \Omega} \psi_t\tp{x} \tp{\partial_t \pi_t\tp{x} +  \sum_{y \in \Omega} J_t\tp{x, y}}.
    \]
    Then the stationarity condition implies
    \[
    \frac{J_t^*\tp{x, y}}{c_{\pi_t}\tp{x, y}} + \psi_t^*\tp{x} - \psi_t^*\tp{y} = 0, \quad \forall \set{x, y} \in E.
    \]
    Equivalently, we have
    \[
    J_t^*\tp{x, y} = \tp{\psi_t^*\tp{y} - \psi_t^*\tp{x}} c_{\pi_t}\tp{x, y}, \quad \forall \set{x, y} \in E.
    \]
    Since \(J_t^*\) satisfies the continuity equation, \(\psi_t^*\) satisfies the potential formulation constraints:
    \[
    \partial_t \pi_t\tp{x} + \sum_{y \in \Omega} \tp{\psi_t\tp{y} - \psi_t\tp{x}} c_{\pi_t}\tp{x, y} = 0, \quad \forall x \in \Omega.
    \]
    Hence,
    \[
    \begin{aligned}
        \min_{\text{admissible fluxes } J_t} \quad \frac{1}{2} \sum_{x, y \in \Omega} \frac{J_t\tp{x, y}^2}{c_{\pi_t}\tp{x, y}} &= \frac{1}{2} \sum_{x, y \in \Omega} \frac{J_t^*\tp{x, y}^2}{c_{\pi_t}\tp{x, y}} \\
        &= \frac{1}{2} \sum_{x, y \in \Omega} \tp{\psi_t^*\tp{y} - \psi_t^*\tp{x}}^2 c_{\pi_t}\tp{x, y} \\
        &\ge \min_{\text{admissible potential } \psi_t} \quad \frac{1}{2} \sum_{x, y \in \Omega} \tp{\psi_t\tp{y} - \psi_t\tp{x}}^2 c_{\pi_t}\tp{x, y} \\
        &= \abs{\dot{\pi}}_t^2. \\
    \end{aligned}
    \]
    Together with the first inequality, this yields the desired equality. This completes the proof.
\end{proof}

\subsection{Transport--Variance and Transport--Entropy Inequalities}

\label{subsection:transport-variance-and-transport-entropy-inequalities}

In this subsection, we focus on a metric closely related to \(W_2\) obtained by fixing the capacity \(c\), denoted by \(W_{c, 2}\). We establish transport--variance and transport--entropy inequalities for this metric under suitable functional inequalities, which in turn yields upper bounds for the metric derivative introduced in the previous subsection.

Throughout this subsection, we assume that for each \(t \in \stp{0, T}\), the transition rate kernel \(p\) is reversible with respect to \(\pi\). We further define the capacity \(c\) as
\[
c\tp{x, y} \defeq \pi\tp{x} p\tp{x, y} = \pi\tp{y} p\tp{y, x}, \quad \set{x, y} \in E.
\]

\begin{definition}
    For the capacity \(c: E \to \bb R_{> 0}\), the induced Wasserstein distance \(W_{c, 2}\) is defined by
    \[
    W_{c, 2}\tp{\mu, \nu} \defeq \tp{\begin{aligned}
    \min_{\text{potential } \psi} \quad &\frac{1}{2} \sum_{x, y \in \Omega} \tp{\psi\tp{y} - \psi\tp{x}}^2 c\tp{x, y} \\
    \text{s.t.} \quad &\tp{\nu\tp{x} - \mu\tp{x}} + \sum_{y \in \Omega} \tp{\psi\tp{y} - \psi\tp{x}} c\tp{x, y} = 0, \quad \forall x \in \Omega \\
    \end{aligned}}^{\frac{1}{2}}.
    \]
\end{definition}

The Wasserstein distance associated with a fixed capacity also admits the following dual flux formulation. Its proof is omitted since it is nearly identical to that of \zcref{thm:metric-derivative-flux-formulation}.

\begin{theorem}
    \[
    W_{c, 2}\tp{\mu, \nu} = \tp{\begin{aligned}
    \min_{\text{flux } J} \quad &\frac{1}{2} \sum_{x, y \in \Omega} \frac{J\tp{x, y}^2}{c\tp{x, y}} \\
    \text{s.t.} \quad &\tp{\nu - \mu} + \divergence_{\-{d}} J = 0 \\
    \end{aligned}}^{\frac{1}{2}}.
    \]
\end{theorem}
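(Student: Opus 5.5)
This is a finite-dimensional convex duality statement, and I will follow the template of the proof of \zcref{thm:metric-derivative-flux-formulation}, with $\partial_t\pi_t$ replaced by $\nu-\mu$ and $c_{\pi_t}$ replaced by the fixed capacity $c$. Throughout, sums $\sum_{x,y}J(x,y)^2/c(x,y)$ run over edges $\{x,y\}\in E$. Fluxes are required to vanish off $E$, and on $E$ we have $c>0$, so every term is well defined.

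\textbf{Step 1: potential value $\ge$ flux value.} Take any feasible potential $\psi$, i.e.\ one with $(\nu-\mu)(x)+\sum_y(\psi(y)-\psi(x))c(x,y)=0$ for all $x$. Set $J(x,y)\defeq(\psi(y)-\psi(x))c(x,y)$. Then $J$ is antisymmetric because $c$ is symmetric, and it vanishes off $E$. The potential constraint reads exactly $(\nu-\mu)+\divergence_{\-d}J=0$. Its cost is
\[
\tfrac12\sum J^2/c=\tfrac12\sum(\psi(y)-\psi(x))^2c(x,y),
\]
which is the potential objective. Taking the infimum over $\psi$ gives the inequality in this direction.

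\textbf{Step 2: flux value $\ge$ potential value.} The flux problem is feasible. Since $\sum_x(\nu-\mu)(x)=0$, the right-hand side is orthogonal to constants, which is the kernel of the transpose of the divergence map restricted to the connected graph $(\Omega,E)$. Hence some antisymmetric $J$ supported on $E$ satisfies the constraint; concretely, one can route mass along a spanning tree. The objective is a strictly convex quadratic in the free variables $\{J(x,y)\}_{\{x,y\}\in E}$ (one variable per edge), and the constraints are linear. Therefore a unique minimizer $J^*$ exists, and KKT holds with a multiplier $\psi^*\colon\Omega\to\bb R$.

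Take the Lagrangian
\[
\tfrac12\sum J^2/c+\sum_x\psi(x)\Bigl((\nu-\mu)(x)+\sum_yJ(x,y)\Bigr),
\]
and differentiate in the edge variable $J(x,y)=-J(y,x)$. This yields $J^*(x,y)=(\psi^*(y)-\psi^*(x))c(x,y)$ on $E$. Substituting into the constraint shows $\psi^*$ is a feasible potential, and its potential cost equals the flux cost of $J^*$. Hence the flux minimum is at least the potential minimum.

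\textbf{Step 3: attainment and the square root.} Step 2 also shows the potential minimum is attained, at $\psi^*$. This justifies writing $\min$ in the definition of $W_{c,2}$, and taking square roots gives the stated identity. The same existence argument shows the potential problem is feasible: by connectivity the Laplacian's range is exactly the mean-zero functions.

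The only point needing care is the existence of the KKT multiplier. I would justify it either by noting that linear equality constraints always satisfy the constraint qualification, or by directly observing that the optimality condition says $J^*/c$ lies in the range of the discrete gradient. The latter holds because $J^*$ must be orthogonal, in the weighted inner product, to divergence-free fluxes.
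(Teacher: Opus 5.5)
Your proposal is correct and follows essentially the paper's route. The paper omits this proof as ``nearly identical'' to that of the flux formulation of the metric derivative: a potential induces a flux of equal cost for one inequality, and a Lagrange-multiplier/KKT argument turns the optimal flux into a potential for the other. You do exactly this, with the feasibility and attainment details made explicit. One cosmetic point: differentiating $\frac{1}{2}\sum_{x,y}J^2/c$ in the single edge variable gives $2J(x,y)/c(x,y)$, so stationarity actually yields $J^*(x,y)=\frac{1}{2}(\psi^*(y)-\psi^*(x))c(x,y)$, and you should take $\psi^*/2$ as the potential; this is harmless because the multiplier can be rescaled.
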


By the flux formulation, we can verify that \(W_{c, 2}\) is indeed a metric on \(\+P(\Omega)\). A detailed proof is provided in \zcref{subsection:appendix-transport-inequalities}.

\begin{theorem}
    \(W_{c, 2}\) is a metric on \(\+P(\Omega)\).
    \label{thm:Wc2-is-a-metric}
\end{theorem}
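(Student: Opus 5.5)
Throughout, I view $W_{c,2}$ through the dual flux formulation. Denote by $\mathcal{J}(\mu,\nu)$ the set of fluxes $J$ on $(\Omega,E)$ with $\nu-\mu+\divergence_{\-d} J=0$, and set $\|J\|_c^2 \defeq \frac12\sum_{x,y} J(x,y)^2/c(x,y)$. Then $W_{c,2}(\mu,\nu)^2=\min_{J\in\mathcal{J}(\mu,\nu)}\|J\|_c^2$, and $\|\cdot\|_c$ is a Euclidean (weighted $\ell^2$) norm on the finite-dimensional space of antisymmetric functions supported on $E$. So the strategy is to recognize $W_{c,2}(\mu,\nu)$ as the quotient norm of $\nu-\mu$ under the linear map $J\mapsto -\divergence_{\-d} J$. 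The metric axioms then follow from standard properties of quotient norms, once I check that this map is onto the right space.

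\textbf{Step 1: finiteness.} I need $\mathcal{J}(\mu,\nu)\neq\emptyset$ for all $\mu,\nu\in\+P(\Omega)$. The vector $\nu-\mu$ has total mass zero. The range of $\divergence_{\-d}$ is the orthogonal complement of the kernel of its adjoint, which maps $\psi$ to $(\psi(x)-\psi(y))_{\{x,y\}\in E}$. Since $(\Omega,E)$ is connected, this kernel consists of the constants, so the range is exactly the mean-zero vectors. Alternatively, route the mass along paths of a spanning tree. In either case a feasible $J$ exists, and the minimum is attained because the objective is a coercive quadratic on an affine subspace.

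\textbf{Step 2: symmetry and identity of indiscernibles.} If $J\in\mathcal{J}(\mu,\nu)$, then $-J\in\mathcal{J}(\nu,\mu)$ with the same cost, so the distance is symmetric. Taking $J=0$ shows $W_{c,2}(\mu,\mu)=0$. Conversely, if $W_{c,2}(\mu,\nu)=0$, the optimal $J$ satisfies $\sum J^2/c=0$. Because $c>0$ on $E$ and $J$ vanishes off $E$, this forces $J\equiv0$, and hence $\nu-\mu=-\divergence_{\-d} J=0$.

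\textbf{Step 3: triangle inequality.} Take optimal fluxes $J_1\in\mathcal{J}(\mu,\rho)$ and $J_2\in\mathcal{J}(\rho,\nu)$. Adding the two continuity equations shows $J_1+J_2\in\mathcal{J}(\mu,\nu)$. By Minkowski's inequality for the weighted $\ell^2$ norm,
\[
W_{c,2}(\mu,\nu)\le\|J_1+J_2\|_c\le\|J_1\|_c+\|J_2\|_c = W_{c,2}(\mu,\rho)+W_{c,2}(\rho,\nu).
\]

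\textbf{Main obstacle.} The only step that is not purely formal is Step 1, namely surjectivity of the divergence onto mean-zero measures, which relies on connectivity of $E$. It also matters that the flux formulation, which is what makes the triangle inequality linear, is proved by the same KKT argument as \zcref{thm:metric-derivative-flux-formulation}. One point I will state explicitly: the metric holds on all of $\+P(\Omega)$, not just $\+P_{>0}(\Omega)$, because the capacity $c$ is fixed and does not depend on $\mu$ or $\nu$.
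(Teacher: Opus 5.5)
Your proposal is correct and follows the paper's argument. The paper proves the triangle inequality exactly as in your Step 3: sum the two optimal fluxes to get an admissible flux between $\mu^{(1)}$ and $\mu^{(3)}$, then apply Minkowski's inequality for the weighted norm $\|\cdot\|_c$. Your Steps 1 and 2 (finiteness via connectivity of $E$, symmetry via $-J$, identity of indiscernibles via $c>0$ on $E$) fill in details that the paper dismisses as straightforward.
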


We then establish transport--variance and transport--entropy inequalities for \(W_{c, 2}\) under suitable functional inequalities. These inequalities, though may be of independent interest, are primarily used as tools to derive upper bounds for the metric derivative introduced in \zcref{subsection:the-extension-to-discrete-space}. Therefore, we defer the proofs to \zcref{subsection:appendix-transport-inequalities} and only state the results here.

\begin{theorem}
    Suppose that the transition rate kernel \(p\) is reversible with respect to \(\pi\). If \(p\) satisfies a \(C_{\-{PI}}\)-\Poincare inequality:
    \[
    \*{Var}_{\pi}\stp{f} \le C_{\-{PI}}\cdot \+E_p\tp{f, f}, \quad \forall f: \Omega \to \bb R,
    \]
    then \(W_{c, 2}\) satisfies the following transport--variance inequality:
    \[
    W_{c, 2}^2\tp{\mu, \pi} \le C_{\-{PI}} \cdot \chi^2\tp{\mu \| \pi}.
    \]
    \label{thm:transport-variance-inequalities-for-Wc2}
\end{theorem}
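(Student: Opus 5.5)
The plan is to work directly with the potential formulation of \(W_{c,2}\) and to recognize the constraint as a Poisson equation for the generator \(\+L^p\). With \(c(x,y)=\pi(x)p(x,y)\), the constraint reads
\[
\mu(x)-\nu(x)=\sum_{y}\tp{\psi(y)-\psi(x)}c(x,y)=\pi(x)\,\+L^p\psi(x).
\]
Taking \(\nu=\pi\) and writing \(f\defeq \mu/\pi - 1\), the constraint becomes \(\+L^p\psi = f\). Here \(\E[\pi]{f}=0\), so the equation is solvable: the graph is connected, hence \(\ker \+L^p\) consists of constants, and the range of \(\+L^p\) is the \(\pi\)-orthogonal complement of the constants. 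Since the objective depends only on differences of \(\psi\), the solution is unique up to an additive constant and the minimum is attained there. By \zcref{prop:generator-and-dirichlet-form-of-discrete-chains},
\[
W_{c,2}^2(\mu,\pi)=\+E_p(\psi,\psi)=\inner{\psi}{-\+L^p\psi}_\pi=-\inner{\psi}{f}_\pi .
\]

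Next I bound this quantity by Cauchy--Schwarz combined with the \Poincare inequality. Normalize \(\psi\) so that \(\E[\pi]{\psi}=0\). Since \(f\) is mean zero,
\[
\+E_p(\psi,\psi)=\abs{\inner{\psi}{f}_\pi}\le \sqrt{\Var[\pi]{\psi}}\,\sqrt{\Var[\pi]{f}}\le \sqrt{C_{\-{PI}}\,\+E_p(\psi,\psi)}\,\sqrt{\chi^2(\mu\|\pi)},
\]
using \(\Var[\pi]{f}=\E[\pi]{(\mu/\pi-1)^2}=\chi^2(\mu\|\pi)\). Dividing by \(\sqrt{\+E_p(\psi,\psi)}\) and squaring yields \(\+E_p(\psi,\psi)\le C_{\-{PI}}\chi^2(\mu\|\pi)\), which is the claim. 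If \(\+E_p(\psi,\psi)=0\), the claim is trivial. As a sanity check, the same bound follows by viewing \(W_{c,2}^2=\inner{f}{(-\+L^p)^{-1}f}_\pi\) as an \(H^{-1}\)-norm and noting that the spectral gap of \(-\+L^p\) is \(1/C_{\-{PI}}\).

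The only step needing care is the sign convention and the factor \(\tfrac12\) in the constraint. The definition writes \((\nu-\mu)+\sum_y(\psi(y)-\psi(x))c(x,y)=0\), which gives \(\pi\+L^p\psi=\mu-\nu\), and I must confirm that this matches the Dirichlet form normalization in \zcref{prop:generator-and-dirichlet-form-of-discrete-chains}. This is routine bookkeeping: the sign only flips \(\psi\) and does not affect the quadratic cost. I do not expect a genuine obstacle beyond justifying existence of the minimizer, which follows from the solvability of the Poisson equation noted above.
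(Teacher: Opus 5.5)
Your proposal is correct and is essentially the paper's own proof. The paper likewise takes the optimal potential to be a solution of the Poisson equation (it uses $-\psi$, i.e.\ $(-\mathcal{L}^p)\psi = \mu/\pi - 1$), writes $W_{c,2}^2(\mu,\pi)=\mathcal{E}_p(\psi,\psi)$ as the covariance $\mathrm{Cov}_\pi(\psi,\mu/\pi)$, and concludes with Cauchy--Schwarz followed by the Poincar\'e inequality; your explicit handling of solvability and of the sign and $\tfrac12$ normalization is slightly more careful than the paper's but changes nothing.
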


\begin{theorem}
    Suppose that the transition rate kernel \(p\) is reversible with respect to \(\pi\). If \(p\) satisfies a \(C_{\-{MLSI}}\)-modified log-Sobolev inequality:
    \[
    \*{Ent}_{\pi}\stp{f} \le C_{\-{MLSI}} \cdot \+E_p\tp{\log f, f}, \quad \forall f: \Omega \to \bb R_{\ge 0},
    \]
    then \(W_{c, 2}\) satisfies the following transport--entropy inequality:
    \[
    W_{c, 2}^2\tp{\mu, \pi} \le 4 C_{\-{MLSI}} \cdot\norm{\frac{\dd \mu}{\dd \pi}}_{\infty} \-{KL}\tp{\mu \| \pi}.
    \]
    \label{thm:transport-entropy-inequality-for-Wc2}
\end{theorem}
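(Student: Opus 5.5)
The plan is to transport $\mu$ to $\pi$ along the heat flow of $p$, and to bound the cost of the resulting time-integrated flux using the dissipation of entropy along that flow. Set $f \defeq \dd\mu/\dd\pi$, $M \defeq \norm{f}_\infty$, and $f_t \defeq e^{t\+L^p} f$, with $\mu_t \defeq f_t\pi$. Reversibility gives
\[
\partial_t \mu_t(x) = \pi(x)\+L^p f_t(x) = \sum_y c(x,y)\tp{f_t(y)-f_t(x)}.
\]
So the flux $J_t(x,y) \defeq c(x,y)\tp{f_t(x)-f_t(y)}$ is anti-symmetric, supported on $E$, and satisfies $\partial_t\mu_t + \divergence_{\-d} J_t = 0$. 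By ergodicity $f_t \to 1$ exponentially fast, hence $J_t \to 0$ exponentially fast. Therefore $J \defeq \int_0^\infty J_t\,\dd t$ is a well-defined flux with $\divergence_{\-d} J = \mu - \pi$, i.e. $(\pi-\mu)+\divergence_{\-d}J=0$. By the flux formulation of $W_{c,2}$ (and its symmetry from \zcref{thm:Wc2-is-a-metric}), $W_{c,2}(\mu,\pi)^2$ is at most the cost of $J$.

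Next I use that $J\mapsto \tp{\frac12\sum J^2/c}^{1/2}$ is a weighted $\ell^2$ norm. Minkowski's inequality for integrals then gives
\[
W_{c,2}(\mu,\pi) \le \int_0^\infty \Big(\tfrac12\sum_{x,y} c(x,y)\tp{f_t(x)-f_t(y)}^2\Big)^{1/2}\dd t = \int_0^\infty \sqrt{\+E_p(f_t,f_t)}\,\dd t,
\]
using \zcref{prop:generator-and-dirichlet-form-of-discrete-chains}. To pass to entropy dissipation I use the elementary inequality $(a-b)^2 \le \max(a,b)\,(a-b)(\log a-\log b)$ for $a,b>0$, which holds because the logarithmic mean is at most the maximum. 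Since the heat semigroup is Markov, $\norm{f_t}_\infty \le M$. For $t>0$ we have $f_t>0$, because the graph is connected. Together these give
\[
\+E_p(f_t,f_t) \le M\,\+E_p(\log f_t,f_t) = -M\,\partial_t H(t), \qquad H(t)\defeq \Ent[\pi]{f_t},
\]
by \zcref{prop:evolution-of-phi-divergence}.

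Finally I integrate $\sqrt{-H'}$ using the MLSI, which gives $-H'(t) \ge H(t)/C_{\-{MLSI}}$. Wherever $H>0$,
\[
\sqrt{-H'} = \frac{-H'}{\sqrt{-H'}} \le \frac{-H'\sqrt{C_{\-{MLSI}}}}{\sqrt{H}} = -2\sqrt{C_{\-{MLSI}}}\,\frac{\dd}{\dd t}\sqrt{H(t)}.
\]
(If $H$ vanishes at some time then $f_t\equiv 1$ from then on and the integrand is zero.) Integrating over $(0,\infty)$ yields $\int_0^\infty\sqrt{-H'}\,\dd t \le 2\sqrt{C_{\-{MLSI}}\,H(0)}$. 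Since $H(0)=\Ent[\pi]{f}=\-{KL}(\mu\|\pi)$, this gives $W_{c,2}(\mu,\pi)\le 2\sqrt{M C_{\-{MLSI}}\,\-{KL}(\mu\|\pi)}$, and squaring yields the claim.

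The step I expect to need the most care is the technical justification. First, the improper time integral of the flux must converge and be exchangeable with the divergence; exponential convergence of the finite-state semigroup handles both. Second, the case where $f$ has zeros at $t=0$ needs attention: $\log f_0$ is undefined there, so I would run the argument on $[\delta,\infty)$, where $f_\delta>0$. The segment from $\mu$ to $\mu_\delta$ costs $O(\delta)$ by the same Minkowski bound, since $\+E_p(f_t,f_t)$ is bounded. Letting $\delta\to 0$ and using continuity of $H$ at $0$ completes the argument.
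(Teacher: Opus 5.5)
Your proof is correct and follows the same Otto--Villani heat-flow argument as the paper: bound $W_{c,2}(\mu,\pi)$ by $\int_0^\infty \sqrt{\+E_p(f_t,f_t)}\,\dd t$, compare $\+E_p(f_t,f_t)$ with $\+E_p(\log f_t,f_t)$ via the logarithmic-mean inequality and $\norm{f_t}_\infty \le \norm{f}_\infty$, and integrate the entropy dissipation using the MLSI. The only difference is how you get the first bound: you integrate the heat-flow flux and apply Minkowski's inequality, whereas the paper differentiates $W_{c,2}^2(\mu_\tau,\pi)=\+E_p(\psi_\tau,\psi_\tau)$ along the flow and uses Cauchy--Schwarz for $\+E_p$; your route avoids differentiating the optimal potential and dividing by $W_{c,2}(\mu_\tau,\pi)$, and it also handles zeros of $\dd\mu/\dd\pi$ explicitly, which the paper glosses over.
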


The following theorem provides upper bounds for the metric derivative \(\abs{\dot{\pi}}_t\) in terms of the constants appearing in the relevant functional inequalities, thereby relating the metric \(W_{c_{\pi_t}, 2}\) to the discrete Wasserstein distance \(W_2\). More precisely, it yields quantitative comparisons between two geometric structures: the local \(W_2\)-geometry of the curve \(\tp{\pi_t}_{t \in \stp{0, T}}\), captured by the metric derivative \(\abs{\dot{\pi}}_t\), and the transport geometry associated with the fixed capacity \(c_{\pi_t}\), encoded by the metric \(W_{c_{\pi_t},2}\). In this sense, the theorem builds a bridge between the infinitesimal motion of the curve in the discrete \(W_2\)-space with the geometry induced by \(W_{c_{\pi_t},2}\). The proof of this theorem is provided in \zcref{subsection:appendix-transport-inequalities}.

\begin{theorem}
    Let \(\tp{\pi_t}_{t \in \stp{0, T}}\) be a curve of probability measures with each \(\pi_t \in \+P_{> 0}\tp{\Omega}\). Let the transition rate kernel \(p_{\pi_t}\) be reversible with respect to \(\pi_t\) for each \(t \in \stp{0, T}\). Define the capacity \(\tp{c_{\pi_t}}_{t \in \stp{0, T}}\) by
    \[
    c_{\pi_t}\tp{x, y} \defeq \pi_t\tp{x} p_{\pi_t}\tp{x, y} = \pi_t\tp{y} p_{\pi_t}\tp{y, x}, \quad \set{x, y} \in E.
    \]
    For each \(t \in \stp{0, T}\), the following statements hold:
    \begin{itemize}
        \item If \(p_{\pi_t}\) satisfies a \(C_{\-{PI}}\)-\Poincare inequality, then
        \[
        \abs{\dot{\pi}}_t^2 \le C_{\-{PI}} \cdot \norm{\partial_t \log \pi_t}_{L^2\tp{\pi_t}}^2.
        \]
        \item If \(p_{\pi_t}\) satisfies a \(C_{\-{MLSI}}\)-modified log-Sobolev inequality, then
        \[
        \abs{\dot{\pi}}_t^2 \le 2 C_{\-{MLSI}} \cdot \norm{\partial_t \log \pi_t}_{L^2\tp{\pi_t}}^2.
        \]
    \end{itemize}
    \label{thm:functional-inequalites-implies-good-metric-derivative}
\end{theorem}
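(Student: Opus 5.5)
Fix $t$ and write $\pi=\pi_t$, $c=c_{\pi_t}$, $p=p_{\pi_t}$. By \zcref{thm:metric-derivative-potential-formulation}, $\abs{\dot{\pi}}_t^2=\min_\psi \frac12\sum_{x,y}(\psi(x)-\psi(y))^2c(x,y)$, where $\psi$ ranges over solutions of $\partial_t\pi(x)+\sum_y(\psi(y)-\psi(x))c(x,y)=0$. Since $c(x,y)=\pi(x)p(x,y)$, this constraint reads $\partial_t\pi(x)=-\pi(x)\+L^p\psi(x)$. Dividing by $\pi(x)$ gives the Poisson equation $-\+L^p\psi=g$ with $g\defeq\partial_t\log\pi_t$. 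Note that $\E[\pi]{g}=\sum_x\partial_t\pi_t(x)=0$, so the equation is solvable on the connected graph, and all solutions differ by constants. By \zcref{prop:generator-and-dirichlet-form-of-discrete-chains}, the cost of the (essentially unique) admissible potential is
\[
\abs{\dot{\pi}}_t^2=\+E_p(\psi,\psi)=\inner{\psi}{-\+L^p\psi}_\pi=\inner{\psi}{g}_\pi .
\]
Equivalently, this is the infinitesimal version of $W_{c,2}^2(\pi_{t+h},\pi_t)/h^2$. The argument can be organized either through \zcref{thm:transport-variance-inequalities-for-Wc2} or directly; I will work directly.

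\textbf{Poincar\'e case.} Normalize $\E[\pi]{\psi}=0$. Cauchy--Schwarz gives $\inner{\psi}{g}_\pi\le\sqrt{\Var[\pi]{\psi}}\,\norm{g}_{L^2(\pi)}$, and the Poincar\'e inequality gives $\Var[\pi]{\psi}\le C_{\-{PI}}\+E_p(\psi,\psi)$. Hence $\+E_p(\psi,\psi)\le\sqrt{C_{\-{PI}}\+E_p(\psi,\psi)}\,\norm{g}$, and dividing yields $\abs{\dot{\pi}}_t^2\le C_{\-{PI}}\norm{\partial_t\log\pi_t}_{L^2(\pi_t)}^2$.

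\textbf{MLSI case.} Here the plan is to first deduce a Poincar\'e inequality from the MLSI with $C_{\-{PI}}\le 2C_{\-{MLSI}}$, and then apply the first case. The deduction is by linearization: set $f=1+\eps u$ with $\E[\pi]{u}=0$. Then $\Ent[\pi]{f}=\frac{\eps^2}{2}\Var[\pi]{u}+o(\eps^2)$, and $\+E_p(\log f,f)=\eps^2\+E_p(u,u)+o(\eps^2)$, since $\log f=\eps u+O(\eps^2)$ and the Dirichlet form is bilinear. Letting $\eps\to0$ gives $\Var[\pi]{u}\le 2C_{\-{MLSI}}\+E_p(u,u)$, which yields the factor $2$.

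\textbf{Main obstacle.} The main technical point is making the identification of the metric derivative with the minimum at a fixed $t$ rigorous. This includes differentiability of $t\mapsto\pi_t$, so that $g$ exists, and handling the ``almost every $t$'' caveat inherited from \zcref{thm:metric-derivative-potential-formulation}. I would assume differentiability at $t$ and cite that theorem. The linearization in the MLSI case is routine once $u$ is bounded, which holds automatically since $\Omega$ is finite.
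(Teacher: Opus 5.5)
Your argument is correct. In the Poincar\'e case it is essentially the paper's. The paper writes $|\dot\pi|_t^2 = h^{-2}W_{c_{\pi_t},2}^2(\pi_t+h\partial_t\pi_t,\pi_t)$, which is exact by quadratic homogeneity, and then applies its transport--variance inequality. The proof of that inequality is your Cauchy--Schwarz plus Poincar\'e step, applied to the Poisson equation $-\mathcal{L}^p\psi = h\,\partial_t\log\pi_t$.

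The MLSI case is where you genuinely diverge. The paper invokes its transport--entropy inequality $W_{c,2}^2(\mu,\pi)\le 4C_{\mathrm{MLSI}}\|d\mu/d\pi\|_\infty\,\mathrm{KL}(\mu\|\pi)$. It proves this by an Otto--Villani-type semigroup argument: it differentiates the optimal potential along the heat flow, bounds the logarithmic mean by $\|f_\tau\|_\infty$, and integrates the entropy dissipation. It then linearizes: with $g=\partial_t\log\pi_t$, it uses $\|1+hg\|_\infty\to 1$ and $\mathrm{KL}(\pi_t+h\partial_t\pi_t\|\pi_t)\approx \frac{h^2}{2}\|g\|_{L^2(\pi_t)}^2$, which turns the factor $4$ into $2$. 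You instead linearize the MLSI itself to get $C_{\mathrm{PI}}\le 2C_{\mathrm{MLSI}}$, and then reuse the first bullet.

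What each route buys:
\begin{itemize}
\item Your route is shorter and needs no semigroup machinery. It also makes clear that the second bullet adds nothing beyond the first: with the optimal Poincar\'e constant, the first bullet is always at least as sharp.
\item The paper's route passes through a global transport--entropy inequality that is meaningful for finite displacements. At the infinitesimal level that extra content is discarded, and it lands on exactly your constant.
\end{itemize}

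Both proofs share the caveat you flag. The identification of $|\dot\pi|_t^2$ with the minimal potential energy holds only at a.e.\ $t$, namely where $\pi_t$ is differentiable. So ``for each $t$'' in the statement should be read accordingly.
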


\section{Analysis of Annealing Algorithms}

\label{section:annealing}

In this section, we provide a proof for \zcref{thm:annealing-ub-main}. We establish an error bound for the continuous-time annealing algorithm in terms of the action of the annealing path in \zcref{sec:continuous-time-annealing}, and then provide the practical implementation and discretization error analysis in \zcref{sec:discretization-and-algorithm-implementation}.

\subsection{Error Bounds in terms of Discrete Action}\label{sec:continuous-time-annealing}

We first consider the continuous-time annealing algorithm. Given a target distribution \(\pi\), an annealing path \(\tp{\pi_t}_{t \in \stp{0, T}}\) is a curve of probability measures which gradually transitions an initial well-behaved distribution \(\pi_0\) to the target distribution \(\pi_T = \pi\). The continuous-time annealing algorithm simulates a time-inhomogeneous Markov process, starting from some initial distribution \(\pi_0^{\-{\!{ALG}}}\) and employing transition rate kernels \(\tp{p_t}_{t \in \stp{0, T}}\). 
In this section, we always assume that the transition rate kernels \(\tp{p_t}_{t \in \stp{0, T}}\) have the same connected underlying graph, denoted as \(\tp{\Omega, E}\). Moreover, for each fixed $t\in[0,T]$, the kernel $p_t$ is reversible with respect to $\pi_t$. 
Intuitively, the annealing algorithm can be viewed as a process that strives to align itself with the annealing path via executing the transition rate kernels $\tp{p_t}_{t \in \stp{0, T}}$.

Let \(\bb P^{\-{\!{ALG}}}\) be the path measure of the continuous-time annealing algorithm. We introduce another reference path measure \(\bb P^{\-{\!{REF}}}\) whose time marginals are exactly the annealing path \(\tp{\pi_t}_{t \in \stp{0, T}}\). Instead of bounding the distance between the terminals of \(\bb P^{\-{\!{ALG}}}\) and \(\bb P^{\-{\!{REF}}}\) directly, we can apply the discrete Girsanov theorem to control the discrepancy between the path measures. The following lemma shows that this discrepancy can be bounded by the action of \(\tp{\pi_t}_{t \in \stp{0, T}}\).

\begin{lemma}
    Suppose the target distribution is \(\pi\), and the annealing path is \(\tp{\pi_t}_{t \in \stp{0, T}}\) such that \(\pi_T = \pi\). Let \(\tp{p_t \defeq p_{\pi_t}}_{t \in \stp{0, T}}\) be time-inhomogeneous transition rate kernels with underlying graph \(\tp{\Omega, E}\) such that each \(p_t\) is reversible with respect to \(\pi_t\). Define the capacity of edge \(e = \set{x, y} \in E\) at time \(t\) as  
    \[
    c_t\tp{x, y} \defeq c_{\pi_t}\tp{x, y} = \pi_t\tp{x} p_t\tp{x, y} = \pi_t\tp{y} p_t\tp{y, x}.
    \]

    Consider the annealing algorithm with initial distribution \(\pi_0^{\-{\!{ALG}}}\) and transition rate kernels \(\tp{p_t}_{t \in \stp{0, T}}\). Let \(\bb P^{\-{\!{ALG}}}\) denote the path measure of the annealing algorithm. Then there exists a reference path measure \(\bb P^{\-{\!{REF}}}\) such that the time marginals of \(\bb P^{\-{\!{REF}}}\) are precisely \(\tp{\pi_t}_{t \in \stp{0, T}}\), and
    \[
    \-{KL}\tp{\bb P^{\-{\!{REF}}} \| \bb P^{\-{\!{ALG}}}} \le \-{KL}\tp{\pi_0 \| \pi_0^{\-{\!{ALG}}}} + \frac{1}{4} \+A\tp{\tp{\pi_t}_{t \in \stp{0, T}}},
    \]
    where \(\+A\tp{\tp{\pi_t}_{t \in \stp{0, T}}}\) is the action of \(\tp{\pi_t}_{t \in \stp{0, T}}\) with respect to the capacity \(\tp{c_t}_{t \in \stp{0, T}}\).
    \label{lemma:upper-bound-of-path-measure-KL}
\end{lemma}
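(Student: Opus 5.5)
The plan is to build the reference chain explicitly by tilting the algorithm's kernels $p_t$ along the optimal potential of \zcref{thm:metric-derivative-potential-formulation}. Then I will compare path measures using the discrete Girsanov theorem (\zcref{thm:discrete-Girsanov}), with the reference chain in the role of $p$ and the algorithm in the role of $q$.

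\textbf{Step 1: construction of the reference chain.} For almost every $t$, let $\psi_t$ be a minimizing admissible potential, so that $\abs{\dot\pi}_t^2=\frac12\sum_{x,y}(\psi_t(y)-\psi_t(x))^2c_t(x,y)$. It is unique up to an additive constant, since it solves a linear Laplacian system on the connected graph. Write $\Delta_t(x,y)\defeq\psi_t(y)-\psi_t(x)$ and let $h(\Delta)\defeq\frac{\Delta+\sqrt{\Delta^2+4}}{2}>0$. This function satisfies $h(\Delta)-h(-\Delta)=\Delta$ and $h(\Delta)h(-\Delta)=1$. Define the reference rates
\[
q_t(x,y)\defeq p_t(x,y)\,h\bigl(\Delta_t(x,y)\bigr),\qquad x\neq y .
\]
These are supported on $E$ and mutually absolutely continuous with $p_t$. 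Since $\pi_t(x)p_t(x,y)=c_t(x,y)$, the net flow from $x$ to $y$ under $\pi_t$ and $q_t$ is
\[
c_t(x,y)\bigl(h(\Delta_t(x,y))-h(-\Delta_t(x,y))\bigr)=c_t(x,y)\Delta_t(x,y).
\]
By the admissibility of $\psi_t$, this means $\pi_t$ solves the forward equation \eqref{eq:discrete-FP} for $q_t$. Starting the reference chain from $\pi_0$, uniqueness for this linear ODE shows that its marginals are exactly $(\pi_t)$.

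\textbf{Step 2: Girsanov cost.} \zcref{thm:discrete-Girsanov} gives $\-{KL}(\bb P^{\-{\!{REF}}}\|\bb P^{\-{\!{ALG}}})=\-{KL}(\pi_0\|\pi_0^{\-{\!{ALG}}})+\int_0^T\sum_{x\neq y}c_t(x,y)\,\phi(h(\Delta_t(x,y)))\,\dd t$, where $\phi(g)\defeq g\log g-g+1$. Group ordered pairs into unordered edges and substitute $h=e^{u}$, so that $\Delta=2\sinh u$ and $h(-\Delta)=e^{-u}$. Each edge then contributes $c_t(x,y)\,F(u)$ with
\[
F(u)\defeq 2u\sinh u-2\cosh u+2 .
\]
The action integrand per edge is $c_t\Delta^2=4c_t\sinh^2u$. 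It therefore suffices to prove the scalar inequality $F(u)\le\sinh^2 u$ for all $u\in\bb R$. Summing over edges and integrating in $t$ then yields $\frac14\+A$.

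\textbf{Main obstacle.} The hardest step is this elementary inequality together with the technical hypotheses of Girsanov. For the inequality, the Taylor expansions are $F(u)=u^2+u^4/4+O(u^6)$ and $\sinh^2u=u^2+u^4/3+O(u^6)$, so the bound holds near $0$. It also holds at infinity, since $F$ grows like $ue^{u}$ while $\sinh^2 u$ grows like $e^{2u}/4$. For a global proof, I would set $G(u)\defeq\sinh^2u-F(u)$, which is even with $G(0)=0$, and show $G'\ge0$ on $u\ge0$. Here $G'(u)=\sinh 2u-2u\cosh u$, so this reduces to $\sinh u\ge u$. For the hypotheses, I need uniform boundedness of the reference rates and measurability of $t\mapsto\psi_t$. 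Measurability follows because $\psi_t$ is obtained from a pseudo-inverse of a weighted Laplacian whose entries depend continuously on $t$. Boundedness I would assume, or obtain from the absolute continuity and regularity of the path. If boundedness fails, I would truncate on the set of $t$ where $\abs{\dot\pi}_t$ is large and pass to the limit via lower semicontinuity of KL.
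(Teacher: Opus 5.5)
Your proposal is correct and is essentially the paper's proof. Your kernel $q_t = p_t\,h(\Delta_t)$ is exactly the paper's per-edge optimizer $q_t^*$ with $\rho_t = J_t^*/c_t = \Delta_t$, where $J_t^* = c_t\Delta_t$ is the optimal flux, and your inequality $F(u)\le\sinh^2 u$ is the paper's per-edge bound by $\rho^2/4$ after the substitution $\rho = 2\sinh u$. The only differences are minor: you plug in the optimal potential directly instead of first optimizing over an arbitrary admissible flux, and you prove the scalar inequality and flag the boundedness and measurability hypotheses of the Girsanov theorem, which the paper leaves implicit.
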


\begin{proof}
    To construct the desired reference path measure \(\bb P^{\-{\!{REF}}}\), it suffices to specify its transition rate kernels \(\tp{q_t}_{t \in \stp{0, T}}\). Since the prescribed time marginals of \(\bb P^{\-{\!{REF}}}\) are \(\tp{\pi_t}_{t \in \stp{0, T}}\), the rate kernels \(\tp{q_t}_{t \in \stp{0, T}}\) must satisfy the discrete Fokker--Planck equation \eqref{eq:discrete-FP}: for any \(t \in \stp{0, T}\) and \(x \in \Omega\),
    \[
    \partial_t \pi_t\tp{x} = \sum_{y \in \Omega} \tp{\pi_t\tp{y} q_t\tp{y, x} - \pi_t\tp{x} q_t\tp{x, y}}.
    \]

    Now let \(\tp{J_t}_{t \in \stp{0, T}}\) be any admissible flux of \(\tp{\pi_t}_{t \in \stp{0, T}}\). By the continuity equation, for any \(t \in \stp{0, T}\) and \(x \in \Omega\),
    \[
    \partial_t \pi_t\tp{x} = -\divergence_{\-{d}} J_t\tp{x} = - \sum_{y \in \Omega} J_t \tp{x, y}.
    \]
    Hence, it is enough to choose the transition rate kernels \(\tp{q_t}_{t \in \stp{0, T}}\) such that for any \(t \in \stp{0, T}\) and \(\set{x, y} \in E\),
    \[
    J_t\tp{x, y} = \pi_t\tp{x} q_t\tp{x, y} - \pi_t\tp{y} q_t\tp{y, x}.
    \]
    Equivalently, the constraints on \(\tp{q_t}_{t \in \stp{0, T}}\) can be expressed as follows: for any \(t \in \stp{0, T}\) and \(\set{x, y} \in E\),
    \[
    \frac{J_t\tp{x, y}}{c_t\tp{x, y}} = \frac{q_t\tp{x, y}}{p_t\tp{x, y}} - \frac{q_t\tp{y, x}}{p_t\tp{y, x}}.
    \]
    Under this condition, the discrete Fokker--Planck equation holds, and thus the time marginals of \(\bb P^{\-{\!{REF}}}\) are precisely \(\tp{\pi_t}_{t \in \stp{0, T}}\). The specific choice of \(\tp{q_t}_{t \in \stp{0, T}}\) will be deferred to later.

    By \zcref{thm:discrete-Girsanov},
    \[
    \begin{aligned}
        &\-{KL}\tp{\bb P^{\-{\!{REF}}} \| \bb P^{\-{\!{ALG}}}}\\ 
        =\ &\-{KL}\tp{\pi_0 \| \pi_0^{\-{\!{ALG}}}} + \int_0^T \sum_{x \in \Omega} \pi_t\tp{x} \sum_{y \ne x} \tp{q_t\tp{x, y} \log \frac{q_t\tp{x, y}}{p_t\tp{x, y}} - \tp{q_t\tp{x, y} - p_t\tp{x, y}}} \, \dd t \\
        =\ &\-{KL}\tp{\pi_0 \| \pi_0^{\-{\!{ALG}}}} + \int_0^T \sum_{x \in \Omega} \sum_{y \ne x} c_t\tp{x, y} \tp{\frac{q_t\tp{x, y}}{p_t\tp{x, y}} \log \frac{q_t\tp{x, y}}{p_t\tp{x, y}} - \frac{q_t\tp{x, y}}{p_t\tp{x, y}} + 1} \, \dd t \\
        =\ &\-{KL}\tp{\pi_0 \| \pi_0^{\-{\!{ALG}}}} + \int_0^T \sum_{\set{x, y} \in E} c_t\tp{x, y} \tp{\Psi\tp{\frac{q_t\tp{x, y}}{p_t\tp{x, y}}} + \Psi\tp{\frac{q_t\tp{y, x}}{p_t\tp{y, x}}}} \, \dd t, \\
    \end{aligned}
    \]
    where \(\Psi\tp{r} \defeq r \log r - r + 1\). Fix \(t \in \stp{0, T}\) and \(\set{x, y} \in E\). Define
    \[
    \rho_t\tp{x, y} \defeq \frac{J_t\tp{x, y}}{c_t\tp{x, y}}.
    \]
    To minimize the contribution of the edge \(\set{x, y}\), we solve the following optimization problem:
    \[
    \begin{aligned}
        \min_{q_t\tp{x, y}, q_t\tp{y, x}} \quad &\Psi\tp{\frac{q_t\tp{x, y}}{p_t\tp{x, y}}} + \Psi\tp{\frac{q_t\tp{y, x}}{p_t\tp{y, x}}} \\
        \text{s.t.} \quad & \frac{q_t\tp{x, y}}{p_t\tp{x, y}} - \frac{q_t\tp{y, x}}{p_t\tp{y, x}} = \rho_t\tp{x, y} \\
    \end{aligned}.
    \]
    A direct calculation shows that the optimizer is
    \[
    \begin{cases}
        q_t^*\tp{x, y} = p_t\tp{x, y} \tp{\sqrt{1 + \frac{1}{4} \rho_t\tp{x, y}^2} + \frac{1}{2} \rho_t\tp{x, y}} \\
        q_t^*\tp{y, x} = p_t\tp{y, x} \tp{\sqrt{1 + \frac{1}{4} \rho_t\tp{x, y}^2} - \frac{1}{2} \rho_t\tp{x, y}} \\
    \end{cases},
    \]
    and the corresponding optimal value satisfies
    \[
    \begin{aligned}
        &\Psi\tp{\frac{q_t^*\tp{x, y}}{p_t\tp{x, y}}} + \Psi\tp{\frac{q_t^*\tp{y, x}}{p_t\tp{y, x}}}\\
        =\ &\Psi\tp{\sqrt{1 + \frac{1}{4} \rho_t\tp{x, y}^2} + \frac{1}{2} \rho_t\tp{x, y}} + \Psi\tp{\sqrt{1 + \frac{1}{4} \rho_t\tp{x, y}^2} - \frac{1}{2} \rho_t\tp{x, y}} \\
        =\ &\rho_t\tp{x, y} \arcsinh\tp{\frac{1}{2} \rho_t\tp{x, y}} - 2 \sqrt{1 + \frac{1}{4} \rho_t\tp{x, y}^2} + 2 \\
        \le\ &\frac{1}{4} \rho_t\tp{x, y}^2. \\
    \end{aligned}
    \]
    Let \(\bb P^{\-{\!{REF}}, J}\) denote the reference path measure induced by \(\tp{q_t^*}_{t \in \stp{0, T}}\). Then
    \[
    \begin{aligned}
        \-{KL}\tp{\bb P^{\-{\!{REF}}, J} \| \bb P^{\-{\!{ALG}}}} &= \-{KL}\tp{\pi_0 \| \pi_0^{\-{\!{ALG}}}} + \int_0^T \sum_{\set{x, y} \in E} c_t\tp{x, y} \tp{\Psi\tp{\frac{q_t^*\tp{x, y}}{p_t\tp{x, y}}} + \Psi\tp{\frac{q_t^*\tp{y, x}}{p_t\tp{y, x}}}} \, \dd t \\
        &\le \-{KL}\tp{\pi_0 \| \pi_0^{\-{\!{ALG}}}} + \int_0^T \sum_{\set{x, y} \in E} c_t\tp{x, y} \frac{1}{4} \rho_t\tp{x, y}^2 \, \dd t \\
        &= \-{KL}\tp{\pi_0 \| \pi_0^{\-{\!{ALG}}}} + \frac{1}{4} \int_0^T \sum_{\set{x, y} \in E} \frac{J_t\tp{x, y}^2}{c_t\tp{x, y}} \, \dd t. \\
    \end{aligned}
    \]

    Finally, let \(\tp{J_t^*}_{t\in\stp{0, T}}\) be the optimal flux attaining the minimum in the flux formulation of the discrete action (\zcref{thm:metric-derivative-flux-formulation}). Then
    \[
    \begin{aligned}
        \-{KL}\tp{\bb P^{\-{\!{REF}}, J^*} \| \bb P^{\-{\!{ALG}}}} &\le \-{KL}\tp{\pi_0 \| \pi_0^{\-{\!{ALG}}}} + \frac{1}{4} \int_0^T \sum_{\set{x, y} \in E} \frac{J_t^*\tp{x, y}^2}{c_t\tp{x, y}} \, \dd t \\
        &= \-{KL}\tp{\pi_0 \| \pi_0^{\-{\!{ALG}}}} + \frac{1}{4} \int_0^T \min_{\text{admissible fluxes } J_t} \quad \sum_{\set{x, y} \in E} \frac{J_t\tp{x, y}^2}{c_t\tp{x, y}} \, \dd t \\
        &\le \-{KL}\tp{\pi_0 \| \pi_0^{\-{\!{ALG}}}} + \frac{1}{4} \+A\tp{\tp{\pi_t}_{t \in \stp{0, T}}}, \\
    \end{aligned}
    \]
    which completes the proof.
\end{proof}


In \zcref{lemma:upper-bound-of-path-measure-KL}, we bound the path measure divergence via the action over $[0, T]$. To explicitly decouple the dependence on $T$ from the intrinsic action of the path, we now reparameterize the path over a normalized time window $s \in [0,1]$. 

\begin{lemma}
  Let \(\tp{\pi_s}_{s \in \stp{0, 1}}\) be a curve of probability measures, and let \(\tp{c_s}_{s \in \stp{0, 1}}\) be the capacity. For any \(T > 0\), define the time-rescaled curve \(\tp{\tilde{\pi}_t}_{t \in \stp{0, T}}\) and the corresponding capacity \(\tp{\tilde{c}_t}_{t \in \stp{0, T}}\) by
    \[
    \tilde{\pi}_t \defeq \pi_{t / T}, \quad \tilde{c}_t \defeq c_{t / T}, \quad t \in \stp{0, T}.
    \]
    Then
    \[
    \+A\tp{\tp{\tilde{\pi}_t}_{t \in \stp{0, T}}} = \frac{1}{T} \+A\tp{\tp{\pi_s}_{s \in \stp{0, 1}}},
    \]
    where \(\+A\tp{\tp{\tilde{\pi}_t}_{t \in \stp{0, T}}}\) is the discrete action of \(\tp{\tilde{\pi}_t}_{t \in \stp{0, T}}\) with respect to the capacity \(\tp{\tilde{c}_t}_{t \in \stp{0, T}}\), and \(\+A\tp{\tp{\pi_s}_{s \in \stp{0, 1}}}\) is the discrete action of \(\tp{\pi_s}_{s \in \stp{0, 1}}\) with respect to the capacity \(\tp{c_s}_{s \in \stp{0, 1}}\).
    \label{lemma:time-rescaling-of-discrete-action}
\end{lemma}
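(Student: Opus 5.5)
We will prove the identity through the potential formulation of the metric derivative in \zcref{thm:metric-derivative-potential-formulation}. The alternative route, rescaling $W_2$ directly, is less convenient: the metric itself does not change, since $\tilde c_t$ and $c_s$ are the same function of the current measure, and only the speed of traversal changes. The flux formulation of \zcref{thm:metric-derivative-flux-formulation} would work equally well.

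\textbf{Step 1 (admissible objects correspond under scaling).} Fix $t \in [0,T]$ at which $\tilde\pi$ is differentiable, and set $s = t/T$. By the chain rule, $\partial_t \tilde\pi_t = \frac{1}{T}\,\partial_s \pi_s$. Let $\psi_s$ be an admissible potential for $(\pi_s)$ at $s$ with capacity $c_s$. Since $\tilde c_t = c_s$ and the continuity equation is linear in $\psi$, the function $\tilde\psi_t \defeq \psi_s / T$ satisfies
\[
\partial_t \tilde\pi_t(x) + \sum_{y \in \Omega} \tp{\tilde\psi_t(y) - \tilde\psi_t(x)}\, \tilde c_t(x,y) = 0, \quad \forall x \in \Omega.
\]
So $\tilde\psi_t$ is admissible for $(\tilde\pi_t)$ at time $t$. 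Conversely, if $\tilde\psi_t$ is admissible, then $T\tilde\psi_t$ is admissible for $(\pi_s)$ at $s$. The map $\psi \mapsto \psi/T$ is therefore a bijection between the two admissible sets.

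\textbf{Step 2 (compare the minima).} The quadratic objective scales by a factor $1/T^2$ under this bijection. Minimizing over both sides gives
\[
\abs{\dot{\tilde\pi}}_t^2 = \frac{1}{T^2}\,\abs{\dot\pi}_{t/T}^2 \quad \text{for a.e. } t.
\]
The exceptional null sets correspond under the map $t \mapsto t/T$, so "almost every" transfers between the two parametrizations.

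\textbf{Step 3 (integrate).} Substituting $s = t/T$, so that $\dd t = T\,\dd s$, yields
\[
\int_0^T \abs{\dot{\tilde\pi}}_t^2\,\dd t = \int_0^1 \frac{1}{T^2}\,\abs{\dot\pi}_s^2\, T\,\dd s = \frac{1}{T}\,\+A\tp{\tp{\pi_s}_{s \in [0,1]}}.
\]

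\textbf{Expected obstacle.} The only point needing care is the regularity bookkeeping in Step 2. We must check that absolute continuity, and the a.e. validity of \zcref{thm:metric-derivative-potential-formulation}, pass between the two curves. This is immediate, because composing with the affine bijection $t \mapsto t/T$ preserves absolute continuity and maps null sets to null sets. Everything else is direct computation.
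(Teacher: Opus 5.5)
Your proposal is correct and follows essentially the same argument as the paper. The paper uses the flux formulation instead: $\tilde J_t = \frac{1}{T} J_{t/T}$ is admissible for $(\tilde\pi_t)$ exactly when $J$ is admissible for $(\pi_s)$, which gives $|\dot{\tilde\pi}|_t = \frac{1}{T}|\dot\pi|_{t/T}$, followed by the same change of variables; your potential version $\tilde\psi_t = \psi_{t/T}/T$ is equivalent, and your note that null sets transfer under $t \mapsto t/T$ is a small piece of care the paper leaves implicit.
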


\begin{proof}
    Observe that \(\tp{J_s}_{s \in \stp{0, 1}}\) is an admissible flux of \(\tp{\pi_s}_{s \in \stp{0, 1}}\) if and only if the rescaled flux \(\tp{\tilde{J}_t}_{t \in \stp{0, T}}\) defined by 
    \[
    \tilde{J}_t \defeq \frac{1}{T} J_{t / T}, \quad t \in \stp{0, T},
    \]
    is an admissible flux of \(\tp{\tilde{\pi}_t}_{t \in \stp{0, T}}\). Consequently,
    \[
    \abs{\dot{\tilde{\pi}}}_t = \frac{1}{T} \abs{\dot{\pi}}_{t / T} \quad \forall t \in \stp{0, T},
    \]
    where \(\abs{\dot{\tilde{\pi}}}_t\) is the metric derivative induced by \(\tp{\tilde{c}_t}_{t \in \stp{0, T}}\) and \(\abs{\dot{\pi}}_s\) is the metric derivative induced by \(\tp{c_s}_{s \in \stp{0, 1}}\). Therefore,
    \[
    \+A\tp{\tp{\tilde{\pi}_t}_{t \in \stp{0, T}}} = \int_0^T \abs{\dot{\tilde{\pi}}}_t^2 \, \dd t = \int_0^T \frac{1}{T^2} \abs{\dot{\pi}}_{t / T}^2 \, \dd t = \frac{1}{T} \int_0^1 \abs{\dot{\pi}}_s^2 \, \dd s = \frac{1}{T} \+A\tp{\tp{\pi_s}_{s \in \stp{0, 1}}}.
    \]
\end{proof}

Applying the data processing inequality and combining \zcref{lemma:upper-bound-of-path-measure-KL} and \zcref{lemma:time-rescaling-of-discrete-action}, the following theorem gives an error bound for the continuous-time annealing algorithm with regard to the time-normalized action.
\begin{theorem}
    Suppose the target distribution is \(\pi\), and the time-normalized annealing path is \(\tp{\pi_s}_{s \in \stp{0, 1}}\) such that \(\pi_1 = \pi\). Let \(\tp{p_s \defeq p_{\pi_s}}_{s \in \stp{0, 1}}\) be time-inhomogeneous transition rate kernels with underlying graph \(\tp{\Omega, E}\) such that each \(p_s\) is reversible with respect to \(\pi_s\). Define the capacity of edge \(e = \set{x, y} \in E\) at time \(s\) as  
    \[
    c_s\tp{x, y} \defeq c_{\pi_s}\tp{x, y} = \pi_s\tp{x} p_s\tp{x, y} = \pi_s\tp{y} p_s\tp{y, x}.
    \]

    Consider the annealing algorithm with initial distribution \(\pi_0^{\-{\!{ALG}}}\) and transition rate kernels \(\tp{\tilde{p}_t}_{t \in \stp{0, T}}\), where \(\tilde{p}_t = p_{t / T}\) for \(t \in \stp{0, T}\). Let \(\tp{\tilde{\pi}_t^{\-{\!{ALG}}}}_{t \in \stp{0, T}}\) denote the time marginals of the annealing algorithm. Then we have the following error bound:
    \[
    \-{KL}\tp{\pi \| \tilde{\pi}_T^{\-{\!{ALG}}}} \le \-{KL}\tp{\pi_0 \| \pi_0^{\-{\!{ALG}}}} + \frac{\+A\tp{\tp{\pi_s}_{s \in \stp{0, 1}}}}{4 T},
    \]
    where \(\+A\tp{\tp{\pi_s}_{s \in \stp{0, 1}}}\) is the discrete action of \(\tp{\pi_s}_{s \in \stp{0, 1}}\) with respect to the capacity \(\tp{c_s}_{s \in \stp{0, 1}}\).
    \label{thm:error-bound-for-annealing-algorithm}
\end{theorem}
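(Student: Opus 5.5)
The bound follows by chaining the two preceding lemmas with the data processing inequality. I would apply \zcref{lemma:upper-bound-of-path-measure-KL} to the time-rescaled curve $\tp{\tilde\pi_t}_{t\in\stp{0,T}}$ with $\tilde\pi_t=\pi_{t/T}$. Its kernels are $\tilde p_t=p_{t/T}$, each reversible with respect to $\tilde\pi_t$, and its capacity is $\tilde c_t=\tilde\pi_t(x)\tilde p_t(x,y)=c_{t/T}$. The kernels share the same underlying graph $(\Omega,E)$, and $\tilde\pi_0=\pi_0$, $\tilde\pi_T=\pi_1=\pi$. The lemma then produces a reference path measure $\bb P^{\-{\!{REF}}}$ whose time marginals are exactly $\tp{\tilde\pi_t}$, and which satisfies
\[
\-{KL}\tp{\bb P^{\-{\!{REF}}}\|\bb P^{\-{\!{ALG}}}}\le \-{KL}\tp{\pi_0\|\pi_0^{\-{\!{ALG}}}}+\tfrac14\+A\tp{\tp{\tilde\pi_t}_{t\in\stp{0,T}}}.
\]

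Next I would use \zcref{lemma:time-rescaling-of-discrete-action}. The capacity there is exactly $\tilde c_t=c_{t/T}$, so it gives $\+A\tp{(\tilde\pi_t)}=\frac1T\+A\tp{(\pi_s)_{s\in\stp{0,1}}}$. Substituting this into the previous display yields the bound $\-{KL}\tp{\pi_0\|\pi_0^{\-{\!{ALG}}}}+\+A/(4T)$ at the level of path measures.

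Finally, I would project both path measures onto the time-$T$ coordinate, $\omega\mapsto\omega_T$. Under $\bb P^{\-{\!{REF}}}$ this pushforward is $\tilde\pi_T=\pi$, and under $\bb P^{\-{\!{ALG}}}$ it is $\tilde\pi^{\-{\!{ALG}}}_T$. The data processing inequality for KL divergence gives $\-{KL}\tp{\pi\|\tilde\pi_T^{\-{\!{ALG}}}}\le\-{KL}\tp{\bb P^{\-{\!{REF}}}\|\bb P^{\-{\!{ALG}}}}$, which completes the chain.

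The argument is essentially bookkeeping, so the main point to check is that the hypotheses of \zcref{lemma:upper-bound-of-path-measure-KL}, and through it of \zcref{thm:discrete-Girsanov}, hold for the rescaled process. The rescaled curve must be absolutely continuous in $W_2$ so that its action is defined; this follows from the corresponding property of $(\pi_s)$, because the rescaling of the flux is linear. The optimal reference rates $q^*_t$ must be absolutely continuous with respect to $\tilde p_t$, which is clear since $q^*_t$ is a positive multiple of $\tilde p_t$ on each edge. The remaining Girsanov assumption, uniform boundedness of the rates, I would take as implicit in the paper's standing assumptions on $p_s$. If any of these fail, the action is infinite and the bound holds trivially.
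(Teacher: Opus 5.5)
Your proposal is correct and follows the paper's proof: apply the path-measure lemma to the rescaled curve $\tilde\pi_t=\pi_{t/T}$, use data processing to pass to the time-$T$ marginals, and substitute the time-rescaling identity. The paper applies data processing before substituting the rescaling identity, which makes no difference. One small correction to your hypothesis check: the Girsanov theorem requires uniformly bounded rates for its first argument, which here is the reference chain with rates $q^*_t$, not the algorithm's kernels $p_s$ (the paper does not address this point either).
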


\begin{proof}
    For each \(t \in \stp{0, T}\), define \(\tilde{\pi}_t \defeq \pi_{t / T}\) and \(\tilde{c}_t \defeq c_{t / T}\). Then \(\tp{\tilde{\pi}_t}_{t \in \stp{0, T}}\) is the time-rescaled annealing path, and \(\tp{\tilde{c}_t}_{t \in \stp{0, T}}\) is the corresponding capacity. Let \(\bb P^{\-{\!{ALG}}}\) denote the path measure induced by the annealing algorithm. By \zcref{lemma:upper-bound-of-path-measure-KL}, there exists a reference path measure \(\bb P^{\-{\!{REF}}}\) such that its time marginals are precisely \(\tp{\tilde{\pi}_t}_{t \in \stp{0, T}}\), and
    \[
    \-{KL}\tp{\bb P^{\-{\!{REF}}} \| \bb P^{\-{\!{ALG}}}} \le \-{KL}\tp{\pi_0 \| \pi_0^{\-{\!{ALG}}}} + \frac{1}{4} \+A\tp{\tp{\tilde{\pi}_t}_{t \in \stp{0, T}}},
    \]
    where \(\+A\tp{\tp{\tilde{\pi}_t}_{t \in \stp{0, T}}}\) denote the action of \(\tp{\tilde{\pi}_t}_{t \in \stp{0, T}}\) with respect to the capacity \(\tp{\tilde{c}_t}_{t \in \stp{0, T}}\). Applying data-processing inequality for relative entropy yields
    \[
    \-{KL}\tp{\pi \| \tilde{\pi}_T^{\-{\!{ALG}}}} \le \-{KL}\tp{\bb P^{\-{\!{REF}}} \| \bb P^{\-{\!{ALG}}}} \le \-{KL}\tp{\pi_0 \| \pi_0^{\-{\!{ALG}}}} + \frac{1}{4} \+A\tp{\tp{\tilde{\pi}_t}_{t \in \stp{0, T}}}.
    \]
    By \zcref{lemma:time-rescaling-of-discrete-action},
    \[
    \-{KL}\tp{\pi \| \tilde{\pi}_T^{\-{\!{ALG}}}} \le \-{KL}\tp{\pi_0 \| \pi_0^{\-{\!{ALG}}}} + \frac{\+A\tp{\tp{\pi_s}_{s \in \stp{0, 1}}}}{4 T}.
    \]
\end{proof}


\subsection{Discretized Implementation}\label{sec:discretization-and-algorithm-implementation}

Now we provide a discretized implementation for the continuous-time annealing algorithm presented in \zcref{sec:continuous-time-annealing}. The discrete algorithm can be interpreted as a continuous-time process where the transition rate kernels remain constant over short time intervals. In other words, it can be seen as a slightly perturbed version of the ideal continuous-time dynamics. The following lemma characterizes the error induced by running the continuous-time algorithm with these locally perturbed transition rate kernels. The proof is given in \zcref{subsection:appendix-discretization-and-algorithm-implementation}.

\begin{lemma}
    Suppose the target distribution is \(\pi\), and the time-normalized annealing path is \(\tp{\pi_s}_{s \in \stp{0, 1}}\) such that \(\pi_1 = \pi\). Let \(\tp{p_s \defeq p_{\pi_s}}_{s \in \stp{0, 1}}\) be time-inhomogeneous transition rate kernels with the same underlying graph \(\tp{\Omega, E}\) such that each \(p_s\) is reversible with respect to \(\pi_s\), and let \(\tp{p_s'}_{s \in \stp{0, 1}}\) be a perturbation of \(\tp{p_s}_{s \in \stp{0, 1}}\) such that for any \(s \in \stp{0, 1}\),
    \[
    \max_{x \neq y} \abs{\frac{p_s'\tp{x, y}}{p_s\tp{x, y}} - 1} \le \delta < \frac{1}{2}.
    \]
    Define the capacity of edge \(e = \set{x, y} \in E\) at time \(s\) as  
    \[
    c_s\tp{x, y} \defeq c_{\pi_s}\tp{x, y} = \pi_s\tp{x} p_s\tp{x, y} = \pi_s\tp{y} p_s\tp{y, x}.
    \]

    Consider the perturbed annealing algorithm with initial distribution \(\pi_0^{\-{\!{ALG}}}\) and transition rate kernels \(\tp{\tilde{p}_t'}_{t \in \stp{0, T}}\), where \(\tilde{p}_t' = p_{t / T}'\) for \(t \in \stp{0, T}\). Let \(\tp{\tilde{\pi}_t'}_{t \in \stp{0, T}}\) denote its time marginals. Then we have the following error bound:
    \[
    \-{KL}\tp{\pi \| \tilde{\pi}_T'} \le \-{KL}\tp{\pi_0 \| \pi_0^{\-{\!{ALG}}}} + \frac{\tp{1 + \delta} \+A\tp{\tp{\pi_s}_{s \in \stp{0, 1}}}}{4 T} + 4 \delta T \int_0^1 \sum_{\set{x, y} \in E} c_s\tp{x, y} \, \dd s,
    \]
    where \(\+A\tp{\tp{\pi_s}_{s \in \stp{0, 1}}}\) is the discrete action of \(\tp{\pi_s}_{s \in \stp{0, 1}}\) with respect to the capacity \(\tp{c_s}_{s \in \stp{0, 1}}\).
    \label{lemma:error-bound-for-perturbed-annealing-algorithm}
\end{lemma}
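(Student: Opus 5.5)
We follow the proof of \zcref{lemma:upper-bound-of-path-measure-KL} almost verbatim, now comparing the reference chain to the perturbed algorithm. Work on the rescaled window $[0,T]$ with $\tilde\pi_t=\pi_{t/T}$, $\tilde c_t=c_{t/T}$, $\tilde p_t=p_{t/T}$ and $\tilde p'_t=p'_{t/T}$. Fix an optimal admissible flux $\tilde J_t$ for $(\tilde\pi_t)$, given by \zcref{thm:metric-derivative-flux-formulation}. Build the reference kernel $q_t$ exactly as before, i.e.\ $q_t(x,y)=\tilde p_t(x,y)\bigl(\sqrt{1+\rho^2/4}+\rho/2\bigr)$ with $\rho=\tilde J_t(x,y)/\tilde c_t(x,y)$. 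Its marginals are then $\tilde\pi_t$.

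Apply \zcref{thm:discrete-Girsanov} with $\bb P^{\-{\!{REF}}}$ against the perturbed path measure. The integrand becomes $\sum_x\tilde\pi_t(x)\sum_{y\ne x}\bigl(q\log(q/p')-q+p'\bigr)$. We split it as
\[
q\log\frac{q}{p'}-q+p' \;=\; \Bigl(q\log\frac{q}{p}-q+p\Bigr) \;+\; q\log\frac{p}{p'} \;+\; (p'-p).
\]
The first term is controlled exactly as in the unperturbed lemma. Summed over each edge and weighted by $\tilde\pi_t$, it is at most $\tfrac14\tilde J_t^2/\tilde c_t$.

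The remaining terms give the perturbation error. Writing $r=p'/p\in[1-\delta,1+\delta]$, the cross term is $\tilde\pi_t(x)\,q(x,y)\log(1/r)$. Since $q/p\le 1+|\rho|/2+\rho^2/8$ (or simply $q/p\le 1+|\rho|$), we get $\tilde\pi_t(x)q\le \tilde c_t(1+|\rho|)$. We also use $|\log(1/r)|\le 2\delta$ for $\delta<1/2$, and $|p'-p|\le\delta p$. Combining these, the extra contribution per directed edge is at most $\tilde c_t\bigl(2\delta(1+|\rho|)+\delta\bigr)$.

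Next we use AM--GM in the form $2\delta|\rho|\le \delta\rho^2/4+4\delta$. The $\rho^2$ piece adds $\delta\cdot\tfrac14\tilde J^2/\tilde c$, which is where the $(1+\delta)$ factor in front of the action comes from. The remaining $O(\delta)\tilde c_t$ pieces, summed over both orientations of each edge, should give $4\delta\sum_{e}\tilde c_t(e)$. The exact constants need some care: it may be better to choose the AM--GM split so the tally lands exactly on $4\delta$, or to use a tighter bound on $q/p$.

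Finally, integrate over $[0,T]$. The action term becomes $(1+\delta)\mathcal A/(4T)$ by \zcref{lemma:time-rescaling-of-discrete-action}. The capacity term gives $4\delta\int_0^T\sum_e c_{t/T}(e)\,\dd t=4\delta T\int_0^1\sum_e c_s(e)\,\dd s$. Data processing then transfers the bound to the terminal marginal, as in \zcref{thm:error-bound-for-annealing-algorithm}.

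The main obstacle is the constant bookkeeping in the cross term $q\log(p/p')$. We need to bound it linearly in $\delta$ and absorb the $|\rho|$ dependence into a $\delta$-fraction of the action, so that the coefficient in front of $\mathcal A$ is exactly $1+\delta$ and the capacity term has coefficient exactly $4\delta$. We also need the absolute continuity and bounded-rate hypotheses of Girsanov. These hold because $p'$ has the same support as $p$, and because bounded capacity together with $\pi_t>0$ on a finite space and continuity keeps the rates bounded.
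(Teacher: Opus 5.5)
\textbf{Verdict: genuine gap.} Your architecture matches the paper's: the same reference kernel $q^*$ built from the optimal flux, Girsanov against the perturbed chain, a per-edge estimate, time rescaling, and data processing. The gap is in the step you flagged, and it is more than bookkeeping: the term-by-term bounds you propose cannot give the stated constants.

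\textbf{Why your per-edge estimate falls short.} Bounding $q\log(p/p')$ and $p'-p$ separately in absolute value discards the cancellation $\log(1/r)+r-1=O(\delta^2)$ for $r=p'/p$.
\begin{itemize}
\item At $\rho=0$ (so $q=p$ and $a=b=1$), your estimate gives $\tilde c_t(2\delta+\delta)$ for each orientation, i.e.\ $6\delta\,\tilde c_t$ per edge. The action term gives no budget at $\rho=0$, so no AM--GM split can bring the capacity coefficient down to $4\delta$.
\item The $|\rho|$ term appears in both orientations. Your AM--GM step therefore contributes $\delta\rho^2/2$ per edge, which yields $(1+2\delta)$ rather than $(1+\delta)$ in front of the action.
\item Carried out carefully (with $a+b=2u\le 2+\rho^2/4$, $u=\sqrt{1+\rho^2/4}$), your scheme proves the lemma with constants $(1+2\delta)$ and $6\delta$. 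That suffices for the main annealing theorem after shrinking $\delta$ by a constant factor, but it is not the statement.
\end{itemize}
You also cannot simply copy the paper's $\Psi^\delta$ computation. It reaches $(1+\delta)$ and $4\delta$ via $\log(\tilde p_t/\tilde p'_t)\le\log(1+\delta)$, which does not follow from the hypothesis: that ratio can be as large as $1/(1-\delta)>1+\delta$. Your $|\log(1/r)|\le 2\delta$ is the correct bound.

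\textbf{The fix: keep signs and treat the two perturbation terms jointly.} For an edge, write $\rho=\tilde\rho_t(x,y)$, $a=u+\rho/2$, $b=u-\rho/2$ (so $ab=1$), and let $r_1,r_2\in[1-\delta,1+\delta]$ be the two ratios $p'/p$. The Girsanov integrand on the edge is then exactly
\[
\tilde c_t\Bigl(\bigl[\rho\arcsinh(\rho/2)-2u+2\bigr]+g_a(r_1)+g_b(r_2)\Bigr),\qquad g_a(r):=a\log\frac{1}{r}+r-1 .
\]
\begin{enumerate}
\item The bracket is at most $\rho^2/4$, exactly as in the unperturbed lemma.
\item Since $g_a$ is convex in $r$, it is maximized at an endpoint. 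Using $-\log(1-\delta)\le 2\delta\log 2$ on $[0,1/2]$ (chord of a convex function),
\[
g_a(r)\le\max\bigl\{-a\log(1-\delta)-\delta,\ \delta-a\log(1+\delta)\bigr\}\le\max\{2\delta a\log 2-\delta,\ \delta\}.
\]
\item Take $\rho\ge 0$ without loss of generality, so $a\ge 1\ge b$. Then $g_b(r_2)\le\delta$, and hence $g_a(r_1)+g_b(r_2)\le\delta\max\{2a\log 2,\,2\}$.
\item Finally, $a\le 1+\rho$, and $2\log 2\,(1+2x)\le x^2+4$ for all real $x$ (the quadratic has negative discriminant). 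With $x=\rho/2$, the perturbation part is at most $\delta(\rho^2/4+4)$.
\end{enumerate}
Adding the two parts gives exactly $\frac{1+\delta}{4}\rho^2+4\delta$ per edge, and the rest of your plan goes through unchanged.

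\textbf{A smaller point on Girsanov's hypotheses.} The bounded-rate condition concerns the reference rates $q^*=p\,a$ with $a\le 1+|\rho|$. What you need is therefore a uniform bound on $\tilde J_t/\tilde c_t$. That requires a lower bound on the capacity and control of $\partial_t\pi_t$, not the upper bound on capacity you cite. The paper is equally terse on this point.
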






Now we are ready to present the discretized annealing algorithm and its error bound. The algorithm is described in \zcref{algo:poisson-annealing}. In \zcref{algo:poisson-annealing}, we pick $N$ uniform points on the continuous annealing path $\tp{\pi_s}_{s \in \stp{0, 1}}$, corresponding to the $N$ layers of the annealing algorithm. Within the $k$-th layer, we simulate a time homogenous Markov process driven by a Poisson clock with rate $\frac{T}{N}$ and transition matrix \(P_{s_k}\), where $s_k=\frac{k}{N}$. The target distribution for the $k$-th layer is \(\pi_{s_k}\). The output of the time homogenous Markov process for the $k$-th layer will act as the starting point for the $(k+1)$-th layer.

Note that if we consider the underlying continuous-time Markov process in \zcref{algo:poisson-annealing}, its transitions are driven by a Poisson clock with a time invariant rate. While we can analyze the cases with time-varying clock rates, focusing on the invariant setting is without loss of generality since any non-uniform rate can simply be absorbed into a time-rescaling of the annealing path $\pi_s$ and the total time $T$.

\begin{algorithm}[ht]
\caption{Discrete Annealing Algorithm}
\label{algo:poisson-annealing}
\begin{algorithmic}[1]
    \Require Initial distribution $\pi_0^{\-{\!{ALG}}}$, annealing path $\tp{\pi_s}_{s \in \stp{0, 1}}$,transition matrices $\tp{P_s}_{s \in \stp{0, 1}}$, time $T$, and steps $N$.
    
    \State $\Delta t \leftarrow \frac{T}{N}$ 
    \State Sample $X_0 \sim \pi_0^{\-{\!{ALG}}}$ \Comment{Initialization}
    
    \For{$k = 1$ to $N$}
        \State $s_k \leftarrow \frac{k}{N}$ and $\tilde{P}_k \leftarrow P_{s_k}$ 
        \State Draw $M_k \sim \!{Pois}(\Delta t)$ 
        \State $X_k \leftarrow X_{k-1}$
        \For{$m = 1$ to $M_k$}
            \State Sample $Y \sim \tilde{P}_k(X_k, \cdot)$ 
            \State $X_k \leftarrow Y$
        \EndFor
    \EndFor

    \State \Return $X_N$
\end{algorithmic}
\end{algorithm}

The following theorem gives an error bound for \zcref{algo:poisson-annealing}. 
\begin{theorem}[A formal statement of \zcref{thm:annealing-ub-main}]
    \label{thm:annealing-ub}
    Given the time-normalized annealing path $\tp{\pi_s}_{s \in \stp{0, 1}}$ with $\pi_1 = \pi$ being the target distribution, let $\tp{P_s \defeq P_{\pi_s}}_{s \in \stp{0, 1}}$ be a curve of transition matrices such that each $P_s$ is reversible with respect to $\pi_s$. Let $p_s = P_s - \!I_{\abs{\Omega}}$ be the corresponding transition rate kernels and $\tp{\Omega, E}$ be the common connected underlying graph of $p_s$. Define the corresponding capacity of edge $e = \set{x, y} \in E$ as
    \[
        c_s\tp{x, y} \defeq c_{\pi_s}\tp{x, y} = \pi_s\tp{x} p_s\tp{x, y} = \pi_s\tp{y} p_s\tp{y, x}.
    \]
    For some $\eps > 0$, choose $T = \frac{2\+A\tp{\tp{\pi_s}_{s \in \stp{0, 1}}}}{\eps}$. Suppose the following conditions hold:
    \begin{itemize}
        \item bounded initial error: $\-{KL}\tp{\pi_0 \| \pi_0^{\-{\!{ALG}}}} \le \frac{\eps}{3}$;
        \item locally stable kernels: there exists some $\eta > 0$ such that for all $s, s' \in \stp{0, 1}$ with $\abs{s' - s} < \eta$,
        \[
            \max_{x \neq y} \abs{\frac{p_{s'}\tp{x, y}}{p_s\tp{x, y}} - 1} \le \frac{\eps}{6 T}.
        \]
    \end{itemize}
    Choose $N \ge \frac{1}{\eta}$. Let $\pi^{\!{ALG}}_N$ be the distribution of the output of \zcref{algo:poisson-annealing}. Then 
    \[
    \-{KL}\tp{\pi \| \pi^{\!{ALG}}_N} \le \eps,
    \]
    and the expected number of steps is bounded by $\bb E\stp{\sum_{k=1}^N M_k} = N \Delta t = T$.
\end{theorem}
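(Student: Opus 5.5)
We view Algorithm~\ref{algo:poisson-annealing} as a continuous-time process with piecewise-constant kernels and apply Lemma~\ref{lemma:error-bound-for-perturbed-annealing-algorithm}. The underlying process runs on $[0,T]$ with a rate-one Poisson clock. During the $k$-th time window $t \in ((k-1)\Delta t, k\Delta t]$, which corresponds to normalized times $s = t/T \in ((k-1)/N, k/N]$, it jumps according to $P_{s_k}$. Its rate kernel is therefore $\tilde p'_t = p'_{t/T}$, where $p'_s \defeq P_{s_k} - \!I_{\abs{\Omega}} = p_{s_k}$ for $s \in ((k-1)/N, k/N]$. Each window uses $M_k \sim \!{Pois}(\Delta t)$ transition attempts, so the law of $X_N$ is exactly the time-$T$ marginal $\tilde\pi'_T$ of this perturbed process. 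Summing over windows also gives $\bb E\stp{\sum_k M_k} = N\Delta t = T$.

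Next we verify the perturbation hypothesis. For $s \in ((k-1)/N, k/N]$ we have $\abs{s_k - s} \le 1/N \le \eta$. We want strict inequality, so we either take $N > 1/\eta$ or treat the boundary by continuity; this is a minor point. The locally stable kernel assumption then gives
\[
\max_{x\neq y}\abs{\frac{p'_s(x,y)}{p_s(x,y)} - 1} \le \delta \defeq \frac{\eps}{6T}.
\]
Since $T = 2\+A/\eps$, we get $\delta = 1/(12\+A) < 1/2$ provided $\+A > 1/6$. The degenerate case of tiny action is handled separately, for instance by enlarging $T$ or noting the bound trivially. Both kernels share the graph $E$.

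Lemma~\ref{lemma:error-bound-for-perturbed-annealing-algorithm} then gives
\[
\-{KL}\tp{\pi\|\pi^{\!{ALG}}_N} \le \frac{\eps}{3} + \frac{(1+\delta)\+A}{4T} + 4\delta T\int_0^1\sum_{\set{x,y}\in E} c_s(x,y)\,\dd s .
\]
We bound the three terms in turn:
\begin{itemize}
\item The first term is at most $\eps/3$ by the bounded initial error assumption.
\item For the second term, $T = 2\+A/\eps$ gives $\+A/(4T) = \eps/8$, so this term is at most $(1+\delta)\eps/8 \le 3\eps/16$.
\item For the third term, $4\delta T = 2\eps/3$. Since $p_s = P_s - \!I$ with $P_s$ stochastic, each row has $\sum_{y\neq x} p_s(x,y) \le 1$. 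Hence $\sum_{\set{x,y}\in E} c_s(x,y) = \frac12\sum_x \pi_s(x)\sum_{y\neq x}p_s(x,y) \le \frac12$, and the term is at most $\eps/3$.
\end{itemize}
The total is $\eps/3 + 3\eps/16 + \eps/3 < \eps$.

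The step needing most care is checking that the piecewise-constant process is a valid perturbation in the sense of the lemma. The lemma requires the entrywise ratio bound on edges where $p_s(x,y)>0$ for every $s$, which holds because all $p_s$ share the common graph. We also need the bounded capacity sum, which holds because the Poisson clock has unit rate and $P_s$ is stochastic. The remaining work is arithmetic with the chosen constants.
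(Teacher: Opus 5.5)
This is the paper's proof. You realize Algorithm~\ref{algo:poisson-annealing} as the continuous-time chain with piecewise-constant kernels $p'_s = p_{\lceil sN\rceil/N}$, apply Lemma~\ref{lemma:error-bound-for-perturbed-annealing-algorithm} with $\delta = \eps/(6T)$, bound $\int_0^1\sum_{\set{x,y}\in E} c_s(x,y)\,\dd s \le \frac{1}{2}$ via the unit-rate clock, and sum $\frac{\eps}{3} + \frac{(1+\delta)\eps}{8} + \frac{\eps}{3} < \eps$, step for step as the paper does.

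Two small corrections. First, $\delta = \eps/(6T) = \eps^2/(12\+A)$, not $1/(12\+A)$, so $\delta < \frac{1}{2}$ amounts to $\+A > \eps^2/6$ (i.e.\ $T > \eps/3$). The paper asserts $\delta<\frac12$ without justification; it does hold in the applications, where $T \ge 1$ and $\eps < 1$. You cannot secure it by ``enlarging $T$'', since the statement fixes $T$, and calling the tiny-action case trivial is not justified either. Second, your strict-inequality worry is moot, because $s > (k-1)/N$ already gives $k/N - s < 1/N \le \eta$.
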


\begin{remark}
    To obtain a deterministic bound on the total number of steps, one can truncate each Poisson count at a threshold $M_{\max}$ by setting $\wt{M}_k = \min\tp{M_k, M_{\max}}$. The algorithm then terminates in at most $N M_{\max}$ steps. By coupling the original and truncated algorithms, the total variation distance between their output laws is bounded by the probability that any $M_k$ exceeds $M_{\max}$, so $\-{TV}\tp{\-{Law}\tp{X_N}, \-{Law}\tp{\wt{X}_N}} \le N \*{Pr}\stp{\text{Pois}\tp{\Delta t} > M_{\max}}$. Due to the super-exponential decay of Poisson tails, choosing $M_{\max} = O\tp{\Delta t + \log\tp{N/\eps}}$ keeps the truncation error negligible while ensuring the worst-case complexity increases at most by a logarithmic factor.
\end{remark}


\begin{proof}
    We first establish the correspondence between the discrete \zcref{algo:poisson-annealing} and the continuous-time perturbed process analyzed in \zcref{lemma:error-bound-for-perturbed-annealing-algorithm}. 
    
    In \zcref{algo:poisson-annealing}, the time horizon $[0, T]$ is uniformly partitioned into $N$ intervals of length $\Delta t = T/N$. Within the $k$-th interval $((k-1)\Delta t, k\Delta t]$, the algorithm draws a number of jumps $M_k \sim \!{Pois}(\Delta t)$ and applies the transition probability matrix $\tilde{P}_k = P_{k/N}$. This procedure simulates a continuous-time Markov process with the piecewise constant transition rate kernels $p'_s = p_{\lceil sN \rceil / N}$. 
    
    Specifically, let $\tp{\tilde{\pi}_t'}_{t \in \stp{0, T}}$ be the time marginals of the continuous-time Markov process with the piecewise constant transition rate kernels $p'_s$. Then $\pi^{\!{ALG}}_N = \tilde{\pi}_T'$.
    Since $p'_s = p_{\lceil sN \rceil / N}$ and $\abs{s - \lceil sN \rceil / N} \le 1/N \le \eta$, the local stability assumption yields: for any $s \in \stp{0, 1}$,
    \[
        \max_{x \neq y} \abs{\frac{p'_s\tp{x, y}}{p_s\tp{x, y}} - 1} \leq \delta \defeq \frac{\eps}{6T}<\frac{1}{2}.
    \]
    
    Applying \zcref{lemma:error-bound-for-perturbed-annealing-algorithm}, the final error is bounded by
    \begin{equation}
        \-{KL}\tp{\pi \| \pi^{\!{ALG}}_N} \le \-{KL}\tp{\pi_0 \| \pi_0^{\-{\!{ALG}}}} + \frac{\tp{1 + \delta} \+A\tp{\tp{\pi_s}_{s \in \stp{0, 1}}}}{4 T} + 4 \delta T \int_0^1 \sum_{\set{x, y} \in E} c_s\tp{x, y} \, \dd s. \label{eq:final-error-bound}
    \end{equation}
    Note that the transition rate kernels are driven by a unit-rate Poisson clock, meaning $\sum_{y \neq x} p_s\tp{x, y} \le 1$. Thus,
    \[
        \int_0^1 \sum_{\set{x, y} \in E} c_s\tp{x, y} \, \dd s = \int_0^1 \sum_{\set{x, y} \in E} \pi_s\tp{x} p_s\tp{x, y} \, \dd s \leq \int_0^1 \frac{1}{2}\sum_{x\in \Omega} \pi_s\tp{x} \, \dd s = \frac{1}{2}.
    \]
    Substituting the assumption $\-{KL}\tp{\pi_0 \| \pi_0^{\-{\!{ALG}}}} \le \frac{\eps}{3}$ and the choice of $T$ into \eqref{eq:final-error-bound} yields the desired result.
\end{proof}

\zcref{thm:annealing-ub} indicates that the convergence rate of an annealing algorithm can be characterized by a geometric quantity, the action of the annealing path. This result extends the action-based annealing framework beyond the continuous-space perspective of \cite{GTC25}, offering a new paradigm for analyzing general annealing algorithms in discrete spaces. It bypasses the traditional layer-by-layer warm-start mixing analysis and thereby transforms the sampling problem into a purely geometric one. As a direct corollary, the annealing algorithm guarantees an error of $\eps$ in $\-{KL}$ divergence and terminates in a polynomial time in expectation as long as the action of the annealing path is polynomial.


\section{Applications to Mean-Field Ising and Potts Models}

\label{section:applications}


In this section, we provide implementations of \zcref{algo:poisson-annealing} for particular distributions including mean-field Ising and mean-field Potts models, and establish their polynomial-time convergence.


A key insight is that when a distribution exhibits inherent symmetries, we can project the sampling problem from its original, potentially exponentially large state space onto a much smaller one. Crucially, this dimensionality reduction preserves the action of the annealing path. The following \zcref{lemma:reduction-via-symmetry} formalizes this reduction. Its proof is deferred to \zcref{subsection:appendix-proof-of-symmetry-reduction-lemma}.

\begin{lemma}
    Given a curve of probability measures \(\tp{\pi_s}_{s \in \stp{0, 1}}\) with each \(\pi_s \in \+P_{> 0}\tp{\Omega}\), let \linebreak \(\tp{p_s \defeq p_{\pi_s}}_{s \in \stp{0, 1}}\) be time-inhomogeneous transition rate kernels with connected underlying graph \(\tp{\Omega, E}\) such that each \(p_s\) is reversible with respect to \(\pi_s\). Define the capacity of edge \(e = \set{x, y} \in E\) at time \(s\) as  
    \[
    c_s\tp{x, y} \defeq c_{\pi_s}\tp{x, y} = \pi_s\tp{x} p_s\tp{x, y} = \pi_s\tp{y} p_s\tp{y, x}.
    \]

    Let \(\Pi: \Omega \to \bar{\Omega}\) denote a surjective projection map, and let \(\bar{\Omega}\) denote the projected state space. We define the edge set of the projected graph \(\tp{\bar{\Omega}, \bar{E}}\) as 
    \[
    \bar{E} = \set{\set{a, b} \cmid \exists x \in \Pi^{-1}\tp{a}, y \in \Pi^{-1}\tp{b} \text{ such that } \set{x, y} \in E}.
    \]
    Define the projected curve of probability measures by
    \[
    \bar{\pi}_s \defeq \Pi_{\# \pi_s}, \quad s \in \stp{0, 1}.
    \]
    For each \(s \in \stp{0, 1}\) and edge \(\set{a, b} \in \bar{E}\), define its projected capacity by
    \[
    \bar{c}_s\tp{a, b} \defeq \sum_{\substack{x \in \Pi^{-1}\tp{a} \\ y \in \Pi^{-1}\tp{b}}} c_s\tp{x, y}.
    \]
    
    Suppose that the projection map \(\Pi\) preserves the symmetry of the system: for every \(s \in \stp{0, 1}\) and \(x, x' \in \Omega\) such that \(\Pi\tp{x} = \Pi\tp{x'}\), we have \(\pi_s\tp{x} = \pi_s\tp{x'}\) and \(\Pi_{\# p_s\tp{x, \cdot}} = \Pi_{\# p_s\tp{x', \cdot}}\). Then for every \(s \in \stp{0, 1}\),
    \[
    \abs{\dot{\bar{\pi}}}_s = \abs{\dot{\pi}}_s,
    \]
    where \(\abs{\dot{\bar{\pi}}}_s\) is the metric derivative induced by \(\tp{\bar{c}_s}_{s \in \stp{0, 1}}\), and \(\abs{\dot{\pi}}_s\) is the metric derivative induced by \(\tp{c_s}_{s \in \stp{0, 1}}\). Moreover,
    \[
    \+A\tp{\tp{\bar{\pi}_s}_{s \in \stp{0, 1}}} = \+A\tp{\tp{\pi_s}_{s \in \stp{0, 1}}},
    \]
    where \(\+A\tp{\tp{\bar{\pi}_s}_{s \in \stp{0, 1}}}\) is the action of \(\tp{\bar{\pi}_s}_{s \in \stp{0, 1}}\) with respect to the capacity \(\tp{\bar{c}_s}_{s \in \stp{0, 1}}\), and \linebreak \(\+A\tp{\tp{\pi_s}_{s \in \stp{0, 1}}}\) is the action of \(\tp{\pi_s}_{s \in \stp{0, 1}}\) with respect to the capacity \(\tp{c_s}_{s \in \stp{0, 1}}\).
    \label{lemma:reduction-via-symmetry}
\end{lemma}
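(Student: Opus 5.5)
We prove the equality of metric derivatives by showing the two inequalities separately, using the flux formulation of \zcref{thm:metric-derivative-flux-formulation} for the upper bound on $\abs{\dot{\bar{\pi}}}_s$ and the potential formulation of \zcref{thm:metric-derivative-potential-formulation} for the lower bound. The action identity then follows by integrating over $s$.

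\emph{Step 1: $\abs{\dot{\bar{\pi}}}_s \le \abs{\dot{\pi}}_s$ (pushforward of fluxes).} Let $J_s$ be an optimal admissible flux for $\tp{\pi_s}$ and define $\bar J_s(a,b) \defeq \sum_{x \in \Pi^{-1}(a),\, y \in \Pi^{-1}(b)} J_s(x,y)$ for $a \ne b$, with $\bar J_s(a,a)=0$. Then $\bar J_s$ is anti-symmetric and supported on $\bar E$. Summing the continuity equation over $x \in \Pi^{-1}(a)$ gives $\partial_s \bar\pi_s + \divergence_{\-d} \bar J_s = 0$, because the internal terms $J_s(x,y)$ with $x,y\in\Pi^{-1}(a)$ cancel by antisymmetry. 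By Cauchy--Schwarz,
\[
\frac{\bar J_s(a,b)^2}{\bar c_s(a,b)} \le \sum_{x\in\Pi^{-1}(a),\,y\in\Pi^{-1}(b)} \frac{J_s(x,y)^2}{c_s(x,y)}.
\]
Summing over $a \ne b$ yields the bound.

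\emph{Step 2: $\abs{\dot{\pi}}_s \le \abs{\dot{\bar{\pi}}}_s$ (lifting potentials).} Let $\bar\psi_s$ be an admissible potential for the projected curve and set $\psi_s \defeq \bar\psi_s\circ\Pi$. This step is where the symmetry hypothesis enters, and it is the main point to check.

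First, by symmetry, $\pi_s$ is constant on each fiber $\Pi^{-1}(a)$, so $\pi_s(x) = \bar\pi_s(a)/|\Pi^{-1}(a)|$ and hence $\partial_s\pi_s(x) = \partial_s\bar\pi_s(a)/|\Pi^{-1}(a)|$.

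Second, $\Pi_{\# p_s(x,\cdot)}$ does not depend on $x$ within the fiber. Consequently the quantity $\sum_{y\in\Pi^{-1}(b)} c_s(x,y) = \pi_s(x)\sum_{y\in \Pi^{-1}(b)} p_s(x,y)$ is the same for every $x \in \Pi^{-1}(a)$, and therefore equals $\bar c_s(a,b)/|\Pi^{-1}(a)|$.

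Combining these, for $x\in\Pi^{-1}(a)$,
\[
\sum_y (\psi_s(y)-\psi_s(x))c_s(x,y) = \sum_b(\bar\psi_s(b)-\bar\psi_s(a))\frac{\bar c_s(a,b)}{|\Pi^{-1}(a)|} = -\frac{\partial_s\bar\pi_s(a)}{|\Pi^{-1}(a)|} = -\partial_s\pi_s(x),
\]
so $\psi_s$ is admissible. Edges inside a fiber contribute zero to both sides, and off-diagonal terms with $a=b$ do not arise. The energy of $\psi_s$ is
\[
\tfrac12\sum_{x,y}(\psi_s(x)-\psi_s(y))^2c_s(x,y) = \tfrac12\sum_{a,b}(\bar\psi_s(a)-\bar\psi_s(b))^2\bar c_s(a,b).
\]
Taking $\bar\psi_s$ to be the minimizer gives $\abs{\dot{\pi}}_s^2 \le \abs{\dot{\bar\pi}}_s^2$.

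Two technical points must be checked along the way. First, $\bar c_s$ inherits the standing regularity conditions (connectivity from surjectivity of $\Pi$ and connectivity of $E$), so the projected metric derivative is well-defined. Second, the formulas of \zcref{thm:metric-derivative-potential-formulation} and \zcref{thm:metric-derivative-flux-formulation} hold at every $s$ where $\pi_s$ is differentiable, so the pointwise equality holds there.
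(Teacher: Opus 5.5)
Your proof is correct and follows essentially the same route as the paper. The pushforward of an optimal flux combined with Cauchy--Schwarz gives $\abs{\dot{\bar{\pi}}}_s \le \abs{\dot{\pi}}_s$, and a lift that uses the two fiberwise consequences of the symmetry hypothesis gives the reverse inequality. The only difference is cosmetic: the paper lifts an arbitrary admissible flux via $J_s(x,y) = \frac{c_s(x,y)}{\bar{c}_s(\Pi(x),\Pi(y))}\bar{J}_s(\Pi(x),\Pi(y))$, while you lift a potential $\psi_s = \bar{\psi}_s\circ\Pi$. The flux induced by your $\psi_s$ is exactly the paper's lift applied to the gradient flux $\bar{J}_s(a,b) = (\bar{\psi}_s(b)-\bar{\psi}_s(a))\bar{c}_s(a,b)$.
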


\zcref{lemma:reduction-via-symmetry} indicates that, although the annealed algorithm on the projected space and the original annealed algorithm are different objects defined on different state spaces, they induce exactly the same action. Consequently, we may analyze the annealed algorithm on the projected space instead of the original one, without any loss in the resulting error bounds.

In the remainder of this section, we present the convergence analysis of our proposed annealing approach for two specific prototypical highly-symmetric distributions: the mean-field Ising model (\zcref{subsection:mean-field-Ising}) and the mean-field Potts model (\zcref{subsection:mean-field-Potts}). The analysis pipeline for both models uniformly follows three main steps:
\begin{itemize}
    \item \textbf{Problem setup:} We first define the corresponding Markov processes --- specifically, the single-site Glauber dynamics for the Ising model and the \(\tp{q - 1}\)-block Glauber dynamics for the Potts model --- and construct the continuous annealing path given by the inverse temperature schedule \(\beta: \stp{0, 1} \to \left[0, +\infty\right)\). To tackle the exponentially large state spaces, we apply a two-step symmetry reduction scheme: the configurations are first projected onto macroscopic magnetization scalars or vectors, and the state space is subsequently folded to eliminate the inherent sign or color-permutation symmetries.
    
    \item \textbf{Bounding the discrete action:} Since the aforementioned projections rigorously preserve system symmetries, \zcref{lemma:reduction-via-symmetry} guarantees that the discrete action remains invariant. This allows us to bound the action of the original annealing path by directly studying the simpler geometric landscape (e.g., unimodality) of the projected measure.
    
    \item \textbf{Algorithm implementation and convergence guarantee:} Finally, by discretizing the inverse temperature path and executing the Poissonized annealing method (\zcref{algo:poisson-annealing}) on the original un-projected state space, we translate the upper bound on the discrete action into explicit polynomial-time convergence guarantees for both algorithms.
\end{itemize}


\subsection{Mean-Field Ising Model}

\label{subsection:mean-field-Ising}

We begin by applying our approach to analyze the annealed Glauber dynamics of the mean-field Ising model.

\begin{definition}[Mean-Field Ising Model]
    Let \(\Omega = \set{\pm 1}^n\) denote the state space. The mean-field Ising model on \(\Omega\) with inverse temperature \(\beta \ge 0\) is defined as
    \[
    \mu_{\beta}\tp{\sigma} \propto \exp\tp{\frac{\beta}{2 n} \sigma^{\-T} A \sigma}, \quad \sigma \in \Omega,
    \]
    where \(A\) is the adjacency matrix of the complete graph \(K_n\).
\end{definition}

The following two propositions are immediate consequences of the definition of the mean-field Ising model.
\begin{proposition}
    Let \(\mu_{\beta}\) be the mean-field Ising model with inverse temperature \(\beta\) on \(\Omega = \set{\pm 1}^n\). Then
    \[
    \mu_{\beta}\tp{\sigma} \propto \exp\tp{\frac{\beta}{2 n} M\tp{\sigma}^2}, \quad \sigma \in \Omega,
    \]
    where
    \[
    M\tp{\sigma} \defeq \sum_{i = 1}^n \sigma_i,
    \]
    denotes the total magnetization of \(\sigma\).
    \label{prop:mean-field-Ising-is-a-function-of-magnetization}
\end{proposition}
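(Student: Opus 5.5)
The plan is to expand the quadratic form $\sigma^{\mathrm T} A \sigma$ directly. Since $A$ is the adjacency matrix of $K_n$, we have $A = \mathbf{1}\mathbf{1}^{\mathrm T} - I_n$. Hence
\[
\sigma^{\mathrm T} A \sigma = (\mathbf{1}^{\mathrm T}\sigma)^2 - \sigma^{\mathrm T}\sigma = M(\sigma)^2 - \sum_{i=1}^n \sigma_i^2 .
\]

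The key observation is that $\sigma_i \in \{\pm 1\}$ gives $\sigma_i^2 = 1$ for every $i$. So $\sigma^{\mathrm T}\sigma = n$ for all $\sigma \in \Omega$, and therefore
\[
\sigma^{\mathrm T} A \sigma = M(\sigma)^2 - n .
\]

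Substituting into the definition of $\mu_\beta$ gives
\[
\exp\Bigl(\frac{\beta}{2n}\sigma^{\mathrm T}A\sigma\Bigr) = e^{-\beta/2}\exp\Bigl(\frac{\beta}{2n}M(\sigma)^2\Bigr).
\]
The factor $e^{-\beta/2}$ does not depend on $\sigma$, so it is absorbed into the normalizing constant. This yields $\mu_\beta(\sigma) \propto \exp\bigl(\frac{\beta}{2n}M(\sigma)^2\bigr)$.

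There is no real obstacle. The only point needing care is that the diagonal of $A$ is zero, which produces the constant shift $-n$; if one wanted to be explicit, one could equivalently write $\sigma^{\mathrm T}A\sigma = 2\sum_{i<j}\sigma_i\sigma_j$ and use $\bigl(\sum_i \sigma_i\bigr)^2 = n + 2\sum_{i<j}\sigma_i\sigma_j$.
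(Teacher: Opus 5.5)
Your proof is correct and is exactly the computation the paper has in mind; the paper gives no separate proof and simply calls this an immediate consequence of the definition. Writing $A = \mathbf{1}\mathbf{1}^{\mathrm T} - I_n$ gives $\sigma^{\mathrm T}A\sigma = M(\sigma)^2 - n$, and the $\sigma$-independent factor $e^{-\beta/2}$ is absorbed into the normalizing constant.
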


\begin{proposition}
    If the total magnetization of \(\sigma\) is \(m\), then \(\sigma\) has \(\frac{n + m}{2}\) spins equal to \(+1\) and \(\frac{n - m}{2}\) spins equal to \(-1\). Consequently, 
    \[
    \abs{M^{-1}\tp{m}} = \binom{n}{\frac{n + m}{2}}.
    \]
    \label{prop:number-of-configurations-with-a-given-magnetization-for-mean-field-Ising}
\end{proposition}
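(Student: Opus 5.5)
The statement is \zcref{prop:number-of-configurations-with-a-given-magnetization-for-mean-field-Ising}. It is a counting fact, and I would prove it directly from the definition of $M$.

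Fix $\sigma \in \set{\pm 1}^n$ with $M\tp{\sigma} = m$. Let $k$ be the number of indices with $\sigma_i = +1$, so $n - k$ indices have $\sigma_i = -1$. Then
\[
M\tp{\sigma} = k - \tp{n - k} = 2k - n = m,
\]
which gives $k = \frac{n+m}{2}$ and $n - k = \frac{n-m}{2}$. In particular, $m$ must have the same parity as $n$ and satisfy $\abs{m} \le n$; these are exactly the values of $m$ for which $M^{-1}\tp{m}$ is nonempty.

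For the cardinality, note that the map $\sigma \mapsto \set{i : \sigma_i = +1}$ is a bijection from $\set{\pm 1}^n$ to the subsets of $[n]$. Under this bijection, $M^{-1}\tp{m}$ corresponds exactly to the subsets of size $\frac{n+m}{2}$. Hence
\[
\abs{M^{-1}\tp{m}} = \binom{n}{\frac{n+m}{2}}.
\]

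There is no real obstacle here. The only point needing care is the implicit restriction to admissible $m$, namely $m \equiv n \pmod 2$ and $\abs{m} \le n$. I would state this restriction explicitly so that the binomial coefficient is well defined whenever the proposition is used later in the projected chain.
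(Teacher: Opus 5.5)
Your argument is correct and is exactly the routine count the paper has in mind: the paper gives no written proof, calling this an immediate consequence of the definition of $M$, and your bijection $\sigma \mapsto \{i : \sigma_i = +1\}$ supplies it. Your remark on admissible $m$ matches the paper, which works on $\bar{\Omega} = \{-n, -n+2, \dots, n\}$, where the binomial coefficient is well defined.
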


By \zcref{prop:mean-field-Ising-is-a-function-of-magnetization,prop:number-of-configurations-with-a-given-magnetization-for-mean-field-Ising}, the projected distribution $\bar{\mu}_\beta \defeq M_{\# \mu_\beta}$ on the magnetization space $\bar{\Omega}$ takes the form
\[
\bar{\mu}_\beta\tp{m} \propto \binom{n}{\frac{n + m}{2}} \exp\tp{\frac{\beta}{2 n} m^2}, \quad m \in \bar{\Omega}.
\]
The mean-field Ising model undergoes a second-order phase transition at $\beta_c = 1$. For $\beta < 1$, the entropic factor $\binom{n}{(n+m)/2}$ dominates, and $\bar{\mu}_\beta$ is unimodal with its mode at $m = 0$ (disordered phase). For $\beta > 1$, the energy factor $\exp\tp{\frac{\beta}{2n} m^2}$ prevails at large $|m|$, and $\bar{\mu}_\beta$ becomes bimodal with modes at $m \approx \pm n \sqrt{1 - 1/\beta}$, separated by an exponentially small probability valley near the origin. By exploiting the spin-flip symmetry $\sigma \mapsto -\sigma$, one can fold the state space to recover unimodality, which is the basis of our action bound. The precise landscape characterization is given in \zcref{lemma:landscape-of-projected-mean-field-Ising-simplified-version}.

\subsubsection{Problem Setup}

\paragraph{Single-site Glauber dynamics.} 

Suppose that the state space is \(\Omega = \set{\pm 1}^n\). Let \(p_{\pi}\) denote the transition rate kernel of the single-site Glauber dynamics targeted at \(\pi \in \+P_{> 0}\tp{\Omega}\):
\[
p_{\pi}\tp{\sigma, \sigma^{\oplus i}} \defeq \frac{1}{n}\cdot \frac{\pi\tp{\sigma^{\oplus i}}}{\pi\tp{\sigma} + \pi\tp{\sigma^{\oplus i}}}, \quad \sigma \in \Omega, \, i \in \stp{n}.
\]
The underlying graph \(\tp{\Omega, E}\) is the \(n\)-dimensional hypercube, namely
\[
E \defeq \set{\set{\sigma, \sigma^{\oplus i}} \cmid \sigma \in \Omega, \, i \in \stp{n}}.
\]
The induced capacity of edge \(\set{\sigma, \sigma^{\oplus i}} \in E\) is given by
\[
c_{\pi}\tp{\sigma, \sigma^{\oplus i}} = \frac{1}{n}\cdot \frac{\pi\tp{\sigma} \pi\tp{\sigma^{\oplus i}}}{\pi\tp{\sigma} + \pi\tp{\sigma^{\oplus i}}}.
\]

\paragraph{Annealing path.} 

Let \(\mu_{\beta}\) denote the mean-field Ising model with inverse temperature \(\beta\), and set \(\pi = \mu_{\beta}\). We slight abuse the notation by considering a function $\beta: [0, 1] \to [0, \infty)$ with $\beta(1)$ equals to the target inverse temperature $\beta$. We consider the annealed Glauber dynamics along the path \(\tp{\pi_s}_{s \in \stp{0, 1}}\), defined by
\[
\pi_s \defeq \mu_{\beta\tp{s}}, \quad s \in \stp{0, 1}.
\]

\paragraph{Reduction via symmetry.}

We now discuss how to utilize the symmetry of $\pi_s$ to apply \zcref{lemma:reduction-via-symmetry}. We first project onto the magnetization via \(M: \Omega \to \bar{\Omega}\):
\[
M\tp{\sigma} \defeq \sum_{i = 1}^n \sigma_i, \quad \sigma \in \Omega,
\]
where \(\bar{\Omega} = \set{-n, -n + 2, \cdots, n - 2, n}\). The induced graph \(\tp{\bar{\Omega}, \bar{E}}\) is a one-dimensional chain:
\[
\bar{E} = \set{\set{m, m + 2} \cmid m, m + 2 \in \bar{\Omega}}.
\]
By \zcref{prop:mean-field-Ising-is-a-function-of-magnetization} and \zcref{prop:number-of-configurations-with-a-given-magnetization-for-mean-field-Ising}, for any \(s \in \stp{0, 1}\) and \(m \in \bar{\Omega}\),
\[
\bar{\pi}_s\tp{m} = \abs{M^{-1}\tp{m}} \pi_s\tp{\sigma} = \binom{n}{\frac{n + m}{2}} \pi_s\tp{\sigma}, \quad \forall \sigma \in M^{-1}\tp{m}.
\]
Moreover, the projected chain satisfies the following capacity lower bound (proof deferred to \zcref{subsection:appendix-proof-of-mean-field-Ising-results}).

\begin{proposition}
    For any \(s \in \stp{0, 1}\) and \(\set{m, m + 2} \in \bar{E} \),
    \[
    \bar{c}_s\tp{m, m + 2} \geq \frac{1}{2 n} \tp{\bar{\pi}_s\tp{m} \wedge \bar{\pi}_s\tp{m + 2}}.
    \]
    \label{prop:capacity-lower-bound-for-projected-chain-of-mean-field-Ising}
\end{proposition}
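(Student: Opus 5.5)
We compute the projected capacity directly from its definition and then apply the elementary inequality $\frac{ab}{a+b} \ge \frac{1}{2}(a \wedge b)$ for $a,b>0$.

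\textbf{Counting the edges.} Fix $s$ and the edge $\{m, m+2\}$, and write $\pi = \pi_s$. The configurations $x \in M^{-1}(m)$ and $y \in M^{-1}(m+2)$ that are adjacent in the hypercube are exactly the pairs $y = x^{\oplus i}$ where $x_i = -1$. Each $x \in M^{-1}(m)$ has $\frac{n-m}{2}$ such coordinates. By \zcref{prop:mean-field-Ising-is-a-function-of-magnetization}, $\pi$ is constant on each fibre; denote its common values on $M^{-1}(m)$ and $M^{-1}(m+2)$ by $\pi(m)$ and $\pi(m+2)$. Every such edge then has capacity
\[
\frac{1}{n}\cdot\frac{\pi(m)\pi(m+2)}{\pi(m)+\pi(m+2)}.
\]
Summing over all these edges gives
\[
\bar c_s(m,m+2) = \binom{n}{\frac{n+m}{2}}\cdot\frac{n-m}{2}\cdot\frac{1}{n}\cdot\frac{\pi(m)\pi(m+2)}{\pi(m)+\pi(m+2)}.
\]
Counting from the other side, the same number of edges equals $\binom{n}{\frac{n+m+2}{2}}\cdot\frac{n+m+2}{2}$, since each $y \in M^{-1}(m+2)$ has $\frac{n+m+2}{2}$ coordinates equal to $+1$. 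Call this common edge count $K$.

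\textbf{Applying the inequality.} Since $\frac{ab}{a+b} \ge \frac{1}{2}\min(a,b)$, we get $\bar c_s \ge \frac{K}{2n}\min(\pi(m), \pi(m+2))$. Suppose the minimum is $\pi(m)$. Then
\[
K\pi(m) = \frac{n-m}{2}\,\bar\pi_s(m) \ge \bar\pi_s(m),
\]
because $m \le n-2$ on any edge. In the other case,
\[
K\pi(m+2) = \frac{n+m+2}{2}\,\bar\pi_s(m+2) \ge \bar\pi_s(m+2),
\]
because $m+2 \ge -n+2$.

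In either case, $\bar c_s \ge \frac{1}{2n}\bigl(\bar\pi_s(m)\wedge\bar\pi_s(m+2)\bigr)$, which is the claim.

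The only delicate point is the double-counting identity that lets us choose which fibre to normalize by. Once $K$ is written from both sides, the bound is immediate.
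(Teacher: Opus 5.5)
Your proof is correct and follows essentially the same route as the paper. Both arguments compute $\bar c_s(m,m+2)$ exactly as the number of hypercube edges between the two fibres times $\frac{1}{n}\frac{\pi(m)\pi(m+2)}{\pi(m)+\pi(m+2)}$, and then apply an elementary harmonic-mean bound; the only difference is the order of steps. You apply $\frac{ab}{a+b}\ge\frac12(a\wedge b)$ to the microscopic weights and then use the two-sided edge count to return to $\bar\pi_s$, whereas the paper first rewrites in terms of $\bar\pi_s$ via the ratio $\frac{n-m}{n+m+2}$ and bounds the weighted denominator by a multiple of the maximum, reaching the intermediate constant $\frac{(n-m)(n+m+2)}{4n(n+1)}$ before relaxing to $\frac{1}{2n}$.
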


Note that the mean-field Ising model exhibits a natural spin-flip symmetry. This allows us to further project the state space by folding the it via \(S: \bar{\Omega} \to \bar{\bar{\Omega}}\):
\[
S\tp{m} \defeq \abs{m}, \quad m \in \bar{\Omega}.
\]
where \(\bar{\bar{\Omega}} = \bar{\Omega} \cap \bb R_{\ge 0}\), with the induced graph \(\tp{\bar{\bar{\Omega}}, \bar{\bar{E}}}\):
\[
\bar{\bar{E}} = \set{\set{m, m + 2} \cmid m, m + 2 \in \bar{\bar{\Omega}}}.
\]
The projected measure satisfies
\[
\bar{\bar{\pi}}_s\tp{m} = r\tp{m} \bar{\pi}_s\tp{m}, \quad r\tp{m} \defeq \begin{cases}
1, & m = 0 \\
2, & m > 0 \\
\end{cases}, \quad \forall m \in \bar{\bar{\Omega}}.
\]
The capacity of edge \(\set{m, m + 2} \in \bar{\bar{E}}\) satisfies
\begin{equation}
    \label{eq:capacity-folded-chain}
\bar{\bar{c}}_s\tp{m, m + 2} = 2 \bar{c}_s\tp{m, m + 2} \geq \frac{1}{2 n} \tp{\bar{\bar{\pi}}_s\tp{m} \wedge \bar{\bar{\pi}}_s\tp{m + 2}}.
\end{equation}

\subsubsection{Bounding the Discrete Action}

Both projection maps --- the magnetization map \(M\) and the symmetry reduction \(S\) --- preserve the symmetry of the system. Hence, by \zcref{lemma:reduction-via-symmetry}, it suffices to bound the action of the projected annealing path \(\tp{\bar{\bar{\pi}}_s}_{s \in \stp{0, 1}}\) with respect to the capacity \(\tp{\bar{\bar{c}}_s}_{s \in \stp{0, 1}}\).

The following lemma characterizes the geometric landscape of the mean-field Ising model. It is a simplified version of \zcref{lemma:landscape-of-projected-mean-field-Ising}.

\begin{lemma}
    Consider the mean-field Ising model \(\mu_{\beta}\) with inverse temperature \(\beta \ge 0\) on \(\Omega = \set{\pm 1}^n\). Let \(\bar{\mu}_{\beta} \defeq M_{\# \mu_{\beta}}\) denote its projected distribution on \(\bar{\Omega}\). Then \(\bar{\mu}_{\beta}\) satisfies the following landscape property: the sequence \(\set{\bar{\mu}_{\beta}\tp{m}}_{m \ge 0}\) is either increasing, or decreasing, or first increasing and then decreasing.
    \label{lemma:landscape-of-projected-mean-field-Ising-simplified-version}
\end{lemma}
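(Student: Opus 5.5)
We will study the ratio of consecutive terms of the sequence $\set{\bar{\mu}_{\beta}\tp{m}}_{m \ge 0}$ and show that it crosses the value $1$ at most once, from above to below. Since $\bar{\mu}_\beta\tp{m} \propto \binom{n}{\frac{n+m}{2}} \exp\tp{\frac{\beta}{2n} m^2}$ by \zcref{prop:mean-field-Ising-is-a-function-of-magnetization,prop:number-of-configurations-with-a-given-magnetization-for-mean-field-Ising}, we set $k = \frac{n+m}{2}$ for $m \ge 0$ in $\bar{\Omega}$, so that $m+2$ corresponds to $k+1$. The ratio is
\[
R\tp{m} \defeq \frac{\bar{\mu}_\beta\tp{m+2}}{\bar{\mu}_\beta\tp{m}} = \frac{n-k}{k+1}\exp\tp{\frac{\beta}{2n}\tp{(m+2)^2 - m^2}} = \frac{n-m}{n+m+2}\exp\tp{\frac{2\beta(m+1)}{n}}.
\]
It then suffices to show that the set $\set{m \ge 0 : R\tp{m} \ge 1}$ is an initial segment (possibly empty or all of the range). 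That gives monotone increase up to some index followed by monotone decrease, which covers all three alternatives in the statement.

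Define $g\tp{m} \defeq \log R\tp{m} = \log(n-m) - \log(n+m+2) + \frac{2\beta(m+1)}{n}$ as a function of a real variable $m \in [0, n-2]$. Differentiating twice gives
\[
g''\tp{m} = -\frac{1}{(n-m)^2} + \frac{1}{(n+m+2)^2}.
\]
For $m \ge 0$ we have $n-m < n+m+2$, hence $g'' < 0$, so $g$ is strictly concave on $[0, n-2]$. Concavity alone does not prevent $g$ from being negative, then positive, then negative again, so the key extra input is the sign at the left endpoint. The ratio $\frac{n-m}{n+m+2}$ is decreasing in $m$ while the exponential factor is increasing, which is why concavity is needed at all.

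The main obstacle is ruling out the pattern ``decrease, then increase, then decrease''. We handle it by checking the behavior of $g$ near the start. At the natural midpoint of the spin-flip symmetry, $g(-1) = \log\frac{n+1}{n+1} + 0 = 0$ (extending the formula to $m=-1$). A concave function that vanishes at $-1$ has a superlevel set $\set{g \ge 0}$ that is an interval containing $-1$ whenever it extends beyond $-1$. Concretely, if $g(m_1) < 0$ for some $m_1 \ge 0$, then $-1 < m_1 < m_2$ and concavity give $g(m_1) \ge \lambda g(-1) + (1-\lambda) g(m_2) = (1-\lambda) g(m_2)$ with $\lambda \in (0,1)$. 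Hence $g(m_2) < 0$ as well, so once the sign of $g$ turns negative it stays negative. The set $\set{m \ge 0: R(m) \ge 1}$ is therefore an initial segment, and the lemma follows. Parity issues do not matter here: when $n$ is odd the smallest nonnegative $m$ is $1$, but the argument only uses real $m \ge 0$ together with the anchor point $m=-1$.
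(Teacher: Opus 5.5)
Your proof is correct, and it takes a different, shorter route than the paper. The paper uses the same log-ratio ($f=g$) and the same concavity computation, but then splits into cases according to $\beta$.
\begin{itemize}
\item For $\beta\le 1-\frac{1}{n}$, and for $1-\frac{1}{n}<\beta<1$ with $f(0)\le 0$ and $n\ge 4$, it must additionally prove $f'(0)\le 0$ via explicit Taylor estimates on $\log(1+\frac{2}{n})$. Concavity then propagates the nonpositive sign; the cases $n=2,3$ are handled by hand.
\item When $f(0)>0$, it uses $f(m)\to-\infty$ as $m\to n^-$ to obtain a single sign change.
\end{itemize}

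Your anchor $g(-1)=0$ replaces all of this. It is the spin-flip symmetry $\bar{\mu}_{\beta}(1)=\bar{\mu}_{\beta}(-1)$ read through the ratio formula. With the chord inequality it shows that $\{m\ge 0: g(m)\ge 0\}$ is downward closed, uniformly in $\beta\ge 0$ and $n$, with no case analysis. In particular it gives directly that $g(0)\le 0$ forces $g\le 0$ on $[0,n)$, which is exactly what the paper's Taylor bounds on $f'(0)$ are for.

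The paper's longer route does buy something: the extra quantitative content of its full landscape lemma. That content is the bound $\bar{\mu}_{\beta}(m)<e\,\bar{\mu}_{\beta}(0)$ (or $e\,\bar{\mu}_{\beta}(1)$) for $\beta<1$, and the location $m^*>n\sqrt{1-1/\beta}$ of the mode for $\beta\ge 1$. Both are used later in the Potts path construction, but neither is needed for the statement here.

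One small point to tighten in your write-up. You verify $g''<0$ only for $m\ge 0$, yet the chord argument uses concavity on $[-1,m_2]$. This is fine, but say why: $g$ is smooth on $(-n-2,n)$, and for every $m\in[-1,n)$ both $n-m$ and $n+m+2$ are positive with $n-m\le n+m+2$. Hence $g''\le 0$ on all of $[-1,n)$.
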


A key observation from \zcref{lemma:landscape-of-projected-mean-field-Ising-simplified-version} is that the probability measure \(\bar{\bar{\pi}}_s = S_{\# \bar{\mu}_{\beta\tp{s}}}\) is unimodal. Consequently, we can directly bound the \Poincare constant of the associated birth-death chain with target distribution \(\bar{\bar{\pi}}_s\) and capacity \(\bar{\bar{c}}_s\).

\begin{lemma}\label{lem:PI-for-projected-Ising}
    For any \(s \in \stp{0, 1}\), let \(\bar{\bar{p}}_s\) denote the transition rate kernel of the birth-death chain on \(\bar{\bar{\Omega}}\) with target distribution \(\bar{\bar{\pi}}_s\) and capacity \(\bar{\bar{c}}_s\). Then \(\bar{\bar{p}}_s\) satisfies the following \Poincare inequality:
    \[
    \*{Var}_{\bar{\bar{\pi}}_s}\stp{f} \le n^3 \cdot \+E_{\bar{\bar{p}}_s}\tp{f, f}, \quad \forall f: \bar{\bar{\Omega}} \to \bb R.
    \]
\end{lemma}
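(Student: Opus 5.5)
We apply the canonical paths bound of \zcref{thm:canonical-paths} to the birth-death chain on $\bar{\bar{\Omega}} = \set{0, 2, \dots}$ (or $\set{1, 3, \dots}$ when $n$ is odd). This space has at most $n/2 + 1$ states. Because the graph is a path, the canonical path $\gamma_{ab}$ between $a < b$ is forced: it is the unique monotone path through $a, a+2, \dots, b$. Its length satisfies $\abs{\gamma_{ab}} = (b-a)/2 \le n/2$.

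Fix an edge $e = \set{k, k+2}$. The pairs $\set{a, b}$ whose path uses $e$ are exactly those with $a \le k$ and $b \ge k+2$. The congestion of $e$ is therefore at most
\[
\frac{n}{2} \cdot \frac{1}{\bar{\bar{c}}_s(k, k+2)} \sum_{a \le k} \bar{\bar{\pi}}_s(a) \sum_{b \ge k+2} \bar{\bar{\pi}}_s(b).
\]
By \eqref{eq:capacity-folded-chain}, $\bar{\bar{c}}_s(k, k+2) \ge \frac{1}{2n}\tp{\bar{\bar{\pi}}_s(k) \wedge \bar{\bar{\pi}}_s(k+2)}$. The congestion is thus bounded by
\[
n^2 \cdot \frac{\bar{\bar{\pi}}_s(\le k)\, \bar{\bar{\pi}}_s(\ge k+2)}{\bar{\bar{\pi}}_s(k) \wedge \bar{\bar{\pi}}_s(k+2)}.
\]
It remains to show this ratio is at most $n$.

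Here unimodality enters. By \zcref{lemma:landscape-of-projected-mean-field-Ising-simplified-version}, $\bar{\mu}_{\beta(s)}(m)$ for $m \ge 0$ is increasing, decreasing, or increasing then decreasing. Since $\bar{\bar{\pi}}_s = r \cdot \bar{\mu}_{\beta(s)}$ with $r(0) = 1$ and $r = 2$ elsewhere, the folded measure is also unimodal. The factor $r$ can only help at $m = 0$, because doubling every value past $0$ preserves monotonicity on $m > 0$. If a small issue appears at $m = 0$, I would absorb it by a factor of $2$.

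Let $m^*$ be the mode. If $k+2 \le m^*$, then every $a \le k$ satisfies $\bar{\bar{\pi}}_s(a) \le \bar{\bar{\pi}}_s(k)$, and also $\bar{\bar{\pi}}_s(k) \le \bar{\bar{\pi}}_s(k+2)$. This gives $\bar{\bar{\pi}}_s(\le k) \le (k/2 + 1)\, \bar{\bar{\pi}}_s(k) = (k/2+1)\,\tp{\bar{\bar{\pi}}_s(k) \wedge \bar{\bar{\pi}}_s(k+2)}$, while we bound $\bar{\bar{\pi}}_s(\ge k+2) \le 1$. If instead $k \ge m^*$, the symmetric argument applies: the tail $\bar{\bar{\pi}}_s(\ge k+2)$ is at most the number of states times $\bar{\bar{\pi}}_s(k+2)$, which equals the minimum. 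Because the modes lie on the lattice, these two cases cover every edge. In either case the ratio is at most $\abs{\bar{\bar{\Omega}}} \le n/2 + 1 \le n$ for $n \ge 2$ (for $n = 1$ the statement is trivial or checked directly). The congestion is then at most $n^3$, and \zcref{thm:canonical-paths} yields the stated \Poincare inequality.

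The main obstacle is bookkeeping rather than ideas. First, the folding factor $r$ must be checked not to break unimodality at $m = 0$. Second, the constants must be tracked carefully: the path length $n/2$, the capacity factor $2n$, and the $n/2 + 1$ states combine to $n^3$ only if the estimates are not wasteful. If the constant comes out slightly too large, I would tighten the bound using $\bar{\bar{\pi}}_s(\le k)\,\bar{\bar{\pi}}_s(\ge k+2) \le \frac{1}{4}$-type bounds, or the sharper count of states on one side of the edge.
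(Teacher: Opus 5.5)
Your proposal is correct and follows the paper's proof essentially step for step. It uses canonical paths on the path graph with $|\gamma_{xy}| \le n/2$ and the capacity bound \eqref{eq:capacity-folded-chain}, then uses unimodality to bound the mass on the side of the edge away from the mode by the number of states times the smaller endpoint weight, which yields congestion at most $n^3$. Your hedge about the fold at $m=0$ is unnecessary. The factor $r$ only lowers the first term of a unimodal sequence relative to the rest, which cannot destroy unimodality, so no extra factor of $2$ ever needs to be absorbed.
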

\begin{proof}
    We apply the canonical path argument on $\bar{\bar{\Omega}}$ to establish the \Poincare inequality. 
    
    Since $\bar{\bar{\Omega}} \subseteq \stp{0, n}$ with a step size of $2$, the graph $\tp{\bar{\bar{\Omega}}, \bar{\bar{E}}}$ is a simple one-dimensional path graph. Consequently, for any two states $x, y \in \bar{\bar{\Omega}}$ (assume $x < y$), there is a unique simple path $\gamma_{xy}$ connecting them. The maximum length of any path is strictly bounded by the total number of edges, which satisfies $\abs{\gamma_{xy}} \le \frac{n}{2}$.
    
    For any edge $e = \set{z, z+2} \in \bar{\bar{E}}$, the canonical paths that traverse $e$ are precisely those originating from some state $x \le z$ and terminating at some state $y \ge z+2$. According to \zcref{thm:canonical-paths}, we need to bound the congestion for this edge:
    \[
    B\tp{e} \defeq \frac{1}{\bar{\bar{c}}_s\tp{z, z+2}} \sum_{x \le z} \sum_{y \ge z+2} \abs{\gamma_{xy}} \bar{\bar{\pi}}_s\tp{x} \bar{\bar{\pi}}_s\tp{y}.
    \]
    By \zcref{lemma:landscape-of-projected-mean-field-Ising-simplified-version}, the projected distribution $\bar{\bar{\pi}}_s$ is unimodal. Therefore, the minimum of the two endpoints of $e$, $\bar{\bar{\pi}}_s\tp{z} \wedge \bar{\bar{\pi}}_s\tp{z+2}$, is exactly attained at the endpoint that is further away from the mode. Without loss of generality, assume $\bar{\bar{\pi}}_s\tp{z} \le \bar{\bar{\pi}}_s\tp{z+2}$. This implies the mode lies to the right of $z$, and thus $\bar{\bar{\pi}}_s$ is monotonically increasing on the subset $\set{x \in \bar{\bar{\Omega}} \mid x \le z}$. 
    
   By \eqref{eq:capacity-folded-chain}, we have
        \[
        \bar{\bar{c}}_s\tp{z, z+2} \ge \frac{1}{2n} \tp{\bar{\bar{\pi}}_s\tp{z} \wedge \bar{\bar{\pi}}_s\tp{z+2}} = \frac{1}{2n} \bar{\bar{\pi}}_s\tp{z}.
        \]

    Substituting this bound and the path length bound $\abs{\gamma_{xy}} \le \frac{n}{2}$ into the congestion formula, we obtain:
    \[
    \begin{aligned}
        B\tp{e} &\le \frac{1}{\frac{1}{2n} \bar{\bar{\pi}}_s\tp{z}} \cdot \frac{n}{2} \cdot \tp{\sum_{x \le z} \bar{\bar{\pi}}_s\tp{x}} \cdot \tp{\sum_{y \ge z+2} \bar{\bar{\pi}}_s\tp{y}} \\
        &\le \frac{2n}{\bar{\bar{\pi}}_s\tp{z}} \cdot \frac{n}{2} \cdot n \bar{\bar{\pi}}_s\tp{z} \cdot 1 \\
        &= n^3.
    \end{aligned}
    \]
    If the case is $\bar{\bar{\pi}}_s\tp{z+2} \le \bar{\bar{\pi}}_s\tp{z}$, the same argument applies symmetrically, yielding the exact same upper bound $n^3$. 
    
    Taking the maximum over all edges $e \in \bar{\bar{E}}$ and applying \zcref{thm:canonical-paths}, we conclude the proof:
    \[
    \*{Var}_{\bar{\bar{\pi}}_s}\stp{f} \le n^3 \cdot \+E_{\bar{\bar{p}}_s}\tp{f, f}. \qedhere
    \]
\end{proof}

To obtain the final bound of the action, we also need the following technical lemma (proof deferred to \zcref{subsection:appendix-proof-of-mean-field-Ising-results}).

\begin{lemma}
    For any \(s \in \stp{0, 1}\) and \(m \in \bar{\bar{\Omega}}\),
    \[
    \partial_s \bar{\bar{\pi}}_s\tp{m} = \frac{\beta'\tp{s}}{2 n} \bar{\bar{\pi}}_s\tp{m} \tp{m^2 - \*E_{m \sim \bar{\bar{\pi}}_s}\stp{m^2}}.
    \]
    Moreover,
    \[
    \norm{\partial_s \log \bar{\bar{\pi}}_s}_{L^2\tp{\bar{\bar{\pi}}_s}}^2 \le \frac{1}{16} \tp{\beta'\tp{s}}^2 n^2.
    \]
    \label{lemma:bound-on-derivative-of-log-projected-measure-for-mean-field-Ising}
\end{lemma}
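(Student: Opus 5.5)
The derivative formula is a direct computation with the explicit form of the folded measure; the $L^2$ bound then reduces to a variance bound for $m^2$.

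\textbf{Step 1: the derivative formula.} By the discussion preceding the lemma,
\[
\bar{\bar{\pi}}_s(m) = \frac{w(m)\exp\tp{\frac{\beta(s)}{2n}m^2}}{Z(s)},
\]
where $w(m) = r(m)\binom{n}{\frac{n+m}{2}}$ does not depend on $s$. Differentiating the logarithm gives
\[
\partial_s \log \bar{\bar{\pi}}_s(m) = \frac{\beta'(s)}{2n} m^2 - \partial_s \log Z(s).
\]
The usual log-partition identity gives
\[
\partial_s \log Z(s) = \frac{\beta'(s)}{2n}\,\E[m\sim\bar{\bar{\pi}}_s]{m^2}.
\]
Multiplying through by $\bar{\bar{\pi}}_s(m)$ yields the first claim.

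\textbf{Step 2: reducing the $L^2$ norm to a variance.} From Step 1,
\[
\norm{\partial_s \log \bar{\bar{\pi}}_s}_{L^2(\bar{\bar{\pi}}_s)}^2 = \frac{(\beta'(s))^2}{4n^2}\,\Var[\bar{\bar{\pi}}_s]{m^2}.
\]
It therefore suffices to show $\Var[\bar{\bar{\pi}}_s]{m^2} \le n^4/4$.

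\textbf{Step 3: bounding the variance.} The random variable $m^2$ takes values in $[0,n^2]$. Popoviciu's inequality says a variable supported in an interval of length $L$ has variance at most $L^2/4$. With $L = n^2$ this gives $\Var{m^2} \le n^4/4$. Substituting into Step 2,
\[
\norm{\partial_s \log \bar{\bar{\pi}}_s}_{L^2(\bar{\bar{\pi}}_s)}^2 \le \frac{(\beta'(s))^2}{4n^2}\cdot\frac{n^4}{4} = \frac{1}{16}(\beta'(s))^2 n^2,
\]
which is exactly the claimed constant.

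No step is a real obstacle. The only point needing care is the normalization in Step 1: the Gibbs weight is $\exp\tp{\frac{\beta}{2n}m^2}$, so the derivative carries the factor $\frac{1}{2n}$ and not $\frac{1}{n}$. The folding factor $r(m)$ and the binomial coefficient are $s$-independent and cancel in the logarithmic derivative. Differentiability in $s$ is implicitly assumed, since $\beta$ must be differentiable for $\beta'(s)$ to appear in the statement.
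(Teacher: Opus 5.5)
Your proposal is correct and follows the paper's proof essentially step for step. You differentiate the logarithm of the explicit Gibbs form, whose weight $r(m)\binom{n}{(n+m)/2}$ is $s$-independent, and apply the log-partition identity; then you reduce the $L^2$ norm to $\frac{(\beta'(s))^2}{4n^2}\,\mathrm{Var}(m^2)$ and bound that variance by $n^4/4$ using $m^2\in[0,n^2]$. The only difference is cosmetic: the paper proves this Popoviciu-type bound inline by centering at the midpoint of the interval rather than citing it.
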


Combining the above lemmas, we establish a polynomial upper bound on the discrete action of the original annealing path \(\tp{\pi_s}_{s \in \stp{0, 1}}\) with respect to the original capacity \(\tp{c_s}_{s \in \stp{0, 1}}\).

\begin{theorem}
    \[
    \+A\tp{\tp{\pi_s}_{s \in \stp{0, 1}}} \le \frac{n^5}{16} \int_0^1 \tp{\beta'\tp{s}}^2 \, \dd s.
    \]
\end{theorem}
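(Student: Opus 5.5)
The plan is to combine the symmetry reduction with the \Poincare-based bound on the metric derivative, pointwise in $s$, and then integrate.

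\emph{Step 1: reduce to the folded chain.} Both projections, the magnetization map $M$ and the folding $S$, satisfy the hypotheses of \zcref{lemma:reduction-via-symmetry}. For $M$, configurations with equal magnetization have equal probability under $\mu_{\beta(s)}$ by \zcref{prop:mean-field-Ising-is-a-function-of-magnetization}. They also have the same pushforward of the Glauber kernel: from magnetization $m$ one moves to $m\pm2$ at a rate depending only on $m$. For $S$, the spin-flip symmetry gives $\bar\pi_s(m)=\bar\pi_s(-m)$, and the projected rates are symmetric under $m\mapsto -m$. Applying the lemma twice (to $S\circ M$, or sequentially) yields $\abs{\dot\pi}_s = \abs{\dot{\bar{\bar\pi}}}_s$ for every $s$, where the right-hand side is computed with capacity $\bar{\bar c}_s$. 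The small subtlety here is that the capacity $\bar{\bar c}_s$ must come from a kernel reversible with respect to $\bar{\bar\pi}_s$. This holds by construction, since we can define $\bar{\bar p}_s(a,b)=\bar{\bar c}_s(a,b)/\bar{\bar\pi}_s(a)$, and that kernel is the birth--death chain of \zcref{lem:PI-for-projected-Ising}.

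\emph{Step 2: pointwise bound on the metric derivative.} Apply the first bullet of \zcref{thm:functional-inequalites-implies-good-metric-derivative} to the curve $(\bar{\bar\pi}_s)$ with kernels $\bar{\bar p}_s$. \zcref{lem:PI-for-projected-Ising} supplies $C_{\-{PI}}\le n^3$ uniformly in $s$, so
\[
\abs{\dot{\bar{\bar\pi}}}_s^2 \le n^3 \norm{\partial_s \log \bar{\bar\pi}_s}_{L^2(\bar{\bar\pi}_s)}^2 .
\]
By \zcref{lemma:bound-on-derivative-of-log-projected-measure-for-mean-field-Ising}, the right-hand side is at most $n^3\cdot \frac{1}{16}(\beta'(s))^2 n^2 = \frac{n^5}{16}(\beta'(s))^2$.

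\emph{Step 3: integrate.} Integrating over $s\in[0,1]$ and using Step 1 gives
\[
\+A((\pi_s)) = \int_0^1 \abs{\dot{\bar{\bar\pi}}}_s^2\,\dd s \le \frac{n^5}{16}\int_0^1 (\beta'(s))^2\,\dd s .
\]

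The main obstacle is verifying cleanly in Step 1 that the folding $S$ meets the hypothesis $\Pi_{\#p_s(x,\cdot)}=\Pi_{\#p_s(x',\cdot)}$. The point $m=0$ needs checking: its two neighbours $\pm2$ both map to $2$, which is exactly why $\bar{\bar c}_s(0,2)=2\bar c_s(0,2)$. I would handle this by noting that the pushforward under $S$ of the rates from $m$ and from $-m$ agree because of the reflection symmetry $\bar p_s(m,m\pm2)=\bar p_s(-m,-m\mp2)$. Everything else is a direct chaining of the stated results.
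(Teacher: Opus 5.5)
Your proposal is correct and follows the paper's proof essentially step for step: you reduce to the folded birth--death chain via the symmetry-reduction lemma, bound $|\dot{\bar{\bar\pi}}|_s^2 \le n^3\|\partial_s\log\bar{\bar\pi}_s\|_{L^2(\bar{\bar\pi}_s)}^2$ using the Poincar\'e constant $n^3$ and the first bullet of the metric-derivative theorem, insert the bound $\frac{1}{16}(\beta'(s))^2 n^2$, and integrate. Your extra checks, namely that the fold $S$ satisfies the symmetry hypothesis (including at $m=0$) and that $\bar{\bar c}_s$ comes from a kernel reversible with respect to $\bar{\bar\pi}_s$, supply details the paper leaves implicit.
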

\begin{proof}
    By \zcref{lemma:reduction-via-symmetry}, 
    \[
        \+A\tp{\tp{\pi_s}_{s \in \stp{0, 1}}} = \+A\tp{\tp{\bar{\bar{\pi}}_s}_{s \in \stp{0, 1}}} = \int_0^1 \abs{\dot{\bar{\bar{\pi}}}}_s^2 \, \dd s,
    \]
    where $\abs{\dot{\bar{\bar{\pi}}}}_s$ is the metric derivative of the projected path with respect to the projected capacity $\bar{\bar{c}}_s$.

    To bound this metric derivative, we apply the first statement of \zcref{thm:functional-inequalites-implies-good-metric-derivative}. Since we have established that the transition rate kernel $\bar{\bar{p}}_s$ satisfies a Poincar\'e inequality with constant $C_{\-{PI}} = n^3$ in \zcref{lem:PI-for-projected-Ising}, we obtain
    \[
        \abs{\dot{\bar{\bar{\pi}}}}_s^2 \le C_{\-{PI}} \norm{\partial_s \log \bar{\bar{\pi}}_s}_{L^2\tp{\bar{\bar{\pi}}_s}}^2 = n^3 \norm{\partial_s \log \bar{\bar{\pi}}_s}_{L^2\tp{\bar{\bar{\pi}}_s}}^2.
    \]
    From \zcref{lemma:bound-on-derivative-of-log-projected-measure-for-mean-field-Ising},
    \[
    \norm{\partial_s \log \bar{\bar{\pi}}_s}_{L^2\tp{\bar{\bar{\pi}}_s}}^2 \le \frac{1}{16} \tp{\beta'\tp{s}}^2 n^2.
    \]

    Substituting this into the metric derivative bound yields
    \[
    \abs{\dot{\bar{\bar{\pi}}}}_s^2 \le n^3 \tp{\frac{1}{16} \tp{\beta'\tp{s}}^2 n^2} = \frac{1}{16} n^5 \tp{\beta'\tp{s}}^2.
    \]
    Integrating this inequality over $s \in \stp{0, 1}$ gives the desired bound on the discrete action:
    \[
    \+A\tp{\tp{\pi_s}_{s \in \stp{0, 1}}} = \int_0^1 \abs{\dot{\bar{\bar{\pi}}}}_s^2 \, \dd s \le \frac{n^5}{16} \int_0^1 \tp{\beta'\tp{s}}^2 \, \dd s.
    \]
\end{proof}

\subsubsection{Algorithm Implementation and Convergence Guarantee}
In this subsection, we apply \zcref{algo:poisson-annealing} and \zcref{thm:annealing-ub} with a specific linear annealing schedule for the inverse temperature to establish the convergence guarantees for the mean-field Ising model. Specifically, we set
\[
    \beta\tp{s} \defeq \beta \cdot s, \quad s \in \stp{0, 1}.
\]
As stated before, the annealing path at time $s$ is $\pi_s = \mu_{\beta\tp{s}}$ and the transition matrix $P_s$ to apply \zcref{algo:poisson-annealing} is the single-site Glauber dynamics targeted at $\pi_s$. Correspondingly, the transition rate kernel is $p_s = P_s - I_{\abs{\Omega}}$. 

We now verify that our choice satisfies the conditions in \zcref{thm:annealing-ub}. First, at $s = 0$, the target distribution $\pi_0 = \mu_0$ is simply the uniform distribution over the hypercube $\Omega$. Therefore, we can set $\pi_0^{\-{\!{ALG}}} = \pi_0$, achieving a perfect initial distribution with $\-{KL}\tp{\pi_0 \| \pi_0^{\-{\!{ALG}}}} = 0 \le \frac{\eps}{3}$. Next, the following lemma guarantees the locally stable condition.  Its proof is deferred to \zcref{subsection:appendix-proof-of-mean-field-Ising-results}.

\begin{lemma}
    Let $p_s$ be the transition rate kernel of the single-site Glauber dynamics targeted at $\mu_{\beta s}$. For any $s, s' \in \stp{0, 1}$ and any edge $\set{\sigma, \sigma^{\oplus i}} \in E$,
    \[
    \abs{ \frac{p_{s'}\tp{\sigma, \sigma^{\oplus i}}}{p_s\tp{\sigma, \sigma^{\oplus i}}} - 1 } \le \exp\tp{2 \beta \abs{s' - s}} - 1.
    \]
    Consequently, for any $\eps \in \tp{0, 1}$ and $T \ge 1$, the locally stable condition $\max_{x \neq y} \abs{\frac{p_{s'}\tp{x, y}}{p_s\tp{x, y}} - 1} \le \frac{\eps}{6 T}$ is satisfied by choosing $\eta \le \frac{\eps}{24 \beta T}$.
    \label{lemma:local-stability-Ising}
\end{lemma}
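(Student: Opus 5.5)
We compute the ratio of Glauber rates explicitly. For $\sigma\in\Omega$ and $i\in[n]$, we have $p_s(\sigma,\sigma^{\oplus i}) = \frac1n \cdot \frac{1}{1+ \mu_{\beta s}(\sigma)/\mu_{\beta s}(\sigma^{\oplus i})}$. By \zcref{prop:mean-field-Ising-is-a-function-of-magnetization}, the ratio equals $\exp(\beta s\, \Delta)$, where
\[
\Delta \defeq \frac{M(\sigma)^2 - M(\sigma^{\oplus i})^2}{2n}.
\]
Flipping spin $i$ changes $M$ by $\pm 2$, so $M(\sigma^{\oplus i}) = M(\sigma)-2\sigma_i$. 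Hence
\[
M(\sigma)^2 - M(\sigma^{\oplus i})^2 = 4\sigma_i M(\sigma) - 4,
\]
which gives $\Delta = \frac{2\sigma_i M(\sigma) - 2}{n}$. Since $|M(\sigma)|\le n$, we get $|\Delta| \le 2$; more precisely $\Delta\in[-2-2/n,\,2-2/n]$, and a slightly crude bound $|\Delta|\le 2$ is what we want.

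Write $g(s) \defeq \log p_s(\sigma,\sigma^{\oplus i}) = -\log n - \log(1+e^{\beta s\Delta})$. Then
\[
g'(s) = -\beta\Delta\,\frac{e^{\beta s\Delta}}{1+e^{\beta s\Delta}},
\]
so $|g'(s)|\le \beta|\Delta|$ because the logistic factor lies in $(0,1)$. To sharpen this to $2\beta$, handle the endpoint case $\Delta$ slightly below $-2$ by noting that $|\Delta|\le 2+2/n$ holds only when $\sigma_iM(\sigma)=-n$. In that case $\Delta<0$ and the logistic factor is at most $1/2$, so $|g'|\le \beta(1+1/n)\le 2\beta$. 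In all other cases $|\Delta|\le 2$. Consequently $|\log(p_{s'}/p_s)|\le 2\beta|s'-s|$, and
\[
\left|\frac{p_{s'}}{p_s}-1\right| \le e^{2\beta|s'-s|}-1,
\]
using $|e^x-1|\le e^{|x|}-1$. This is the first claim.

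For the consequence, take $|s'-s|<\eta\le \frac{\eps}{24\beta T}$. Then $x\defeq 2\beta|s'-s| \le \frac{\eps}{12T}\le \frac1{12}$, since $\eps<1$ and $T\ge1$. For $x\le 1$ we have $e^x-1\le x e^{x}\le 2x$, so the ratio deviation is at most $\frac{\eps}{6T}$, as required. The case $\beta=0$ is trivial because the kernel is then constant in $s$.

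The main obstacle is the careful bound $|\Delta|\le 2$ uniformly, specifically the boundary case discussed above. If that becomes fiddly, we can fall back on $|g'(s)|\le \beta\max(|\Delta|\cdot\text{logistic})$ and verify directly that $\max_{\Delta} |\Delta| \frac{e^{\beta s \Delta}}{1+e^{\beta s\Delta}}\le 2$ over the finite set of possible $\Delta$ values.
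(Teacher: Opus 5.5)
Your proposal is correct and follows essentially the same route as the paper: write $p_s(\sigma,\sigma^{\oplus i})$ as a logistic function of $\beta s\Delta$, bound $|\partial_s \log p_s|\le 2\beta$ using $|\Delta|\le 2$ (the paper's $|\Delta_i(\sigma)|\le 4n$) together with the logistic factor being at most $1$, then apply the mean value theorem, exponentiate, and use $e^x-1\le 2x$ for small $x$. Two small points. Your ``boundary case'' is vacuous, since $\sigma_i M(\sigma) = 1+\sum_{j\neq i}\sigma_i\sigma_j \ge 2-n$ gives $|\Delta|\le 2-2/n$ directly. Also, the step $xe^x\le 2x$ needs $x\le \ln 2$ rather than $x\le 1$, which is harmless because you only use it for $x\le 1/12$.
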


Equipped with the above results, the following convergence guarantee on the annealed Glauber dynamics of the mean-field Ising models is a direct application of \zcref{thm:annealing-ub}.

\begin{theorem}
    \label{thm:final-complexity-mean-field-Ising}
    Let $\mu_{\beta}$ be the mean-field Ising model on $\Omega = \set{\pm 1}^n$ with inverse temperature $\beta \ge 0$. Execute the annealing algorithm in \zcref{algo:poisson-annealing} using the linear schedule $\pi_s = \mu_{\beta s}$ and choose each $P_s$ as the single-site Glauber dynamics targeting at $\mu_{\beta s}$. For any target error $\eps \in \tp{0, 1}$, set
    \[
    T = \frac{2 n^5 \beta^2}{\eps} \quad \text{and} \quad N = \bigg\lceil\frac{48 n^5 \beta^3}{\eps^2} \bigg\rceil.
    \]
    Then, the output distribution $\pi_N^{\-{\!{ALG}}}$ satisfies
    \[
    \-{KL}\tp{\mu_{\beta} \| \pi_N^{\-{\!{ALG}}}} \le \eps,
    \]
    and the algorithm terminates in $\frac{2 n^5 \beta^2}{\eps}$ steps in expectation.
\end{theorem}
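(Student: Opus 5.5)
This is a direct instantiation of \zcref{thm:annealing-ub}, so the plan is to check its hypotheses with the stated parameters and then read off the conclusion.

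\textbf{Step 1: bound the action.} With the linear schedule $\beta(s) = \beta s$ we have $\beta'(s) = \beta$. The preceding theorem then gives
\[
\+A\tp{\tp{\pi_s}_{s\in[0,1]}} \le \frac{n^5}{16}\int_0^1 \beta^2 \, \dd s = \frac{n^5\beta^2}{16}.
\]
\zcref{thm:annealing-ub} is stated with $T = 2\+A/\eps$. We instead take the larger value $T = 2n^5\beta^2/\eps$. Inspecting the proof of \zcref{thm:annealing-ub}, the error bound \eqref{eq:final-error-bound} remains valid for any $T$. It has three terms:
\begin{itemize}
\item the initial-error term;
\item the action term $(1+\delta)\+A/(4T)$, which is at most $\tfrac{3}{2}\cdot \tfrac{n^5\beta^2}{16}\cdot \tfrac{\eps}{8 n^5\beta^2} \le \eps/3$;
\item the discretization term $4\delta T\cdot \tfrac12 = 2\delta T$, which equals $\eps/3$ when $\delta = \eps/(6T)$.
\end{itemize}
So enlarging $T$ only helps the action term, and the other two terms are each controlled separately.

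\textbf{Step 2: check the initial distribution and $\delta$.} At $s=0$ the path starts at the uniform distribution, so we set $\pi_0^{\mathsf{ALG}} = \pi_0$ and the initial KL term is $0$. We need $\delta = \eps/(6T) < 1/2$. This holds because $T \ge 1$ in any nontrivial case.

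\textbf{Step 3: local stability and the choice of $N$.} By \zcref{lemma:local-stability-Ising}, the kernel-ratio condition holds with $\eta = \eps/(24\beta T)$. Substituting $T$ gives
\[
\eta = \frac{\eps^2}{48 n^5\beta^3},
\]
so the choice $N = \lceil 48 n^5\beta^3/\eps^2\rceil$ gives exactly $N \ge 1/\eta$.

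\textbf{Step 4: conclude.} The three terms sum to at most $\eps$. The expected number of steps is $N\Delta t = T = 2n^5\beta^2/\eps$.

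\textbf{Main obstacle.} The only delicate points are bookkeeping. First, the lemma's hypothesis $T \ge 1$ must hold; if $\beta$ is so small that $T < 1$, we can simply replace $T$ by $\max(T,1)$, or note that the ratio bound $e^{2\beta|s'-s|}-1$ is already tiny in that regime. Second, there is the degenerate case $\beta = 0$: the target is then uniform and is sampled exactly with zero steps, so it is handled separately.
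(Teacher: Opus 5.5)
Your proposal is correct and follows the paper's route. The paper simply invokes \zcref{thm:annealing-ub} with the action bound $n^5\beta^2/16$, the exact uniform initialization, and \zcref{lemma:local-stability-Ising} to get $N \ge 1/\eta = 48n^5\beta^3/\eps^2$. You are more careful than the paper on two points it leaves implicit: because the stated $T$ exceeds $2\+A/\eps$, one really does need, as you note, that \eqref{eq:final-error-bound} holds for arbitrary $T$; and the small-$\beta$ regime with $T<1$ is best handled by your second suggestion, using the actual ratio bound $e^{2\beta/N}-1$, rather than by replacing $T$ with $\max(T,1)$, which would change the algorithm's parameters.
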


\subsection{Mean-Field Potts Model}
In this section, we apply our results to analyze the annealed block Glauber dynamics of the mean-field Potts model.

\label{subsection:mean-field-Potts}

\begin{definition}[Mean-Field Potts Model]
    Let \(\Omega = \stp{q}^n\) denote the state space. The mean-field Potts model on \(\Omega\) with inverse temperature \(\beta \ge 0\) is defined as
    \[
    \mu_{\beta}\tp{\sigma} \propto \exp\tp{\frac{\beta}{n} \sum_{i, j \in \stp{n}} \*1 \stp{\sigma_i = \sigma_j}}, \quad \sigma \in \Omega.
    \]
\end{definition}

We have the following immediate consequences of the definition.
\begin{proposition}
    Let \(\mu_{\beta}\) be the mean-field Potts model with inverse temperature \(\beta\) on \(\Omega = \stp{q}^n\). Then
    \[
    \mu_{\beta}\tp{\sigma} \propto \exp\tp{\frac{\beta}{n} \sum_{a \in \stp{q}} M_a\tp{\sigma}^2}, \quad \sigma \in \Omega,
    \]
    where
    \[
    M_a\tp{\sigma} \defeq \abs{\set{i \in \stp{n} \cmid \sigma_i = a}}, \quad \sigma \in \Omega,
    \]
    denotes the magnetization component corresponding to color \(a \in \stp{q}\).
    \label{prop:mean-field-Potts-is-a-function-of-magnetization}
\end{proposition}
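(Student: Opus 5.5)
The plan is to expand the interaction term directly. Since $\sum_{i,j \in [n]} \mathbf{1}[\sigma_i = \sigma_j]$ ranges over ordered pairs, including the diagonal $i=j$, I will group the pairs according to the common color $a \in [q]$. This gives
\[
\sum_{i,j \in [n]} \mathbf{1}[\sigma_i = \sigma_j] = \sum_{a \in [q]} \sum_{i,j \in [n]} \mathbf{1}[\sigma_i = a]\,\mathbf{1}[\sigma_j = a] = \sum_{a \in [q]} \Bigl(\sum_{i \in [n]} \mathbf{1}[\sigma_i = a]\Bigr)^2 = \sum_{a \in [q]} M_a(\sigma)^2 .
\]

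Substituting this identity into the exponent of the definition of $\mu_\beta$ gives $\mu_\beta(\sigma) \propto \exp\bigl(\frac{\beta}{n}\sum_{a} M_a(\sigma)^2\bigr)$. The proportionality constant is the same partition function, so no normalization issue arises.

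One point needs checking. The diagonal terms $i=j$ contribute exactly $n$, which is a constant. Because the identity above is exact, nothing has to be absorbed into the normalizing constant. There is no real obstacle: the only care needed is to note that $\mathbf{1}[\sigma_i=\sigma_j] = \sum_a \mathbf{1}[\sigma_i=a]\mathbf{1}[\sigma_j=a]$ holds because each site carries exactly one color.
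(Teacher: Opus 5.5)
Your proof is correct and is exactly the direct computation the paper has in mind. The paper gives no written proof, calling the statement an immediate consequence of the definition, and summing the identity $\mathbf{1}[\sigma_i=\sigma_j]=\sum_{a}\mathbf{1}[\sigma_i=a]\,\mathbf{1}[\sigma_j=a]$ over ordered pairs $(i,j)$, diagonal included, is precisely that computation.
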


The next proposition indicates that the mean-field Ising model we discussed in \zcref{subsection:mean-field-Ising} is a special case of the mean-field Potts model.
\begin{proposition}
    Let \(\mu_{\beta}\) be the mean-field Potts model with inverse temperature \(\beta\) on \(\Omega = \stp{q}^n\). When \(q = 2\), 
    \[
    \mu_{\beta}\tp{\sigma} \propto \exp\tp{\frac{\beta}{2 n} \tp{M_1\tp{\sigma} - M_2\tp{\sigma}}^2}, \quad \sigma \in \Omega,
    \]
    which is precisely the mean-field Ising model with inverse temperature \(\beta\) (after mapping \(a = 1\) to positive spin and \(a = 2\) to negative spin).
\end{proposition}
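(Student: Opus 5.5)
The plan is a direct computation: rewrite the Potts weight in terms of the two color counts, then absorb everything that does not depend on the configuration into the normalizing constant. First, by \zcref{prop:mean-field-Potts-is-a-function-of-magnetization} with $q=2$, we have $\mu_\beta(\sigma) \propto \exp\tp{\frac{\beta}{n}\tp{M_1(\sigma)^2 + M_2(\sigma)^2}}$. Next, since every site receives exactly one of the two colors, $M_1(\sigma)+M_2(\sigma) = n$ for every $\sigma \in [2]^n$. The polarization identity $a^2+b^2 = \frac{1}{2}\tp{(a+b)^2 + (a-b)^2}$ then gives
\[
\frac{\beta}{n}\tp{M_1(\sigma)^2+M_2(\sigma)^2} = \frac{\beta n}{2} + \frac{\beta}{2n}\tp{M_1(\sigma)-M_2(\sigma)}^2 .
\]
The term $\frac{\beta n}{2}$ does not depend on $\sigma$, so it is absorbed into the normalizing constant. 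This yields the first claimed formula.

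To identify the result with the mean-field Ising model, map each color $a=1$ to spin $+1$ and $a=2$ to spin $-1$. This is a bijection $[2]^n \to \set{\pm 1}^n$, under which $M_1(\sigma)-M_2(\sigma) = \sum_i \sigma_i = M(\sigma)$. By \zcref{prop:mean-field-Ising-is-a-function-of-magnetization}, the Ising measure is $\mu_\beta^{\text{Ising}}(\sigma) \propto \exp\tp{\frac{\beta}{2n} M(\sigma)^2}$. For completeness, this also follows directly from $\sigma^{\-T} A \sigma = M(\sigma)^2 - \sum_i \sigma_i^2 = M(\sigma)^2 - n$, because $A$ has zero diagonal; the $-n$ is again a constant. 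The two unnormalized weights therefore agree up to a $\sigma$-independent factor. Since both are probability measures, they coincide after the bijection, at the same inverse temperature $\beta$.

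No step is a genuine obstacle. The only point needing care is the bookkeeping of constants: the diagonal terms $i=j$ in the Potts sum, the $-n$ from the zero diagonal of $A$, and the $\frac{\beta n}{2}$ from $M_1+M_2=n$. All of these are configuration-independent and cancel on normalization, which is exactly why the inverse temperatures match without rescaling.
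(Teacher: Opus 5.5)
Your proof is correct and is the same direct computation the paper intends. The paper gives no written proof: it lists this proposition as an immediate consequence of the definition, via the form $\mu_\beta(\sigma) \propto \exp\bigl(\frac{\beta}{n}\sum_a M_a(\sigma)^2\bigr)$. Your use of $M_1+M_2=n$ with the polarization identity, and of $\sigma^{\top}A\sigma = M(\sigma)^2-n$ on the Ising side, supplies exactly the omitted constant bookkeeping.
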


\begin{proposition}
    Let \(\mu_{\beta}\) be the mean-field Potts model with inverse temperature \(\beta\) on \(\Omega = \stp{q}^n\). Define the total magnetization vector of \(\sigma \in \Omega\) as
    \[
    \*M\tp{\sigma} \defeq \tp{M_1\tp{\sigma}, \cdots, M_q\tp{\sigma}}.
    \]
    Then
    \[
    \abs{\*M^{-1}\tp{\*m}} = \binom{n}{m_1, \cdots, m_q}.
    \]
    \label{prop:number-of-configurations-with-a-given-magnetization-for-mean-field-Potts}
\end{proposition}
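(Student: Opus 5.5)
The count in \zcref{prop:number-of-configurations-with-a-given-magnetization-for-mean-field-Potts} is a direct counting argument.

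Fix a vector $\*m = (m_1, \ldots, m_q)$ of nonnegative integers with $\sum_a m_a = n$; for any other vector the preimage is empty. A configuration $\sigma \in [q]^n$ satisfies $\*M(\sigma) = \*m$ exactly when $|\sigma^{-1}(a)| = m_a$ for every $a \in [q]$. So $\sigma \mapsto (\sigma^{-1}(1), \ldots, \sigma^{-1}(q))$ is a bijection from $\*M^{-1}(\*m)$ onto the set of ordered partitions of $[n]$ into labelled blocks $B_1, \ldots, B_q$ with $|B_a| = m_a$. The inverse map sends such a partition to the configuration with $\sigma_i = a$ for $i \in B_a$.

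We count these ordered partitions by choosing the blocks one at a time. Choose $B_1$ in $\binom{n}{m_1}$ ways. Then choose $B_2$ among the remaining $n - m_1$ sites in $\binom{n-m_1}{m_2}$ ways, and continue in this manner. The product telescopes:
\[
\prod_{a=1}^{q} \binom{n - m_1 - \cdots - m_{a-1}}{m_a} = \frac{n!}{m_1! \cdots m_q!} = \binom{n}{m_1, \cdots, m_q}.
\]

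Alternatively, one can argue by induction on $q$, with the base case $q = 2$ given by \zcref{prop:number-of-configurations-with-a-given-magnetization-for-mean-field-Ising}. The only point requiring any care is that the labelled blocks are distinguishable, so no symmetry factor appears. This is immediate because colors are fixed labels.
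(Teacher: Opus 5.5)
Your counting argument is correct: the bijection between $\mathbf{M}^{-1}(\mathbf{m})$ and ordered partitions of $[n]$ into labelled blocks of sizes $m_1,\dots,m_q$, followed by the telescoping product of binomial coefficients, gives exactly $\binom{n}{m_1, \cdots, m_q}$. The paper gives no separate proof and lists this as an immediate consequence of the definition; your argument is the standard justification it is relying on.
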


\subsubsection{Problem Setup}

\paragraph{\(\tp{q - 1}\)-Block-update Glauber dynamics.} 

Suppose that the state space is \(\Omega = [q]^n\). Let \(p_{\pi}\) denote the transition rate kernel of the \(\tp{q - 1}\)-block Glauber dynamics targeted at \(\pi \in \+P_{> 0}\tp{\Omega}\), i.e. at each step we first uniformly sample \(I \in \binom{\stp{n}}{q - 1}\), and then update the spins in \(I\) according to the conditional distribution of \(\pi\) given the spins outside \(I\). Let \(\tp{\Omega, E}\) denote the underlying graph of the \(\tp{q - 1}\)-block Glauber dynamics.

For \(\sigma, \sigma' \in \Omega\), define
\[
\+I\tp{\sigma, \sigma'} \defeq \set{I \in \binom{\stp{n}}{q - 1} \cmid \exists \*a \in \stp{q}^{q - 1} \text{ s.t. } \sigma' = \sigma^{I \leftarrow \*a}}.
\]
Then \(\+I\tp{\sigma, \sigma'} = \+I\tp{\sigma', \sigma}\). The edge set \(E\) is
\[
E = \set{\set{\sigma, \sigma'} \cmid \sigma \ne \sigma', \+I\tp{\sigma, \sigma'} \ne \emptyset}.
\]
The transition rate kernel \(p_{\pi}\) can be explicitly written as
\[
p_{\pi}\tp{\sigma, \sigma'} = \sum_{I \in \+I\tp{\sigma, \sigma'}} \frac{1}{\binom{n}{q - 1}} \frac{\pi\tp{\sigma'}}{\sum_{\*a \in \stp{q}^{q - 1}} \pi\tp{\sigma^{I \leftarrow \*a}}}, \quad \forall \sigma, \sigma' \in \Omega.
\]
The induced capacity of edge \(\set{\sigma, \sigma'} \in E\) is given by
\[
c_{\pi}\tp{\sigma, \sigma'} = \sum_{I \in \+I\tp{\sigma, \sigma'}} \frac{1}{\binom{n}{q - 1}} \frac{\pi\tp{\sigma} \pi\tp{\sigma'}}{\sum_{\*a \in \stp{q}^{q - 1}} \pi\tp{\sigma^{I \leftarrow \*a}}}.
\]

\paragraph{Annealing path.}

Let \(\mu_{\beta}\) denote the mean-field Potts model with inverse temperature \(\beta\), and set \(\pi = \mu_{\beta}\). We slightly abuse the notation by considering a function $\beta: [0, 1] \to [0, \infty)$ with $\beta(1)$ equals the target inverse temperature $\beta$. We consider the annealed \(\tp{q - 1}\)-block Glauber dynamics along the path \(\tp{\pi_s}_{s \in \stp{0, 1}}\), defined by
\[
\pi_s \defeq \mu_{\beta\tp{s}}, \quad s \in \stp{0, 1}.
\]

\paragraph{Reduction via symmetry.}
We now discuss how to utilize the symmetry of $\pi_s$ to apply \zcref{lemma:reduction-via-symmetry}.
We first project onto the magnetization vector via \(\*M: \Omega \to \bar{\Omega}\), where
\[
\bar{\Omega} = \set{\*m \in \bb N^q \cmid \sum_{a \in \stp{q}} m_a = n}.
\]
The induced edge set \(\bar{E}\) is
\[
\bar{E} = \set{\set{\*m, \*m'} \cmid \*m \ne \*m', \frac{1}{2} \sum_{a \in \stp{q}} \abs{m_a - m_a'} \le q - 1}.
\]
By \zcref{prop:mean-field-Potts-is-a-function-of-magnetization} and \zcref{prop:number-of-configurations-with-a-given-magnetization-for-mean-field-Potts}, for any \(s \in \stp{0, 1}\) and \(\*m \in \bar{\Omega}\),
\[
\bar{\pi}_s\tp{\*m} = \abs{\*M^{-1}\tp{\*m}} \pi_s\tp{\sigma} = \binom{n}{m_1, \cdots, m_q} \pi_s\tp{\sigma}, \quad \forall \sigma \in \*M^{-1}\tp{\*m}.
\]
Moreover, the projected chain satisfies the following capacity lower bound (proof deferred to \zcref{subsection:appendix-proof-of-mean-field-Potts-results}).
\begin{proposition}
    For any \(s \in \stp{0, 1}\) and \(\set{\*m, \*m'} \in \bar{E} \),
    \[
    \bar{c}_s\tp{\*m, \*m'} \ge \frac{1}{n^{2 q - 2}} \bar{\pi}_s\tp{\*m} \bar{\pi}_s\tp{\*m'}.
    \]
    \label{prop:capacity-lower-bound-for-projected-chain-of-mean-field-Potts}
\end{proposition}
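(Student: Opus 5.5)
The plan is to bound $\bar c_s(\mathbf m,\mathbf m')$ from below by keeping only one well-chosen configuration pair and one block $I$. By definition,
\[
\bar c_s(\mathbf m,\mathbf m')=\sum_{\sigma\in\mathbf M^{-1}(\mathbf m)}\sum_{\sigma'\in\mathbf M^{-1}(\mathbf m')} c_s(\sigma,\sigma').
\]
Every summand is nonnegative, and each nonzero one involves some block $I\in\mathcal I(\sigma,\sigma')$.

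\emph{Step 1: lower-bound the denominator term.} For any fixed $\sigma$ and $I$, the denominator $\sum_{\mathbf a}\pi_s(\sigma^{I\leftarrow\mathbf a})$ is at most $\pi_s(\mathbf M^{-1}(\mathbf M(\sigma)\text{ restricted}))$. More usefully, the partial configurations $\sigma^{I\leftarrow\mathbf a}$ are distinct configurations, so
\[
\sum_{\mathbf a}\pi_s(\sigma^{I\leftarrow\mathbf a})\le 1 .
\]
Hence
\[
c_s(\sigma,\sigma')\ge \binom{n}{q-1}^{-1}\pi_s(\sigma)\pi_s(\sigma')
\]
whenever $\mathcal I(\sigma,\sigma')\neq\emptyset$. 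This bound alone is too weak: summing over pairs would give $\bar\pi_s(\mathbf m)\bar\pi_s(\mathbf m')$ times the fraction of pairs that are adjacent, which is tiny.

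\emph{Step 2: use the sharper bound via the projected conditional.} I would instead use the symmetry to rewrite $\bar c_s(\mathbf m,\mathbf m')$ as
\[
\bar c_s(\mathbf m,\mathbf m')=\sum_{\sigma\in\mathbf M^{-1}(\mathbf m)}\pi_s(\sigma)\,p_s\bigl(\sigma,\mathbf M^{-1}(\mathbf m')\bigr)=\bar\pi_s(\mathbf m)\,p_s\bigl(\sigma_0,\mathbf M^{-1}(\mathbf m')\bigr)
\]
for any fixed $\sigma_0\in\mathbf M^{-1}(\mathbf m)$. It then suffices to show
\[
p_s\bigl(\sigma_0,\mathbf M^{-1}(\mathbf m')\bigr)\ge n^{-(2q-2)}\,\bar\pi_s(\mathbf m').
\]

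Since $\tfrac12\|\mathbf m-\mathbf m'\|_1\le q-1$, we can choose a block $I$ of $q-1$ sites of $\sigma_0$ whose recoloring realizes $\mathbf m'$: take the sites whose colors must decrease, padded arbitrarily to size $q-1$. The chain selects this block with probability $\binom{n}{q-1}^{-1}\ge n^{-(q-1)}$. Given $I$, it moves to a configuration of magnetization $\mathbf m'$ with probability
\[
\frac{\sum_{\mathbf a:\,\mathbf M(\sigma_0^{I\leftarrow\mathbf a})=\mathbf m'}\pi_s(\sigma_0^{I\leftarrow\mathbf a})}{\sum_{\mathbf a}\pi_s(\sigma_0^{I\leftarrow\mathbf a})}.
\]
The numerator is at least $\pi_s(\sigma'')$ for a single $\sigma''\in\mathbf M^{-1}(\mathbf m')$, and by symmetry
\[
\pi_s(\sigma'')=\frac{\bar\pi_s(\mathbf m')}{\binom{n}{\mathbf m'}}.
\]
The denominator is a sum of at most $q^{q-1}$ terms, each of the form $\bar\pi_s(\mathbf k)/\binom{n}{\mathbf k}$ with $\mathbf k$ adjacent to $\mathbf m$.

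\emph{Step 3: compare multinomials (main obstacle).} We need
\[
\frac{\bar\pi_s(\mathbf m')/\binom{n}{\mathbf m'}}{\sum_{\mathbf k}\bar\pi_s(\mathbf k)/\binom{n}{\mathbf k}}\ge n^{-(q-1)}\bar\pi_s(\mathbf m').
\]
Bounding each $\bar\pi_s(\mathbf k)\le 1$, this reduces to
\[
\sum_{\mathbf k}\frac{\binom{n}{\mathbf m'}}{\binom{n}{\mathbf k}}\le n^{q-1}.
\]
The ratio of multinomials differing by moving at most $q-1$ units between colors is a product of at most $q-1$ factors of the form $k_a/(k'_b+1)\le n$, so each term is at most $n^{q-1}$. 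The factor $q^{q-1}$ from counting terms is the delicate point. To avoid it, I would handle the count by only summing the terms with $\mathbf k$ of the form $\mathbf m_{-I}+\mathbf M(\mathbf a)$, grouped by magnetization, and argue that grouping by $\mathbf k$ exactly cancels the multinomial counting of $\mathbf a$'s. Concretely, I would replace the numerator by the full sum over all $\mathbf a$ hitting $\mathbf m'$, and use
\[
\sum_{\mathbf a}\pi_s(\sigma_0^{I\leftarrow\mathbf a})=\sum_{\mathbf k}\#\{\mathbf a\mapsto\mathbf k\}\,\frac{\bar\pi_s(\mathbf k)}{\binom{n}{\mathbf k}}.
\]
The comparison then becomes one between ratios of the form $\binom{q-1}{\cdot}\big/\binom{n}{\cdot}$, which should be bounded by $n^{q-1}$ uniformly. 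Getting this exponent exactly, rather than up to a $q^{q}$ factor, is where I expect the care to be needed. Multiplying the block-selection factor $n^{-(q-1)}$ with this conditional factor gives the stated $n^{-(2q-2)}$.
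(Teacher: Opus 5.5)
Your Steps 1--2 are sound and follow the paper's route. The identity $\bar c_s(\mathbf m,\mathbf m')=\bar\pi_s(\mathbf m)\,p_s(\sigma_0,\mathbf M^{-1}(\mathbf m'))$, together with keeping a single block $I$, is the same as the paper keeping one pair $(\sigma',I)$ per $\sigma$ and cancelling the prefactor $1/\binom{n}{\mathbf m}$ against the sum over $\sigma$.

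\textbf{The gap is Step~3.} This estimate is the whole content of the proposition, and you leave it at ``should be bounded by $n^{q-1}$ uniformly''. Moreover, the reduction you do state is false. After bounding each $\bar\pi_s(\mathbf k)\le 1$ you need $\sum_{\mathbf a}\binom{n}{\mathbf m'}/\binom{n}{\mathbf k(\mathbf a)}\le n^{q-1}$. Take $q=3$, $\mathbf m=(n,0,0)$, $\mathbf m'=(n-2,1,1)$, and $I$ two sites of color $1$, which is exactly your block rule. Then this sum equals $(n+1)(n+2)>n^2$. The loss comes from two moves:
\begin{itemize}
\item replacing $\binom{n}{q-1}^{-1}$ by $n^{-(q-1)}$, which discards the factor $(q-1)!$ in $\binom{n}{q-1}^{-1}\ge (q-1)!/n^{q-1}$;
\item bounding the masses $\bar\pi_s(\mathbf k)$ one at a time, instead of using that masses of distinct magnetization vectors sum to at most $1$.
\end{itemize}

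\textbf{Closing it as the paper does.} Let $\mathbf m_\tau$ be the magnetization of $\sigma_0$ off $I$. Write $\mathbf k(\mathbf a)=\mathbf m_\tau+\mathbf x$ and $\mathbf m'=\mathbf m_\tau+\mathbf y$, with $\sum_a x_a=\sum_a y_a=q-1$.
\begin{itemize}
\item Cancelling $\prod_a m_{\tau,a}!$ gives $\binom{n}{\mathbf m'}/\binom{n}{\mathbf k}=\prod_a\prod_{i=1}^{x_a}(m_{\tau,a}+i)\big/\prod_a\prod_{j=1}^{y_a}(m_{\tau,a}+j)\le n^{q-1}$.
\item Also $\sum_{\mathbf a}\bar\pi_s(\mathbf k(\mathbf a))=\sum_{\mathbf x}\binom{q-1}{\mathbf x}\bar\pi_s(\mathbf m_\tau+\mathbf x)\le (q-1)!$, since $\binom{q-1}{\mathbf x}\le (q-1)!$ and distinct $\mathbf x$ give distinct magnetization vectors.
\end{itemize}
Hence the quantity $\sum_{\mathbf a}\frac{\binom{n}{\mathbf m'}}{\binom{n}{\mathbf k(\mathbf a)}}\bar\pi_s(\mathbf k(\mathbf a))$ from your Step~3 is at most $(q-1)!\,n^{q-1}$. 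That $(q-1)!$ is exactly absorbed by $\binom{n}{q-1}^{-1}\ge (q-1)!/n^{q-1}$. This is the paper's proof, and it works even with a single $\sigma''$ in the numerator.

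\textbf{Closing it via your grouping idea.} This also works once computed. Use the full numerator $\binom{q-1}{\mathbf y}\bar\pi_s(\mathbf m')/\binom{n}{\mathbf m'}$. The coefficient of $\bar\pi_s(\mathbf m_\tau+\mathbf x)$ in the normalized denominator is then $\binom{q-1}{\mathbf x}\binom{n}{\mathbf m'}\big/\bigl(\binom{q-1}{\mathbf y}\binom{n}{\mathbf k}\bigr)=\prod_a\binom{m_{\tau,a}+x_a}{x_a}\big/\prod_a\binom{m_{\tau,a}+y_a}{y_a}\le n^{q-1}$. Combined with $\sum_{\mathbf x}\bar\pi_s(\mathbf m_\tau+\mathbf x)\le 1$, this gives the conditional factor $n^{-(q-1)}\bar\pi_s(\mathbf m')$, even with your crude block bound.

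Until one of these computations is supplied, the constant $n^{-(2q-2)}$ is not established.
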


We further exploit the permutation symmetry of the \(q\) colors by folding the state space via \(S: \bar{\Omega} \to \bar{\bar{\Omega}}\):
\[
S\tp{\*m} \defeq \text{the vector obtained by sorting } \*m \text{ in non-increasing order}.
\]
Then
\[
\bar{\bar{\Omega}} = \set{\*m \in \bb N^q \cmid \sum_{a \in \stp{q}} m_a = n, m_1 \ge m_2 \ge \cdots \ge m_q}.
\]
The induced edge set \(\bar{\bar{E}}\) is
\[
\bar{\bar{E}} = \set{\set{\*m, \*m'} \in \binom{\bar{\bar{\Omega}}}{2} \cmid \*m \ne \*m', \frac{1}{2} \sum_{a \in \stp{q}} \abs{m_a - m_a'} \le q - 1}.
\]
The projected measure satisfies
\[
\bar{\bar{\pi}}_s\tp{\*m} = r\tp{\*m} \bar{\pi}_s\tp{\*m}, \quad \forall \*m \in \bar{\bar{\Omega}},
\]
for some \(r: \bar{\bar{\Omega}} \to \bb N\) such that \(1 \le r \le q!\). Moreover, the capacity of edge \(\set{\*m, \*m'} \in \bar{\bar{E}}\) satisfies
\begin{equation}
    \bar{\bar{c}}_s\tp{\*m, \*m'} \ge \bar{c}_s\tp{\*m, \*m'} \ge \frac{1}{n^{2 q - 2}} \bar{\pi}_s\tp{\*m} \bar{\pi}_s\tp{\*m'} \ge \frac{1}{(q!)^2 n^{2 q - 2}} \tp{\bar{\bar{\pi}}_s\tp{\*m} \wedge \bar{\bar{\pi}}_s\tp{\*m'}}^2.
    \label{eq:capacity-lower-bound-for-folded-chain-of-mean-field-Potts}
\end{equation}

\subsubsection{Bounding the Discrete Action}

Both projection maps --- the magnetization vector map \(\*M\) and the symmetry reduction \(S\) --- preserve the symmetry of the system. Hence, by \zcref{lemma:reduction-via-symmetry}, it suffices to bound the action of the projected annealing path \(\tp{\bar{\bar{\pi}}_s}_{s \in \stp{0, 1}}\) with respect to the capacity \(\tp{\bar{\bar{c}}_s}_{s \in \stp{0, 1}}\).

The following lemma characterizes the landscape of the mean-field Potts model. For $\beta \ge \frac{q}{2}$, the mean-field Potts model exhibits a  ``unimodality within a sector'' property. Its proof is deferred to \zcref{subsection:appendix-proof-of-mean-field-Potts-results}.

\begin{lemma}
    Consider the mean-field Potts model \(\mu_{\beta}\) with inverse temperature \(\beta \ge \frac{q}{2}\) on \(\Omega = \stp{q}^n\) (\(q \ge 2\), \(n \ge q\)). Let \(\bar{\mu}_{\beta} \defeq \*M_{\# \mu_{\beta}}\) denotes its projected distribution on \(\bar{\Omega}\). Then \(\bar{\mu}_{\beta} \) satisfies the following landscape property: there exists \(\*m^* \in \bar{\bar{\Omega}}\) such that for every \(\*m \in \bar{\bar{\Omega}}\), there exists a path in the graph \(\tp{\bar{\bar{\Omega}}, \bar{\bar{E}}}\) connecting \(\*m\) to \(\*m^*\) of length at most \(2 n\), and every vertex \(\*m'\) on this path satisfies
    \[
    \bar{\mu}_{\beta}\tp{\*m'} \ge \frac{1}{e^{q - 1}} \bar{\mu}_{\beta}\tp{\*m}.
    \]
    \label{lemma:unimodality-within-one-sector-for-mean-field-Potts}
\end{lemma}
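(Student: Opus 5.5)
The plan is to take $\*m^*$ to be a global maximizer of $\bar{\mu}_\beta$ restricted to the sorted sector $\bar{\bar{\Omega}}$. For each $\*m$, I would build the path in two monotone phases, each consisting of at most $n$ block moves. The slack $e^{-(q-1)}$ is reserved for lattice effects: a single block move changes at most $q-1$ coordinates by one unit, and each changed coordinate should cost at most a factor $e$.

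\textbf{Step 1 (one-step ratios).} Write $f(\*m) \defeq \log \bar{\mu}_\beta(\*m) = \log\binom{n}{m_1,\dots,m_q} + \frac{\beta}{n}\sum_a m_a^2$. For moving one spin from color $b$ to color $a$,
\[
\frac{\bar{\mu}_\beta(\*m + e_a - e_b)}{\bar{\mu}_\beta(\*m)} = \frac{m_b}{m_a+1}\exp\Bigl(\frac{2\beta}{n}(m_a - m_b + 1)\Bigr).
\]
I would record two elementary facts from this formula:
\begin{itemize}
\item when the continuous free energy $-F_\beta$ is nondecreasing in the direction $e_a - e_b$, this ratio is at least $e^{-1}$, since the discrepancy comes only from the $+1$ shifts;
\item the ratio is exactly $1$ up to these shifts at stationary points.
\end{itemize}
A move changing $k \le q-1$ coordinates, performed as a composition of single transfers that are all monotone for $-F_\beta$, therefore loses at most $e^{-(q-1)}$ relative to its starting point. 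Because the continuous value never decreases along the path, these losses do not accumulate: every vertex is compared against $\bar{\mu}_\beta(\*m)$ through the continuous potential, and only a bounded rounding error is paid at the end.

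\textbf{Step 2 (phase 1: flatten the minority colors).} Starting from sorted $\*m$, I would transfer spins among colors $2,\dots,q$, moving from the largest minority color to the smallest, until these colors are as equal as possible, with $m_1$ kept fixed. Along the direction $e_a - e_b$ with $a, b \ge 2$ and $m_a < m_b$, the continuous derivative of $-F_\beta$ is $\log(m_b/m_a) - \frac{2\beta}{n}(m_b - m_a)$. I must show this is nonnegative on the relevant range.

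This is where I expect the main obstacle. The inequality $\log(y/x) \ge 2\beta(y-x)$ for $x < y \le 1/q$ follows from $\log(y/x) \ge (y-x)/y$ together with $2\beta y \le 1$, provided the minority fractions are at most $1/(2\beta) \le 1/q$. This holds once $x_1 \ge 1/q$, because then each minority fraction is below $1 - x_1$ spread over at most... and in fact it needs care: a minority color can exceed $1/q$ when $x_1$ is large. If that happens, I would reorder the phases: first increase $x_1$ at the expense of the largest minority color (which is favorable when $x_1 \ge x_b$, because then the energy term dominates), and only then flatten.

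\textbf{Step 3 (phase 2: move along the symmetric line).} After flattening, the state is close to $(a, b, \dots, b)$ with $b = (1-a)/(q-1)$. I would then move along this line toward $a^*$, using block moves that shift one spin from each minority color into color $1$, or the reverse. The one-dimensional function $\phi(a) = -F_\beta(a, b, \dots, b)$ has, for $\beta \ge q/2$, a unique maximizer on $a \ge 1/q$; this is the standard fact that the unstable ordered critical point disappears past $\beta_{\mathsf{s}}$. Hence $\phi$ is unimodal and the motion is monotone. The corresponding move changes $q-1$ minority coordinates and the color-$1$ coordinate, so it is an edge of $\bar{\bar{E}}$, and the rounding loss is at most $e^{-(q-1)}$. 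Finally I would check that the lattice maximizer $\*m^*$ is reached within rounding, and count moves: phase 1 moves each spin at most once, so it uses at most $n$ steps, and phase 2 changes $m_1$ monotonically, so it also uses at most $n$ steps. This gives a path of total length at most $2n$.
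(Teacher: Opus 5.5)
Your outline matches the paper's: absorb oversized minority colors into color $1$, flatten the minority colors, then slide along the line $(n-(q-1)m,m,\dots,m)$ to its maximizer. But the mechanism you use to certify the factor $e^{-(q-1)}$ does not work. The gap is the claim that losses ``do not accumulate'' because every vertex is compared with $\bar{\mu}_\beta(\mathbf{m})$ through the continuous potential, at the price of a bounded rounding error. Summing the per-step discrepancies between the increments of $\log\bar{\mu}_\beta$ and of the continuous potential telescopes to the difference, between the two endpoints, of the Stirling correction $\log\binom{n}{\mathbf{m}}-nH(\mathbf{m}/n)\approx-\frac12\sum_{a:m_a>0}\log m_a+\frac12\log n+O(q)$, with $H$ the Shannon entropy. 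This correction varies by $\Theta(q\log n)$ over $\bar{\bar{\Omega}}$; for instance, spreading a minority mass of order $n$ over $q-1$ colors lowers it by about $\frac{q-2}{2}\log n$. So monotonicity of the continuous potential along the path only certifies a factor $n^{-O(q)}$, and chaining your per-step $e^{-1}$ bound only certifies $e^{-O(n)}$. Neither yields the lemma.

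The same objection hits Phase~2. Unimodality of the continuous $\phi$ on $a\ge 1/q$ does not imply unimodality of the discrete sequence $p_m=\bar{\mu}_\beta(n-(q-1)m,m,\dots,m)$, which is what the path actually needs. The paper proves the discrete statement directly from the exact log-ratio $\log(p_{m-1}/p_m)$. This is precisely where $\beta\ge q/2$ enters, through the signs of that function and of its derivative at $m=n/q$.

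Two further points are left to ``rounding.'' When $n-m_1$ is not divisible by $q-1$, your block moves preserve the unevenness of the minority colors and never land on the symmetric line. You need an explicit equalizing move, as in the paper's Step~3, justified by monotonicity or log-convexity of the intermediate masses $u_l$. Also, taking $\mathbf{m}^*$ to be the global maximizer rather than the maximizer on the line leaves another final segment unjustified.

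The repair is to argue with exact discrete ratios, and your continuous inequality has an exact counterpart. If $m_a<m_b\le n/(2\beta)$ and $d=m_b-m_a-1\ge 0$, then $\log\frac{m_b}{m_a+1}=-\log(1-d/m_b)\ge d/m_b\ge\frac{2\beta}{n}d$, so each flattening transfer $b\to a$ is nondecreasing. If $m_1\ge m_b\ge n/(2\beta)$, then $\log\frac{m_1+1}{m_b}\le\frac{m_1+1-m_b}{m_b}\le\frac{2\beta}{n}(m_1+1-m_b)$, so each transfer $b\to 1$ is nondecreasing. With the reordering you propose (first drain every minority color above $n/(2\beta)$ into color $1$, then flatten), your Phase~1 becomes exactly monotone, with no loss.

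The paper reaches these steps differently. It uses the Potts-as-Ising reduction, in which each two-color slice is a projected Ising chain with $\beta_{\text{Ising}}=\frac{m_a+m_{a'}}{n}\beta$, together with the discrete Ising landscape lemma. It loses a factor $e$ per flattening round, which is where its $e^{-(q-1)}$ comes from. What you still have to supply is the discrete unimodality of $p_m$ and the equalization step; the continuous free-energy picture alone does not prove them.
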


We construct the flux directly for the projected mean-field Potts model, rather than bounding the \Poincare constant of the projected chain. This is because the capacity lower bound in \zcref{prop:capacity-lower-bound-for-projected-chain-of-mean-field-Potts} is weaker (by order) than that in \zcref{prop:capacity-lower-bound-for-projected-chain-of-mean-field-Ising}, which precludes a polynomial bound on the \Poincare constant. The following lemma shows that a good flux could be built up by transport paths between pairs of points (proof deferred to \zcref{subsection:appendix-proof-of-mean-field-Potts-results}).

\begin{lemma}
    Let \(\tp{\Omega, E}\) denote the underlying graph. Let \(\set{\gamma_{x \to y}}_{x, y \in \Omega}\) be a collection of directed paths on \(\tp{\Omega, E}\) such that \(\gamma_{x \to y}\) connects \(x\) to \(y\). Let \(D: \Omega \to \bb R\) be a function satisfying
    \[
    \sum_{x \in \Omega} D\tp{x} = 0.
    \]
    Then there exists a flux \(J\) on \(\tp{\Omega, E}\) satisfying:
    \begin{itemize}
        \item (Continuity equation)
        \[
        D + \divergence_{\-{d}} J = 0.
        \]
        \item (Path decomposition)
        \[
        J = \sum_{x, y \in \Omega} j\tp{x, y} F_{x \to y},
        \]
        where \(F_{x \to y}\) denotes the unit flow along the path \(\gamma_{x \to y}\), and the transport plan \(j: \Omega^2 \to \bb R\) satisfies
        \begin{itemize}
            \item \(j\tp{x, y} \ge 0\) for every \(x, y \in \Omega\).
            \item if \(j\tp{x, y} > 0\), then \(D\tp{x} < 0\) and \(D\tp{y} > 0\).
            \item \(\abs{\supp j} \le 2 \abs{\Omega}\).
            \item \(j\tp{x, y} \le \tp{\abs{D\tp{x}} \wedge \abs{D\tp{y}}}\) for every \(x, y \in \Omega\).
        \end{itemize}
    \end{itemize}
    \label{lemma:flux-construction-via-path-decomposition}
\end{lemma}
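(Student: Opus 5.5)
The plan is to construct the transport plan $j$ first, by a greedy (``northwest-corner'') matching between the points where $D$ is negative and those where it is positive. The flux is then $J \defeq \sum_{x,y} j(x,y) F_{x\to y}$, and the continuity equation is checked through the divergence of each unit path flow.

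Let $S \defeq \set{x : D(x) < 0}$ be the sources and $R \defeq \set{y : D(y) > 0}$ the sinks. Since $\sum_x D(x) = 0$, the total supply $\sum_{x\in S}|D(x)|$ equals the total demand $\sum_{y \in R} D(y)$; call this common value $Z$. If $Z = 0$, then $D \equiv 0$ and $J = 0$, $j=0$ work trivially. Otherwise, enumerate $S = \set{x_1,\dots,x_k}$ and $R = \set{y_1,\dots,y_l}$ and run the standard greedy procedure:
\begin{itemize}
    \item Keep residual supplies $a_i$ (initially $|D(x_i)|$) and residual demands $b_r$ (initially $D(y_r)$), and pointers $i = r = 1$.
    \item At each step set $j(x_i,y_r) \defeq a_i \wedge b_r$ and subtract this amount from both $a_i$ and $b_r$.
    \item Advance $i$ if $a_i$ became $0$, and advance $r$ if $b_r$ became $0$ (both if both did).
\end{itemize}
All other entries of $j$ are zero.

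This gives the required properties of $j$:
\begin{itemize}
    \item Each step advances at least one pointer, so there are at most $k + l - 1 \le |\Omega| \le 2|\Omega|$ steps. Hence $|\supp j| \le 2|\Omega|$.
    \item By construction $j \ge 0$, and $j(x,y) > 0$ only if $x \in S$ and $y \in R$, i.e.\ $D(x) < 0 < D(y)$.
    \item Because supplies and demands balance, the procedure ends with all residuals zero. So the row sums are $\sum_y j(x,y) = |D(x)|$ for $x \in S$ and the column sums are $\sum_x j(x,y) = D(y)$ for $y \in R$.
    \item Since every entry is nonnegative, these marginal identities immediately give $j(x,y) \le |D(x)| \wedge |D(y)|$.
\end{itemize}

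Next, define the unit flow along $\gamma_{x\to y} = (z_0 = x, z_1, \dots, z_L = y)$ by $F_{x\to y}(z_{t-1},z_t) \defeq 1$ and $F_{x\to y}(z_t,z_{t-1}) \defeq -1$, summed over steps, and zero elsewhere. This is antisymmetric and supported on $E$, so it is a flux. Its divergence is $\divergence_{\-d} F_{x\to y} = \mathbf 1_x - \mathbf 1_y$, because interior vertices have one outgoing and one incoming unit that cancel; this telescoping works even if the path repeats vertices. Fluxes form a linear space, so $J$ is a flux. Then
\[
\divergence_{\-d} J(z) = \sum_y j(z,y) - \sum_x j(x,z),
\]
which equals $|D(z)| = -D(z)$ for $z \in S$, $-D(z)$ for $z \in R$, and $0$ otherwise. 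Hence $D + \divergence_{\-d} J = 0$.

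None of these steps is a real obstacle. The only places needing care are getting the sign convention for $\divergence_{\-d}$ consistent with $F_{x\to y}$ (mass leaves $x$, so $\divergence_{\-d} F_{x\to y}(x) = +1$) and the support count in the greedy procedure.
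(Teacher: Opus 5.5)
Your proposal is correct and follows the paper's proof. The paper builds $j$ by the same greedy matching of the points with $D<0$ against those with $D>0$: it sets $j(x,y)$ to the minimum of the current residuals, advances whichever pointer is exhausted, and returns $J=\sum_{x,y} j(x,y)F_{x\to y}$, but omits the verification; your checks of the support bound, the marginal identities, and $\divergence_{\-{d}} F_{x\to y} = \mathbf{1}_x - \mathbf{1}_y$ correctly supply those details.
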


The following lemma provides an upper bound on the size of the folded state space \(\bar{\bar{\Omega}}\), which is a direct consequence of the definition.
\begin{lemma}
    When \(n \ge q\),
    \[
    \abs{\bar{\bar{\Omega}}} \le \abs{\bar{\Omega}} = \binom{n + q - 1}{q - 1} \le \frac{\tp{2 n}^q}{\tp{q - 1}!}.
    \]
    \label{lemma:omega-upper-bound-for-mean-field-Potts}
\end{lemma}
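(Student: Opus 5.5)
The statement to prove is Lemma~\ref{lemma:omega-upper-bound-for-mean-field-Potts}. I plan three short steps: compare the two state spaces, count the unsorted one exactly, then bound the binomial coefficient.

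First, the folded space sits inside the unfolded one. By definition, \(\bar{\bar{\Omega}}\) is exactly the set of vectors in \(\bar{\Omega}\) whose coordinates are non-increasing. Since \(\bar{\bar{\Omega}} \subseteq \bar{\Omega}\), we get \(\abs{\bar{\bar{\Omega}}} \le \abs{\bar{\Omega}}\) immediately. (Alternatively, \(S\) is a surjection from \(\bar{\Omega}\) onto \(\bar{\bar{\Omega}}\).)

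Second, \(\abs{\bar{\Omega}}\) is an exact count. The set \(\bar{\Omega}\) consists of the weak compositions of \(n\) into \(q\) nonnegative parts. By stars and bars there are \(\binom{n+q-1}{q-1}\) of them: arrange \(n\) stars and \(q-1\) bars in a row, and read \(m_a\) as the number of stars between consecutive bars.

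Third, bound the binomial coefficient. Write
\[
\binom{n+q-1}{q-1} = \frac{\prod_{j=1}^{q-1}(n+j)}{(q-1)!}.
\]
For each \(1 \le j \le q-1\) we have \(j < q \le n\), so \(n + j \le 2n\). The product is therefore at most \((2n)^{q-1}\). Since \(n \ge q \ge 1\) gives \(2n \ge 1\), we have \((2n)^{q-1} \le (2n)^q\), and hence
\[
\binom{n+q-1}{q-1} \le \frac{(2n)^q}{(q-1)!}.
\]

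No step is a real obstacle. The only point needing care is the inequality \(n + j \le 2n\), and that is exactly where the hypothesis \(n \ge q\) is used.
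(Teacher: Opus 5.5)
Your proof is correct and is essentially what the paper has in mind. The paper states this lemma as a direct consequence of the definitions without writing out an argument, and your three steps fill in exactly that: the inclusion \(\bar{\bar{\Omega}} \subseteq \bar{\Omega}\), the stars-and-bars count, and the factorwise bound \(n + j \le 2n\), which uses \(n \ge q\).
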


To obtain the final bound of the discrete action, we also need the following technical lemma (proof deferred to \zcref{subsection:appendix-proof-of-mean-field-Potts-results}).

\begin{lemma}
    For any \(s \in \stp{0, 1}\) and \(\*m \in \bar{\bar{\Omega}}\),
    \[
    \partial_s \bar{\bar{\pi}}_s\tp{\*m} = \frac{\beta'\tp{s}}{n} \bar{\bar{\pi}}_s\tp{\*m} \tp{\sum_{a \in \stp{q}} m_a^2 - \*E_{\*m \sim \bar{\bar{\pi}}_s}\stp{\sum_{a \in \stp{q}} m_a^2}}.
    \]
    Consequently,
    \[
    \abs{\partial_s \bar{\bar{\pi}}_s\tp{\*m}} \le \beta'\tp{s} n \bar{\bar{\pi}}_s\tp{\*m}.
    \]
    \label{lemma:bound-on-derivative-of-log-projected-measure-for-mean-field-Potts}
\end{lemma}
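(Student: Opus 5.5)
We prove the formula by differentiating $\log \bar{\bar{\pi}}_s$ in $s$. The first step is to write the folded measure explicitly. By \zcref{prop:mean-field-Potts-is-a-function-of-magnetization} and \zcref{prop:number-of-configurations-with-a-given-magnetization-for-mean-field-Potts}, and since $\bar{\bar{\pi}}_s = r\,\bar{\pi}_s$ with $r$ independent of $s$, we have
\[
\bar{\bar{\pi}}_s(\*m) = \frac{1}{Z(s)}\, w(\*m)\exp\tp{\frac{\beta(s)}{n} Q(\*m)}, \qquad Q(\*m) \defeq \sum_{a \in [q]} m_a^2 .
\]
Here $w(\*m) = r(\*m)\binom{n}{m_1,\dots,m_q}$ does not depend on $s$, and $Z(s) = \sum_{\*m \in \bar{\bar{\Omega}}} w(\*m) e^{\beta(s)Q(\*m)/n}$.

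Taking the logarithm gives $\log \bar{\bar{\pi}}_s(\*m) = \log w(\*m) + \frac{\beta(s)}{n}Q(\*m) - \log Z(s)$. Differentiating in $s$ term by term, the first term contributes nothing. The second contributes $\frac{\beta'(s)}{n}Q(\*m)$. For the third, $\partial_s \log Z(s) = Z(s)^{-1}\sum_{\*m} w(\*m)\frac{\beta'(s)}{n}Q(\*m)e^{\beta(s)Q(\*m)/n} = \frac{\beta'(s)}{n}\,\*E_{\bar{\bar{\pi}}_s}[Q]$. Multiplying by $\bar{\bar{\pi}}_s(\*m)$ then yields the stated identity. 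This requires $\beta$ to be differentiable in $s$, which holds for the schedules considered.

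For the consequence, we bound the centered quantity $Q(\*m) - \*E[Q]$. Since each $m_a \ge 0$ and $\sum_a m_a = n$, we have $Q(\*m) \in [n^2/q,\, n^2] \subseteq [0, n^2]$. Both $Q(\*m)$ and $\*E[Q]$ therefore lie in $[0,n^2]$, so $\abs{Q(\*m) - \*E[Q]} \le n^2$. Hence
\[
\abs{\partial_s \bar{\bar{\pi}}_s(\*m)} \le \frac{\abs{\beta'(s)}}{n}\, n^2\, \bar{\bar{\pi}}_s(\*m) = \abs{\beta'(s)}\, n\, \bar{\bar{\pi}}_s(\*m).
\]
This matches the stated bound when $\beta' \ge 0$. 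For a reversed (decreasing) schedule, $\beta'(s)$ in the statement should be read as $\abs{\beta'(s)}$.

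There is no real obstacle; the argument is an exponential-family derivative computation. The points needing care are that the symmetry factor $r(\*m)$ and the multinomial coefficient are independent of $s$, and the sign convention on $\beta'$ just noted.
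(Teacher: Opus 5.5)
Your proof is correct and follows essentially the same route as the paper: write $\bar{\bar{\pi}}_s = w_s/Z_s$ with $w_s(\*m) = r(\*m)\binom{n}{\*m}e^{\beta(s)H(\*m)/n}$, differentiate (the paper uses the quotient rule rather than the logarithm), and bound the centered term using $H(\*m) \in [0,n^2]$. Your sign remark is also right: for the decreasing Potts schedule $\beta'(s)<0$, so the bound should read $\abs{\beta'(s)}\, n\, \bar{\bar{\pi}}_s(\*m)$, and that is exactly how it is used in the proof of \zcref{thm:action-upper-bound-for-Potts}.
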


Combining the above lemmas, we establish a polynomial upper bound on the discrete action of the original annealing path \(\tp{\pi_s}_{s \in \stp{0, 1}}\) with respect to the original capacity \(\tp{c_s}_{s \in \stp{0, 1}}\).

\begin{theorem}\label{thm:action-upper-bound-for-Potts}
    Suppose that \(q \ge 2\), \(n \ge q\), and that the inverse temperature schedule satisfies
    \[
    \beta\tp{s} \ge \frac{q}{2}, \quad \forall s \in \stp{0, 1}.
    \]
    Then
    \[
    \+A\tp{\tp{\pi_s}_{s \in \stp{0, 1}}} \le n^{C q} \int_0^1 \tp{\beta'\tp{s}}^2 \, \dd s,
    \]
    where \(C > 0\) is a universal constant independent of \(n\), \(q\), and \(\beta\).
\end{theorem}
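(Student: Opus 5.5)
By \zcref{lemma:reduction-via-symmetry} applied twice (first to $\*M$, then to $S$), it suffices to bound $\int_0^1 \abs{\dot{\bar{\bar{\pi}}}}_s^2\,\dd s$ for the folded chain on $(\bar{\bar{\Omega}},\bar{\bar{E}})$ with capacity $\bar{\bar{c}}_s$. By \zcref{thm:metric-derivative-flux-formulation}, for each fixed $s$ it then suffices to exhibit a single admissible flux $J_s$ with $\frac12\sum J_s(\*m,\*m')^2/\bar{\bar{c}}_s(\*m,\*m') \le n^{Cq}(\beta'(s))^2$. We take $D \defeq \partial_s\bar{\bar{\pi}}_s$, which sums to zero, and apply \zcref{lemma:flux-construction-via-path-decomposition} with a carefully chosen family of paths. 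By \zcref{lemma:bound-on-derivative-of-log-projected-measure-for-mean-field-Potts}, $\abs{D(\*m)}\le \beta'(s)\, n\, \bar{\bar{\pi}}_s(\*m)$.

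The paths come from the landscape lemma. Since $\beta(s)\ge q/2$, \zcref{lemma:unimodality-within-one-sector-for-mean-field-Potts} gives a hub $\*m^*$ and, for every $\*m$, a path $\gamma_{\*m}$ to $\*m^*$ of length at most $2n$. Every vertex $\*m'$ on $\gamma_{\*m}$ satisfies $\bar\mu(\*m')\ge e^{-(q-1)}\bar\mu(\*m)$. Converting between $\bar\mu$ and $\bar{\bar{\pi}}_s$ costs at most a factor $q!$, so each such vertex has $\bar{\bar{\pi}}_s(\*m')\ge \bar{\bar{\pi}}_s(\*m)/(q!\,e^{q-1})$. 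For a pair $(x,y)$ we set $\gamma_{x\to y}$ to be the concatenation of $\gamma_x$ with the reverse of $\gamma_y$. This walk has length at most $4n$; if it is not simple, we extract a simple subpath, which only removes edges. Every edge $\{\*u,\*u'\}$ on $\gamma_{x\to y}$ lies on $\gamma_x$ or $\gamma_y$. In the first case both endpoints have mass at least $\bar{\bar{\pi}}_s(x)/(q!e^{q-1})$, and in the second at least $\bar{\bar{\pi}}_s(y)/(q!e^{q-1})$. In either case the minimum endpoint mass is at least $(\bar{\bar{\pi}}_s(x)\wedge\bar{\bar{\pi}}_s(y))/(q!e^{q-1})$.

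Next we bound the cost. The flux is $J=\sum j(x,y)F_{x\to y}$, where at most $2\abs{\bar{\bar{\Omega}}}\le 2(2n)^q$ pairs are active (\zcref{lemma:omega-upper-bound-for-mean-field-Potts}). The edge flow is $J(e)=\sum_{(x,y)\ni e} j(x,y)$, so by Cauchy--Schwarz
\[
J(e)^2\le \abs{\supp j}\sum_{(x,y):\,e\in\gamma_{x\to y}} j(x,y)^2 .
\]
The lemma guarantees $j(x,y)\le \abs{D(x)}\wedge\abs{D(y)}\le \beta'(s)\,n\,(\bar{\bar{\pi}}_s(x)\wedge\bar{\bar{\pi}}_s(y))$. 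Writing $\mu_{xy}\defeq \bar{\bar{\pi}}_s(x)\wedge\bar{\bar{\pi}}_s(y)$, the capacity bound \eqref{eq:capacity-lower-bound-for-folded-chain-of-mean-field-Potts} together with the path property gives, for every edge $e$ on $\gamma_{x\to y}$,
\[
\bar{\bar{c}}_s(e)\ \ge\ \frac{\mu_{xy}^2}{(q!)^4e^{2(q-1)}n^{2q-2}} .
\]
Hence each active pair contributes at most
\[
\frac{j(x,y)^2}{\bar{\bar{c}}_s(e)}\ \le\ (\beta'(s))^2\,n^2\,(q!)^4e^{2q}\,n^{2q-2} .
\]
This is the key cancellation: the square on $j$ matches the square appearing in the capacity lower bound.

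It remains to sum. Summing over edges $e$ and active pairs $(x,y)$ with $e\in\gamma_{x\to y}$ gives at most $4n\cdot\abs{\supp j}$ terms, and the Cauchy--Schwarz step contributes another factor $\abs{\supp j}$. The total is
\[
\abs{\dot{\bar{\bar{\pi}}}}_s^2\ \lesssim\ (\beta'(s))^2\, n^{2q+1}\,(2(2n)^q)^2\,(q!)^4e^{2q},
\]
which is at most $n^{Cq}(\beta'(s))^2$ because $q\le n$ lets us absorb $q!$ and $e^q$ into $n^{O(q)}$. Integrating over $s\in[0,1]$ gives the theorem. The main obstacle is the per-edge capacity estimate along a concatenated path. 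Bounding an edge of $\gamma_x$ only in terms of $x$'s mass is not enough on its own; we need the minimum over both endpoints $x,y$, and this is what makes the bound $j\le\abs{D(x)}\wedge\abs{D(y)}$ essential. This step also uses that the capacity bound is in terms of the minimum endpoint mass on the edge, which the path property supplies.
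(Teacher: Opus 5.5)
Your proposal is correct and follows the paper's proof essentially step for step. You reduce by symmetry to the folded chain, use the flux formulation with $D=\partial_s\bar{\bar{\pi}}_s$, feed the concatenated hub paths from \zcref{lemma:unimodality-within-one-sector-for-mean-field-Potts} into \zcref{lemma:flux-construction-via-path-decomposition}, and rely on the cancellation between $j(x,y)\le \abs{D(x)}\wedge\abs{D(y)}$ and the squared-minimum capacity bound \eqref{eq:capacity-lower-bound-for-folded-chain-of-mean-field-Potts}. The only difference is cosmetic: you bound $J(e)^2$ by Cauchy--Schwarz and count pair--edge incidences via the $4n$ path length, whereas the paper uses the triangle inequality on each edge and multiplies by at most $\abs{\bar{\bar{\Omega}}}^2$ edges, which only changes the exponent hidden in $n^{O(q)}$.
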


\begin{proof}
    By \zcref{lemma:reduction-via-symmetry}, 
    \[
        \+A\tp{\tp{\pi_s}_{s \in \stp{0, 1}}} = \+A\tp{\tp{\bar{\bar{\pi}}_s}_{s \in \stp{0, 1}}} = \int_0^1 \abs{\dot{\bar{\bar{\pi}}}}_s^2 \, \dd s,
    \]
    where $\abs{\dot{\bar{\bar{\pi}}}}_s$ is the metric derivative of the projected path with respect to the projected capacity $\bar{\bar{c}}_s$. To bound this metric derivative, we proceed by directly constructing an admissible flux \(\tp{\bar{\bar{J}}_s}_{s \in \stp{0, 1}}\) for \(\tp{\bar{\bar{\pi}}_s}_{s \in \stp{0, 1}}\). 
    
    Fix \(s \in \stp{0, 1}\). For each \(\*m \in \bar{\bar{\Omega}}\), let \(\gamma_{\*m}\) denote the path for \(\bar{\pi}_s = \bar{\mu}_{\beta\tp{s}}\) connecting \(\*m\) to \(\*m^*\) as guaranteed by \zcref{lemma:unimodality-within-one-sector-for-mean-field-Potts}. Define \(\gamma_{\*m \to \*m'}\) as the directed path from \(\*m\) to \(\*m'\) obtained by concatenating \(\gamma_{\*m}\) and the reverse of \(\gamma_{\*m'}\). Then \(\abs{\gamma_{\*m \to \*m'}} \le 4 n\), and for any \(\*m''\) on the path \(\gamma_{\*m \to \*m'}\),
    \[
    \bar{\bar{\pi}}_s\tp{\*m''} \ge \bar{\pi}_s\tp{\*m''} \ge \frac{1}{e^{q - 1}} \tp{\bar{\pi}_s\tp{\*m} \wedge \bar{\pi}_s\tp{\*m'}} \ge \frac{1}{q! e^{q - 1}} \tp{\bar{\bar{\pi}}_s\tp{\*m} \wedge \bar{\bar{\pi}}_s\tp{\*m'}}.
    \]

    Let \(\bar{\bar{J}}_s\) and \(j\) denote the flux and transport plan, respectively, as given by \zcref{lemma:flux-construction-via-path-decomposition}, associated with the collection of paths \(\set{\gamma_{\*m \to \*m'}}_{\*m, \*m' \in \bar{\bar{\Omega}}}\) and with \(D = \partial_s \bar{\bar{\pi}}_s\):
    \[
    \bar{\bar{J}}_s = \sum_{\tp{\*m, \*m'} \in \supp j} j\tp{\*m, \*m'} F_{\*m \to \*m'}.
    \]
    Then for every edge \(\set{\*m^{\tp{1}}, \*m^{\tp{2}}} \in \bar{\bar{E}}\), by \zcref{lemma:bound-on-derivative-of-log-projected-measure-for-mean-field-Potts}, we have
    \[
    \begin{aligned}
        \abs{\bar{\bar{J}}_s\tp{\*m^{\tp{1}}, \*m^{\tp{2}}}} &= \abs{\sum_{\tp{\*m, \*m'} \in \supp j} j\tp{\*m, \*m'} F_{\*m \to \*m'}\tp{\*m^{\tp{1}}, \*m^{\tp{2}}}} \\
        &\le \sum_{\substack{\tp{\*m, \*m'} \in \supp j \\ \tp{\*m^{\tp{1}}, \*m^{\tp{2}}} \in \gamma_{\*m \to \*m'} \text{ or } \tp{\*m^{\tp{2}}, \*m^{\tp{1}}} \in \gamma_{\*m \to \*m'}}} j\tp{\*m, \*m'} \\
        &\le \sum_{\substack{\tp{\*m, \*m'} \in \supp j \\ \tp{\*m^{\tp{1}}, \*m^{\tp{2}}} \in \gamma_{\*m \to \*m'} \text{ or } \tp{\*m^{\tp{2}}, \*m^{\tp{1}}} \in \gamma_{\*m \to \*m'}}} \tp{\abs{\partial_s \bar{\bar{\pi}}_s\tp{\*m}} \wedge \abs{\partial_s \bar{\bar{\pi}}_s\tp{\*m'}}} \\
        &\le \sum_{\substack{\tp{\*m, \*m'} \in \supp j \\ \tp{\*m^{\tp{1}}, \*m^{\tp{2}}} \in \gamma_{\*m \to \*m'} \text{ or } \tp{\*m^{\tp{2}}, \*m^{\tp{1}}} \in \gamma_{\*m \to \*m'}}} \abs{\beta'\tp{s}} n \tp{\bar{\bar{\pi}}_s\tp{\*m} \wedge \bar{\bar{\pi}}_s\tp{\*m'}} \\
        &\le \sum_{\substack{\tp{\*m, \*m'} \in \supp j \\ \tp{\*m^{\tp{1}}, \*m^{\tp{2}}} \in \gamma_{\*m \to \*m'} \text{ or } \tp{\*m^{\tp{2}}, \*m^{\tp{1}}} \in \gamma_{\*m \to \*m'}}} q! e^{q - 1} \abs{\beta'\tp{s}} n \tp{\bar{\bar{\pi}}_s\tp{\*m^{\tp{1}}} \wedge \bar{\bar{\pi}}_s\tp{\*m^{\tp{2}}}} \\
        &\le 2 \abs{\bar{\bar{\Omega}}} q! e^{q - 1} \abs{\beta'\tp{s}} n \tp{\bar{\bar{\pi}}_s\tp{\*m^{\tp{1}}} \wedge \bar{\bar{\pi}}_s\tp{\*m^{\tp{2}}}}. \\
    \end{aligned}
    \]
    Together with \zcref{eq:capacity-lower-bound-for-folded-chain-of-mean-field-Potts} and \zcref{lemma:omega-upper-bound-for-mean-field-Potts}, we have
    \[
    \begin{aligned}
        \abs{\dot{\bar{\bar{\pi}}}}_s^2 &\le \sum_{\set{\*m^{\tp{1}}, \*m^{\tp{2}}} \in \bar{\bar{E}}} \frac{\bar{\bar{J}}_s\tp{\*m^{\tp{1}}, \*m^{\tp{2}}}^2}{\bar{\bar{c}}_s\tp{\*m^{\tp{1}}, \*m^{\tp{2}}}} \\
        &\le \sum_{\set{\*m^{\tp{1}}, \*m^{\tp{2}}} \in \bar{\bar{E}}} \frac{\tp{2 \abs{\bar{\bar{\Omega}}} q! e^{q - 1} \abs{\beta'\tp{s}} n \tp{\bar{\bar{\pi}}_s\tp{\*m^{\tp{1}}} \wedge \bar{\bar{\pi}}_s\tp{\*m^{\tp{2}}}}}^2}{\frac{1}{(q!)^2 n^{2 q - 2}} \tp{\bar{\bar{\pi}}_s\tp{\*m^{\tp{1}}} \wedge \bar{\bar{\pi}}_s\tp{\*m^{\tp{2}}}}^2} \\
        &= \sum_{\set{\*m^{\tp{1}}, \*m^{\tp{2}}} \in \bar{\bar{E}}} 4 \tp{\beta'\tp{s}}^2 \tp{q!}^4 e^{2 q - 2} n^{2 q} \abs{\bar{\bar{\Omega}}}^2 \\
        &\le 4 \tp{\beta'\tp{s}}^2 \tp{q!}^4 e^{2 q - 2} n^{2 q} \abs{\bar{\bar{\Omega}}}^4 \\
        &\le 4 \tp{\beta'\tp{s}}^2 \tp{q!}^4 e^{2 q - 2} n^{2 q} \tp{\frac{\tp{2 n}^q}{\tp{q - 1}!}}^4 \\
        &\le n^{C q} \tp{\beta'\tp{s}}^2 \\
    \end{aligned}
    \]
    for some universal constant \(C\) independent of \(n\), \(q\) and \(\beta\). Integrating over \(s \in \stp{0, 1}\) yields
    \[
    \+A\tp{\tp{\pi_s}_{s \in \stp{0, 1}}} = \int_0^1 \abs{\dot{\bar{\bar{\pi}}}}_s^2 \, \dd s \le n^{C q} \int_0^1 \tp{\beta'\tp{s}}^2 \, \dd s
    \]
    as claimed.
\end{proof}

\subsubsection{Algorithm Implementation and Convergence Guarantee}
In this subsection, we apply \zcref{algo:poisson-annealing} and \zcref{thm:annealing-ub} to establish the convergence guarantees for the mean-field Potts model. We consider a special annealing schedule for the inverse temperature 
$$
\beta\tp{s} \defeq \beta_0 - \tp{\beta_0 - \beta} s, \quad s \in \stp{0, 1}
$$
where $\beta_0 \defeq n \log q + \log\frac{6}{\eps}$. The annealing path at time $s$ is $\pi_s = \mu_{\beta\tp{s}}$ and the transition matrix $P_s$ to apply \zcref{algo:poisson-annealing} is the $\tp{q - 1}$-block Glauber dynamics targeting at $\pi_s$. 

Intuitively, the annealing process starts from an extremely low temperature (i.e., very large inverse temperature $\beta_0$), where the Gibbs measure $\mu_{\beta_0}$ is already highly concentrated around the $q$ monochromatic configurations in $\Omega_0 \defeq \set{\tp{a, a, \cdots, a} \in \stp{q}^n \cmid a \in \stp{q}}$. In this regime, nearly all the mass lies on these $q$ pure states, making it natural to approximate $\pi_0$ by a mixture between the uniform distribution over $\Omega_0$ and a small uniform noise over the entire space. The following lemma formalizes this intuition by showing that such an approximation satisfies the bounded initial error condition in \zcref{thm:annealing-ub}, and in particular implies \(\-{KL}\tp{\pi_0 \| \pi_0^{\!{ALG}}} < \frac{\eps}{3}\).

\begin{lemma}
    Fix \(\eps_0 \in \tp{0, 1}\). Let \(\mu_{\beta_0}\) denote the mean-field Potts model with inverse temperature \(\beta_0\) on \(\Omega \defeq \stp{q}^n\). Define
    \[
    \Omega_0 \defeq \set{\tp{a, a, \cdots, a} \in \stp{q}^n \cmid a \in \stp{q}}.
    \]
    Let
    \[
    \nu \defeq \frac{\eps_0}{2} \-{Unif}\tp{\Omega} + \tp{1 - \frac{\eps_0}{2}} \-{Unif}\tp{\Omega_0}.
    \]
    Then, whenever \(\beta_0 \ge n \log q + \log \frac{2}{\eps_0}\), it holds that \(\-{KL}\tp{\mu_{\beta_0} \| \nu} < \eps_0\).
    \label{thm:mean-field-Potts-extremely-low-temperature}
\end{lemma}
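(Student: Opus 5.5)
The plan is to compute the KL divergence directly, splitting the sum over $\Omega$ into the monochromatic part $\Omega_0$ and its complement. Write $\mu = \mu_{\beta_0}$ and $Z = \sum_{\sigma} \exp\bigl(\frac{\beta_0}{n}\sum_a M_a(\sigma)^2\bigr)$.

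\textbf{Step 1: weights.} Each $\sigma \in \Omega_0$ has $\sum_a M_a^2 = n^2$, so its unnormalized weight is $e^{\beta_0 n}$. Any $\sigma \notin \Omega_0$ has $\sum_a M_a^2 \le (n-1)^2 + 1 \le n^2 - 2n + 2$. Its weight is therefore at most $e^{\beta_0 n} e^{-\beta_0 (2n-2)/n} \le e^{\beta_0 n} e^{-\beta_0}$ when $n \ge 2$. Let $\delta \defeq \mu(\Omega\setminus\Omega_0)$. Since $|\Omega \setminus \Omega_0| \le q^n$, we get
\[
\delta \le \frac{q^n e^{-\beta_0}}{q} \le \frac{e^{-\beta_0} q^n}{q},
\]
because the $\Omega_0$ part alone contributes $q e^{\beta_0 n}$ to $Z$. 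Under $\beta_0 \ge n\log q + \log\frac{2}{\eps_0}$ this gives $\delta \le \eps_0/(2q) \le \eps_0/4$. (The case $n=1$ has $\Omega_0 = \Omega$ and is trivial.)

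\textbf{Step 2: the $\Omega_0$ term.} For $\sigma \in \Omega_0$, symmetry gives $\mu(\sigma) = (1-\delta)/q$. Also $\nu(\sigma) \ge (1-\eps_0/2)/q$. The contribution is at most
\[
(1-\delta)\log\frac{1-\delta}{1-\eps_0/2} \le \log\frac{1}{1-\eps_0/2}.
\]

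\textbf{Step 3: the complement.} For $\sigma \notin \Omega_0$, use $\nu(\sigma) \ge \frac{\eps_0}{2} q^{-n}$. By Step 1, $\mu(\sigma) \le e^{-\beta_0}/q$. So $\mu(\sigma)/\nu(\sigma) \le \frac{2 q^n e^{-\beta_0}}{q\eps_0} \le \frac{1}{q} < 1$, and each term $\mu(\sigma)\log(\mu(\sigma)/\nu(\sigma))$ is negative. The complement thus contributes at most $0$.

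\textbf{Step 4: combine.} We get $\mathrm{KL}(\mu\|\nu) \le -\log(1-\eps_0/2)$. This is not automatically below $\eps_0$ for all $\eps_0<1$: at $\eps_0 \to 1$ it tends to $\log 2 < 1$, and in general $-\log(1-x/2) < x$ for $x\in(0,1)$, since the function $x + \log(1-x/2)$ vanishes at $0$ and has derivative $1 - \frac{1}{2-x} > 0$ there. So the bound holds strictly.

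The main point of care is Step 3. The complement must be handled via the ratio bound, which makes those terms nonpositive; summing the masses $\delta$ alone is not enough. The sharper bound $\delta \le \eps_0/(2q)$ is not actually needed.
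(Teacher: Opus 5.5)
Your proof is correct and follows essentially the same route as the paper. You split $\Omega$ into $\Omega_0$ and its complement, use $\sum_a M_a^2 \le n^2-2n+2$ off $\Omega_0$ to show $\log(\mu_{\beta_0}(\sigma)/\nu(\sigma))<0$ there, and bound the $\Omega_0$ part by $-\log(1-\varepsilon_0/2)<\varepsilon_0$. The only differences are cosmetic: the paper bounds $\mathrm{KL}$ directly by $\sup_\sigma \log(\mu_{\beta_0}(\sigma)/\nu(\sigma))$ instead of summing the $\Omega_0$ terms, and, as you note yourself, your explicit estimate of $\delta$ is not needed.
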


The following lemma guarantees the locally stable condition. The proofs of these two lemmas are deferred to \zcref{subsection:appendix-proof-of-mean-field-Potts-results}.
\begin{lemma}
    Let $p_s$ be the transition rate kernel of the $\tp{q - 1}$-block Glauber dynamics targeted at $\mu_{\beta\tp{s}}$. For any $s, s' \in \stp{0, 1}$ and any edge $\set{\sigma, \sigma'} \in E$,
    \[
    \abs{ \frac{p_{s'}\tp{\sigma, \sigma'}}{p_s\tp{\sigma, \sigma'}} - 1 } \le \exp\tp{2 \tp{q - 1} \abs{\beta\tp{s'} - \beta\tp{s}}} - 1.
    \]
    Consequently, for any $\eps \in \tp{0, 1}$ and $T \ge 1$, the locally stable condition $\max_{x \neq y} \abs{\frac{p_{s'}\tp{x, y}}{p_s\tp{x, y}} - 1} \le \frac{\eps}{6 T}$ is satisfied by choosing 
    \[
    \eta \le \frac{\eps}{24 \tp{q - 1} \abs{\beta_0 - \beta} T}.
    \]
    \label{lemma:local-stability-Potts}
\end{lemma}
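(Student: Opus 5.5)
The plan is a direct computation from the explicit formula for the block Glauber rate kernel, followed by elementary estimates. Fix an edge $\set{\sigma,\sigma'}\in E$ and a block $I\in\+I(\sigma,\sigma')$. Write $H(\tau)\defeq \frac1n\sum_{a\in[q]}M_a(\tau)^2$, so that $\mu_\beta(\tau)\propto e^{\beta H(\tau)}$ by \zcref{prop:mean-field-Potts-is-a-function-of-magnetization}. The normalizing constant cancels in the $I$-th summand of $p_s(\sigma,\sigma')$, which then equals
\[
\frac{1}{\binom{n}{q-1}}\cdot\Bigl(\sum_{\*a\in[q]^{q-1}} \exp\bigl(\beta(s)\,\Delta_{\*a}\bigr)\Bigr)^{-1},
\qquad \Delta_{\*a}\defeq H(\sigma^{I\leftarrow \*a})-H(\sigma').
\]
The index set $\+I(\sigma,\sigma')$ depends only on the graph and not on $s$. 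Hence it suffices to compare each summand at $\beta(s)$ and at $\beta(s')$.

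The key combinatorial step is to show $|\Delta_{\*a}|\le 2(q-1)$. Both $\sigma^{I\leftarrow\*a}$ and $\sigma'$ agree outside $I$, so they differ in at most $q-1$ sites, and we can pass from one to the other by single-site recolorings. Recoloring one site from color $a$ to color $b\ne a$ changes $\sum_c M_c^2$ by $(M_b+1)^2-M_b^2+(M_a-1)^2-M_a^2 = 2(M_b-M_a+1)$. Here $M_a\ge1$ and $M_b\le n-M_a$, so this quantity lies in $[-2(n-1),\,2(n-1)]$. After dividing by $n$, each single-site move changes $H$ by at most $2$ in absolute value, and the triangle inequality gives $|\Delta_{\*a}|\le 2(q-1)$.

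Set $\delta\defeq\beta(s')-\beta(s)$. Each term $\exp(\beta\Delta_{\*a})$ is multiplied by $\exp(\delta\Delta_{\*a})\in[e^{-2(q-1)|\delta|},e^{2(q-1)|\delta|}]$. So the reciprocal of the sum, i.e.\ the $I$-th summand, changes by a factor in the same interval. A sum of positive terms whose individual ratios all lie in an interval has its total ratio in that same interval, so
\[
\frac{p_{s'}(\sigma,\sigma')}{p_s(\sigma,\sigma')}\in\bigl[e^{-2(q-1)|\delta|},\,e^{2(q-1)|\delta|}\bigr].
\]
Using $1-e^{-x}\le e^x-1$ for $x\ge 0$, this yields the first claim.

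For the consequence, use that the schedule is linear: $|\beta(s')-\beta(s)|=|\beta_0-\beta|\,|s'-s|$. Then $|s'-s|<\eta\le \frac{\eps}{24(q-1)|\beta_0-\beta|T}$ gives $x\defeq 2(q-1)|\delta|\le \frac{\eps}{12T}\le\frac1{12}$, since $\eps<1$ and $T\ge1$. For $x\le 1$ we have $e^x-1\le 2x$, so the ratio deviation is at most $\frac{\eps}{6T}$. The only step needing care is the uniform bound on $|\Delta_{\*a}|$; it must not depend on $n$, and this is where the $\frac1n$ normalization in the Hamiltonian is used. The rest is routine.
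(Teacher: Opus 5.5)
Your proposal is correct and follows the paper's proof: cancel the normalizing constant in each block summand, bound the energy gap between $\sigma^{I\leftarrow \mathbf{a}}$ and $\sigma'$ (which agree outside $I$) by $2(q-1)$ after the $\frac{1}{n}$ scaling, carry the resulting factor $e^{\pm 2(q-1)|\beta(s')-\beta(s)|}$ through the sums, and finish with $e^x-1\le 2x$ for small $x$. The only cosmetic difference is in the energy bound: you telescope over single-site recolorings, each changing $\sum_c M_c^2$ by at most $2(n-1)$, whereas the paper writes $H(\eta)-H(\xi)=\sum_a (x_a-y_a)\bigl(M_a(\eta)+M_a(\xi)\bigr)$ and uses $0\le M_a(\eta)+M_a(\xi)\le 2n$; both give the same constant.
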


With \zcref{thm:mean-field-Potts-extremely-low-temperature,lemma:local-stability-Potts} and the above bound on the action, we are ready to apply \zcref{thm:annealing-ub} to obtain the convergence guarantee for \zcref{algo:poisson-annealing} applied to the mean-field Potts model.

\begin{theorem}
    \label{thm:final-complexity-mean-field-Potts}
    Let \(\mu_{\beta}\) be the mean-field Potts model on \(\Omega = \stp{q}^n\) (\(q \ge 2\), \(n \ge q\)) with inverse temperature \(\beta \ge \frac{q}{2}\). Execute the annealing algorithm in \zcref{algo:poisson-annealing} using the linear schedule \(\pi_s = \mu_{\beta\tp{s}}\) where \(\beta\tp{s} \defeq \beta_0 - \tp{\beta_0 - \beta} s\) with \(\beta_0 \defeq n \log q + \log\frac{6}{\eps}\). Choose the initial distribution \(\pi_0^{\!{ALG}} = \frac{\eps}{6} \-{Unif}\tp{\Omega} + \tp{1 - \frac{\eps}{6}} \-{Unif}\tp{\Omega_0}\) and each \(P_s\) as the \(\tp{q - 1}\)-block Glauber dynamics targeting at \(\pi_s\). For any target error \(\eps \in \tp{0, 1}\), set
    \[
    T = \frac{2 n^{C q} \tp{\beta_0 - \beta}^2}{\eps} \quad \text{and} \quad N = \ceil{\frac{48 \tp{q - 1} n^{C q} \abs{\beta_0 - \beta}^3}{\eps^2}},
    \]
    where \(C > 0\) is the universal constant in \zcref{thm:action-upper-bound-for-Potts}. Then, the output distribution \(\pi_N^{\!{ALG}}\) satisfies
    \[
    \-{KL}\tp{\mu_{\beta} \| \pi_N^{\!{ALG}}} \le \eps,
    \]
    and the algorithm terminates in \(\frac{2 n^{C q} \tp{\beta_0 - \beta}^2}{\eps}\) steps in expectation.
\end{theorem}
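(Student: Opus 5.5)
The proof applies \zcref{thm:annealing-ub} directly: we verify its two hypotheses with $\eps$ as given, and combine them with the action bound of \zcref{thm:action-upper-bound-for-Potts}.

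\textbf{Step 1: the action.} Since $\beta\ge q/2$ and $\beta_0 = n\log q+\log(6/\eps) \ge \beta$ (assume $\beta \le \beta_0$; otherwise the schedule is still monotone between two values $\ge q/2$), every point $\beta(s)$ on the linear schedule lies between $\beta$ and $\beta_0$. Hence $\beta(s)\ge q/2$ for all $s\in[0,1]$, and \zcref{thm:action-upper-bound-for-Potts} applies. Because $\beta'(s)\equiv -(\beta_0-\beta)$, it gives
\[
\+A\bigl((\pi_s)_{s\in[0,1]}\bigr)\le n^{Cq}(\beta_0-\beta)^2 .
\]
The theorem prescribes $T=2\+A/\eps$. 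We instead take $T=2n^{Cq}(\beta_0-\beta)^2/\eps$, which can only be larger. The proof of \zcref{thm:annealing-ub} uses the choice of $T$ only through the bound $\+A/(4T)\le \eps/8$ together with $\delta=\eps/(6T)$. Both of these continue to hold for any $T\ge 2\+A/\eps$, since the term $4\delta T\cdot\tfrac12=\eps/3$ does not depend on $T$. So we may use this upper bound on the action in place of $\+A$ itself.

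\textbf{Step 2: the initial error.} Apply \zcref{thm:mean-field-Potts-extremely-low-temperature} with $\eps_0=\eps/3$. Then $\beta_0=n\log q+\log(2/\eps_0)$ exactly, and $\pi_0^{\!{ALG}}=\frac{\eps_0}{2}\-{Unif}(\Omega)+(1-\frac{\eps_0}{2})\-{Unif}(\Omega_0)$ matches the prescribed initialization. The lemma therefore gives $\-{KL}(\pi_0\|\pi_0^{\!{ALG}})<\eps/3$, which is the bounded-initial-error hypothesis.

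\textbf{Step 3: local stability, and the main check.} \zcref{lemma:local-stability-Potts} requires $T\ge 1$ and $\eps<1$, so we confirm $T\ge1$. This holds because $n^{Cq}\ge 1$, $\eps<1$, and $\beta_0-\beta\ge$ a positive constant in the intended regime; in the degenerate case where $\beta_0-\beta$ is tiny, the error bound holds trivially after possibly enlarging $T$, and I will remark on this. With $T\ge 1$, the lemma says the local-stability hypothesis of \zcref{thm:annealing-ub} holds with
\[
\eta=\frac{\eps}{24(q-1)|\beta_0-\beta|T}.
\]
Substituting our value of $T$,
\[
\frac1\eta=\frac{24(q-1)|\beta_0-\beta|\cdot 2n^{Cq}(\beta_0-\beta)^2}{\eps^2}=\frac{48(q-1)n^{Cq}|\beta_0-\beta|^3}{\eps^2},
\]
so the stated $N$ satisfies $N\ge 1/\eta$. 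Then \zcref{thm:annealing-ub} yields $\-{KL}(\mu_\beta\|\pi_N^{\!{ALG}})\le\eps$, with expected number of steps $\bb E[\sum_k M_k]=T$.

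The only delicate point is the substitution in Step 1, i.e., that replacing the exact action by its upper bound in the definition of $T$ keeps all three error terms within their budgets. I expect that to need one line of justification and nothing more.
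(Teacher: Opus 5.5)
Your proposal is correct and is exactly the argument the paper has in mind. You feed three ingredients into \zcref{thm:annealing-ub}: the action bound of \zcref{thm:action-upper-bound-for-Potts} (with $\int_0^1 (\beta'(s))^2\,\dd s = (\beta_0-\beta)^2$), \zcref{thm:mean-field-Potts-extremely-low-temperature} with $\eps_0 = \eps/3$, and \zcref{lemma:local-stability-Potts}. Your explicit observation that the proof of \zcref{thm:annealing-ub} only needs $T \ge 2\+A/\eps$, so that the upper bound on the action may replace $\+A$, spells out a step the paper leaves implicit.

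The one loose end is the degenerate regime where $\beta_0-\beta$ is so small that $T<1$. ``Enlarging $T$'' does not close it, since $T$ is fixed by the statement, but the paper silently ignores this edge case as well.
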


\bibliographystyle{alpha}
\bibliography{refs}

\appendix
\section{Omitted Proofs}

\subsection{Proof of \zcref{thm:discrete-Girsanov}}\label{sec:pf-of-Girsanov}
The following lemma is a standard result in the theory of jump processes. See \cite[Chapter VI, Theorem 2 and Theorem 3]{Bre81} for a proof.
\begin{lemma}[Radon-Nikodym derivative for path measures]
    \label{lem:girsanov-density}
    Let \(\bb P^p_x\) and \(\bb P^q_x\) denote the path measures in \zcref{thm:discrete-Girsanov} conditioned on a deterministic initial state \(X_0 = x \in \Omega\). The log Radon-Nikodym derivative of \(\bb P^p_x\) with respect to \(\bb P^q_x\) over the time horizon \(\stp{0, T}\) is given by
    \[
        \log \frac{\dd \bb P^p_x}{\dd \bb P^q_x}\tp{\tp{x_t}_{t\in\stp{0,T}}} = \sum_{0 < s \le T: x_{s-}\neq x_s} \log \frac{p_s\tp{x_{s-}, x_s}}{q_s\tp{x_{s-}, x_s}} - \int_0^T \sum_{y \neq x_t} \tp{p_t\tp{x_t, y} - q_t\tp{x_t, y}} \, \dd t.
    \]
\end{lemma}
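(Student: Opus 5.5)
The plan is to compute both path measures explicitly on the space of c\`adl\`ag paths on \(\Omega\) with finitely many jumps in \([0,T]\), and then take their ratio. Since the exit rates \(\lambda^p_t(x) \defeq \sum_{y\neq x} p_t(x,y)\) are uniformly bounded, a comparison with a Poisson process of constant rate shows that \(\bb P^p_x\)-almost every path has finitely many jumps. We may therefore restrict attention to this subspace. Here I take the rates \(q\) to be bounded as well, since \(\bb P^q_x\) must also be a well-defined path law; this should be checked in the setting of \zcref{thm:discrete-Girsanov}.

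Each such path is encoded by the number of jumps \(k\), the jump times \(0<t_1<\dots<t_k\le T\), and the visited states \(x=x_0,x_1,\dots,x_k\) with \(x_{i-1}\neq x_i\). The first step is to show that, under \(\bb P^p_x\), the restriction to \(\{k \text{ jumps}, \text{states } x_1,\dots,x_k\}\) has the following density with respect to Lebesgue measure on the simplex of jump times:
\[
\prod_{i=1}^k p_{t_i}(x_{i-1},x_i)\,\exp\Bigl(-\int_0^T \lambda^p_t(x_t)\,\dd t\Bigr).
\]
This follows by induction on the jumps using the strong Markov property. Given the state \(x_{i-1}\) at time \(t_{i-1}\), the next jump time has survival function \(\exp(-\int_{t_{i-1}}^{t}\lambda^p_r(x_{i-1})\,\dd r)\). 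The destination is then chosen with probability \(p_{t_i}(x_{i-1},\cdot)/\lambda^p_{t_i}(x_{i-1})\). Finally, no further jump occurs on \((t_k,T]\), which contributes one last survival factor. The same formula holds for \(q\).

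Since \(p_t \ll q_t\) edgewise, every configuration of positive \(\bb P^p_x\)-density also has positive \(\bb P^q_x\)-density. Hence \(\bb P^p_x\ll\bb P^q_x\), and the Radon--Nikodym derivative is the ratio of the two densities on each piece. Taking logarithms, the product of rates gives \(\sum_i \log\bigl(p_{t_i}(x_{i-1},x_i)/q_{t_i}(x_{i-1},x_i)\bigr)\), which is exactly the sum over jump times \(s\) with \(x_{s-}\neq x_s\). The exponential factors give \(-\int_0^T \sum_{y\neq x_t}\bigl(p_t(x_t,y)-q_t(x_t,y)\bigr)\,\dd t\). This is the claimed formula.

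I expect the main obstacle to be the measure-theoretic bookkeeping rather than the computation itself. One must check that the pieces indexed by \((k,x_1,\dots,x_k)\) partition a set of full \(\bb P^p_x\)-measure. One must also check that the stated densities genuinely describe the law, which uses measurability in \(t\) of the rates and the uniform bound to exclude explosion. Alternatively, one can simply cite the general jump-process density theorem of \cite{Bre81}, as the paper does.
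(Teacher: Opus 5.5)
Your argument is correct, but it takes a different route from the paper. The paper gives no proof of this lemma and simply cites the likelihood-ratio theorems of \cite[Chapter VI, Theorems 2 and 3]{Bre81}. Those are Girsanov-type results in the point-process martingale framework. There one shows that the exponential of the right-hand side, viewed as a process in the terminal time, is a mean-one $\bb P^q_x$-martingale. Under the reweighted measure the jump intensity becomes $p_t$, and the law is then identified through its intensity; a converse statement handles absolute continuity.

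You instead decompose the finite-jump path space by the number of jumps and the visited states. You write both laws as explicit densities against Lebesgue measure on the simplex of jump times, and read off the Radon--Nikodym derivative as a ratio. Absolute continuity $\bb P^p_x \ll \bb P^q_x$ then falls out directly from $p_t \ll q_t$.

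What each route buys: yours is elementary and self-contained, exploiting the finite state space and the holding-time/jump-chain description of the process. The cited framework is more general, covering arbitrary predictable intensities and marked point processes. It also meshes with the compensator identity \cite[Chapter III, Theorem 9]{Bre81} that the paper uses right afterwards in the proof of \zcref{thm:discrete-Girsanov}.

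Your caveat about the rates of $q$ is a genuine hypothesis, not a formality. Suppose $\int_0^T \sum_{y \neq x} q_t(x,y)\,\dd t = \infty$ while $p$ is bounded. Then the $q$-chain leaves $x$ before time $T$ almost surely, but the $p$-chain stays at $x$ with positive probability. So $\bb P^p_x \ll \bb P^q_x$ fails even though $p_t \ll q_t$. The condition is satisfied in the paper's applications, where the second chain is the annealing algorithm and its rates are bounded.
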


\begin{proof}[Proof of \zcref{thm:discrete-Girsanov}]
    By the chain rule for relative entropy, 
    \begin{equation}
        \label{eq:kl-chain-rule}
        \-{KL}\tp{\bb P^p \| \bb P^q} = \-{KL}\tp{\mu_0 \| \nu_0} + \bb E_{\bb P^p} \stp{ \log \frac{\dd \bb P^p_{X_0}}{\dd \bb P^q_{X_0}}\tp{\tp{X_t}_{t\in\stp{0,T}}} }.
    \end{equation}
    We can apply \zcref{lem:girsanov-density} and evaluate the expectation of the two terms from \zcref{lem:girsanov-density} under the path measure \(\bb P^p\). 

    For the second term, applying Fubini's theorem  and taking the expectation of the state \(X_t\) with respect to its time-marginal \(\mu_t\), we obtain
    \begin{equation}
        \label{eq:fubini-term}
        \bb E_{\bb P^p} \stp{ \int_0^T \sum_{y \neq X_t} \tp{p_t\tp{X_t, y} - q_t\tp{X_t, y}} \, \dd t } = \int_0^T \sum_{x \in \Omega} \mu_t\tp{x} \sum_{y \neq x} \tp{p_t\tp{x, y} - q_t\tp{x, y}} \, \dd t.
    \end{equation}

    It remains to deal with the first term in \zcref{lem:girsanov-density}. Let $\mathcal{F}_t = \sigma(X_s : 0 \le s \le t)$ denote the natural filtration generated by the process. For a fixed path $\tp{X_t}_{t\in [0,T]}$, the jumps of the chain can be characterized by a measure 
    \[
        \mu(\dd t, \dd y) = \sum_{0 < s \le T, X_{s-} \neq X_s} \delta_{(s, X_s)}(\dd t, \dd y)
    \] 
    where $\delta_{(s, X_s)}$ is the Dirac measure at the point $(s, X_s)$.
    
    Consider the integrand $W(t, y) \defeq \log \tp{p_t\tp{X_{t-}, y} / q_t\tp{X_{t-}, y}}$. Because the left-limit process $t \mapsto X_{t-}$ is left-continuous and adapted to $\mathcal{F}_t$, the composite process $W(t,y)$ is predictable. Consequently, we can apply the integration theorem of point process martingales (see, e.g., \cite[Chapter III, Theorem 9]{Bre81}) to obtain
    \begin{equation*}
        \bb E_{\bb P^p}\stp{\int_{[0,T] \times \Omega} W(t, y) \, \mu(\dd t, \dd y)} = \bb E_{\bb P^p}\stp{\int_{[0,T]} \sum_{y \neq X_{t-}} W(t, y) \, p_t(X_{t-}, y) \dd t}.
    \end{equation*}
    By direct calculation, the left-hand side is exactly the expectation of the first term in \zcref{lem:girsanov-density} under $\bb P^p$:
    \begin{align*}
        \bb E_{\bb P^p}\stp{\int_{[0,T] \times \Omega} W(t, y) \, \mu(\dd t, \dd y)} = \bb E_{\bb P^p} \stp{ \sum_{0 < s \le T} \log \frac{p_s\tp{X_{s-}, X_s}}{q_s\tp{X_{s-}, X_s}} }.
    \end{align*}
    Because the total jump rate, $\sum_{y \neq x} p_t(x, y)$, is uniformly bounded, the chain undergoes almost surely finitely many jumps in the interval $[0, T]$.
    Combining this with Fubini's theorem yields
    \begin{equation}
        \label{eq:compensator-term}
        \begin{aligned}
            \bb E_{\bb P^p} \stp{ \sum_{0 < s \le T} \log \frac{p_s\tp{X_{s-}, X_s}}{q_s\tp{X_{s-}, X_s}} } &= \bb E_{\bb P^p}\stp{\int_0^T \sum_{y \neq X_{t-}} p_t\tp{X_{t-}, y} \log \frac{p_t\tp{X_{t-}, y}}{q_t\tp{X_{t-}, y}} \, \dd t} \\ 
            &= \bb E_{\bb P^p} \stp{ \int_0^T \sum_{y \neq X_t} p_t\tp{X_t, y} \log \frac{p_t\tp{X_t, y}}{q_t\tp{X_t, y}} \, \dd t } \\
            &= \int_0^T \sum_{x \in \Omega} \mu_t\tp{x} \sum_{y \neq x} p_t\tp{x, y} \log \frac{p_t\tp{x, y}}{q_t\tp{x, y}} \, \dd t.
        \end{aligned}
    \end{equation}

    Substituting \eqref{eq:fubini-term} and \eqref{eq:compensator-term} back into the expectation in \eqref{eq:kl-chain-rule}, we obtain the desired result
    \[
        \-{KL}\tp{\bb P^p \| \bb P^q} = \-{KL}\tp{\mu_0 \| \nu_0} + \int_0^T \sum_{x \in \Omega} \mu_t\tp{x} \sum_{y \neq x} \tp{p_t\tp{x, y} \log \frac{p_t\tp{x, y}}{q_t\tp{x, y}} - \tp{p_t\tp{x, y} - q_t\tp{x, y}}} \, \dd t.
    \]
\end{proof}

\subsection{Proof of the Metric Properties of \zcref{def:discrete-W2}}\label{sec:pf-of-discrete-W2}

Before proving the metric properties, we first establish a fundamental lemma, which guarantees that convergence in $W_2$ implies convergence in $L_1$.

\begin{lemma}\label{lem:w2-bounds-euclidean}
    Let $C_{\max} \defeq \sup_{\rho \in \+P_{>0}(\Omega)} \sum_{x,y \in \Omega} c_\rho(x,y) < \infty$. For any $\mu, \nu \in \+P_{>0}(\Omega)$, their $L_1$ distance is bounded by their $W_2$ distance:
    $$
        \|\mu - \nu\|_1 \le \sqrt{2 C_{\max}} W_2(\mu, \nu).
    $$
\end{lemma}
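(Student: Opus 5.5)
Fix any curve $(\pi_t)_{t\in[0,1]}$ in $\+P_{>0}(\Omega)$ with $\pi_0=\mu$, $\pi_1=\nu$, together with admissible potentials $(\psi_t)$. Write $\mathcal{E}_t \defeq \frac12\sum_{x,y}(\psi_t(x)-\psi_t(y))^2 c_{\pi_t}(x,y)$ for the instantaneous energy. We will show $\|\mu-\nu\|_1^2 \le 2C_{\max}\int_0^1 \mathcal{E}_t\,\dd t$. Taking the infimum over all such pairs then gives the claim.

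\textbf{Step 1: bound the speed in $L_1$ at each time.} By the discrete continuity equation, $\partial_t\pi_t(x) = -\sum_y (\psi_t(y)-\psi_t(x))c_{\pi_t}(x,y)$. Hence, by the triangle inequality,
\[
\|\partial_t\pi_t\|_1 \le \sum_{x,y}|\psi_t(y)-\psi_t(x)|\,c_{\pi_t}(x,y).
\]
Next we apply Cauchy--Schwarz with weights $c_{\pi_t}(x,y)$, writing each summand as $\big(|\psi_t(y)-\psi_t(x)|\sqrt{c}\big)\cdot\sqrt{c}$. This yields
\[
\|\partial_t\pi_t\|_1 \le \Big(\sum_{x,y}(\psi_t(y)-\psi_t(x))^2c_{\pi_t}(x,y)\Big)^{1/2}\Big(\sum_{x,y}c_{\pi_t}(x,y)\Big)^{1/2} \le \sqrt{2\mathcal{E}_t}\,\sqrt{C_{\max}},
\]
where the last step uses the uniform bound $\sum_{x,y}c_\rho(x,y)\le C_{\max}$ from the regularity assumptions.

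\textbf{Step 2: integrate in time.} Write $\nu-\mu=\int_0^1\partial_t\pi_t\,\dd t$. This requires $t\mapsto\pi_t$ to be absolutely continuous, so that the fundamental theorem of calculus applies coordinatewise; this is implicit in the continuity equation holding for a.e.\ $t$ with integrable right-hand side. Then, using Jensen's inequality (Cauchy--Schwarz in $t$) for the last step,
\[
\|\mu-\nu\|_1\le\int_0^1\|\partial_t\pi_t\|_1\,\dd t\le\sqrt{2C_{\max}}\int_0^1\sqrt{\mathcal{E}_t}\,\dd t\le\sqrt{2C_{\max}}\Big(\int_0^1\mathcal{E}_t\,\dd t\Big)^{1/2}.
\]
Taking the infimum over admissible $(\pi_t,\psi_t)$ gives $\|\mu-\nu\|_1\le\sqrt{2C_{\max}}\,W_2(\mu,\nu)$.

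\textbf{Main obstacle.} The only delicate point is regularity. The class of curves in Definition~\ref{def:discrete-W2} must be understood so that $\pi_t$ is absolutely continuous and $\int_0^1\mathcal{E}_t\,\dd t$ is finite; otherwise the integrand does not make sense. If the energy is infinite, the bound holds trivially. If it is finite, then $\|\partial_t\pi_t\|_1\in L^1$ by Step 1, which justifies the coordinatewise integration in Step 2. Beyond this, the argument is a two-line Cauchy--Schwarz estimate.
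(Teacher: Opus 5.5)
Your proof is correct and follows the paper's argument essentially step for step. You write $\mu-\nu$ as the time integral of $\partial_t\pi_t$, use the continuity equation with the triangle inequality, apply Cauchy--Schwarz with weights $c_{\pi_t}$ together with $\sum_{x,y}c_\rho(x,y)\le C_{\max}$, then use Jensen in time and take the infimum. Your remark on absolute continuity covers a regularity point the paper uses tacitly; note, however, that the fundamental-theorem-of-calculus step relies on that absolute continuity assumption itself, not merely on integrability of $\|\partial_t\pi_t\|_1$.
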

\begin{proof}
    Consider any curve $\tp{\rho_s}_{s \in [0,1]}$ connecting $\rho_0 = \mu$ and $\rho_1 = \nu$, driven by the admissible potential $\phi_s$. We have
    $$
        \mu(x) - \nu(x) = \int_0^1 \partial_s \rho_s(x) \, \dd s = \int_0^1 \sum_{y \in \Omega} \tp{\phi_s(x) - \phi_s(y)} c_{\rho_s}(x,y) \, \dd s.
    $$
    Applying the triangle inequality:
    $$
        \|\mu - \nu\|_1 = \sum_{x \in \Omega} |\mu(x) - \nu(x)| \le \int_0^1 \sum_{x, y \in \Omega} |\phi_s(x) - \phi_s(y)| c_{\rho_s}(x,y) \, \dd s.
    $$
    For the integrand, we apply the Cauchy-Schwarz inequality:
    \begin{align*}
        \sum_{x, y} |\phi_s(x) - \phi_s(y)| c_{\rho_s}(x,y) 
        &\le \tp{ \sum_{x, y} \tp{\phi_s(x) - \phi_s(y)}^2 c_{\rho_s}(x,y) }^{\frac{1}{2}} \tp{ \sum_{x, y} c_{\rho_s}(x,y) }^{\frac{1}{2}}.
    \end{align*}
    Bounding the second term by $\sqrt{C_{\max}}$, we obtain
    $$
        \|\mu - \nu\|_1 \le \sqrt{C_{\max}}  \int_0^1 \tp{ \sum_{x, y} \tp{\phi_s(x) - \phi_s(y)}^2 c_{\rho_s}(x,y) }^{\frac{1}{2}} \dd s.
    $$
    Applying Jensen's inequality, we have
    $$
        \|\mu - \nu\|_1 \le  \sqrt{2C_{\max}} \tp{ \frac{1}{2} \int_0^1 \sum_{x, y} \tp{\phi_s(x) - \phi_s(y)}^2 c_{\rho_s}(x,y)  \dd s}^{\frac{1}{2}}.
    $$
    Taking the infimum over all valid curves $\rho_s$ complete the proof.
\end{proof}

We next prove that the Wasserstein-2 like metric defined in \zcref{def:discrete-W2} satisfies the properties of a metric on $\+P_{>0}(\Omega)$.

\begin{proposition}
    The Wasserstein-2 like metric defined in \zcref{def:discrete-W2} is a metric on $\+P_{>0}(\Omega)$.
\end{proposition}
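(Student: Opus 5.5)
We verify the four metric axioms (non-negativity and finiteness, symmetry, triangle inequality, identity of indiscernibles) in turn, using Lemma~\ref{lem:w2-bounds-euclidean} for the only nontrivial direction of definiteness.

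\emph{Finiteness and non-negativity.} Non-negativity is immediate since $c_\rho \ge 0$. For finiteness, take the linear interpolation $\pi_t = (1-t)\mu + t\nu$, which stays in $\+P_{>0}(\Omega)$ and has $\partial_t \pi_t = \nu - \mu$, a vector with zero sum. For each $t$ the weighted graph Laplacian $L_t \psi(x) = \sum_y (\psi(y)-\psi(x)) c_{\pi_t}(x,y)$ has kernel equal to the constants, because the graph $E_{\pi_t}$ is connected. Hence $L_t$ is invertible on the zero-sum subspace, and an admissible potential $\psi_t = -L_t^{-1}(\nu-\mu)$ exists. Its cost is $\langle \nu-\mu, L_t^{-1}(\nu-\mu)\rangle$ up to sign, which is continuous in $t$ on the compact interval $[0,1]$. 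Continuity follows because $c_{\pi_t}$ is continuous in $t$ and the spectral gap of $L_t$ is continuous and positive on that interval. The integral is therefore finite. Measurability of $t \mapsto \psi_t$ should be built into the admissible class; for this curve it is automatic, since $\psi_t$ is continuous.

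\emph{Symmetry.} Given an admissible pair $(\pi_t, \psi_t)$ from $\mu$ to $\nu$, the time reversal $(\pi_{1-t}, -\psi_{1-t})$ is admissible from $\nu$ to $\mu$. The continuity equation flips sign twice, so it still holds, and the cost is unchanged. Taking infima gives $W_2(\mu,\nu) = W_2(\nu,\mu)$.

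\emph{Triangle inequality.} First we prove a reparametrization identity:
\[
W_2(\mu,\nu) = \inf \int_0^1 \Bigl(\tfrac12 \sum_{x,y} (\psi_t(x)-\psi_t(y))^2 c_{\pi_t}(x,y)\Bigr)^{1/2} \dd t .
\]
The direction ``$\ge$'' is Jensen's inequality. For ``$\le$'', reparametrize time to constant speed. Under $t = \tau(s)$ the potential scales by $\tau'$, so the action becomes $\int (\text{speed})^2$; a standard arc-length argument makes the speed constant. To handle the degenerate case of zero speed on a set, first replace the speed $v$ by $v+\delta$ and then let $\delta \to 0$. With this identity, concatenate a near-optimal curve from $\mu$ to $\rho$ on $[0,\tfrac12]$ with one from $\rho$ to $\nu$ on $[\tfrac12,1]$, rescaling each half accordingly. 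The length functional is additive under concatenation, which gives $W_2(\mu,\nu) \le W_2(\mu,\rho) + W_2(\rho,\nu) + \epsilon$ for every $\epsilon > 0$. This reparametrization step is the main technical obstacle: the time changes must preserve admissibility, and the infimum must be approximated carefully.

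\emph{Identity of indiscernibles.} We have $W_2(\mu,\mu) = 0$ by taking the constant curve with $\psi_t \equiv 0$. Conversely, if $W_2(\mu,\nu) = 0$, then Lemma~\ref{lem:w2-bounds-euclidean} gives $\|\mu-\nu\|_1 \le \sqrt{2C_{\max}} \cdot 0 = 0$, so $\mu = \nu$.
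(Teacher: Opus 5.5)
Your proof is correct. For non-negativity, symmetry (time reversal $(\pi_{1-t},-\psi_{1-t})$) and identity of indiscernibles (constant curve, plus Lemma~\ref{lem:w2-bounds-euclidean}) it is the same as the paper's. The two proofs differ in the triangle inequality, and yours also proves finiteness, which the paper omits.

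\emph{Triangle inequality.} The paper never passes through a length functional. It takes curves whose actions are at most $(L_1+\epsilon)^2$ and $(L_2+\epsilon)^2$, and runs the first on $[0,\lambda]$ and the second on $[\lambda,1]$ with $\lambda=(L_1+\epsilon)/(L_1+L_2+2\epsilon)$. A linear time change by factor $\lambda$ multiplies the action by $1/\lambda$, so the total action is $(L_1+\epsilon)^2/\lambda+(L_2+\epsilon)^2/(1-\lambda)=(L_1+L_2+2\epsilon)^2$.

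Choosing the split point in proportion to the lengths does the job of your constant-speed reparametrization, using only affine time changes. So the issues you flag never arise: admissibility under nonlinear time changes, inverting the time change, and the $v+\delta$ regularization of zero-speed sets. Your route costs more technically, but it proves more. The identity $W_2=\inf\int_0^1(\text{speed})\,dt$ shows that $W_2$ is a length metric, a structural fact the paper's argument does not give.

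\emph{Finiteness.} Your linear-interpolation argument is a real addition. It uses that the Laplacian's kernel is the constants by connectivity, and that its pseudo-inverse is uniformly bounded on $[0,1]$ by continuity and compactness. The paper's proof tacitly assumes $W_2(\mu,\nu)<\infty$, for instance when it sets $L_1=W_2(\mu,\nu)$ and concatenates near-optimal curves. Your argument supplies that missing step.
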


\begin{proof}
To prove that $W_2$ is a metric on $\+P_{>0}(\Omega)$, we must verify four properties: non-negativity, identity of indiscernibles, symmetry, and the triangle inequality. 

Non-negativity obviously holds by the definition in \zcref{def:discrete-W2} and the properties of $c_\rho$.

For identity of indiscernibles, if $\mu = \nu$, we can construct a constant curve $\pi_t \equiv \mu$ for all $t \in [0,1]$. By setting the potential $\psi_t \equiv 0$, the continuity equation $\partial_t \pi_t + \sum (\psi_t(y) - \psi_t(x)) c_{\pi_t}(x,y) = 0$ holds trivially. Hence $W_2(\mu, \mu) = 0$.
For the other direction, assume $W_2(\mu, \nu) = 0$. 
By \zcref{lem:w2-bounds-euclidean}, we have $\|\mu - \nu\|_1 \le \sqrt{2 C_{\max}} W_2(\mu, \nu) = 0$. This implies $\mu = \nu$.


For the symmetry property, given an arbitrary curve $(\pi_t)_{t \in [0,1]}$ from $\mu$ to $\nu$ with admissible potential $\psi_t$, we define the time-reversed curve $\tilde{\pi}_t = \pi_{1-t}$ and reversed potential $\tilde{\psi}_t = -\psi_{1-t}$.
It is easy to verify that $\tilde{\psi}_t$ is an admissible potential for $\tilde{\pi}_t$. The cost remains unchanged because $(-\tilde{\psi}_t(x) - (-\tilde{\psi}_t(y)))^2 = (\tilde{\psi}_t(y) - \tilde{\psi}_t(x))^2$. Taking the infimum over all curves and admissible potentials yields symmetry.

For the triangle inequality, let $L_1 = W_2(\mu, \nu)$ and $L_2 = W_2(\nu, \sigma)$. For any $\epsilon > 0$, there exist curves $(\pi^1, \psi^1)$ and $(\pi^2, \psi^2)$ such that their corresponding integrals are strictly bounded by $(L_1+\epsilon)^2$ and $(L_2+\epsilon)^2$, respectively.
We concatenate these paths with a constant-speed time rescaling. Define $\lambda = \frac{L_1+\epsilon}{L_1+L_2+2\epsilon}$. We construct a new curve $(\tilde{\pi}_t)_{t \in [0,1]}$ from $\mu$ to $\sigma$:
\[
    \tilde{\pi}_t = 
    \begin{cases} 
        \pi^1_{t/\lambda} & \text{for } t \in [0, \lambda] \\
        \pi^2_{(t-\lambda)/(1-\lambda)} & \text{for } t \in (\lambda, 1]
    \end{cases}.
\]
Since the speed of the first segment is scaled by $1/\lambda$, we may take $\tilde{\psi}_t = \frac{1}{\lambda} \psi^1_{t/\lambda}$ as an admissible potential. The corresponding integral over the first segment becomes:
\[
    \int_0^\lambda \frac{1}{2} \sum_{x,y} \tp{\frac{1}{\lambda}\tp{\psi^1_{t/\lambda}(x) - \psi^1_{t/\lambda}(y)}}^2 c_{\tilde{\pi}_t}(x,y) \dd t.
\]
Letting $s = t/\lambda$, $\dd t = \lambda \dd s$, this simplifies to: 
\[
    \frac{1}{\lambda^2} \int_0^1 \frac{1}{2} \sum_{x,y} \tp{\psi^1_s(x) - \psi^1_s(y)}^2 c_{\pi^1_s} \cdot \lambda \dd s  \le \frac{(L_1+\epsilon)^2}{\lambda}.
\]
Similarly, the corresponding integral over the second segment $(\lambda, 1]$ is bounded by $\frac{(L_2+\epsilon)^2}{1-\lambda}$. 
Summing the two segments gives 
\[
    W_2^2(\mu, \sigma) \le \frac{(L_1+\epsilon)^2}{\lambda} + \frac{(L_2+\epsilon)^2}{1-\lambda} = (L_1+L_2+2\epsilon)^2.
\]
Taking the square root and letting $\epsilon \to 0$ yields the triangle inequality. This completes the proof.
\end{proof}

\subsection{Proof of Theorem~\ref{thm:metric-derivative-potential-formulation}}\label{sec:pf-of-metric-derivative}

\begin{proof}[Proof of Theorem~\ref{thm:metric-derivative-potential-formulation}]
The proof consists of two parts: first, we show that the infimum in the objective is indeed attained as a minimum; second, we establish the equality with the metric derivative using a local limit argument and the Riemannian structure of the space.

At any time $t$, define the Laplacian operator $L_{\pi_t} : \bb R^{|\Omega|} \to \bb R^{|\Omega|}$ associated with the edge capacity $c_{\pi_t}(x,y)$ as
$$
    (L_{\pi_t} \psi)(x) \defeq \sum_{y \in \Omega} \tp{\psi(x) - \psi(y)} c_{\pi_t}(x,y).
$$
We can regard $\psi$ and $L_{\pi_t}$ as a vector and a matrix, respectively. The discrete continuity equation can thus be rewritten as $\partial_t \pi_t = L_{\pi_t} \psi_t$. 

To establish the existence of an admissible potential $\psi_t$, we examine the kernel and the image of $L_{\pi_t}$. The kernel is defined as $\ker(L_{\pi_t}) = \{\psi \in \bb R^{|\Omega|} : L_{\pi_t} \psi = 0\}$. For any $\psi \in \ker(L_{\pi_t})$, the associated quadratic form vanishes:
$$
    \langle \psi, L_{\pi_t} \psi \rangle  = \sum_{x, y \in \Omega} c_{\pi_t}(x,y) \tp{\psi(x) - \psi(y)}\psi(x) = \frac{1}{2} \sum_{x, y \in \Omega} c_{\pi_t}(x,y) \tp{\psi(x) - \psi(y)}^2 = 0.
$$
Since $c_{\pi_t}(x,y) \ge 0$, this implies $\psi(x) = \psi(y)$ for all edges $(x,y)$ where $c_{\pi_t}(x,y) > 0$. By our assumption, the underlying graph induced by $c_{\pi_t}$ is connected. Consequently, $\psi$ must be a constant vector, which means $\ker(L_{\pi_t}) = \text{span}\{\mathbf{1}\}$, where $\mathbf{1}$ is the all-ones vector.

By the fundamental theorem of linear algebra, since $L_{\pi_t}$ is a symmetric matrix, its image space $\text{Im}(L_{\pi_t})$ is the orthogonal complement of its kernel, i.e., $\text{Im}(L_{\pi_t}) = \ker(L_{\pi_t})^\perp$. Notice that because $\pi_t \in \+P_{>0}(\Omega)$ is a probability distribution for all $t$, the sum of its components is always $1$. Taking the time derivative yields $\sum_{x \in \Omega} \partial_t \pi_t(x) = 0$. This sum is exactly the Euclidean inner product $\langle \partial_t \pi_t, \mathbf{1} \rangle = 0$. This implies that $\partial_t \pi_t$ is orthogonal to the kernel, and thus $\partial_t \pi_t \in \text{Im}(L_{\pi_t})$. By the definition of the image space, this guarantees the existence of at least one solution $\psi_t$ to the linear system $L_{\pi_t} \psi_t = \partial_t \pi_t$, ensuring that an admissible potential always exists.

Furthermore, the solution $\psi_t$ is unique up to an additive constant. If $\psi_t^{(1)}$ and $\psi_t^{(2)}$ are two admissible potentials satisfying $L_{\pi_t} \psi_t = \partial_t \pi_t$, by linearity we have $L_{\pi_t} \tp{\psi_t^{(1)} - \psi_t^{(2)}} = 0$. This means their difference belongs to the kernel, so $\psi_t^{(1)} - \psi_t^{(2)} = c \mathbf{1}$ for some constant $c \in \bb R$. In other words, any two admissible potentials differ only by a globally constant shift. 

Recall that the objective function we want to minimize is exactly the quadratic form associated with the Laplacian:
$$
    \|\psi_t\|_{\pi_t}^2 \defeq \frac{1}{2} \sum_{x, y \in \Omega} \tp{\psi_t(x) - \psi_t(y)}^2 c_{\pi_t}(x,y) = \langle \psi_t, L_{\pi_t} \psi_t \rangle.
$$
Because this objective depends only on the differences $\psi_t(x) - \psi_t(y)$, the value of $\|\psi_t\|_{\pi_t}^2$ is identical for all admissible potentials. Therefore, the infimum is trivially attained by any admissible potential.

Now we establish the equality $|\dot{\pi}|_t^2 = \|\psi_t\|_{\pi_t}^2$. To evaluate the metric derivative $\abs{\dot{\pi}}_t$, consider a small time increment $h > 0$. 
We note that in the definition of the metric derivative, when taking the limit, the time increment $h$ can be either positive or negative. Without loss of generality, we only consider the case $h > 0$ and the other case can be handled similarly.
    We introduce a time-rescaled parameter $s \in [0, 1]$ and define the rescaled path $\tilde{\pi}_s \defeq \pi_{t+sh}$. By the chain rule, $\partial_s \tilde{\pi}_s(x) = h \cdot \partial_t \pi_{t+sh}(x)$. 
    Consequently, the corresponding admissible potential that satisfies the continuity equation for $\tilde{\pi}_s$ is $\tilde{\psi}_s \defeq h \cdot \psi_{t+sh}$.
    Considering the given curve $\{\pi_t\}_{t \in [0,T]}$, according to \zcref{def:discrete-W2}, the squared distance is bounded by,
    \[
        W_2^2(\pi_t, \pi_{t+h}) \le \int_0^1 \|\tilde{\psi}_s\|_{\tilde{\pi}_s}^2 \dd s = \int_0^1 h^2 \|\psi_{t+sh}\|_{\pi_{t+sh}}^2 \dd s = h \int_t^{t+h} \|\psi_\tau\|_{\pi_\tau}^2 \dd\tau.
    \]
    Dividing both sides by $h^2$ yields
    \[
        \tp{\frac{W_2(\pi_t, \pi_{t+h})}{h}}^2 \le \frac{1}{h} \int_t^{t+h} \|\psi_\tau\|_{\pi_\tau}^2 \dd\tau.
    \]
    From the Lebesgue differentiation theorem, taking the limit as $h \to 0$, the right-hand side converges to the integrand for almost every $t$. Thus, we obtain the upper bound
    \[
        \abs{\dot{\pi}}_t^2 = \lim_{h \to 0} \tp{\frac{W_2(\pi_t, \pi_{t+h})}{h}}^2 \le \|\psi_t\|_{\pi_t}^2
    \]
    for almost every $t \in [0,T]$.

For the lower bound, for any distribution $\rho \in \+P_{>0}(\Omega)$ and any scalar functions $f, \phi: \Omega \to \bb R$, we have
\begin{equation*} \label{eq:summation-by-parts}
    \langle f, L_\rho \phi \rangle = \sum_{x, y \in \Omega} f(x)\tp{\phi(x) - \phi(y)} c_\rho(x,y)  = \frac{1}{2} \sum_{x, y \in \Omega} \tp{f(x) - f(y)}\tp{\phi(x) - \phi(y)} c_\rho(x,y).
\end{equation*}
By the Cauchy-Schwarz inequality, we have
\begin{equation} \label{eq:cs-graph}
    \langle f, L_\rho \phi \rangle \le \|f\|_\rho \|\phi\|_\rho.
\end{equation}

Fix $t$ and consider a small time increment $h > 0$. By the definition of $W_2$, for this specific $h$, we can construct a path $\tp{\rho^{(h)}_s}_{s \in [0, 1]}$ connecting $\rho^{(h)}_0 = \pi_t$ and $\rho^{(h)}_1 = \pi_{t+h}$, driven by the admissible potential $\phi^{(h)}_s$, such that its action is near-optimal in the following sense:
\begin{equation} \label{eq:near-optimal-path}
    \int_0^1 \|\phi^{(h)}_s\|_{\rho^{(h)}_s}^2 \, \dd s \le W_2^2(\pi_t, \pi_{t+h}) + h^3.
\end{equation}

Let $f: \Omega \to \bb R$ be an arbitrary test function such that $\|f\|_{\pi_t}\neq 0$. We have
\[
    \langle f, \pi_{t+h} - \pi_t \rangle = \int_0^1 \langle f, \partial_s \rho^{(h)}_s \rangle \, \dd s = \int_0^1 \langle f, L_{\rho^{(h)}_s} \phi^{(h)}_s \rangle \, \dd s.
\]
Applying inequality \eqref{eq:cs-graph} to the integrand, followed by the Cauchy-Schwarz inequality over the time integral $s \in [0,1]$, we bound the inner product:
\begin{align}
    \langle f, \pi_{t+h} - \pi_t \rangle &\le \int_0^1 \|f\|_{\rho^{(h)}_s} \|\phi^{(h)}_s\|_{\rho^{(h)}_s} \, \dd s \nonumber \\
    &\le \tp{ \sup_{s \in [0,1]} \|f\|_{\rho^{(h)}_s} } \int_0^1 \|\phi^{(h)}_s\|_{\rho^{(h)}_s} \, \dd s \nonumber \\
    &\le \tp{ \sup_{s \in [0,1]} \|f\|_{\rho^{(h)}_s} } \tp{ \int_0^1 \|\phi^{(h)}_s\|_{\rho^{(h)}_s}^2 \, \dd s }^{\frac{1}{2}}. \label{eq:bound-with-sup}
\end{align}

For any arbitrary $s \in [0,1]$, we can reparameterize the segment of the path from $r=0$ to $r=s$ to normal time $\tau \in [0,1]$ via $\tilde{\rho}_\tau \defeq \rho^{(h)}_{s\tau}$ with an admissible potential $\tilde{\phi}_\tau = s \phi^{(h)}_{s\tau}$. By the definition of $W_2$, 
\[
    W_2^2(\pi_t, \rho^{(h)}_s) \le \int_0^1 \|s \phi^{(h)}_{s\tau}\|_{\rho^{(h)}_{s\tau}}^2 \dd\tau = s \int_0^s \|\phi^{(h)}_r\|_{\rho^{(h)}_r}^2 \dd r \le \int_0^1 \|\phi^{(h)}_r\|_{\rho^{(h)}_r}^2 \dd r.
\]
Coupling this with \eqref{eq:near-optimal-path}, we establish a uniform bound over all $s \in [0,1]$:
\[
    \sup_{s \in [0,1]} W_2^2(\pi_t, \rho^{(h)}_s) \le W_2^2(\pi_t, \pi_{t+h}) + h^3.
\]
Because the base curve $\tp{\pi_t}$ is absolutely continuous, its distance $W_2(\pi_t, \pi_{t+h}) \to 0$ as $h \to 0$. Therefore,
\begin{equation} \label{eq:uniform-w2-limit}
    \lim_{h \to 0} \tp{ \sup_{s \in [0,1]} W_2(\pi_t, \rho^{(h)}_s) } = 0.
\end{equation}

As a consequence of Lemma~\ref{lem:w2-bounds-euclidean}, the $L_1$ distance (and consequently the Euclidean distance) is dominated by the $W_2$ distance. Therefore, \eqref{eq:uniform-w2-limit} guarantees that the sequence of paths $\tp{\rho^{(h)}_s}$ converges uniformly to $\pi_t$. That is,
\[
    \lim_{h \to 0} \tp{ \sup_{s \in [0,1]} \|\rho^{(h)}_s - \pi_t\|_2 } = 0.
\]

Because the capacity mapping $\rho \mapsto c_\rho(x,y)$ is continuous, the mapping $\rho \mapsto \|f\|_\rho$ is also continuous on $\mathcal{P}_{>0}(\Omega)$. Since $\pi_t$ lies in the strictly positive interior $\mathcal{P}_{>0}(\Omega)$ and $\rho^{(h)}_s$ converges to $\pi_t$ uniformly, for sufficiently small $h$, the entire path $\tp{\rho^{(h)}_s}_{s \in [0,1]}$ is contained within a compact neighborhood of $\pi_t$ inside $\mathcal{P}_{>0}(\Omega)$. From the Heine–Cantor theorem, on this compact set, the mapping $\rho \mapsto \|f\|_\rho$ is uniformly continuous, which yields
\[
    \lim_{h \to 0} \tp{ \sup_{s \in [0,1]} \abs{ \|f\|_{\rho^{(h)}_s} - \|f\|_{\pi_t} } } = 0.
\]
Using the property of suprema that $\abs{\sup_{s} g(s) - L} \le \sup_{s} \abs{g(s) - L}$, this uniform convergence of the difference directly yields the limit of the supremum:
\begin{equation} \label{eq:limit-f-norm}
    \lim_{h \to 0} \tp{ \sup_{s \in [0,1]} \|f\|_{\rho^{(h)}_s} } = \|f\|_{\pi_t}.
\end{equation}

With this limit rigorously established, we return to the change of distribution bound \eqref{eq:bound-with-sup}. Substituting the near-optimal action \eqref{eq:near-optimal-path} and dividing both sides by $h > 0$ yields:
\[
    \langle f, \frac{\pi_{t+h} - \pi_t}{h} \rangle \le \tp{ \sup_{s \in [0,1]} \|f\|_{\rho^{(h)}_s} } \sqrt{\frac{W_2^2(\pi_t, \pi_{t+h})}{h^2} + h}.
\]
We now take the limit as $h \to 0$. On the right side, utilizing \eqref{eq:limit-f-norm} and the definition of the metric derivative $\abs{\dot{\pi}}_t = \lim_{h \to 0} \frac{W_2(\pi_t, \pi_{t+h})}{\abs{h}}$, we obtain:
\[
    \langle f, \partial_t \pi_t \rangle \le \|f\|_{\pi_t} \abs{\dot{\pi}}_t.
\]
Assuming $\|f\|_{\pi_t} \neq 0$, squaring both sides and rearranging yields:
\begin{equation} \label{eq:lower-bound-f}
    \abs{\dot{\pi}}_t^2 \ge \frac{\langle f, \partial_t \pi_t \rangle^2}{\|f\|_{\pi_t}^2} \quad \text{for a.e. } t \in [0,1].
\end{equation}

Finally, we choose the test function $f$ to be the specific potential $\psi_t$ which satisfies the continuity equation $L_{\pi_t} \psi_t = \partial_t \pi_t$. Substituting this gives:
\[
    \langle \psi_t, \partial_t \pi_t \rangle = \langle \psi_t, L_{\pi_t} \psi_t \rangle = \|\psi_t\|_{\pi_t}^2.
\]
Plugging this into \eqref{eq:lower-bound-f} provides the final lower bound:
\[
    \abs{\dot{\pi}}_t^2 \ge \frac{\tp{\|\psi_t\|_{\pi_t}^2}^2}{\|\psi_t\|_{\pi_t}^2} = \|\psi_t\|_{\pi_t}^2.
\]

Coupling this lower bound with the explicitly constructed upper bound $\abs{\dot{\pi}}_t^2 \le \|\psi_t\|_{\pi_t}^2$, the strict equality holds:
\[
    \abs{\dot{\pi}}_t^2 = \min_{\psi_t : L_{\pi_t}\psi_t = \partial_t \pi_t} \quad \frac{1}{2} \sum_{x, y \in \Omega} \tp{\psi_t(x) - \psi_t(y)}^2 c_{\pi_t}(x,y) \quad \text{for a.e. } t \in [0,1].
\]
This completes the proof.

\end{proof}

\subsection{Omitted Proofs in \zcref{subsection:transport-variance-and-transport-entropy-inequalities}}

\label{subsection:appendix-transport-inequalities}

\begin{proof}[Proof of \zcref{thm:Wc2-is-a-metric}]
    It is straightforward to verify that \(W_{c, 2}\) satisfies non-negativity, the identity of indiscernibles, and symmetry. Therefore, it remains only to prove the triangle inequality.

    Define the inner product on the fluxes by
    \[
    \inner{J}{J'}_c \defeq \sum_{\set{x, y} \in E} \frac{J\tp{x, y} J'\tp{x, y}}{c\tp{x, y}} = \frac{1}{2} \sum_{x, y \in \Omega} \frac{J\tp{x, y} J'\tp{x, y}}{c\tp{x, y}}.
    \]
    Let 
    \[
    \norm{J}_c \defeq \sqrt{\inner{J}{J}_c} = \tp{\frac{1}{2} \sum_{x, y \in \Omega} \frac{J\tp{x, y}^2}{c\tp{x, y}}}^{\frac{1}{2}}.
    \]
    denote the induced norm.
    
    For any \(\mu^{\tp{1}}, \mu^{\tp{2}}, \mu^{\tp{3}} \in \+P\tp{\Omega}\), supposed that the optimal fluxes corresponding to \(W_{c, 2}\tp{\mu^{\tp{1}}, \mu^{\tp{2}}}\) and \(W_{c, 2}\tp{\mu^{\tp{2}}, \mu^{\tp{3}}}\) are \(J^{\tp{1, 2}}\) and \(J^{\tp{2, 3}}\), respectively. That is,
    \[
    W_{c, 2}\tp{\mu^{\tp{1}}, \mu^{\tp{2}}} = \norm{J^{\tp{1, 2}}}_c, \quad W_{c, 2}\tp{\mu^{\tp{2}}, \mu^{\tp{3}}} = \norm{J^{\tp{2, 3}}}_c,
    \]
    and the fluxes satisfy
    \[
    \tp{\mu^{\tp{2}} - \mu^{\tp{1}}} + \divergence_{\-{d}} J^{\tp{1, 2}} = 0, \quad \tp{\mu^{\tp{3}} - \mu^{\tp{2}}} + \divergence_{\-{d}} J^{\tp{2, 3}} = 0.
    \]
    Then \(J^{\tp{1, 2}} + J^{\tp{2, 3}}\) satisfies the flux constraint for \(W_{c, 2}\tp{\mu^{\tp{1}}, \mu^{\tp{3}}}\):
    \[
    \tp{\mu^{\tp{3}} - \mu^{\tp{1}}} + \divergence_{\-{d}} \tp{J^{\tp{1, 2}} + J^{\tp{2, 3}}} = 0.
    \]
    Hence,
    \[
    \begin{aligned}
        W_{c, 2}\tp{\mu^{\tp{1}}, \mu^{\tp{3}}} &\le \norm{J^{\tp{1, 2}} + J^{\tp{2, 3}}}_c \\
        &\le \norm{J^{\tp{1, 2}}}_c + \norm{J^{\tp{2, 3}}}_c \\
        &= W_{c, 2}\tp{\mu^{\tp{1}}, \mu^{\tp{2}}} + W_{c, 2}\tp{\mu^{\tp{2}}, \mu^{\tp{3}}}. \\
    \end{aligned}
    \]
\end{proof}

\begin{lemma}
    Suppose that the transition rate kernel \(p\) is reversible with respect to \(\pi\). Then for any \(\mu \in \+P\tp{\Omega}\), there exists potential \(\psi\) such that
    \[
    \tp{-\+L^p} \psi = \frac{\dd \mu}{\dd \pi} - \*1,
    \]
    and
    \[
    W_{c, 2}^2\tp{\mu, \pi} = \+E_p\tp{\psi, \psi}.
    \]
    \label{lemma:optimal-potential-for-Wc2}
\end{lemma}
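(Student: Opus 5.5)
The plan is to solve the linear system explicitly and then identify its solution with the optimizer of the constrained quadratic problem defining \(W_{c,2}\). First, existence of \(\psi\). Since \(p\) is reversible with respect to \(\pi\), the generator \(-\+L^p\) is self-adjoint on \(L^2(\pi)\). By \zcref{prop:generator-and-dirichlet-form-of-discrete-chains} it is positive semidefinite, since \(\inner{f}{-\+L^p f}_\pi = \+E_p(f,f)\). Connectivity of the graph \((\Omega,E)\) makes its kernel exactly the constants, so its range is the \(L^2(\pi)\)-orthogonal complement of the constants, i.e. the mean-zero functions. The function \(g \defeq \frac{\dd\mu}{\dd\pi} - \*1\) has \(\E[\pi]{g} = \sum_x \mu(x) - 1 = 0\). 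Hence a solution \(\psi\) of \((-\+L^p)\psi = g\) exists and is unique up to an additive constant.

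Second, I will check that this \(\psi\) satisfies the constraint in the definition of \(W_{c,2}(\mu,\pi)\), being careful with the sign convention. Multiply the equation by \(\pi(x)\) and use \(c(x,y) = \pi(x)p(x,y)\):
\[
\pi(x)(-\+L^p\psi)(x) = -\sum_y (\psi(y)-\psi(x))c(x,y) = \mu(x) - \pi(x).
\]
This is exactly the constraint \((\nu - \mu') + \sum_y(\psi(y)-\psi(x))c(x,y) = 0\) with initial measure \(\mu' = \mu\) and final measure \(\nu = \pi\). If the orientation convention is instead read with the roles of \(\mu\) and \(\pi\) swapped, one replaces \(\psi\) by \(-\psi\), which changes neither the constraint's solvability nor the cost; symmetry of \(W_{c,2}\) from \zcref{thm:Wc2-is-a-metric} covers this as well.

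Third, I will argue that \(\psi\) is the minimizer and compute the value. The constraint is a linear system \(L\psi = \mu - \pi\), where \(L\) is the weighted graph Laplacian with weights \(c\). Its solution set is \(\psi + \mathrm{span}\{\*1\}\), again by connectivity, exactly as in the proof of \zcref{thm:metric-derivative-potential-formulation}. The objective \(\frac12\sum_{x,y}(\psi(y)-\psi(x))^2 c(x,y)\) depends only on differences, so it is constant on the feasible set, and the minimum is attained at our \(\psi\). By \zcref{prop:generator-and-dirichlet-form-of-discrete-chains}, this value equals \(\+E_p(\psi,\psi)\), giving \(W_{c,2}^2(\mu,\pi) = \+E_p(\psi,\psi)\).

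There is no serious obstacle. The only delicate points are the sign and orientation bookkeeping in the constraint, and the fact that \(\mu\) need only lie in \(\+P(\Omega)\) rather than \(\+P_{>0}(\Omega)\). The latter is harmless, because the density \(\frac{\dd\mu}{\dd\pi}\) is well defined whenever \(\pi > 0\), which holds on the connected graph with positive capacities.
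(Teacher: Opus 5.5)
Your proof is correct and follows the paper's argument: rewriting the constraint via $c(x,y)=\pi(x)p(x,y)$ turns it into the Poisson equation for $-\mathcal{L}^p$ and the cost into $\mathcal{E}_p$; you additionally prove solvability and that the minimum is attained, which the paper takes for granted from the definition of $W_{c,2}$ as a minimum. One sign slip: under the paper's convention your $\psi$ is feasible for $W_{c,2}(\pi,\mu)$, not $W_{c,2}(\mu,\pi)$ (substituting it into the constraint gives $2(\pi(x)-\mu(x))$ instead of $0$), so the optimal potential for $W_{c,2}(\mu,\pi)$ is $-\psi$, exactly as the paper states it; your own fallback remark already repairs this, since the cost is even in $\psi$.
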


\begin{proof}
    Let \(-\psi\) denote the optimal potential for \(W_{c, 2}\tp{\mu, \pi}\). Then \(\psi\) satisfies the constraint
    \[
    \tp{\pi\tp{x} - \mu\tp{x}} + \sum_{y \in \Omega} \tp{\psi\tp{x} - \psi\tp{y}} c\tp{x, y} = 0, \quad \forall x \in \Omega,
    \]
    and
    \[
    W_{c, 2}^2\tp{\mu, \pi} = \frac{1}{2} \sum_{x, y \in \Omega} \tp{\psi\tp{y} - \psi\tp{x}}^2 c\tp{x, y}.
    \]
    By \zcref{prop:generator-and-dirichlet-form-of-discrete-chains}, we have
    \[
    \tp{\pi\tp{x} - \mu\tp{x}} + \pi\tp{x} \tp{-\+L^p} \psi\tp{x} = 0, \quad \forall x \in \Omega,
    \]
    and
    \[
    W_{c, 2}^2\tp{\mu, \pi} = \+E_p\tp{\psi, \psi}.
    \]
    The first equation directly implies
    \[
    \tp{-\+L^p} \psi = \frac{\dd \mu}{\dd \pi} - \*1.
    \]
\end{proof}

\begin{proof}[Proof of \zcref{thm:transport-variance-inequalities-for-Wc2}]
    Let \(\psi\) be the potential given by \zcref{lemma:optimal-potential-for-Wc2} associated with \(W_{c, 2}^2\tp{\mu, \pi}\). Then
    \[
    W_{c, 2}^2\tp{\mu, \pi} = \+E_p\tp{\psi, \psi} = \inner{\psi}{\tp{-\+L^p} \psi}_{\pi} = \inner{\psi}{\frac{\dd \mu}{\dd \pi} - \*1}_{\pi} = \*{Cov}_{\pi}\stp{\psi, \frac{\dd \mu}{\dd \pi}}.
    \]
    By Cauchy--Schwarz inequality, we have
    \[
    W_{c, 2}^2\tp{\mu, \pi} \le \sqrt{\*{Var}_{\pi}\stp{\psi} \*{Var}_{\pi}\stp{\frac{\dd \mu}{\dd \pi}}}.
    \]
    Since \(p\) satisfies a \(C_{\-{PI}}\)-\Poincare inequality, then
    \[
    W_{c, 2}^2\tp{\mu, \pi} \le \sqrt{C_{\-{PI}} \+E_p\tp{\psi, \psi} \*{Var}_{\pi}\stp{\frac{\dd \mu}{\dd \pi}}} = \sqrt{C_{\-{PI}} W_{c, 2}^2\tp{\mu, \pi} \chi^2\tp{\mu \| \pi}},
    \]
    which implies
    \[
    W_{c, 2}^2\tp{\mu, \pi} \le C_{\-{PI}} \chi^2\tp{\mu \| \pi}.
    \]
\end{proof}

\begin{proof}[Proof of \zcref{thm:transport-entropy-inequality-for-Wc2}]
    The argument is a discrete analogue of the Otto--Villani proof. Let \(\tp{P_{\tau}^p}_{\tau \ge 0}\) denote the semigroup generated by \(\+L^p\), and define
    \[
    \mu_{\tau} \defeq \tp{P_{\tau}^p}^* \mu, \quad \tau \ge 0,
    \]
    with relative density
    \[
    f_{\tau} \defeq \frac{\dd \mu_{\tau}}{\dd \pi}, \quad \tau \ge 0.
    \]
    Since \(\+L^p\) is reversible with respect to \(\pi\), the semigroup \(\tp{P_{\tau}^p}_{\tau \ge 0}\) is self-adjoint on \(L^2(\pi)\). Hence,
    \[
    f_{\tau} = P_{\tau}^p f_0, \quad \forall \tau \ge 0.
    \]

    For each \(\tau \ge 0\), let \(\psi_{\tau}\) be the potential given by \zcref{lemma:optimal-potential-for-Wc2} associated with \(W_{c, 2}^2\tp{\mu_{\tau}, \pi}\). Then
    \[
    W_{c, 2}^2\tp{\mu_{\tau}, \pi} = \+E_p\tp{\psi_{\tau}, \psi_{\tau}},
    \]
    and
    \[
    \tp{-\+L^p} \psi_{\tau} = \frac{\dd \mu_{\tau}}{\dd \pi} - \*1 = f_{\tau} - \*1.
    \]
    Note that it does not change the values of the Dirichlet form and $\tp{-\+L^p} \psi_{\tau}$ by shifting $\psi_{\tau}$ with a constant. Without loss of generality, we anchor the potentials by setting $\psi_\tau\tp{x_0} = 0$ for each $\tau$ for a fixed state $x_0 \in \Omega$. Due to the smoothness of $f_\tau$, we can differentiate $\psi_{\tau}$ with respect to \(\tau\) and obtain
    \[
    \tp{-\+L^p} \partial_{\tau} \psi_{\tau} = \partial_{\tau} \tp{-\+L^p} \psi_{\tau} = \partial_{\tau} f_{\tau} = \partial_{\tau} P_{\tau}^p f_0 = \+L^p P_{\tau}^p f_0 = \+L^p f_{\tau}.
    \]
    Therefore,
    \[
    \begin{aligned}
        \partial_{\tau} W_{c, 2}^2\tp{\mu_{\tau}, \pi} &= \partial_{\tau} \+E_p\tp{\psi_{\tau}, \psi_{\tau}} \\
        &= 2 \+E_p\tp{\psi_{\tau}, \partial_{\tau} \psi_{\tau}} \\
        &= 2 \inner{\psi_{\tau}}{\tp{-\+L^p} \partial_{\tau} \psi_{\tau}}_{\pi} \\
        &= 2 \inner{\psi_{\tau}}{\+L^p f_{\tau}}_{\pi} \\
        &= -2 \+E_p\tp{\psi_{\tau}, f_{\tau}}. \\
    \end{aligned}    
    \]
    Since the Dirichlet form defines a semi-inner product, the Cauchy--Schwarz inequality yields
    \[
    \partial_{\tau} W_{c, 2}^2\tp{\mu_{\tau}, \pi} \ge -2 \sqrt{\+E_p\tp{\psi_{\tau}, \psi_{\tau}} \+E_p\tp{f_{\tau}, f_{\tau}}} = -2 W_{c, 2}\tp{\mu_{\tau}, \pi} \sqrt{\+E_p\tp{f_{\tau}, f_{\tau}}}.
    \]
    Hence,
    \[
    \partial_{\tau} W_{c, 2}\tp{\mu_{\tau}, \pi} = \frac{\partial_{\tau} W_{c, 2}^2\tp{\mu_{\tau}, \pi}}{2 W_{c, 2}\tp{\mu_{\tau}, \pi}} \ge -\sqrt{\+E_p\tp{f_{\tau}, f_{\tau}}}.
    \]
    Applying the elementary inequality:
    \[
    \frac{f_{\tau}\tp{x} - f_{\tau}\tp{y}}{\log f_{\tau}\tp{x} - \log f_{\tau}\tp{y}} \le \frac{f_{\tau}\tp{x} + f_{\tau}\tp{y}}{2} \le \norm{f_{\tau}}_{\infty}, \quad \forall \set{x, y} \in E,
    \]
    we have
    \[
    \begin{aligned}
        \+E_p\tp{f_{\tau}, f_{\tau}} &= \sum_{\set{x, y} \in E} \tp{f_{\tau}\tp{x} - f_{\tau}\tp{y}}^2 c\tp{x, y} \\
        &\le \norm{f_{\tau}}_{\infty} \sum_{\set{x, y} \in E} \tp{\log f_{\tau}\tp{x} - \log f_{\tau}\tp{y}} \tp{f_{\tau}\tp{x} - f_{\tau}\tp{y}} c\tp{x, y} \\
        &= \norm{f_{\tau}}_{\infty} \+E_p\tp{\log f_{\tau}, f_{\tau}}. \\
    \end{aligned}
    \]
    Combining the above two inequalities yields
    \[
    \partial_{\tau} W_{c, 2}\tp{\mu_{\tau}, \pi} \ge -\sqrt{\+E_p\tp{f_{\tau}, f_{\tau}}} \ge -\sqrt{\norm{f_{\tau}}_{\infty} \+E_p\tp{\log f_{\tau}, f_{\tau}}}.
    \]
    Since \(P_\tau^p\) is Markovian,
    \[
    \norm{f_{\tau}}_{\infty} = \norm{P_{\tau}^p f_0}_{\infty} \le \norm{f_0}_{\infty} = \norm{\frac{\dd \mu}{\dd \pi}}_{\infty}.
    \]
    Thus,
    \[
    \partial_{\tau} W_{c, 2}\tp{\mu_{\tau}, \pi} \ge -\sqrt{\norm{\frac{\dd \mu}{\dd \pi}}_{\infty}} \sqrt{\+E_p\tp{\log f_{\tau}, f_{\tau}}}.
    \]
    Since \(\mu_{\tau}\) converges to \(\pi\) pointwise as \(\tau \to +\infty\), we have
    \[
    W_{c, 2}\tp{\mu, \pi} = -\int_0^{+\infty} \partial_{\tau} W_{c, 2}\tp{\mu_{\tau}, \pi} \, \dd \tau \le \sqrt{\norm{\frac{\dd \mu}{\dd \pi}}_{\infty}} \int_0^{+\infty} \sqrt{\+E_p\tp{\log f_{\tau}, f_{\tau}}} \, \dd \tau.
    \]
    By \zcref{prop:evolution-of-phi-divergence},
    \[
    W_{c, 2}\tp{\mu, \pi} \le \sqrt{\norm{\frac{\dd \mu}{\dd \pi}}_{\infty}} \int_0^{+\infty} \frac{-1}{\sqrt{\+E_p\tp{\log f_{\tau}, f_{\tau}}}} \, \dd \, \-{KL}\tp{\mu_{\tau} \| \pi}.
    \]
    Since \(p\) satisfies a \(C_{\-{MLSI}}\)-modified log-Sobolev inequality, we have
    \[
    \+E_p\tp{\log f_{\tau}, f_{\tau}} \ge \frac{1}{C_{\-{MLSI}}} \*{Ent}_{\pi}\stp{f_{\tau}} = \frac{1}{C_{\-{MLSI}}} \, \-{KL}\tp{\mu_{\tau} \| \pi}.
    \]
    Hence,
    \[
    \begin{aligned}
        W_{c, 2}\tp{\mu, \pi} &\le \sqrt{\norm{\frac{\dd \mu}{\dd \pi}}_{\infty}} \int_0^{+\infty} \frac{-1}{\sqrt{\frac{1}{C_{\-{MLSI}}} \, \-{KL}\tp{\mu_{\tau} \| \pi}}} \, \dd \, \-{KL}\tp{\mu_{\tau} \| \pi} \\
        &= 2 \sqrt{C_{\-{MLSI}} \norm{\frac{\dd \mu}{\dd \pi}}_{\infty} \-{KL}\tp{\mu \| \pi}}. \\
    \end{aligned}
    \]
    It follows that
    \[
    W_{c, 2}^2\tp{\mu, \pi} \le 4 C_{\-{MLSI}} \norm{\frac{\dd \mu}{\dd \pi}}_{\infty} \-{KL}\tp{\mu \| \pi}.
    \]
\end{proof}

\begin{proof}[Proof of \zcref{thm:functional-inequalites-implies-good-metric-derivative}]
    For each \(t \in \stp{0, T}\), by \zcref{thm:metric-derivative-potential-formulation}, we have
    \[
    \begin{aligned}
        \abs{\dot{\pi}}_t^2 &= \min_{\text{admissible potential } \psi_t} \quad \frac{1}{2} \sum_{x, y \in \Omega} \tp{\psi_t\tp{y} - \psi_t\tp{x}}^2 c_{\pi_t}\tp{x, y} \\
        &= \frac{1}{h^2} \tp{\begin{aligned}
            \min_{\text{potential } \psi_t} \quad &\frac{1}{2} \sum_{x, y \in \Omega} \tp{\psi_t\tp{y} - \psi_t\tp{x}}^2 c_{\pi_t}\tp{x, y} \\
            \text{s.t.} \quad &\tp{h\partial_t \pi_t\tp{x}} + \sum_{y \in \Omega} \tp{\psi_t\tp{y} - \psi_t\tp{x}} c_{\pi_t}\tp{x, y} = 0, \quad \forall x \in \Omega \\
        \end{aligned}} \\
        &= \frac{1}{h^2} W_{c_{\pi_t}, 2}^2\tp{\pi_t + h \partial_t \pi_t, \pi_t} \\
    \end{aligned}
    \]
    for any sufficiently small \(h > 0\).

    If \(p_{\pi_t}\) satisfies a \(C_{\-{PI}}\)-\Poincare inequality, by the transport--variance inequality for \(W_{c_{\pi_t}, 2}\) (\zcref{thm:transport-variance-inequalities-for-Wc2}), we have
    \[  
    \abs{\dot{\pi}}_t^2 = \frac{1}{h^2} W_{c_{\pi_t}, 2}^2\tp{\pi_t + h \partial_t \pi_t, \pi_t} \le \frac{1}{h^2} C_{\-{PI}} \chi^2\tp{\pi_t + h \partial_t \pi_t \| \pi_t} = C_{\-{PI}} \norm{\partial_t \log \pi_t}_{L^2\tp{\pi_t}}^2.
    \]

    If \(p_{\pi_t}\) satisfies a \(C_{\-{MLSI}}\)-modified log-Sobolev inequality, by the transport--entropy inequality for \(W_{c_{\pi_t}, 2}\) (\zcref{thm:transport-entropy-inequality-for-Wc2}), we have
    \[
    \begin{aligned}
        \abs{\dot{\pi}}_t^2 &= \lim_{h \to 0} \frac{1}{h^2} W_{c_{\pi_t}, 2}^2\tp{\pi_t + h \partial_t \pi_t, \pi_t} \\
        &\le \lim_{h \to 0} \frac{1}{h^2} 4 C_{\-{MLSI}} \norm{\frac{\dd (\pi_t + h \partial_t \pi_t)}{\dd \pi_t}}_{\infty} \-{KL}\tp{\pi_t + h \partial_t \pi_t \| \pi_t} \\
        &= 4 C_{\-{MLSI}} \lim_{h \to 0} \frac{1}{h^2} \norm{1 + h \partial_t \log \pi_t}_{\infty} \*{Ent}_{\pi_t}\stp{1 + h \partial_t \log \pi_t} \\
        &= 2 C_{\-{MLSI}} \norm{\partial_t \log \pi_t}_{L^2\tp{\pi_t}}^2. \\
    \end{aligned}
    \]
\end{proof}

\subsection{Omitted Proofs in \zcref{sec:discretization-and-algorithm-implementation}}

\label{subsection:appendix-discretization-and-algorithm-implementation}

\begin{proof}[Proof of \zcref{lemma:error-bound-for-perturbed-annealing-algorithm}]
    For \(t \in \stp{0, T}\), define \(\tilde{p}_t \defeq p_{t / T}\), \(\tilde{\pi}_t \defeq \pi_{t / T}\), and \(\tilde{c}_t \defeq c_{t / T}\). 
    
    For any admissible flux \(\tp{\tilde{J}_t}_{t \in \stp{0, T}}\) of \(\tp{\tilde{\pi}_t}_{t \in \stp{0, T}}\), define the reference path measure \(\bb P^{\!{REF},\tilde{J}}\) as in the proof of \zcref{lemma:upper-bound-of-path-measure-KL}. Namely, for \(t \in \stp{0, T}\) and \(\set{x, y} \in E\), let
    \[
    \begin{cases}
        \tilde{q}_t^*\tp{x, y} \defeq \tilde{p}_t\tp{x, y} \tp{\sqrt{1 + \frac{1}{4} \tilde{\rho}_t\tp{x, y}^2} + \frac{1}{2} \tilde{\rho}_t\tp{x, y}} \\
        \tilde{q}_t^*\tp{y, x} = \tilde{p}_t\tp{y, x} \tp{\sqrt{1 + \frac{1}{4} \tilde{\rho}_t\tp{x, y}^2} - \frac{1}{2} \tilde{\rho}_t\tp{x, y}} \\
    \end{cases},
    \]
    where
    \[
    \tilde{\rho}_t\tp{x, y} \defeq \frac{\tilde{J}_t\tp{x, y}}{\tilde{c}_t\tp{x, y}}.
    \]
    The reference path measure \(\bb P^{\!{REF}, \tilde{J}}\) is then given by the initial distribution \(\pi_0^{\!{ALG}}\) the transition kernels \(\tp{\tilde{q}_t^*}_{t \in \stp{0, T}}\). By the same computation as in the proof of \zcref{lemma:upper-bound-of-path-measure-KL}, the time marginals of \(\bb P^{\!{REF},\tilde J}\) are precisely \(\tp{\tilde{\pi}_t}_{t \in \stp{0, T}}\).

    Let \(\bb P'\) denote the path measure of the perturbed annealing algorithm. By \zcref{thm:discrete-Girsanov},
    \[
    \begin{aligned}
        &\-{KL}\tp{\bb P^{\!{REF}, \tilde{J}} \| \bb P'}\\
        =\ &\-{KL}\tp{\pi_0 \| \pi_0^{\!{ALG}}} + \int_0^T \sum_{x \in \Omega} \tilde{\pi}_t\tp{x} \sum_{y \neq x} \tp{\tilde{q}_t^*\tp{x, y} \log \frac{\tilde{q}_t^*\tp{x, y}}{\tilde{p}_t'\tp{x, y}} - \tp{\tilde{q}_t^*\tp{x, y} - \tilde{p}_t'\tp{x, y}}} \, \dd t \\
        =\ &\-{KL}\tp{\pi_0 \| \pi_0^{\!{ALG}}} + \int_0^T \sum_{x \in \Omega} \sum_{y \neq x} \tilde{c}_t\tp{x, y} \tp{\frac{\tilde{q}_t^*\tp{x, y}}{\tilde{p}_t\tp{x, y}} \log \frac{\tilde{q}_t^*\tp{x, y}}{\tilde{p}_t'\tp{x, y}} - \frac{\tilde{q}_t^*\tp{x, y}}{\tilde{p}_t\tp{x, y}} + \frac{\tilde{p}_t'\tp{x, y}}{\tilde{p}_t\tp{x, y}}} \, \dd t. \\
    \end{aligned}
    \]
    Since \(\tp{p_s'}_{s \in \stp{0, 1}}\) is a perturbation of \(\tp{p_s}_{s \in \stp{0, 1}}\), we have
    \[
    \begin{aligned}
        &\frac{\tilde{q}_t^*\tp{x, y}}{\tilde{p}_t\tp{x, y}} \log \frac{\tilde{q}_t^*\tp{x, y}}{\tilde{p}_t'\tp{x, y}} - \frac{\tilde{q}_t^*\tp{x, y}}{\tilde{p}_t\tp{x, y}} + \frac{\tilde{p}_t'\tp{x, y}}{\tilde{p}_t\tp{x, y}}\\ 
        =\ \ &\frac{\tilde{q}_t^*\tp{x, y}}{\tilde{p}_t\tp{x, y}} \tp{\log \frac{\tilde{q}_t^*\tp{x, y}}{\tilde{p}_t\tp{x, y}} + \log \frac{\tilde{p}_t\tp{x, y}}{\tilde{p}_t'\tp{x, y}}} - \frac{\tilde{q}_t^*\tp{x, y}}{\tilde{p}_t\tp{x, y}} + \frac{\tilde{p}_t'\tp{x, y}}{\tilde{p}_t\tp{x, y}} \\
        \le\ \ &\frac{\tilde{q}_t^*\tp{x, y}}{\tilde{p}_t\tp{x, y}} \tp{\log \frac{\tilde{q}_t^*\tp{x, y}}{\tilde{p}_t\tp{x, y}} + \log \tp{1 + \delta}} - \frac{\tilde{q}_t^*\tp{x, y}}{\tilde{p}_t\tp{x, y}} + \tp{1 + \delta} \\
        \le\ \ &\frac{\tilde{q}_t^*\tp{x, y}}{\tilde{p}_t\tp{x, y}} \log \frac{\tilde{q}_t^*\tp{x, y}}{\tilde{p}_t\tp{x, y}} - \tp{1 - \delta} \frac{\tilde{q}_t^*\tp{x, y}}{\tilde{p}_t\tp{x, y}} + 1 + \delta \\
        \eqdef\ &\Psi^{\delta}\tp{\frac{\tilde{q}_t^*\tp{x, y}}{\tilde{p}_t\tp{x, y}}}, \\
    \end{aligned}
    \]
    where \(\Psi^{\delta}\tp{r} \defeq r \log r - \tp{1 - \delta} r + 1 + \delta\). Hence,
    \[
    \begin{aligned}
        \-{KL}\tp{\bb P^{\!{REF}, \tilde{J}} \| \bb P'} &\le \-{KL}\tp{\pi_0 \| \pi_0^{\!{ALG}}} + \int_0^T \sum_{x \in \Omega} \sum_{y \neq x} \tilde{c}_t\tp{x, y} \Psi^{\delta}\tp{\frac{\tilde{q}_t^*\tp{x, y}}{\tilde{p}_t\tp{x, y}}} \, \dd t \\
        &= \-{KL}\tp{\pi_0 \| \pi_0^{\!{ALG}}} + \int_0^T \sum_{\set{x, y} \in E} \tilde{c}_t\tp{x, y} \tp{\Psi^{\delta}\tp{\frac{\tilde{q}_t^*\tp{x, y}}{\tilde{p}_t\tp{x, y}}} + \Psi^{\delta}\tp{\frac{\tilde{q}_t^*\tp{y, x}}{\tilde{p}_t\tp{y, x}}}} \, \dd t. \\
    \end{aligned}
    \]
    Note that
    \[
    \begin{aligned}
        &\Psi^{\delta}\tp{\frac{\tilde{q}_t^*\tp{x, y}}{\tilde{p}_t\tp{x, y}}} + \Psi^{\delta}\tp{\frac{\tilde{q}_t^*\tp{y, x}}{\tilde{p}_t\tp{y, x}}}\\
        =\ &\Psi^{\delta}\tp{\sqrt{1 + \frac{1}{4} \tilde{\rho}_t\tp{x, y}^2} + \frac{1}{2} \tilde{\rho}_t\tp{x, y}} + \Psi^{\delta}\tp{\sqrt{1 + \frac{1}{4} \tilde{\rho}_t\tp{x, y}^2} - \frac{1}{2} \tilde{\rho}_t\tp{x, y}} \\
        =\ &\tilde{\rho}_t\tp{x, y} \arcsinh\tp{\frac{1}{2} \tilde{\rho}_t\tp{x, y}} - 2 \tp{1 - \delta} \sqrt{1 + \frac{1}{4} \tilde{\rho}_t\tp{x, y}^2} + 2 \tp{1 + \delta} \\
        \le\ &\frac{1 + \delta}{4} \tilde{\rho}_t\tp{x, y}^2 + 4 \delta. \\
    \end{aligned}
    \]
    Substituting this estimate into the previous bound yields
    \[
    \begin{aligned}
        \-{KL}\tp{\bb P^{\!{REF}, \tilde{J}} \| \bb P'} &\le \-{KL}\tp{\pi_0 \| \pi_0^{\!{ALG}}} + \int_0^T \sum_{\set{x, y} \in E} \tilde{c}_t\tp{x, y} \tp{\frac{1 + \delta}{4} \tilde{\rho}_t\tp{x, y}^2 + 4 \delta} \, \dd t \\
        &= \-{KL}\tp{\pi_0 \| \pi_0^{\!{ALG}}} + \frac{1 + \delta}{4} \int_0^T \sum_{\set{x, y} \in E} \frac{\tilde{J}_t\tp{x, y}^2}{\tilde{c}_t\tp{x, y}} \, \dd t + 4 \delta \int_0^T \sum_{\set{x, y} \in E} \tilde{c}_t\tp{x, y} \, \dd t \\
        &= \-{KL}\tp{\pi_0 \| \pi_0^{\!{ALG}}} + \frac{1 + \delta}{4} \int_0^T \sum_{\set{x, y} \in E} \frac{\tilde{J}_t\tp{x, y}^2}{\tilde{c}_t\tp{x, y}} \, \dd t + 4 \delta T \int_0^1 \sum_{\set{x, y} \in E} c_s\tp{x, y} \, \dd s. \\
    \end{aligned}
    \]

    Finally, let \(\tp{\tilde{J}_t^*}_{t\in\stp{0, T}}\) be the optimal flux attaining the minimum in the flux formulation of the discrete action (\zcref{thm:metric-derivative-flux-formulation}). Then
    \[
    \int_0^T \sum_{\set{x, y} \in E} \frac{\tilde{J}_t^*\tp{x, y}^2}{\tilde{c}_t\tp{x, y}} \, \dd t = \int_0^T \min_{\text{admissible fluxes } \tilde{J}_t} \quad \sum_{\set{x, y} \in E} \frac{\tilde{J}_t\tp{x, y}^2}{\tilde{c}_t\tp{x, y}} \, \dd t = \+A\tp{\tp{\tilde{\pi}_t}_{t \in \stp{0, T}}}.
    \]
    Therefore,
    \[
    \-{KL}\tp{\bb P^{\!{REF}, \tilde{J}^*} \| \bb P'} \le \-{KL}\tp{\pi_0 \| \pi_0^{\!{ALG}}} + \frac{1 + \delta}{4} \+A\tp{\tp{\tilde{\pi}_t}_{t \in \stp{0, T}}} + 4 \delta T \int_0^1 \sum_{\set{x, y} \in E} c_s\tp{x, y} \, \dd s.
    \]
    Applying data-processing inequality for relative entropy together with \zcref{lemma:time-rescaling-of-discrete-action}, we obtain
    \[
    \begin{aligned}
        \-{KL}\tp{\pi \| \tilde{\pi}_T'} &\le \-{KL}\tp{\bb P^{\!{REF}, \tilde{J}^*} \| \bb P'} \\
        &\le \-{KL}\tp{\pi_0 \| \pi_0^{\!{ALG}}} + \frac{\tp{1 + \delta} \+A\tp{\tp{\pi_s}_{s \in \stp{0, 1}}}}{4 T} + 4 \delta T \int_0^1 \sum_{\set{x, y} \in E} c_s\tp{x, y} \, \dd s \\
    \end{aligned}
    \]
    as claimed.
\end{proof}

\subsection{Proof of \zcref{lemma:reduction-via-symmetry}}

\label{subsection:appendix-proof-of-symmetry-reduction-lemma}

\begin{proof}[Proof of \zcref{lemma:reduction-via-symmetry}]
    By \zcref{thm:metric-derivative-flux-formulation},
    \[
    \abs{\dot{\pi}}_s^2 = \min_{\text{admissible fluxes } J_s} \quad \sum_{\set{x, y} \in E} \frac{J_s\tp{x, y}^2}{c_s\tp{x, y}},
    \]
    and
    \[
    \abs{\dot{\bar{\pi}}}_s^2 = \min_{\text{admissible fluxes } \bar{J}_s} \quad \sum_{\set{a, b} \in \bar{E}} \frac{\bar{J}_s\tp{a, b}^2}{\bar{c}_s\tp{a, b}}.
    \]

    For any admissible flux \(\tp{J_s}_{s \in \stp{0, 1}}\) for \(\tp{\pi_s}_{s \in \stp{0, 1}}\), define
    \[
    \bar{J}_s\tp{a, b} \defeq \sum_{\substack{x \in \Pi^{-1}\tp{a} \\ y \in \Pi^{-1}\tp{b}}} J_s\tp{x, y}, \quad \set{a, b} \in \bar{E}, \, s \in \stp{0, 1}.
    \]
    It is straightforward to verify that \(\tp{\bar{J}_s}_{s \in \stp{0, 1}}\) satisfies anti-symmetry. Additionally, the continuity equation holds: for any \(a \in \bar{\Omega}\),
    \[
    \sum_{b \in \bar{\Omega}} \bar{J}_s\tp{a, b} = \sum_{b \in \bar{\Omega}} \sum_{\substack{x \in \Pi^{-1}\tp{a} \\ y \in \Pi^{-1}\tp{b}}} J_s\tp{x, y} = \sum_{x \in \Pi^{-1}\tp{a}} \sum_{y \in \Omega} J_s\tp{x, y} = \sum_{x \in \Pi^{-1}\tp{a}} \tp{-\partial_s \pi_s\tp{x}} = -\partial_s \bar{\pi}_s\tp{a}.
    \]
    Hence, \(\tp{\bar{J}_s}_{s \in \stp{0, 1}}\) is an admissible flux for \(\tp{\bar{\pi}_s}_{s \in \stp{0, 1}}\). Moreover, the objective values satisfies:
    \[
    \sum_{\set{a, b} \in \bar{E}} \frac{\bar{J}_s\tp{a, b}^2}{\bar{c}_s\tp{a, b}} = \sum_{\set{a, b} \in \bar{E}} \frac{\tp{\sum_{\substack{x \in \Pi^{-1}\tp{a} \\ y \in \Pi^{-1}\tp{b}}} J_s\tp{x, y}}^2}{\sum_{\substack{x \in \Pi^{-1}\tp{a} \\ y \in \Pi^{-1}\tp{b}}} c_s\tp{x, y}} \le \sum_{\set{a, b} \in \bar{E}} \sum_{\substack{x \in \Pi^{-1}\tp{a} \\ y \in \Pi^{-1}\tp{b}}} \frac{J_s\tp{x, y}^2}{c_s\tp{x, y}} = \sum_{\set{x, y} \in E} \frac{J_s\tp{x, y}^2}{c_s\tp{x, y}}.
    \]
    Therefore,
    \[
    \abs{\dot{\bar{\pi}}}_s^2 = \min_{\text{admissible fluxes } \bar{J}_s} \quad \sum_{\set{a, b} \in \bar{E}} \frac{\bar{J}_s\tp{a, b}^2}{\bar{c}_s\tp{a, b}} \le \min_{\text{admissible fluxes } J_s} \quad \sum_{\set{x, y} \in E} \frac{J_s\tp{x, y}^2}{c_s\tp{x, y}} = \abs{\dot{\pi}}_s^2.
    \]

    For any admissible flux \(\tp{\bar{J}_s}_{s \in \stp{0, 1}}\) for \(\tp{\bar{\pi}_s}_{s \in \stp{0, 1}}\), define
    \[
    J_s\tp{x, y} \defeq \frac{c_s\tp{x, y}}{\bar{c}_s\tp{\Pi\tp{x}, \Pi\tp{y}}} \bar{J}_s\tp{\Pi\tp{x}, \Pi\tp{y}}, \quad \set{x, y} \in E, \, s \in \stp{0, 1}.
    \]
    It is straightforward to verify that \(\tp{J_s}_{s \in \stp{0, 1}}\) satisfies anti-symmetry. For any \(x, x'\) such that \(\Pi\tp{x} = \Pi\tp{x'}\), since \(\pi_s\tp{x} = \pi_s\tp{x'}\) and \(\Pi_{\# p_s\tp{x, \cdot}} = \Pi_{\# p_s\tp{x', \cdot}}\), for any \(b \in \bar{\Omega}\), we have
    \[
    \begin{aligned}
        \sum_{y \in \Pi^{-1}\tp{b}} c_s\tp{x, y} &= \pi_s\tp{x} \sum_{y \in \Pi^{-1}\tp{b}} p_s\tp{x, y} \\
        &= \pi_s\tp{x} \Pi_{\# p_s\tp{x, \cdot}}\tp{b} \\
        &= \pi_s\tp{x'} \Pi_{\# p_s\tp{x', \cdot}}\tp{b} \\
        &= \pi_s\tp{x'} \sum_{y \in \Pi^{-1}\tp{b}} p_s\tp{x', y} \\
        &= \sum_{y \in \Pi^{-1}\tp{b}} c_s\tp{x', y}. \\
    \end{aligned}
    \]
    Then for any \(\set{a, b} \in \bar{E}\) and \(x \in \Pi^{-1}\tp{a}\),
    \[
        \sum_{y \in \Pi^{-1}\tp{b}} c_s\tp{x, y} = \frac{1}{\abs{\Pi^{-1}\tp{a}}} \sum_{x' \in \Pi^{-1}\tp{a}} \sum_{y \in \Pi^{-1}\tp{b}} c_s\tp{x', y} = \frac{1}{\abs{\Pi^{-1}\tp{a}}} \bar{c}_s\tp{a, b}.
    \]
    Hence, the continuity equation for \(\tp{J_s}_{s \in \stp{0, 1}}\) holds: for any \(x \in \Omega\), 
    \[
    \begin{aligned}
        \sum_{y \in \Omega} J_s\tp{x, y} &= \sum_{y \in \Omega} \frac{c_s\tp{x, y}}{\bar{c}_s\tp{\Pi\tp{x}, \Pi\tp{y}}} \bar{J}_s\tp{\Pi\tp{x}, \Pi\tp{y}} \\
        &= \sum_{b \in \bar{\Omega}} \sum_{y \in \Pi^{-1}\tp{b}} \frac{c_s\tp{x, y}}{\bar{c}_s\tp{\Pi\tp{x}, b}} \bar{J}_s\tp{\Pi\tp{x}, b} \\
        &= \sum_{b \in \bar{\Omega}} \frac{\bar{J}_s\tp{\Pi\tp{x}, b}}{\bar{c}_s\tp{\Pi\tp{x}, b}} \sum_{y \in \Pi^{-1}\tp{b}} c_s\tp{x, y} \\
        &= \sum_{b \in \bar{\Omega}} \bar{J}_s\tp{\Pi\tp{x}, b} \frac{1}{\abs{\Pi^{-1}\tp{\Pi\tp{x}}}} \\
        &= -\frac{1}{\abs{\Pi^{-1}\tp{\Pi\tp{x}}}} \partial_s \bar{\pi}_s\tp{\Pi\tp{x}} \\
        &= -\partial_s \pi_s\tp{x}. \\
    \end{aligned}
    \]
    Consequently, \(\tp{J_s}_{s \in \stp{0, 1}}\) is an admissible flux for \(\tp{\pi_s}_{s \in \stp{0, 1}}\). Moreover, the objective values coincide:
    \[
    \begin{aligned}
        \sum_{\set{x, y} \in E} \frac{J_s\tp{x, y}^2}{c_s\tp{x, y}} &= \sum_{\set{x, y} \in E} \frac{\tp{\frac{c_s\tp{x, y}}{\bar{c}_s\tp{\Pi\tp{x}, \Pi\tp{y}}} \bar{J}_s\tp{\Pi\tp{x}, \Pi\tp{y}}}^2}{c_s\tp{x, y}} \\
        &= \sum_{\set{a, b} \in \bar{E}} \sum_{\substack{x \in \Pi^{-1}\tp{a} \\ y \in \Pi^{-1}\tp{b}}} \frac{\tp{\frac{c_s\tp{x, y}}{\bar{c}_s\tp{a, b}} \bar{J}_s\tp{a, b}}^2}{c_s\tp{x, y}} \\
        &= \sum_{\set{a, b} \in \bar{E}} \frac{\bar{J}_s\tp{a, b}^2}{\bar{c}_s\tp{a, b}^2} \sum_{\substack{x \in \Pi^{-1}\tp{a} \\ y \in \Pi^{-1}\tp{b}}} c_s\tp{x, y} \\
        &= \sum_{\set{a, b} \in \bar{E}} \frac{\bar{J}_s\tp{a, b}^2}{\bar{c}_s\tp{a, b}}. \\
    \end{aligned}
    \]
    Therefore,
    \[
    \abs{\dot{\pi}}_s^2 = \min_{\text{admissible fluxes } J_s} \quad \sum_{\set{x, y} \in E} \frac{J_s\tp{x, y}^2}{c_s\tp{x, y}} \le \min_{\text{admissible fluxes } \bar{J}_s} \quad \sum_{\set{a, b} \in \bar{E}} \frac{\bar{J}_s\tp{a, b}^2}{\bar{c}_s\tp{a, b}} = \abs{\dot{\bar{\pi}}}_s^2.
    \]

    In conclusion, for every \(s \in \stp{0, 1}\), we have \(\abs{\dot{\bar{\pi}}}_s^2 = \abs{\dot{\pi}}_s^2\). It follows that
    \[
    \+A\tp{\tp{\bar{\pi}_s}_{s \in \stp{0, 1}}} = \int_0^1 \abs{\dot{\bar{\pi}}}_s^2 \, \dd s = \int_0^1 \abs{\dot{\pi}}_s^2 \, \dd s = \+A\tp{\tp{\pi_s}_{s \in \stp{0, 1}}}.
    \]
\end{proof}

\subsection{Geometric Landscape of the Mean-Field Ising Model}

\begin{lemma}
    Consider the mean-field Ising model \(\mu_{\beta}\) with inverse temperature \(\beta\) on \(\Omega = \set{\pm 1}^n\). Let \(\bar{\mu}_{\beta} \defeq M_{\# \mu_{\beta}}\) denote its projected distribution on \(\bar{\Omega}\). Then \(\bar{\mu}_{\beta}\) satisfies the following landscape property:
    \begin{itemize}
        \item If \(\beta \le 1 - \frac{1}{n}\), then the sequence \(\set{\bar{\mu}_{\beta}\tp{m}}_{m \ge 0}\) is decreasing.
        \item If \(1 - \frac{1}{n} < \beta < 1\), then the sequence \(\set{\bar{\mu}_{\beta}\tp{m}}_{m \ge 0}\) is either increasing, or decreasing, or first increasing and then decreasing. Moreover, if \(n\) is even, then \(\bar{\mu}_{\beta}\tp{m} < e \bar{\mu}_{\beta}\tp{0}\) for all even \(m \ge 0\); if \(n\) is odd, then \(\bar{\mu}_{\beta}\tp{m} < e \bar{\mu}_{\beta}\tp{1}\) for all odd \(m \ge 0\).
        \item If \(\beta \ge 1\), then the sequence \(\set{\bar{\mu}_{\beta}\tp{m}}_{m \ge 0}\) is either increasing, or first increasing and then decreasing. Moreover, its maximum is attained at \(m^* > n \sqrt{1 - \frac{1}{\beta}}\).
    \end{itemize}
    \label{lemma:landscape-of-projected-mean-field-Ising}
\end{lemma}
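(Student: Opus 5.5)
The plan is to study the consecutive ratios of the projected sequence. Write $m = n - 2k$, or equivalently index by $k$, the number of minus spins. For $m \ge 0$ with $m+2 \le n$,
\[
R(m) \defeq \frac{\bar{\mu}_{\beta}(m+2)}{\bar{\mu}_{\beta}(m)} = \frac{n-m}{n+m+2}\exp\tp{\frac{\beta}{2n}\tp{(m+2)^2 - m^2}} = \frac{n-m}{n+m+2}\, e^{\frac{2\beta(m+1)}{n}}.
\]
The sequence increases at step $m$ exactly when $\log R(m) > 0$. So I would study
\[
g(m) \defeq \log(n-m) - \log(n+m+2) + \frac{2\beta(m+1)}{n}
\]
as a function of a continuous variable on $[0, n-2]$.

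\textbf{Shape of $g$.} Differentiating gives
\[
g'(m) = -\frac{1}{n-m} - \frac{1}{n+m+2} + \frac{2\beta}{n}.
\]
The first two terms sum to $-\frac{2n+2}{(n-m)(n+m+2)}$. The denominator $(n-m)(n+m+2)$ is strictly decreasing in $m \ge 0$, so $g'$ is strictly decreasing and $g$ is concave. Moreover $g(m) \to -\infty$ as $m \to n$, so the last step is a decrease (or the sequence ends). A concave function that is eventually negative changes sign at most once, from $+$ to $-$. Hence $\{\bar{\mu}_\beta(m)\}_{m\ge0}$ is increasing, decreasing, or increasing-then-decreasing along each parity class. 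This gives the unimodality part of all three bullets, and of \zcref{lemma:landscape-of-projected-mean-field-Ising-simplified-version}.

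\textbf{Distinguishing the regimes.} Everything is governed by the first ratio. Compute
\[
g(0) = \log\frac{n}{n+2} + \frac{2\beta}{n}
\]
for even $n$, starting at $m=0$. For odd $n$, starting at $m=1$, the corresponding quantity is $\log\frac{n-1}{n+3} + \frac{4\beta}{n}$.

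For $\beta \le 1 - \frac1n$, I want $g \le 0$ at the start. Then concavity together with $g'(0) \le 0$ gives a decreasing sequence. Note $g'(0) = \frac{2\beta}{n} - \frac{2n+2}{n(n+2)}$, which is $\le 0$ iff $\beta \le \frac{n+1}{n+2}$, and this holds when $\beta \le 1 - \frac1n$. For the start value, use $\log\frac{n}{n+2} \le -\frac{2}{n+1}$. Then $g(0) \le \frac{2\beta}{n} - \frac{2}{n+1} \le 0$ once $\beta \le \frac{n}{n+1}$, which is again implied by $\beta \le 1 - \frac1n$. The odd case is analogous, with $\log\frac{n-1}{n+3} \le -\frac{4}{n+1}$.

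For $\beta \ge 1$, I need the sequence not to be decreasing. Since $\log\frac{n}{n+2} > -\frac{2}{n}$, we get $g(0) > \frac{2(\beta-1)}{n} \ge 0$, so the first step increases. The odd case uses $\log\frac{n-1}{n+3} > -\frac{4}{n-1}$, which may need a slightly more careful inequality.

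\textbf{Location of the maximizer.} For $\beta \ge 1$, the mode $m^*$ is the first $m$ where $g(m) \le 0$. So it suffices to show $g(m) > 0$ for all $m \le n\sqrt{1-1/\beta}$. Set $x = m/n$. Then
\[
g \approx -2\operatorname{artanh}(x) + 2\beta x
\]
up to $O(1/n)$ corrections. The function $2\beta x - 2\operatorname{artanh}(x)$ is positive on $(0, x_0)$, where $x_0$ is the mean-field fixed point $\tanh(\beta x_0) = x_0$. I would then show $x_0 > \sqrt{1-1/\beta}$. Using the Taylor bound $\operatorname{artanh}(x) \le x + \frac{x^3}{3(1-x^2)}$, we have
\[
\beta x > \operatorname{artanh} x \quad \Longleftarrow \quad \beta - 1 > \frac{x^2}{3(1-x^2)},
\]
and at $x^2 = 1 - 1/\beta$ the right-hand side equals $\frac{\beta-1}{3}$. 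So there is a factor-$3$ margin. This margin should absorb the exact discrete corrections: $\log\frac{n-m}{n+m+2}$ versus $-2\operatorname{artanh}(m/n)$, and the $+1$ in $m+1$. I expect making these discrete error terms rigorous uniformly, including the extreme cases $m$ near $n$ and $\beta$ near $1$, to be the main obstacle.

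\textbf{The intermediate regime $1-\frac1n < \beta < 1$.} Unimodality comes from concavity as above. For the bound $\bar{\mu}_\beta(m) < e\,\bar{\mu}_\beta(0)$, write $\log\frac{\bar{\mu}_\beta(m)}{\bar{\mu}_\beta(0)}$ as the sum of $g$ over steps and bound each step by its linearization:
\[
g(j) \le \frac{2\beta(j+1)}{n} - \frac{2(j+1)}{n+1} \le \frac{2(j+1)}{n}\tp{\beta - \tfrac{n}{n+1}}.
\]
This is at most of order $\frac{j+1}{n^2}$. The terms are positive only for $j = O(1)$ steps near zero, because the cubic term of $\operatorname{artanh}$ kicks in at $j \approx \sqrt{n}$. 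A careful sum of the positive part should then give total at most $1$, hence the factor $e$. If a crude sum exceeds $1$, I would include the cubic term $-\frac{2}{3}(j/n)^3$ to cut the summation off at $j \approx \sqrt n$.
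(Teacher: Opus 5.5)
Your concavity argument, the regime $\beta\le 1-\frac1n$, and the even-$n$ start $g(0)>0$ for $\beta\ge1$ are correct and essentially the paper's. Your route for $1-\frac1n<\beta<1$ also works, and more easily than you expect. Since $\log\frac{n-j}{n+j+2}=-2\,\mathrm{artanh}\frac{j+1}{n+1}\le-\frac{2(j+1)}{n+1}$ and $\beta<1$, each step satisfies $g(j)<\frac{2(j+1)}{n(n+1)}$. Summing over $j=0,2,\dots,m-2$ gives $\log\frac{\bar\mu_\beta(m)}{\bar\mu_\beta(0)}<\frac{m^2}{2n(n+1)}<\frac12$, and similarly for odd $n$, so no cubic cutoff is needed. (Positive steps actually persist up to $j\approx\sqrt{3n}$, not only for $j=O(1)$, but this does not matter.) The paper instead bounds $\sup_m f(m)<\frac2n$ by maximizing the $\beta=1$ majorant in closed form, then multiplies by the $m/2$ steps to get $e^{m/n}\le e$.

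The genuine gap is the claim $m^*>n\sqrt{1-1/\beta}$. You give only a continuum heuristic, and the mechanism you propose, a factor-$3$ margin absorbing the discrete corrections, fails near $\beta=1$.

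\emph{Why it fails.} When $\beta-1=O(n^{-2})$, the lattice points $m\le x\defeq n\sqrt{1-1/\beta}$ are $O(1)$. There your continuum quantity $2\beta u-2\,\mathrm{artanh}\,u$ (with $u=m/n$) is only $O(n^{-3})$, while the discrete corrections are of order $n^{-2}$. So nothing is absorbed: the sign of $g$ is decided by the corrections themselves. In exact form, $g(m)=2\bigl[\frac{(n+1)\beta}{n}\,y-\mathrm{artanh}\,y\bigr]$ with $y=\frac{m+1}{n+1}$, and dropping the factor $\frac{n+1}{n}$ already makes the required positivity false at $\beta=1$.

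\emph{The missing idea.} The paper tests $g'$ rather than $g$ at $x$. With $s\defeq\sqrt{1-1/\beta}$ one has $(n-x)(n+x+2)=\frac{n^2}{\beta}+2n(1-s)$. Hence $g'(x)>0$ is equivalent to $2\beta(1-s)>1$, i.e.\ $\bigl(1-\frac1{2\beta}\bigr)^2>1-\frac1\beta$, which holds because the difference is $\frac1{4\beta^2}$. Since $g'$ is decreasing, $g'>0$ on $[0,x]$, so $g\ge g(0)>0$ there. Therefore the sign change $m_0$ of $g$ exceeds $x$, and $m^*\ge m_0>x$, with no asymptotics at all.

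Your own route can also be completed, using $\mathrm{artanh}\,y\le y+\frac{y^3}{3(1-y^2)}$ at $y=\frac{x+1}{n+1}$ and keeping the $\frac{n+1}{n}$ factor. That, however, is exactly the step you left undone.

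\emph{The odd-$n$ start.} The same derivative trick settles the case you flagged, where $\log\frac{n-1}{n+3}>-\frac4{n-1}$ is indeed too weak at $\beta=1$. For $\beta\ge1$ and $n\ge3$,
\[
g'(1)\ge\frac2n-\frac{2(n+1)}{(n-1)(n+3)}=\frac{2(n-3)}{n(n-1)(n+3)}\ge0,
\]
so $g(1)\ge g(0)>0$. The paper itself only checks $f(0)>0$ at this point.
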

\begin{proof}
    The theorem holds trivially for $n = 1$. For $n \ge 2$, we define the log-ratio of adjacent probabilities:
    $$
    f(m) \defeq \log \frac{\bar{\mu}_{\beta}(m + 2)}{\bar{\mu}_{\beta}(m)}, \quad m \in [0, n).
    $$
    Direct calculation yields
    $$
    f(m) = \frac{\beta}{2 n} \left[ (m + 2)^2 - m^2 \right] + \log \frac{\binom{n}{\frac{n+m+2}{2}}}{\binom{n}{\frac{n+m}{2}}} = \frac{2 \beta}{n} (m + 1) + \log \frac{n - m}{n + m + 2}.
    $$
    Taking the first and second derivatives of $f(m)$ with respect to continuous $m \in [0, n)$, we have
    $$
    f'(m) = \frac{2 \beta}{n} - \frac{2 n + 2}{(n - m)(n + m + 2)},
    $$
    $$
    f''(m) = - \frac{(2 n + 2)(2 m + 2)}{\left[ (n - m)(n + m + 2) \right]^2} < 0.
    $$
    Since $f''(m) < 0$, the function $f(m)$ is strictly concave on $[0, n)$.

    \paragraph{Case 1: $\beta \le 1 - \frac{1}{n}$.}
    Using the inequality $\log(1 + x) > x - \frac{x^2}{2}$ for $x > 0$, we can bound $f(0)$:
    $$
    f(0) = \frac{2 \beta}{n} - \log\left(1 + \frac{2}{n}\right) \le \frac{2}{n}\left(1 - \frac{1}{n}\right) - \left( \frac{2}{n} - \frac{2}{n^2} \right) = 0.
    $$
    Similarly, we bound $f'(0)$:
    $$
    f'(0) = \frac{2 \beta}{n} - \frac{2 n + 2}{n(n + 2)} = \frac{2}{n} \left( \beta - 1 + \frac{1}{n + 2} \right) \le \frac{2}{n} \left( -\frac{1}{n} + \frac{1}{n + 2} \right) < 0.
    $$
    Because $f(0) \le 0$, $f'(0) < 0$, and $f(m)$ is concave, we strictly have $f(m) < 0$ for all $m \ge 0$. Thus, $\log \frac{\bar{\mu}_{\beta}(m + 2)}{\bar{\mu}_{\beta}(m)} < 0$, meaning the sequence $\{\bar{\mu}_{\beta}(m)\}_{m \ge 0}$ is strictly decreasing.

    \paragraph{Case 2: $1 - \frac{1}{n} < \beta < 1$.} If \(n = 2\) or \(n = 3\), \(\set{\bar\mu_{\beta}\tp{m}}_{m \ge 0 }\) is either increasing or decreasing, as there are only two elements in the sequence \(\set{\bar\mu_{\beta}\tp{m}}_{m \ge 0 }\). Besides, if \(n = 2\),
    $$
    \frac{\bar\mu_{\beta}\tp{2}}{\bar\mu_{\beta}\tp{0}} = \exp\tp{f\tp{0}} = \exp\tp{\beta - \log 2} = \frac{e^{\beta}}{2} < e.
    $$
    If \(n = 3\),
    $$
    \frac{\bar\mu_{\beta}\tp{3}}{\bar\mu_{\beta}\tp{1}} = \exp\tp{f\tp{1}} = \exp\tp{\frac{4}{3} \beta + \log \frac{1}{3}} = \frac{e^{\frac{4}{3}\beta}}{3} < e.
    $$
    If \(n \ge 4\), we consider the two cases \(f\tp{0} \le 0\) and \(f\tp{0} > 0\) respectively. If
    $$
    f\tp{0}  = \frac{2\beta}{n} - \log\tp{1 + \frac{2}{n}} \le 0
    $$
    then
    $$
    \beta \le \frac{\log\tp{1 + \frac{2}{n}}}{\frac{2}{n}} < 1 - \frac{1}{2} \tp{\frac{2}{n}} + \frac{1}{3} \cdot \tp{\frac{2}{n}}^2 = 1 - \frac{1}{n} + \frac{4}{3 n^2}.
    $$
    Therefore,
    $$
    f'\tp{0} = \frac{2}{n} \tp{\beta - 1 + \frac{1}{n + 2}} < \frac{2}{n} \tp{\tp{1 - \frac{1}{n} + \frac{4}{3 n^2}} - 1 + \frac{1}{n + 2}} = -\frac{4 \tp{n - 4}}{3 n^3 \tp{n + 2}} \le 0.
    $$
    Since \(f\) is decreasing and concave, for all \(m \in \left[0, n\right)\), \(f'\tp{m} \le f'\tp{0} \le 0\). We have 
    $$
    \log\frac{\bar\mu_{\beta}\tp{m + 2}}{\bar\mu_{\beta}\tp{m}} = f\tp{m} \le f\tp{0} \le 0.
    $$
    It follows that \(\set{\bar\mu_{\beta}\tp{m}}\) is decreasing. 

    If \(f\tp{0} > 0\), since \(f\) is concave and \(f\tp{m} \to -\infty\) as \(m \to n^-\), there exists a unique \(m_0\) on \(\tp{0, n}\) such that $f(m_0) = 0$. We have \(f\tp{m} > 0\) for all \(m \in \left[0, m_0\right)\), \(f\tp{m} < 0\) for all \(m \in \tp{m_0, n}\). Let \(m^* \defeq \min\set{m \in \bar \Omega \mid m \ge m_0}\). Then 
    $$
    \log\frac{\bar\mu_{\beta}\tp{m + 2}}{\bar\mu_{\beta}\tp{m}} = f\tp{m} > 0, \quad \forall m \in \bar \Omega, 0 \le m < m^*,
    $$
    and 
    $$
    \log\frac{\bar\mu_{\beta}\tp{m + 2}}{\bar\mu_{\beta}\tp{m}} = f\tp{m} \le 0, \quad \forall m \in \bar \Omega, m^* \le m < n.
    $$
    It follows that \(\set{\bar\mu_{\beta}\tp{m}}_{m \in \bar\Omega, m \ge 0}\) is either increasing, or decreasing, or first increasing and then decreasing.

    Then we bound the global supremum of $f(m)$ for all $\beta < 1$ and $m \in [0, n)$:
    $$
    f(m) < \frac{2}{n} (m + 1) + \log \frac{n - m}{n + m + 2}.
    $$
    The right-hand side is a strictly concave function of $m$ that
    attains its maximum when its derivative is zero, which occurs at $(n - m)(n + m + 2) = n(n + 1)$, yielding $m_{\max} = \sqrt{n + 1} - 1$. Substituting $m_{\max}$ back gives:
    $$
    \sup_{m} f(m) < \frac{2 \sqrt{n + 1}}{n} + \log \frac{\sqrt{n + 1} - 1}{\sqrt{n + 1} + 1} = \frac{2 \sqrt{n + 1}}{n} + \log\left(1 - \frac{2}{\sqrt{n + 1} + 1}\right).
    $$
    Applying $\log(1 - x) < -x$, we obtain a remarkable uniform bound:
    $$
    \sup_{m} f(m) < \frac{2 \sqrt{n + 1}}{n} - \frac{2}{\sqrt{n + 1} + 1} = \frac{2 \sqrt{n + 1}(\sqrt{n + 1} + 1) - 2 n}{n(\sqrt{n + 1} + 1)} = \frac{2}{n}.
    $$
    Consequently, $\frac{\bar{\mu}_{\beta}(m + 2)}{\bar{\mu}_{\beta}(m)} < \exp\left(\frac{2}{n}\right)$. By a telescoping product, if $n$ is even, for any even $m \ge 0$:
    $$
    \frac{\bar{\mu}_{\beta}(m)}{\bar{\mu}_{\beta}(0)} = \prod_{k=0}^{\frac{m}{2}-1} \frac{\bar{\mu}_{\beta}(2k + 2)}{\bar{\mu}_{\beta}(2k)} < \tp{ \exp\left(\frac{2}{n}\right) }^{\frac{m}{2}} = \exp\left(\frac{m}{n}\right) \le e.
    $$
    An identical argument holds for odd $n$ starting from $m = 1$, yielding $\bar{\mu}_{\beta}(m) < e \bar{\mu}_{\beta}(1)$.

    \paragraph{Case 3: $\beta \ge 1$.}
    Here, $f(0) = \frac{2 \beta}{n} - \log\left(1 + \frac{2}{n}\right) \ge \frac{2}{n} - \log\left(1 + \frac{2}{n}\right) > 0$. Since $\lim_{m \to n^-} f(m) = -\infty$ and $f(m)$ is concave, $f(m)$ has exactly one root $m_0 \in (0, n)$. Therefore, $f(m) > 0$ for $m < m_0$ and $f(m) < 0$ for $m > m_0$. This implies the sequence first increases and then decreases, attaining its maximum at some $m^* \ge m_0$.

    To establish the precise lower bound for $m^*$, we evaluate $f'(m)$ at $x = n \sqrt{1 - \frac{1}{\beta}}$. Recall that
    $$
    f'(x) = \frac{2 \beta}{n} - \frac{2 n + 2}{n^2 - x^2 + 2 n - 2 x}.
    $$
    We have $f'(x) > 0$ if and only if $\beta(n^2 - x^2 + 2 n - 2 x) > n(n + 1)$. Substituting $n^2 - x^2 = \frac{n^2}{\beta}$, this is equivalent to
    $$
    n^2 + 2 \beta(n - x) > n^2 + n \iff 2 \beta \left( n - n \sqrt{1 - \frac{1}{\beta}} \right) > n \iff 1 - \sqrt{1 - \frac{1}{\beta}} > \frac{1}{2 \beta}.
    $$
    Rearranging gives $1 - \frac{1}{2 \beta} > \sqrt{1 - \frac{1}{\beta}}$. Squaring both positive sides yields $1 - \frac{1}{\beta} + \frac{1}{4 \beta^2} > 1 - \frac{1}{\beta}$, which holds for all $\beta > 0$.

    Thus, $f'\left(n \sqrt{1 - \frac{1}{\beta}}\right) > 0$. However, since $f(m)$ transitions from positive to negative at $m_0$, it must be decreasing at $m_0$, meaning $f'(m_0) < 0$. Because $f'(m)$ is strictly decreasing, it necessarily follows that $n \sqrt{1 - \frac{1}{\beta}} < m_0 \le m^*$.
\end{proof}

\subsection{Omitted Proofs in \zcref{subsection:mean-field-Ising}}

\label{subsection:appendix-proof-of-mean-field-Ising-results}

\begin{proof}[Proof of \zcref{prop:capacity-lower-bound-for-projected-chain-of-mean-field-Ising}]
    By \zcref{prop:mean-field-Ising-is-a-function-of-magnetization} and \zcref{prop:number-of-configurations-with-a-given-magnetization-for-mean-field-Ising}, for any \(s \in \stp{0, 1}\) and \(\set{m, m + 2} \in \bar{E} \),
    \[
    \begin{aligned}
        \bar{c}_s\tp{m, m + 2} &= \sum_{\substack{\sigma \in M^{-1}\tp{m} \\ \sigma' \in M^{-1}\tp{m + 2}}} c_s\tp{\sigma, \sigma'} \\
        &= \sum_{\sigma \in M^{-1}\tp{m}} \sum_{i: \sigma_i = -1} c_s\tp{\sigma, \sigma^{\oplus i}} \\
        &= \sum_{\sigma \in M^{-1}\tp{m}} \sum_{i: \sigma_i = -1} \frac{1}{n} \frac{\pi_s\tp{\sigma} \pi_s\tp{\sigma^{\oplus i}}}{\pi_s\tp{\sigma} + \pi_s\tp{\sigma^{\oplus i}}} \\
        &= \sum_{\sigma \in M^{-1}\tp{m}} \sum_{i: \sigma_i = -1} \frac{1}{n} \frac{\frac{\bar{\pi}_s\tp{m}}{\abs{M^{-1}\tp{m}}} \frac{\bar{\pi}_s\tp{m + 2}}{\abs{M^{-1}\tp{m + 2}}}}{\frac{\bar{\pi}_s\tp{m}}{\abs{M^{-1}\tp{m}}} + \frac{\bar{\pi}_s\tp{m + 2}}{\abs{M^{-1}\tp{m + 2}}}} \\
        &= \abs{M^{-1}\tp{m}} \frac{n - m}{2} \frac{1}{n} \frac{\frac{\bar{\pi}_s\tp{m}}{\abs{M^{-1}\tp{m}}} \frac{\bar{\pi}_s\tp{m + 2}}{\abs{M^{-1}\tp{m + 2}}}}{\frac{\bar{\pi}_s\tp{m}}{\abs{M^{-1}\tp{m}}} + \frac{\bar{\pi}_s\tp{m + 2}}{\abs{M^{-1}\tp{m + 2}}}}  \\
        &= \frac{n - m}{2 n} \frac{\bar{\pi}_s\tp{m} \bar{\pi}_s\tp{m + 2}}{\frac{\abs{M^{-1}\tp{m + 2}}}{M^{-1}\tp{m}} \bar{\pi}_s\tp{m} + \bar{\pi}_s\tp{m + 2}}. \\
    \end{aligned}
    \]
    Since
    \[
    \frac{\abs{M^{-1}\tp{m + 2}}}{M^{-1}\tp{m}} = \frac{\binom{n}{\frac{n + m}{2} + 1}}{\binom{n}{\frac{n + m}{2}}} = \frac{n - m}{n + m + 2},
    \]
    and \(-n \le m \le n - 2\), we have
    \[
    \begin{aligned}
        \bar{c}_s\tp{m, m + 2} &= \frac{n - m}{2 n} \frac{\bar{\pi}_s\tp{m} \bar{\pi}_s\tp{m + 2}}{\frac{n - m}{n + m + 2} \bar{\pi}_s\tp{m} + \bar{\pi}_s\tp{m + 2}} \\
        &\ge \frac{n - m}{2 n} \frac{\bar{\pi}_s\tp{m} \bar{\pi}_s\tp{m + 2}}{\tp{\frac{n - m}{n + m + 2} + 1} \tp{\bar{\pi}_s\tp{m} \vee \bar{\pi}_s\tp{m + 2}}} \\
        &= \frac{\tp{n - m} \tp{n + m + 2}}{4 n \tp{n + 1}} \tp{\bar{\pi}_s\tp{m} \wedge \bar{\pi}_s\tp{m + 2}} \\
        &\ge \frac{1}{2 n} \tp{\bar{\pi}_s\tp{m} \wedge \bar{\pi}_s\tp{m + 2}}. \\
    \end{aligned}
    \]
\end{proof}

\begin{proof}[Proof of \zcref{lemma:bound-on-derivative-of-log-projected-measure-for-mean-field-Ising}]
    Recall that the projected measure $\bar{\bar{\pi}}_s$ can be explicitly written as
    \[
    \bar{\bar{\pi}}_s\tp{m} = \frac{1}{Z_s} W\tp{m} \exp\tp{\frac{\beta\tp{s}}{2 n} m^2},
    \]
    where $W\tp{m} \defeq r\tp{m} \binom{n}{\frac{n + m}{2}}$ is the number of configurations with absolute magnetization $m$, and $Z_s$ is the normalization constant of \(\pi_s\).

    Taking the natural logarithm of both sides and differentiating with respect to $s$, we have
    \begin{align*}
        \partial_s \log \bar{\bar{\pi}}_s\tp{m} &= \partial_s\tp{\log W\tp{m} + \frac{\beta\tp{s}}{2 n} m^2 - \log Z_s} \\
        &= \frac{\beta'\tp{s}}{2 n} m^2 - \frac{\partial_s Z_s}{Z_s}.
    \end{align*}
    Note that the derivative of the partition function $Z_s$ is given by
    \[
    \partial_s Z_s = \sum_{m' \in \bar{\bar{\Omega}}} W\tp{m'} \exp\tp{\frac{\beta\tp{s}}{2 n} \tp{m'}^2} \frac{\beta'\tp{s}}{2 n} \tp{m'}^2.
    \]
    Dividing by $Z_s$, 
    \[
    \frac{\partial_s Z_s}{Z_s} = \sum_{m' \in \bar{\bar{\Omega}}} \bar{\bar{\pi}}_s\tp{m'} \frac{\beta'\tp{s}}{2 n} \tp{m'}^2 = \frac{\beta'\tp{s}}{2 n} \bb E_{m' \sim \bar{\bar{\pi}}_s}\stp{\tp{m'}^2}.
    \]
    Substituting this back into the derivative equation yields the first desired identity:
    \[
    \partial_s \log \bar{\bar{\pi}}_s\tp{m} = \frac{\beta'\tp{s}}{2 n} \tp{m^2 - \bb E_{m \sim \bar{\bar{\pi}}_s}\stp{m^2}}.
    \]

    For the second part, 
    \[
    \begin{aligned}
        \norm{\partial_s \log \bar{\bar{\pi}}_s}_{L^2\tp{\bar{\bar{\pi}}_s}}^2 &= \bb E_{m \sim \bar{\bar{\pi}}_s}\stp{\tp{\partial_s \log \bar{\bar{\pi}}_s\tp{m}}^2} \\
        &= \tp{\frac{\beta'\tp{s}}{2 n}}^2 \bb E_{m \sim \bar{\bar{\pi}}_s}\stp{\tp{m^2 - \bb E_{m \sim \bar{\bar{\pi}}_s}\stp{m^2}}^2} \\
        &= \tp{\frac{\beta'\tp{s}}{2 n}}^2 \*{Var}_{m \sim \bar{\bar{\pi}}_s}\stp{m^2}.
    \end{aligned}
    \]
    Note that for any random variable $X$ supported in a bounded interval $[a, b]$, 
    \[
        \*{Var}\stp{X} = \*{Var}\stp{X - \frac{a+b}{2}} \le \bb E\stp{\tp{X - \frac{a+b}{2}}^2} \le \tp{\frac{b-a}{2}}^2.
    \]
    Since $m \in \bar{\bar{\Omega}} \subseteq [0, n]$, the random variable $X = m^2$ is strictly bounded in the interval $[0, n^2]$. Applying the above inequality yields
    \[
    \*{Var}_{m \sim \bar{\bar{\pi}}_s}\stp{m^2} \le \frac{\tp{n^2 - 0}^2}{4} = \frac{n^4}{4}.
    \]
    Therefore, we conclude that
    \[
    \norm{\partial_s \log \bar{\bar{\pi}}_s}_{L^2\tp{\bar{\bar{\pi}}_s}}^2 \le \tp{\frac{\beta'\tp{s}}{2 n}}^2 \frac{n^4}{4} = \frac{\tp{\beta'\tp{s}}^2}{4 n^2} \frac{n^4}{4} = \frac{1}{16} \tp{\beta'\tp{s}}^2 n^2,
    \]
    which completes the proof.
\end{proof}

\begin{proof}[Proof of \zcref{lemma:local-stability-Ising}]
    For any edge $e = \set{\sigma, \sigma^{\oplus i}}$, the transition rate
    \[
    p_s\tp{\sigma, \sigma^{\oplus i}} = P_s\tp{\sigma, \sigma^{\oplus i}} = \frac{1}{n}\cdot \frac{\pi_s\tp{\sigma^{\oplus i}}}{\pi_s\tp{\sigma} + \pi_s\tp{\sigma^{\oplus i}}} = \frac{1}{n}\cdot \frac{1}{1 + \exp\tp{\frac{\beta s}{2 n} \Delta_i\tp{\sigma}}},
    \]
    where $\Delta_i\tp{\sigma} \defeq M\tp{\sigma}^2 - M\tp{\sigma^{\oplus i}}^2$. Taking the derivative of its logarithm with respect to $s$, we obtain
    \[
    \partial_s \log p_s\tp{\sigma, \sigma^{\oplus i}} = -\frac{\frac{\beta}{2 n} \Delta_i\tp{\sigma} \exp\tp{\frac{\beta s}{2 n} \Delta_i\tp{\sigma}}}{1 + \exp\tp{\frac{\beta s}{2 n} \Delta_i\tp{\sigma}}} = - \frac{\beta}{2} \Delta_i\tp{\sigma} P_s\tp{\sigma^{\oplus i}, \sigma}.
    \]
    Since $M\tp{\sigma^{\oplus i}} = M\tp{\sigma} - 2\sigma_i$, we have $\abs{\Delta_i\tp{\sigma}} = \abs{4 \sigma_i M\tp{\sigma} - 4} \le 4n$. Because $P_s\tp{\sigma^{\oplus i}, \sigma} \le \frac{1}{n}$, we uniformly bound the derivative by
    \[
    \abs{\partial_s \log p_s\tp{\sigma, \sigma^{\oplus i}}} \le \frac{\beta}{2} \cdot 4 n \cdot \frac{1}{n} = 2 \beta.
    \]
    By the mean value theorem, $\abs{\log p_{s'} - \log p_s} \le 2 \beta \abs{s' - s}$. Exponentiating this yields
    \[
    \exp\tp{-2 \beta \abs{s' - s}} \le \frac{p_{s'}\tp{\sigma, \sigma^{\oplus i}}}{p_s\tp{\sigma, \sigma^{\oplus i}}} \le \exp\tp{2 \beta \abs{s' - s}},
    \]
    which implies $\abs{ \frac{p_{s'}\tp{\sigma, \sigma^{\oplus i}}}{p_s\tp{\sigma, \sigma^{\oplus i}}} - 1 } \le \exp\tp{2 \beta \abs{s' - s}} - 1$.
    
    For the second part, observe that for $x \in \stp{0, \ln 2}$, we have $e^x - 1 \le 2x$. Given $\eps \le 1$ and $T \ge 1$, choosing $\eta \le \frac{\eps}{24 \beta T}$ ensures $2 \beta \eta \le \frac{\eps}{12 T} \le \ln 2$. Thus, $\exp\tp{2 \beta \eta} - 1 \le 4 \beta \eta \le \frac{\eps}{6 T}$, fulfilling the local stability condition.
\end{proof}

\subsection{Omitted Proofs in \zcref{subsection:mean-field-Potts}}

\label{subsection:appendix-proof-of-mean-field-Potts-results}

\begin{proof}[Proof of \zcref{prop:capacity-lower-bound-for-projected-chain-of-mean-field-Potts}]
    By the definition of the induced capacity for the projected chain, for any \(\set{\*m, \*m'} \in \bar{E}\), we have
    \begin{equation*}
        \bar{c}_s\tp{\*m, \*m'} = \sum_{\substack{\sigma \in \Omega: \\ \*M\tp{\sigma} = \*m}} \sum_{\substack{\sigma' \in \Omega: \\ \*M\tp{\sigma'} = \*m' \\ \set{\sigma, \sigma'} \in E}} c_{\pi_s}\tp{\sigma, \sigma'}.
    \end{equation*}
    Substituting the expression for the edge capacity \(c_{\pi_s}\tp{\sigma, \sigma'}\) of the \(\tp{q - 1}\)-block Glauber dynamics, and using the relation \(\pi_s\tp{\sigma} = \frac{\bar{\pi}_s\tp{\*M\tp{\sigma}}}{\binom{n}{\*M\tp{\sigma}}}\), we get
    \begin{equation*}
        \begin{aligned}
            \bar{c}_s\tp{\*m, \*m'} &= \sum_{\sigma: \*M\tp{\sigma} = \*m} \sum_{\substack{\sigma': \*M\tp{\sigma'} = \*m' \\ \set{\sigma, \sigma'} \in E}} \sum_{I \in \+I\tp{\sigma, \sigma'}} \frac{1}{\binom{n}{q - 1}} \frac{\pi_s\tp{\sigma} \pi_s\tp{\sigma'}}{\sum_{\*a \in \stp{q}^{q - 1}} \pi_s\tp{\sigma^{I \leftarrow \*a}}} \\
            &= \frac{1}{\binom{n}{\*m}} \sum_{\sigma: \*M\tp{\sigma} = \*m} \sum_{\substack{\sigma': \*M\tp{\sigma'} = \*m' \\ \set{\sigma, \sigma'} \in E}} \sum_{I \in \+I\tp{\sigma, \sigma'}} \frac{1}{\binom{n}{q - 1}} \frac{\bar{\pi}_s\tp{\*m} \bar{\pi}_s\tp{\*m'}}{\sum_{\*a \in \stp{q}^{q - 1}} \tp{\frac{\binom{n}{\*m'}}{\binom{n}{\*M\tp{\sigma^{I \leftarrow \*a}}}}} \bar{\pi}_s\tp{\*M\tp{\sigma^{I \leftarrow \*a}}}}.
        \end{aligned}
    \end{equation*}
    To extract a lower bound, we can bound the terms in the denominator uniformly. First, we consider the ratio of the combinatorial numbers. For some fixed $I$ and $\*a$, let \(\*m'' \defeq \*M\tp{\sigma^{I \leftarrow \*a}}\). Since \(\sigma'\) and \(\sigma^{I \leftarrow \*a}\) share the same \(n - q + 1\) spins outside the block \(I\), their magnetization vectors \(\*m'\) and \(\*m''\) can be constructed by adding \(q - 1\) items to \(\*m_{\tau}\), where $\tau$ is the shared $n-q+1$ spins and $\*m_\tau$ is the corresponding magnetization vector. Let $\*y=\*m'-\*m_{\tau}$ and $\*x = \*m'' - \*m_{\tau}$. By expanding the multinomial coefficients, we can deduce that the ratio \(\frac{\binom{n}{\*m'}}{\binom{n}{\*m''}}\) is bounded by \(n^{q - 1}\):
    \begin{align*}
        \frac{\binom{n}{\*m'}}{\binom{n}{\*m''}} &= \frac{\frac{n!}{\prod_{a = 1}^q m'_a!}}{\frac{n!}{\prod_{a = 1}^q m''_a!}} = \frac{\prod_{a = 1}^q m''_a!}{\prod_{a = 1}^q m'_a!} \\
        &= \frac{\prod_{a = 1}^q \tp{m_{\tau, a} + x_a}!}{\prod_{a = 1}^q \tp{m_{\tau, a} + y_a}!}= \frac{\prod_{a = 1}^q \prod_{k = 1}^{x_a} \tp{m_{\tau, a} + k}}{\prod_{a = 1}^q \prod_{j = 1}^{y_a} \tp{m_{\tau, a} + j}}.
    \end{align*}
    Note that $\sum_{a = 1}^q x_a = q - 1$, therefore we have $ \prod_{a = 1}^q \prod_{k = 1}^{x_a} \tp{m_{\tau, a} + k}\le n^{q - 1}$ and consequently $\frac{\binom{n}{\*m'}}{\binom{n}{\*m''}} \le n^{q - 1}$.

    Next, we bound the sum of the projected probabilities. Notice that multiple microscopic color assignments \(\*c \in \stp{q}^{q - 1}\) can map to the same magnetization vector. The number of permutations for a magnetization vector increment \(\*x\) with \(\sum x_a = q - 1\) is given by the multinomial coefficient \(\binom{q - 1}{\*x}\), which is bounded by \(\tp{q - 1}!\). Hence,
    \begin{equation*}
        \sum_{\*a \in \stp{q}^{q - 1}} \bar{\pi}_s\tp{\*M\tp{\sigma^{I \leftarrow \*a}}} \le \tp{q - 1}! \sum_{\*x} \bar{\pi}_s\tp{\*m_{\tau} + \*x} \le \tp{q - 1}! \cdot 1 = \tp{q - 1}!.
    \end{equation*}
    Substituting these two upper bounds into the denominator, and using the bound \(\binom{n}{q - 1} \le \frac{n^{q - 1}}{\tp{q - 1}!}\), we obtain
    \begin{equation*}
        \begin{aligned}
            \bar{c}_s\tp{\*m, \*m'} &\ge \frac{1}{\binom{n}{\*m}} \sum_{\sigma: \*M\tp{\sigma} = \*m} \sum_{\substack{\sigma': \*M\tp{\sigma'} = \*m' \\ \set{\sigma, \sigma'} \in E}} \sum_{I \in \+I\tp{\sigma, \sigma'}} \frac{\tp{q - 1}!}{n^{q - 1}} \cdot \frac{\bar{\pi}_s\tp{\*m} \bar{\pi}_s\tp{\*m'}}{n^{q - 1} \cdot \tp{q - 1}!} \\
            &= \frac{1}{\binom{n}{\*m}} \sum_{\sigma: \*M\tp{\sigma} = \*m} \sum_{\substack{\sigma': \*M\tp{\sigma'} = \*m' \\ \set{\sigma, \sigma'} \in E}} \sum_{I \in \+I\tp{\sigma, \sigma'}} \frac{1}{n^{2 q - 2}} \bar{\pi}_s\tp{\*m} \bar{\pi}_s\tp{\*m'}.
        \end{aligned}
    \end{equation*}
    Finally, since \(\set{\*m, \*m'} \in \bar{E}\), for any configuration \(\sigma\) with magnetization vector \(\*m\), there exists at least one configuration \(\sigma'\) with magnetization vector \(\*m'\) and a block \(I\) such that \(\sigma\) and \(\sigma'\) differ only on \(I\). Since there are exactly \(\binom{n}{\*m}\) such \(\sigma\), the sum over \(\sigma\) cancels the prefactor \(\frac{1}{\binom{n}{\*m}}\), yielding
    \begin{equation*}
        \bar{c}_s\tp{\*m, \*m'} \ge \frac{1}{n^{2 q - 2}} \bar{\pi}_s\tp{\*m} \bar{\pi}_s\tp{\*m'}.
    \end{equation*}
\end{proof}

Before proving \zcref{lemma:unimodality-within-one-sector-for-mean-field-Potts}, we first introduce the following two lemmas.
\begin{lemma}
    Consider the mean-field Potts model \(\mu_{\beta}\) with inverse temperature \(\beta\) on \(\Omega = \stp{q}^n\). Let \(\bar{\mu}_{\beta} \defeq \*M_{\# \mu_{\beta}}\) denote the projected distribution on \(\bar{\Omega}\). Then the distribution of \(\bar{\mu}_{\beta}\), conditioned on that all but two components \(m_a\) and \(m_{a'}\) of \(\*m\) are fixed, reduced to the projected mean-field Ising model \(\bar{\mu}_{\beta_{\text{Ising}}}\) with \(m_a + m_{a'}\) spins and inverse temperature \(\beta_{\text{Ising}} = \frac{m_a + m_{a'}}{n} \beta\).
    \label{lemma:Potts-as-Ising}
\end{lemma}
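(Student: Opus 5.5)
The plan is a direct computation from the explicit form of $\bar{\mu}_\beta$. By \zcref{prop:mean-field-Potts-is-a-function-of-magnetization} and \zcref{prop:number-of-configurations-with-a-given-magnetization-for-mean-field-Potts},
\[
\bar{\mu}_\beta(\*m) \propto \binom{n}{m_1, \cdots, m_q} \exp\tp{\frac{\beta}{n} \sum_{b \in \stp{q}} m_b^2}.
\]
Fix all components except $m_a$ and $m_{a'}$, and set $k \defeq m_a + m_{a'}$, which is then also fixed.

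First step: factor out everything that depends only on the fixed components. The multinomial coefficient equals
\[
\frac{n!}{\prod_{b \ne a, a'} m_b! \cdot k!} \cdot \binom{k}{m_a},
\]
so the conditional law is proportional to $\binom{k}{m_a} \exp\tp{\frac{\beta}{n}(m_a^2 + m_{a'}^2)}$.

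Second step: reparametrize by the two-color "magnetization" $m \defeq m_a - m_{a'}$, which ranges over $\set{-k, -k+2, \dots, k}$. Then $m_a = (k+m)/2$ and $m_{a'} = (k-m)/2$, so
\[
m_a^2 + m_{a'}^2 = \frac{k^2 + m^2}{2}.
\]
The term $k^2/2$ is constant and is absorbed into the normalization. The conditional law is therefore proportional to
\[
\binom{k}{\frac{k+m}{2}} \exp\tp{\frac{\beta}{2n} m^2}.
\]

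Third step: compare with the projected Ising law on $k$ spins. By \zcref{prop:mean-field-Ising-is-a-function-of-magnetization} and \zcref{prop:number-of-configurations-with-a-given-magnetization-for-mean-field-Ising}, that law is proportional to $\binom{k}{\frac{k+m}{2}} \exp\tp{\frac{\beta_{\text{Ising}}}{2k} m^2}$. Matching exponents requires
\[
\frac{\beta_{\text{Ising}}}{2k} = \frac{\beta}{2n}, \qquad \text{that is,} \qquad \beta_{\text{Ising}} = \frac{k}{n}\beta,
\]
which is exactly the claimed value.

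The only step needing any care is the bookkeeping of the multinomial factorization and the identification $m_a \leftrightarrow +1$, $m_{a'} \leftrightarrow -1$. I do not expect a genuine obstacle. The edge case $k = 0$ is trivial: the conditional law is then a point mass, and the statement holds vacuously.
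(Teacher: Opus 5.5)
Your proposal is correct and follows essentially the same route as the paper's proof: you absorb the fixed components of the multinomial, reparametrize by $m = m_a - m_{a'}$ so that $m_a^2 + m_{a'}^2 = \frac{k^2 + m^2}{2}$, and match the exponent to obtain $\beta_{\text{Ising}} = \frac{k}{n}\beta$. Your factorization $\binom{k}{m_a}$ is the same step as the paper's multiplying and dividing by $N!$. Your remark on the trivial case $k = 0$ is a small addition that the paper does not make.
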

\begin{proof}
    Let $N \defeq m_a + m_{a'}$ be the fixed total number of spins for colors $a$ and $a'$. We define the equivalent Ising magnetization as the difference between these two color components:
    \[
    m \defeq m_a - m_{a'}.
    \]
    This immediately implies $m_a = \frac{N + m}{2}$ and $m_{a'} = \frac{N - m}{2}$.
    
    Recall that the projected mean-field Potts model is given by
    \[
    \bar{\mu}_{\beta}\tp{\*m} \propto \frac{n!}{\prod_{c = 1}^q m_c!} \exp\tp{\frac{\beta}{n} \sum_{c = 1}^q m_c^2}.
    \]
    When we condition on all components $m_c$ being fixed for $c \notin \set{a, a'}$, all terms depending only on these fixed components can be absorbed into the normalization constant. Therefore, the conditional distribution over $m$ is proportional to
    \[
    \bar{\mu}_{\beta}\tp{m \mid m_c \text{ fixed for } c \notin \set{a, a'}} \propto \frac{1}{m_a! m_{a'}!} \exp\tp{\frac{\beta}{n} \tp{m_a^2 + m_{a'}^2}}.
    \]
    
    We analyze the two factors separately. For the combinatorial factor, we multiply and divide by the constant $N!$:
    \[
    \frac{1}{m_a! m_{a'}!} = \frac{1}{\tp{\frac{N + m}{2}}! \tp{\frac{N - m}{2}}!} \propto \frac{N!}{\tp{\frac{N + m}{2}}! \tp{\frac{N - m}{2}}!} = \binom{N}{\frac{N + m}{2}}.
    \]
    
    For the exponential factor, we expand the squares of $m_a$ and $m_{a'}$:
    \[
    m_a^2 + m_{a'}^2 = \tp{\frac{N + m}{2}}^2 + \tp{\frac{N - m}{2}}^2 = \frac{N^2 + m^2}{2}.
    \]
    Dropping the constant term $\exp\tp{\frac{\beta N^2}{2 n}}$, the exponential factor becomes
    \[
    \exp\tp{\frac{\beta}{2 n} m^2}.
    \]
    
    Combining both factors, the conditional distribution is
    \[
    \bar{\mu}_{\beta}\tp{m \mid \set{m_c|c\neq a,c\neq a'}} \propto \binom{N}{\frac{N + m}{2}} \exp\tp{\frac{\beta}{2 n} m^2}.
    \]
    
    Letting $\beta_{\text{Ising}} \defeq \frac{N}{n} \beta = \frac{m_a + m_{a'}}{n} \beta$, we can rewrite the exponent as
    \[
    \frac{\beta}{2 n} m^2 = \frac{\beta_{\text{Ising}}}{2 N} m^2.
    \]
    Thus, the conditional distribution precisely matches the definition of the projected mean-field Ising model with $N$ spins and inverse temperature $\beta_{\text{Ising}}$:
    \[
    \bar{\mu}_{\beta}\tp{m \mid \set{m_c|c\neq a,c\neq a'}} \propto \binom{N}{\frac{N + m}{2}} \exp\tp{\frac{\beta_{\text{Ising}}}{2 N} m^2}.
    \]
\end{proof}

\begin{lemma}
    Consider the mean-field Potts model \(\mu_{\beta}\) with inverse temperature \(\beta \ge \frac{q}{2}\) on \(\Omega = \stp{q}^n\) (\(q \ge 2\)). Let \(\bar{\mu}_{\beta} \defeq \*M_{\# \mu_{\beta}}\) denote the projected distribution on \(\bar{\Omega}\). Define
    \[
    p_m \defeq \bar{\mu}_{\beta}\tp{n - \tp{q - 1} m, m, \cdots, m}.
    \]
    Then the sequence \(\set{p_m}_{0 \le m \le n / q}\) is either increasing, or decreasing, or first increasing and then decreasing.
    \label{lemma:Potts-landscape}
\end{lemma}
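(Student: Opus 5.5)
We will follow the template of the proof of \zcref{lemma:landscape-of-projected-mean-field-Ising}: study the log-ratio of consecutive terms and show it changes sign at most once, from positive to negative. Write $p_m \propto \binom{n}{n-(q-1)m,\, m,\dots,m}\exp\tp{\frac{\beta}{n}\tp{(n-(q-1)m)^2+(q-1)m^2}}$ and define, for $0\le m< n/q$ with $m+1\le n/q$,
\[
g(m) \defeq \log\frac{p_{m+1}}{p_m}.
\]
The energy part contributes
\[
\frac{\beta}{n}\tp{(q-1)^2(2m+1) - 2(q-1)n + (q-1)(2m+1)} = \frac{\beta(q-1)}{n}\tp{q(2m+1)-2n},
\]
which is affine and increasing in $m$. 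The combinatorial part contributes
\[
\log\frac{(n-(q-1)m)!\,(m!)^{q-1}}{(n-(q-1)m-(q-1))!\,((m+1)!)^{q-1}} = \sum_{j=0}^{q-2}\log\tp{n-(q-1)m-j} - (q-1)\log(m+1).
\]
So $g$ is an explicit function of $m$, extendable to real $m$ on the relevant interval.

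The key steps, in order, are as follows. First, show that $g$ has at most one sign change on $[0, n/q - 1]$, and that any change goes from $+$ to $-$. The natural route is concavity, as in the Ising case. However, $-(q-1)\log(m+1)$ is convex, so $g$ itself is not concave, and I would not rely on that. Instead I would change variables: study $h(m)\defeq g(m)\cdot$ (nothing) via the sign of $g'$ at zeros of $g$. Equivalently, I would show that whenever $g(m)=0$ we have $g'(m)<0$, which suffices for a single $+\to-$ crossing. We compute
\[
g'(m) = \frac{2\beta q(q-1)}{n} - \sum_{j=0}^{q-2}\frac{q-1}{n-(q-1)m-j} - \frac{q-1}{m+1}.
\]
Second, as an alternative that might be cleaner, approximate the ratio by the continuum free energy. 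Set $x = m/n \in [0,1/q]$ and consider $\phi(x) = F_\beta(1-(q-1)x, x,\dots,x)$. Then $\phi'(x)\propto -\beta q (1-qx) \cdot(\text{const}) + \log\frac{1-(q-1)x}{x}$, up to signs. The zeros of $g$ correspond to critical points of $\phi$ along this ray. I would check that for $\beta\ge q/2$ the ray has exactly one interior critical point or none: the disordered point $x=1/q$ has $\phi''\ge 0$ precisely when $\beta\le q/2$, so for $\beta\ge q/2$ it is a local maximum of $-\phi$, or degenerate. Writing $u = 1-qx$, the critical-point equation becomes $\beta q u = \log\frac{1+(q-1)u}{1-u}$. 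The right side is convex in $u$ on $[0,1)$ with slope $q$ at $0$. Since $\beta q\ge q^2/2\ge q$, hmm — it is cleaner to note that $\frac{d}{du}\log\frac{1+(q-1)u}{1-u}$ has a unique minimum in $u$, so the line meets the curve at most once in $(0,1)$ beyond $u=0$.

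Third, transfer the continuum statement to the discrete $g$ with exact finite-$n$ terms. The $\log$ of falling factorials is handled by noting that $\sum_j \log(n-(q-1)m-j)$ is concave in $m$. The obstacle is the convex term $-(q-1)\log(m+1)$. I would handle it through the substitution above: express the condition $g=0$ as $\beta\cdot(\text{affine}) = \log(\text{ratio})$ and show that the ratio's log, as a function of $m$, is convex-then-concave or otherwise crosses an increasing affine function at most once in the needed direction. Sign bookkeeping matters here, since $m$ increasing corresponds to moving toward the disordered point.

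I expect this discrete single-crossing step to be the main obstacle, especially near $\beta=q/2$ exactly and for small $m$ where the discrete corrections matter. If the direct argument fails, I would fall back on a case split. For small $m$, $g(m)\to+\infty$ as $m\to 0$ is false, since $g(0)$ is finite. So instead I would use $g(0)$'s sign together with monotonicity of $g$ on the region where $g'$ has a definite sign, exactly as in Case 2 of the Ising landscape lemma.
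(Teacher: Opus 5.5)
Your setup is correct: $g(m)=\log(p_{m+1}/p_m)$ and your formula for $g'$ are right. In the paper's notation, $g(m)=-f(m+1)$ with $f(m)=\log(p_{m-1}/p_m)$. But the proposal stops exactly where the proof lies.

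\begin{itemize}
\item The single $+\to-$ crossing of $g$ is stated as a goal (``whenever $g(m)=0$ we have $g'(m)<0$'') and never established.
\item The continuum route only concerns the limiting free energy along the ray.
\item You explicitly leave the finite-$n$ transfer open.
\end{itemize}

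A symptom of this is that the hypothesis $\beta\ge q/2$ is never used at finite $n$. The continuum sketch also has slips. The map $u\mapsto\log\frac{1+(q-1)u}{1-u}$ has second derivative $1-(q-1)^2<0$ at $u=0$, so it is not convex for $q\ge 3$. Also, for this model the critical-point equation along the ray is $2\beta u=\log\frac{1+(q-1)u}{1-u}$, so the relevant slope condition at $u=0$ is $\beta\ge q/2$, not $\beta q\ge q$.

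\textbf{The missing idea.} Your exact finite-$n$ expression for $g'$ already has the structure you found in the continuum, so no transfer is needed. On the extended interval $(-1,\,n/q-1]$, $g'(m)$ is a constant minus a sum of terms of the form $c/(\text{positive affine function of } m)$. Hence $g'$ is concave there.

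\textbf{Left endpoint.} You should extend $m$ below $0$ rather than stop at the finite value $g(0)$. As $m\to-1^+$, the term $-(q-1)\log(m+1)$ gives $g\to+\infty$ and $g'\to-\infty$.

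\textbf{Right endpoint.} This is where $\beta\ge q/2$ enters.
\begin{itemize}
\item Using $\frac{q-1}{n/q+i}<\frac{q(q-1)}{n}$ for $i\ge 1$, you get $g'(n/q-1)>\frac{q(q-1)}{n}(2\beta-q)\ge 0$.
\item Using $\log(1+x)<x$, you get $g(n/q-1)<-\frac{q(q-1)}{n}\left(\beta-\frac{q}{2}\right)\le 0$.
\end{itemize}

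\textbf{Conclusion.} Concavity of $g'$ together with these endpoint signs forces $g'<0$ on $(-1,m_0)$ and $g'>0$ on $(m_0,n/q-1]$. So $g$ decreases from $+\infty$ and then increases to a strictly negative value. Hence $g$ has exactly one zero, at which $g'<0$, and this gives the claimed shape of $\{p_m\}$. This is precisely the paper's argument, written there for $f(m)=\log(p_{m-1}/p_m)$ with $f'$ convex on $(0,n/q]$.

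Your fallback via the sign of $g(0)$ would not avoid this step. It still needs the sign pattern of $g'$, which is exactly what concavity of $g'$ plus the endpoint estimates provide.
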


\begin{proof}
    For all \(1 \le m \le \frac{n}{q}\),
    \[
    \begin{aligned}
        \frac{p_{m - 1}}{p_m} &= \frac{\exp\tp{\frac{\beta}{n} \tp{\tp{q - 1} \tp{m - 1}^2 + \tp{n - \tp{q - 1} \tp{m - 1}}^2}} \cdot \frac{n!}{\tp{\tp{m - 1}!}^{q - 1} \tp{n - \tp{q - 1} \tp{m - 1}}!}}{\exp\tp{\frac{\beta}{n} \tp{\tp{q - 1} m^2 + \tp{n - \tp{q - 1} m}^2}} \cdot \frac{n!}{\tp{m!}^{q - 1} \tp{n - \tp{q - 1} m}!}} \\
        &= \exp\tp{\frac{\beta}{n} \tp{q - 1} \tp{2 n + q - 2 q m}} \cdot \frac{m^{q - 1}}{\prod_{i = 1}^{q - 1} \tp{n - \tp{q - 1} m + i}}. \\
    \end{aligned}
    \]
    Define
    \[
    f\tp{m} \defeq \frac{\beta}{n} \tp{q - 1} \tp{2 n + q - 2 q m} + \tp{q - 1} \log m - \sum_{i = 1}^{q - 1} \log\tp{n - \tp{q - 1} m + i}, \quad m \in \left(0, \frac{n}{q}\right].
    \]
    Then 
    \[
    \log\frac{p_{m - 1}}{p_m} = f\tp{m}.
    \]
    Through direct calculation, we obtain
    \[
    \begin{aligned}
        f\tp{\frac{n}{q}} &= \frac{\beta}{n} \tp{q - 1} q + \tp{q - 1} \log\frac{n}{q} - \sum_{i = 1}^{q - 1} \log\tp{\frac{n}{q} + i} \\
        &= \frac{\beta}{n} \tp{q - 1} q - \sum_{i = 1}^n \log\tp{1 + \frac{i q}{n}} \\
        &> \frac{\beta}{n} \tp{q - 1} q - \sum_{i = 1}^n \frac{i q}{n} \\
        &= \frac{q \tp{q - 1}}{n} \tp{\beta - \frac{q}{2}} \\
        &\ge 0, \\
    \end{aligned}
    \]
    and
    \[
    \begin{aligned}
        f'\tp{m} &= -\frac{2 \beta}{n} q \tp{q - 1} + \frac{q - 1}{m} + \sum_{i = 1}^{q - 1} \frac{\tp{q - 1}}{n - \tp{q - 1} a + i} \\
        &= -\frac{2 \beta}{n} q \tp{q - 1} + \frac{1}{m} \sum_{i = 1}^{q - 1} \frac{n + i}{n - \tp{q - 1} m + i}. \\
    \end{aligned}
    \]
    Then \(f'\) is convex, and
    \[
    \begin{aligned}
        f'\tp{\frac{n}{q}} &= -\frac{2 \beta}{n} q \tp{q - 1} + \frac{q}{n} \sum_{i = 1}^{q - 1} \frac{n + i}{\frac{n}{q} + i} \\
        &< -\frac{2 \beta}{n} q \tp{q - 1} + \frac{q}{n} \sum_{i = 1}^{q - 1} q \\
        &= -\frac{2 q \tp{q - 1}}{n} \tp{\beta - \frac{q}{2}} \\
        &\le 0. \\
    \end{aligned}
    \]
    Since \(f'\tp{m} \to +\infty\) as \(m \to 0^+\), \(f'\) has a unique zero \(m_0\) on \(\tp{0, \frac{n}{q}}\), and \(f'\tp{m} > 0\) for all \(m \in \tp{0, m_0}\), \(f'\tp{m} < 0\) for all \(m \in \left(m_0, \frac{n}{q}\right]\). Then \(f\) is increasing on \(\left(0, m_0\right]\) and decreasing on \(\stp{m_0, \frac{n}{q}}\). Since \(f\tp{m} \to -\infty\) as \(m \to 0^+\) and \(f\tp{\frac{n}{q}} > 0\), \(f\) has a unique zero \(m_1\) on \(\tp{0, m_0}\), and \(f\tp{m} > 0\) for all \(m \in \tp{0, m_1}\), \(f\tp{m} < 0\) for all \(m \in \left(m_1, \frac{n}{q}\right]\). Therefore, 
    \[
    \log\frac{p_{m - 1}}{p_m} = f\tp{m} \le 0, \quad \forall 1 \le m \le m_1,
    \]
    and
    \[
    \log\frac{p_{m - 1}}{p_m} = f\tp{m} > 0, \quad \forall m_1 < m \le \frac{n}{q}.
    \]
    It follows that \(\set{p_m}_{0 \le m \le n / q}\) is either increasing, or decreasing, or first increasing then decreasing.
\end{proof}

\begin{proof}[Proof of \zcref{lemma:unimodality-within-one-sector-for-mean-field-Potts}]
    Let \(m^*\) denote the unique maximizer of the sequence \(\set{p_m}_{0 \le m \le n / q}\) defined in \zcref{lemma:Potts-landscape}, and define
    \[
    \*m^* \defeq \tp{n - \tp{q - 1} m^*, m^*, \cdots, m^*} \in \bar{\bar{\Omega}}.
    \]
    The proof proceeds by constructing, for each \(\*m \in \bar{\bar{\Omega}}\), a path from \(\*m\) to \(\*m^*\) satisfying the required properties. The construction is given in \zcref{algo:path-construction-for-mean-field-Potts}.

    \begin{algorithm}[ht]
    \begin{algorithmic}[1]
        \Require Number of colors \(q \ge 2\), number of spins \(n \ge q\), inverse temperature \(\beta \ge \frac{q}{2}\), starting point \(\*m \in \bar{\bar{\Omega}}\).
        \State \(\bar{\mu}_{\beta} \gets\) the projected mean-field Potts model with \(n\) spins and inverse temperature \(\beta\)
        \State \(\gamma \gets \*m\)
        \For{\(a = q\) down to \(2\)} \Comment{Step 1}
            \If{\(m_a \ge \frac{n}{2 \beta}\)}
                \State \(\bar{\mu}_{\beta_{\text{Ising}}} \gets\) the projected mean-field Ising model with \(m_1 + m_a\) spins and inverse temperature \(\beta_{\text{Ising}} = \frac{m_1 + m_a}{n} \beta\)
                \State \(\wh{m}_1 - \wh{m}_a \gets\) maximizer of the sequence \(\set{\bar{\mu}_{\beta_{\text{Ising}}}\tp{m_1 - m_a}}_{m_1 - m_a \ge 0}\)
                \While{\(m_a > \tp{\wh{m}_a \vee m_{a + 1}}\)} \Comment{\(m_{q + 1} \defeq 0\)}
                    \State \(m_a \gets m_a - 1\), \(m_1 \gets m_1 + 1\)
                    \State \(\gamma \gets \tp{\gamma,\*m}\)
                \EndWhile
            \EndIf
        \EndFor
        \State \(\tp{\wh{m}_2, \cdots, \wh{m}_q} \gets \tp{\ceil{\frac{n - m_1}{q - 1}}, \cdots, \floor{\frac{n - m_1}{q - 1}}}\) \Comment{Step 2}
        \While{\(\tp{m_2, \cdots, m_q} \ne \tp{\wh{m}_2, \cdots, \wh{m}_q}\)}
            \State \(a \gets \max \set{a \in \set{2, 3, \cdots, q} \cmid m_a > \wh{m}_a}\)
            \State \(a' \gets \min \set{a \in \set{2, 3, \cdots, q} \cmid m_a < \wh{m}_a}\)
            \While{\(m_a > \wh{m}_a\) and \(m_{a'} < \wh{m}_{a'}\)}
                \State \(m_a \gets m_a - 1\), \(m_{a'} \gets m_{a'} + 1\)
                \State \(\gamma \gets \tp{\gamma, \*m}\)
            \EndWhile
        \EndWhile
        \If{\(m_2 = m_q + 1\)} \Comment{Step 3}
            \If{\(\bar{\mu}_{\beta}\tp{n - \tp{q - 1} m_q, m_q, \cdots, m_q} \ge \bar{\mu}_{\beta}\tp{n - \tp{q - 1} \tp{m_q + 1}, m_q + 1, \cdots, m_q + 1}\) or \quad \(m_q = \floor{\frac{n}{q}}\)}
                \State \(\*m \gets \tp{n - \tp{q - 1} m_q, m_q, \cdots, m_q}\)
            \Else
                \State \(\*m \gets \tp{n - \tp{q - 1} \tp{m_q + 1}, m_q + 1, \cdots, m_q + 1}\)
            \EndIf
        \EndIf
        \State \(\gamma \gets \tp{\gamma, \*m}\)
        \State \(\*m^* \gets\) maximizer of \(\set{p_m}_{0 \le m \le n / q}\) defined in \zcref{lemma:Potts-landscape} \Comment{Step 4}
        \If{\(m_q > m_q^*\)}
            \While{\(m_q > m_q^*\)}
                \State \(\*m \gets \tp{n - \tp{q - 1} \tp{m_q - 1}, m_q - 1, \cdots, m_q - 1}\)
                \State \(\gamma \gets \tp{\gamma, \*m}\)
            \EndWhile
        \Else
            \While{\(m_q < m_q^*\)}
                \State \(\*m \gets \tp{n - \tp{q - 1} \tp{m_q + 1}, m_q + 1, \cdots, m_q + 1}\)
                \State \(\gamma \gets \tp{\gamma, \*m}\)
            \EndWhile
        \EndIf
        \State \Return $\gamma$
    \end{algorithmic}
    \caption{Path construction for projected mean-field Potts model}
    \label{algo:path-construction-for-mean-field-Potts}
    \end{algorithm}

    \paragraph{Step 1.}

    For each \(a = q\) down to \(2\) such that \(m_a \ge \frac{n}{2 \beta}\), by \zcref{lemma:Potts-as-Ising}, with components of \(\*m\) other than \(m_1\) and \(m_a\) fixed, the measure \(\bar{\mu}_{\beta}\) reduces to the projected mean field Ising model \(\bar{\mu}_{\beta_{\text{Ising}}}\) with \(m_1 + m_a\) spins and inverse temperature
    \[
    \beta_{\text{Ising}} \defeq \frac{m_1 + m_a}{n} \beta \ge \frac{2 m_a}{n} \beta \ge 1.
    \]
    By \zcref{lemma:landscape-of-projected-mean-field-Ising}, when the magnetization \(m_1 - m_a \ge 0\), the maximum of the equivalent projected mena-field Ising model is attained at a point satisfying
    \[
    \wh{m}_1 - \wh{m}_a > \sqrt{1 - \frac{1}{\beta_{\text{Ising}}}} \tp{m_1 + m_a}.
    \]
    It follows that
    \[
    \begin{aligned}
        \wh{m}_a &< \tp{\frac{1}{2} - \frac{1}{2} \sqrt{1 - \frac{1}{\beta_{\text{Ising}}}}} \tp{m_1 + m_a} \\
        &= \tp{\frac{1}{2} - \frac{1}{2} \sqrt{1 - \frac{n}{\tp{m_1 + m_a} \beta}}} \tp{m_1 + m_a} \\
        &< \tp{\frac{1}{2} - \frac{1}{2} \tp{1 - \frac{n}{\tp{m_1 + m_a} \beta}}} \tp{m_1 + m_a} \\
        &= \frac{n}{2 \beta}. \\
    \end{aligned}
    \]

    From the construction in \zcref{algo:path-construction-for-mean-field-Potts}, when processing the \(a\)-th component, the components \(m_{a+1}, \cdots, m_q\) are all strictly less than \(\frac{n}{2 \beta}\). Indeed, each such component is either already below \(\frac{n}{2 \beta}\) before Step~1, or has been decreased in earlier iterations. Therefore,
    \[
    \wh{m}_a \vee m_{a + 1} < \frac{n}{2 \beta}.
    \]
    It follows that after processing the \(a\)-th component, we have \(m_a < \frac{n}{2 \beta}\). Moreover, by \zcref{lemma:landscape-of-projected-mean-field-Ising}, the sequence \(\set{\bar{\mu}_{\beta_{\text{Ising}}}\tp{m_1 - m_a}}_{m_1 - m_a \ge 0}\) is either increasing or first increasing then decreasing. Hence, the path constructed in this step is non-decreasing with respect to \(\bar{\mu}_\beta\).

    In summary, the path constructed in Step~1 lies in \(\tp{\bar{\bar{\Omega}}, \bar{\bar{E}}}\) and is non-decreasing in \(\bar{\mu}_{\beta}\). The number of edges used in Step~1 is at most 
    \[
    m_1^{\text{after Step 1}} - m_1^{\text{before Step 1}} \le \tp{1 - \frac{1}{q}} n.
    \]
    After Step~1, we have \(m_a < \frac{n}{2 \beta}\) for all \(a \in \set{2, 3, \cdots, q}\).

    \paragraph{Step 2.}

    Let \(a\) and \(a'\) be defined as in \zcref{algo:path-construction-for-mean-field-Potts}. We first show that the path constructed in Step~2 always lies in \((\bar{\bar{\Omega}}, \bar{\bar{E}})\), i.e., the ordering \(m_1 \ge m_2 \ge \cdots \ge m_q\) is preserved throughout. Initially, we have \(m_1 \ge m_2 \ge \cdots \ge m_q\) after Step~1. During each outer loop of Step~2, only \(m_a\) is decreased and only \(m_{a'}\) is increased. Thus, it suffices to verify that \(m_a \ge m_{a + 1}\) (when \(2 \le a < q\)) and \(m_{a'} \le m_{a' - 1}\) (when \(2 < a' \le q\)) at the end of each outer loop. Indeed,
    \[  
    m_a \ge \wh{m}_a \ge \wh{m}_{a + 1} \ge m_{a + 1},
    \]
    and
    \[
    m_{a'} \le \wh{m}_{a'} \le \wh{m}_{a' - 1} \le m_{a' - 1}.
    \]
    Therefore, \(\mathbf{m} \in \bar{\bar{\Omega}}\) is maintained throughout Step~2.

    Next, observe that in each outer loop, at least one of \(m_a\) or \(m_{a'}\) reaches its target value \(\wh{m}_a\) or \(\wh{m}_{a'}\). Hence, there are at most \(q - 1\) outer loops. Moreover, each inner loop uses at most \(\frac{n}{2 \beta}\) edges. Therefore, the total number of edges used in Step~2 is at most
    \[
    \frac{(q - 1) n}{2 \beta} \le \tp{1 - \frac{1}{q}} n.
    \]

    Finally, at the beginning of each outer loop, 
    \[
    m_a \ge \wh{m}_a + 1 \ge \wh{m}_{a'} \ge m_{a'} + 1.
    \]
    It follows that \(a < a'\) and \(m_a - m_{a'} \ge 0\) throughout the loop. By \zcref{lemma:Potts-as-Ising}, with components of \(\*m\) other than \(m_a\) and \(m_{a'}\) fixed, the measure \(\bar{\mu}_{\beta}\) reduces to the projected mean-field Ising model \(\bar{\mu}_{\beta_{\text{Ising}}}\) with \(m_a + m_{a'}\) spins and inverse temperature
    \[
    \beta_{\text{Ising}} = \frac{m_a + m_{a'}}{n} \beta < \frac{\frac{n}{2 \beta} + \frac{n}{2 \beta}}{n} \beta = 1.
    \]
    By \zcref{lemma:landscape-of-projected-mean-field-Ising}, the sequence \(\set{\bar{\mu}_{\beta_{\text{Ising}}}\tp{m_a - m_{a'}}}_{m_a - m_{a'} \ge 0}\) is either decreasing or first increasing then decreasing. Moreover, 
    \[
    \bar{\mu}_{\beta_{\text{Ising}}}\tp{m_a - m_{a'}} \le \begin{cases}
        e \bar{\mu}_{\beta_{\text{Ising}}}\tp{0}, & m_a + m_{a'} \equiv 0 \pmod{2} \\
        e \bar{\mu}_{\beta_{\text{Ising}}}\tp{1}, & m_a + m_{a'} \equiv 1 \pmod{2} \\
    \end{cases}, \quad \forall m_a - m_{a'} \ge 0.
    \]
    Therefore, during each outer loop, the path may decrease by at most a factor of \(e\) with respect to \(\bar{\mu}_\beta\). Since there are at most \(q-1\) outer loops, the path constructed in Step~2 is non-decreasing with respect to \(\bar{\mu}_\beta\) up to a factor of \(e^{q - 1}\).

    \paragraph{Step 3.}

    After Step~2, we have
    \[
    \ceil{\frac{n - m_1}{q - 1}} = m_2 \ge m_3 \ge \cdots \ge m_q = \floor{\frac{n - m_1}{q - 1}}.
    \]
    In Step~3, we consider the case \(m_2 = m_q + 1\), and perform at most one transition to obtain \(m_2 = m_3 = \cdots = m_q\), while ensuring that the path is non-decreasing with respect to \(\bar{\mu}_\beta\).

    For each \(0 \le l \le q - 1\), let \(u_l\) denote the mass of \(\bar{\mu}_{\beta}\) on 
    \[
    \*m^{\tp{l}} \defeq \tp{n - \tp{q - l - 1} \tp{m_q + 1} - l a, \underbrace{m_q + 1, \cdots, m_q + 1}_{q - l - 1 \text{ entries}}, \underbrace{m_q, \cdots, m_q}_{l \text{ entries}}}.
    \]
    Then
    \[
    \begin{aligned}
        \frac{u_{l - 1}}{u_l} &= \frac{\bar{\mu}_{\beta}\tp{m^{\tp{l - 1}}}}{\bar{\mu}_{\beta}\tp{m^{\tp{l}}}} \\
        &= \frac{\exp\tp{\frac{\beta}{n} \tp{\tp{m_1^{\tp{l}} - 1}^2 + \tp{q - l} \tp{m_q + 1}^2 + \tp{l - 1} m_q^2}} \cdot \frac{n!}{\tp{\tp{m_1^{\tp{l}} - 1}!} \tp{\tp{m_q + 1}!}^{q - l} \tp{m_q!}^{l - 1}}}{\exp\tp{\frac{\beta}{n} \tp{\tp{m_1^{\tp{l}}}^2 + \tp{q - l - 1} \tp{m_q + 1}^2 + l m_q^2}} \cdot \frac{n!}{\tp{m_1^{\tp{l}}!} \tp{\tp{m_q + 1}!}^{q - l - 1} \tp{m_q!}^l}} \\
        &= \exp\tp{\frac{2 \beta}{n}\tp{m_q + 1 - m_1^{\tp{l}}}} \cdot \frac{m_1^{\tp{l}}}{m_q + 1}. \\
    \end{aligned}
    \]
    Similarly, we have
    \[
    \frac{u_{l + 1}}{u_l} = \exp\tp{\frac{2 \beta}{n}\tp{m_1^{\tp{l}} - m_q}} \cdot \frac{m_q + 1}{m_1^{\tp{l}} + 1}.
    \]

    If \(m_q = \floor{\frac{n}{q}}\), for all \(0 \le l < q - 1\), 
    \[
    \begin{aligned}
        \frac{u_{l + 1}}{u_l} &= \exp\tp{\frac{2 \beta}{n} \tp{m_1^{\tp{l}} - 1}} \cdot \frac{m_q + 1}{m_1^{\tp{l}} + 1} \\
        &= \exp\tp{\frac{2 \beta}{n} \tp{m_1^{\tp{l}} - 1} - \log\tp{1 + \frac{m_1^{\tp{l}} - m_q}{m_q + 1}}} \\
        &> \exp\tp{\frac{2 \beta}{n} \tp{m_1^{\tp{l}} - 1} - \frac{m_1^{\tp{l}} - m_q}{m_q + 1}} \\
        &> \exp\tp{\frac{2 \beta}{n} \tp{m_1^{\tp{l}} - 1} - \frac{m_1^{\tp{l}} - m_q}{\frac{n}{q}}} \\
        &= \exp\tp{\frac{2 \tp{m_1^{\tp{l}} - m_q}}{n} \tp{\beta - \frac{q}{2}}} \\
        &\ge 1. \\
    \end{aligned}
    \]
    Therefore, the sequence \(\set{u_l}_{l = 0}^{q - 1}\) is non-decreasing. It follows that before the update on \(\*m\) in Step~3,
    \[
    \bar{\mu}_{\beta}\tp{\*m} \le u_{l - 1} = \bar{\mu}_{\beta}\tp{n - \tp{q - 1} m_q, m_q, \cdots, m_q}.
    \]

    If \(0 \le a \le \floor{\frac{n}{q}} - 1\), for all \(0 \le l \le q - 1\), 
    \[
    m_1^{\tp{l}} \ge n - \tp{q - 1} \lfloor\frac{n}{q}\rfloor \ge \frac{n}{q} \ge \frac{n}{2 \beta} > m_2 = m_2^{\tp{l}} \ge m_3^{\tp{l}} \ge \cdots \ge m_q^{\tp{l}}.
    \]
    Then \(m_1^{\tp{l}} \in \bar{\bar{\Omega}}\) for all \(0 \le l \le q - 1\). For all \(0 < l < q - 1\)
    \[
    \frac{u_{l - 1}}{u_l} \cdot \frac{u_{l + 1}}{u_l} = \exp\tp{\frac{2 \beta}{n}} \cdot \frac{m_1^{\tp{l}}}{m_1^{\tp{l}} + 1} \ge \exp\tp{\frac{q}{n}} \cdot \frac{\frac{n}{q}}{\frac{n}{q} + 1} = \exp\tp{\frac{q}{n} - \log\tp{1 + \frac{q}{n}}} > 1.
    \]
    Therefore, the sequence \(\set{u_l}_{l = 0}^{q - 1}\) is log-convex. It follows that before the update on \(\*m\) in Step~3,
    \[
    \begin{aligned}
        \bar{\mu}_{\beta}\tp{\*m} &\le \tp{u_0 \vee u_{q - 1}} \\
        &= \tp{\bar{\mu}_{\beta}\tp{n - \tp{q - 1} \tp{m_q + 1}, m_q + 1, \cdots, m_q + 1} \vee \bar{\mu}_{\beta}\tp{n - \tp{q - 1} m_q, m_q, \cdots, m_q}}. \\
    \end{aligned}
    \]

    In summary, according to \zcref{algo:path-construction-for-mean-field-Potts}, we add at most one edge in Step~3, and the path constructed in this step is non-decreasing with respect to \(\bar{\mu}_\beta\). Moreover, after Step~3, we have \(m_2 = m_3 = \cdots = m_q \le \floor{\frac{n}{q}}\).

    \paragraph{Step 4.}

    After Step~3, we have \(m_2 = m_3 = \cdots = m_q \le \floor{\frac{n}{q}}\). By \zcref{lemma:Potts-landscape}, the sequence \(\set{p_m}_{0 \le m \le n / q}\) defined in \zcref{lemma:Potts-landscape} is either increasing, or decreasing, or first increasing then decreasing. Since \(m^*\) is the unique maximizer of the sequence \(\set{p_m}_{0 \le m \le n / q}\), the path constructed in Step~4 is non-decreasing with respect to \(\bar{\mu}_\beta\). Moreover, the number of edges used in Step~4 is at most \(\frac{n}{q}\).

    \paragraph{Conclusion.} In total, the path constructed in \zcref{algo:path-construction-for-mean-field-Potts} uses at most \(\tp{1 - \frac{1}{q}} n + \tp{1 - \frac{1}{q}} n + 1 + \frac{n}{q} \le 2 n\) edges, and at most decrease up to a factor of \(e^{q - 1}\) with respect to \(\bar{\mu}_\beta\). This completes the proof.
\end{proof}

\begin{proof}[Proof of \zcref{lemma:flux-construction-via-path-decomposition}]
    We construct the desired flux \(J\) via a greedy matching procedure, described in \zcref{algo:greedy-matching-flux-construction}. Once the construction is specified, the required properties of \(J\) and \(j\) are straightforward to verify, and we omit the details.

    \begin{algorithm}[ht]
    \begin{algorithmic}[1]
        \Require A collection of paths \(\set{\gamma_{x \to y}}_{x, y \in \Omega}\), function \(D: \Omega \to \bb R\).
        \Ensure \(\sum_{x \in \Omega} D\tp{x} = 0\)
        \State Initialize \(F_{x \to y} \gets\) unit flow along \(\gamma_{x \to y}\) for all \(x, y \in \Omega\)
        \State Initialize \(j\tp{x, y} \gets 0\) for all \(x, y \in \Omega\)
        \State \(\Omega^- \gets \set{x \in \Omega \cmid D\tp{x} < 0}\), \(\Omega^+ \gets \set{x \in \Omega \cmid D\tp{x} > 0}\)
        \State \(x \gets \) some element in \(\Omega^-\), \(y \gets\) some element in \(\Omega^+\)
        \While{\(\Omega^-\) is not exhausted}
            \State \(j\tp{x, y} \gets \tp{\abs{D\tp{x}} \wedge \abs{D\tp{y}}}\)
            \State \(D\tp{x} \gets D\tp{x} + j\tp{x, y}\), \(D\tp{y} \gets D\tp{y} - j\tp{x, y}\)
            \If{\(D\tp{x} = 0\)}
                \State \(x \gets \) next element in \(\Omega^-\)
            \EndIf
            \If{\(D\tp{y} = 0\)}
                \State \(y \gets\) next element in \(\Omega^+\)
            \EndIf
        \EndWhile
        \State \Return \(\sum_{x, y \in \Omega} j\tp{x, y} F_{x \to y}\)
    \end{algorithmic}
    \caption{Greedy matching flux construction}
    \label{algo:greedy-matching-flux-construction}
    \end{algorithm}
\end{proof}

\begin{proof}[Proof of \zcref{lemma:bound-on-derivative-of-log-projected-measure-for-mean-field-Potts}]
    For brevity, let $H\tp{\*m} \defeq \sum_{a \in \stp{q}} m_a^2$. We can rewrite the projected measure as
    \[
    \bar{\bar{\pi}}_s\tp{\*m} = \frac{w_s\tp{\*m}}{Z_s},
    \]
    where $w_s\tp{\*m} \defeq r\tp{\*m} \binom{n}{\*m} \exp\tp{\frac{\beta\tp{s}}{n} H\tp{\*m}}$ is the unnormalized weight, and $Z_s = \sum_{\*m' \in \bar{\bar{\Omega}}} w_s\tp{\*m'}$ is the partition function.

    Taking the derivative of the unnormalized weight with respect to \(s\) yields
    \[
    \partial_s w_s\tp{\*m} = w_s\tp{\*m} \frac{\beta'\tp{s}}{n} H\tp{\*m}.
    \]
    Consequently, the derivative of the partition function is
    \[
    \partial_s Z_s = \sum_{\*m' \in \bar{\bar{\Omega}}} \partial_s w_s\tp{\*m'} = \sum_{\*m' \in \bar{\bar{\Omega}}} w_s\tp{\*m'} \frac{\beta'\tp{s}}{n} H\tp{\*m'}.
    \]
    Then we have
    \begin{align*}
    \partial_s \bar{\bar{\pi}}_s\tp{\*m} &= \frac{\tp{\partial_s w_s\tp{\*m}} Z_s - w_s\tp{\*m} \tp{\partial_s Z_s}}{Z_s^2} \\
    &= \frac{w_s\tp{\*m}}{Z_s} \frac{\partial_s w_s\tp{\*m}}{w_s\tp{\*m}} - \frac{w_s\tp{\*m}}{Z_s} \frac{\partial_s Z_s}{Z_s} \\
    &= \bar{\bar{\pi}}_s\tp{\*m} \left[ \frac{\beta'\tp{s}}{n} H\tp{\*m} - \sum_{\*m' \in \bar{\bar{\Omega}}} \frac{w_s\tp{\*m'}}{Z_s} \frac{\beta'\tp{s}}{n} H\tp{\*m'} \right] \\
    &= \frac{\beta'\tp{s}}{n} \bar{\bar{\pi}}_s\tp{\*m} \left[ H\tp{\*m} - \sum_{\*m' \in \bar{\bar{\Omega}}} \bar{\bar{\pi}}_s\tp{\*m'} H\tp{\*m'} \right]\\
    &= \frac{\beta'\tp{s}}{n} \bar{\bar{\pi}}_s\tp{\*m} \tp{\sum_{a \in \stp{q}} m_a^2 - \*E_{\*m \sim \bar{\bar{\pi}}_s}\stp{\sum_{a \in \stp{q}} m_a^2}}.
    \end{align*}

    To establish the absolute upper bound, observe that for any \(\*m \in \bar{\bar{\Omega}}\), the sum of squared components \(H\tp{\*m} = \sum_{a \in \stp{q}} m_a^2\) satisfies \(0 \le H\tp{\*m} \le \tp{\sum_{a \in \stp{q}} m_a}^2 = n^2\). 

    Since \(H\tp{\*m} \in \left[0, n^2\right]\), 
    \[
    \abs{\sum_{a \in \stp{q}} m_a^2 - \*E_{\*m \sim \bar{\bar{\pi}}_s}\stp{\sum_{a \in \stp{q}} m_a^2}} \le n^2.
    \]
    Therefore, we have
    \[
    \abs{\partial_s \bar{\bar{\pi}}_s\tp{\*m}} \le \frac{\beta'\tp{s}}{n} \bar{\bar{\pi}}_s\tp{\*m} \cdot n^2 = \beta'\tp{s} n \bar{\bar{\pi}}_s\tp{\*m}.
    \]
\end{proof}

\begin{proof}[Proof of \zcref{thm:mean-field-Potts-extremely-low-temperature}]
    If $n = 1$, the result is trivial. We hereafter assume $n \ge 2$.

    By the permutation symmetry of the Potts model, $\mu_{\beta_0}\tp{\sigma} = \mu_{\beta_0}\tp{\sigma'}$ for all $\sigma, \sigma' \in \Omega_0$. Since $\abs{\Omega_0} = q$, it follows that $\mu_{\beta_0}\tp{\sigma} < \frac{1}{q}$. On the other hand, by the definition of $\nu$, we have $\nu\tp{\sigma} = \frac{\eps_0}{2 q^n} + \tp{1 - \frac{\eps_0}{2}} \frac{1}{q} > \tp{1 - \frac{\eps_0}{2}} \frac{1}{q}$ for $\sigma \in \Omega_0$. Therefore,
    $$
    \log \frac{\mu_{\beta_0}\tp{\sigma}}{\nu\tp{\sigma}} < \log \frac{1 / q}{\tp{1 - \eps_0 / 2} / q} = -\log\tp{1 - \frac{\eps_0}{2}}.
    $$
    Using the inequality $-\log\tp{1 - x} < 2 x$ for $x \in \tp{0, 1 / 2}$, we obtain $-\log\tp{1 - \eps_0 / 2} < \eps_0$.

    For any $\sigma \in \Omega \setminus \Omega_0$, its magnetization vector $\*M\tp{\sigma}$ satisfies $\sum_{a \in \stp{q}} M_a\tp{\sigma} = n$ and $M_a\tp{\sigma} < n$ for all $a \in \stp{q}$. Since $M_a\tp{\sigma}$ are integers, the maximum possible value for the sum of squares under these constraints is attained when one color appears $n - 1$ times, another appears $1$ time, and the rest $0$ times, yielding $\sum_{a \in \stp{q}} M_a\tp{\sigma}^2 \le \tp{n - 1}^2 + 1^2 = n^2 - 2 n + 2$. 
    
    Pick any $\sigma' \in \Omega_0$, which has $\sum_{a \in \stp{q}} M_a\tp{\sigma'}^2 = n^2$. The density ratio is bounded by
    $$
    \frac{\mu_{\beta_0}\tp{\sigma}}{\mu_{\beta_0}\tp{\sigma'}} = \frac{\exp\tp{\frac{\beta_0}{n} \sum_{a \in \stp{q}} M_a\tp{\sigma}^2}}{\exp\tp{\frac{\beta_0}{n} n^2}} \le \exp\tp{\frac{\beta_0}{n} \tp{-2 n + 2}} = \exp\tp{-2 \beta_0 \tp{1 - \frac{1}{n}}}.
    $$
    Since $n \ge 2$,$\frac{\mu_{\beta_0}\tp{\sigma}}{\mu_{\beta_0}\tp{\sigma'}} \le e^{-\beta_0}$. 
    
    For $\sigma \notin \Omega_0$, the mixture distribution yields $\nu\tp{\sigma} = \frac{\eps_0}{2 q^n}$. Combined with $\mu_{\beta_0}\tp{\sigma'} < \frac{1}{q}$ for $\sigma'\in \Omega_0$, we can bound the log-ratio by
    $$
    \log \frac{\mu_{\beta_0}\tp{\sigma}}{\nu\tp{\sigma}} < \log \frac{e^{-\beta_0} \mu_{\beta_0}\tp{\sigma'}}{\frac{\eps_0}{2 q^n}} < \log \frac{e^{-\beta_0} / q}{\frac{\eps_0}{2 q^n}} = \tp{n - 1} \log q + \log \frac{2}{\eps_0} - \beta_0.
    $$
    Since $\beta_0 \ge n \log q + \log \frac{2}{\eps_0} > \tp{n - 1} \log q + \log \frac{2}{\eps_0}$, we have $\log \frac{\mu_{\beta_0}\tp{\sigma}}{\nu\tp{\sigma}} < 0$.

    Finally, 
    $$
    \-{KL}\tp{\mu_{\beta_0} \| \nu} = \sum_{\sigma \in \Omega} \mu_{\beta_0}\tp{\sigma} \log \frac{\mu_{\beta_0}\tp{\sigma}}{\nu\tp{\sigma}} \le \sup_{\sigma \in \Omega} \log \frac{\mu_{\beta_0}\tp{\sigma}}{\nu\tp{\sigma}} < \eps_0.
    $$
\end{proof}

\begin{proof}[Proof of \zcref{lemma:local-stability-Potts}]
    Recall that the elements of the transition rate kernel for the $\tp{q - 1}$-block Glauber dynamics are given by
    \[
    p_s\tp{\sigma, \sigma'} = \frac{1}{\binom{n}{q - 1}} \sum_{I \in \+I\tp{\sigma, \sigma'}} \frac{\pi_s\tp{\sigma'}}{\sum_{\*a \in \stp{q}^{q - 1}} \pi_s\tp{\sigma^{I \leftarrow \*a}}}.
    \]
    For a fixed $I$ and $\*a$, let $\eta \defeq \sigma^{I \leftarrow \*a}$ and $\xi \defeq \sigma'$ be two configurations that share the same $n - q + 1$ spins outside the block $I$. By the definition of the mean-field Potts model, 
    \[
    \frac{\pi_{s'}\tp{\eta} / \pi_{s'}\tp{\xi}}{\pi_s\tp{\eta} / \pi_s\tp{\xi}} = \exp\tp{\frac{\beta\tp{s'} - \beta\tp{s}}{n} \tp{H\tp{\eta} - H\tp{\xi}}},
    \]
    where $H\tp{\sigma} \defeq \sum_{a = 1}^q M_a\tp{\sigma}^2$. 
    
    We now bound the maximum possible difference $\abs{H\tp{\eta} - H\tp{\xi}}$. Let $\*x$ and $\*y$ denote the magnetization vectors of $\eta$ and $\xi$ restricted to the block $I$, respectively. Since both vectors are of length $q - 1$, we have $\sum_{a = 1}^q x_a = q - 1$ and $\sum_{a = 1}^q y_a = q - 1$. We can rewrite the difference in $H$ as:
    \begin{align*}
        H\tp{\eta} - H\tp{\xi} &= \sum_{a = 1}^q \tp{M_a\tp{\eta}^2 - M_a\tp{\xi}^2} \\
        &= \sum_{a = 1}^q \tp{M_a\tp{\eta} - M_a\tp{\xi}} \tp{M_a\tp{\eta} + M_a\tp{\xi}} \\
        &= \sum_{a = 1}^q \tp{x_a - y_a} \tp{M_a\tp{\eta} + M_a\tp{\xi}}.
    \end{align*}
    Notice that $\sum_{a = 1}^q \tp{x_a - y_a} = 0$, and for any $a\in [q]$, t$0 \le M_a\tp{\eta} + M_a\tp{\xi} \le 2 n$. By shifting all positive differences to the maximum possible weight $2 n$ and all negative differences to the minimum possible weight $0$, we can bound the sum:
    \[
    H\tp{\eta} - H\tp{\xi} \le \sum_{a: x_a > y_a} \tp{x_a - y_a}\cdot 2 n + \sum_{a: x_a < y_a} \tp{x_a - y_a} \cdot 0 \le 2 n \sum_{a = 1}^q x_a = 2 n \tp{q - 1}.
    \]
    Together with a symmetric argument for the lower bound, we have $\abs{H\tp{\eta} - H\tp{\xi}} \le 2 n \tp{q - 1}$. 
    
    Substituting this back into the exponential ratio, we get:
    \[
    \abs{ \frac{\beta\tp{s'} - \beta\tp{s}}{n} \tp{H\tp{\eta} - H\tp{\xi}} } \le \frac{\abs{\beta\tp{s'} - \beta\tp{s}}}{n} 2 n \tp{q - 1} = 2 \tp{q - 1} \abs{\beta\tp{s'} - \beta\tp{s}} \defeq \delta.
    \]
    This implies that for any $I$ and $\*c$, 
    \[
    e^{-\delta} \frac{\pi_s\tp{\sigma^{I \leftarrow \*a}}}{\pi_s\tp{\sigma'}} \le \frac{\pi_{s'}\tp{\sigma^{I \leftarrow \*a}}}{\pi_{s'}\tp{\sigma'}} \le e^{\delta} \frac{\pi_s\tp{\sigma^{I \leftarrow \*a}}}{\pi_s\tp{\sigma'}}.
    \]
    It immediately follows that
    \[
    e^{-\delta} p_s\tp{\sigma, \sigma'} \le p_{s'}\tp{\sigma, \sigma'} \le e^{\delta} p_s\tp{\sigma, \sigma'},
    \]
    which implies $\abs{\frac{p_{s'}\tp{\sigma, \sigma'}}{p_s\tp{\sigma, \sigma'}} - 1} \le e^{\delta} - 1$, concluding the first part of the lemma.

    For the second part, recall the annealing schedule $\beta\tp{s} = \beta_0 - \tp{\beta_0 - \beta} s$. For any $\abs{s' - s} < \eta$, we have $\abs{\beta\tp{s'} - \beta\tp{s}} = \abs{\beta_0 - \beta} \abs{s' - s} < \abs{\beta_0 - \beta} \eta$. 
    By choosing $\eta \le \frac{\eps}{24 \tp{q - 1} \abs{\beta_0 - \beta} T}$, we bound the exponent by:
    \[
    \delta < 2 \tp{q - 1} \abs{\beta_0 - \beta} \frac{\eps}{24 \tp{q - 1} \abs{\beta_0 - \beta} T} = \frac{\eps}{12 T}.
    \]
    For $x \in \left[0, 1/12\right]$, the elementary inequality $e^x - 1 \le x + x^2 \le x\tp{1 + 1/12} < 2 x$ holds. Since $\eps \in \tp{0, 1}$ and $T \ge 1$, we safely have $\delta \le 1/12$. Applying this inequality yields
    \[
    e^{\delta} - 1 \le 2 \delta < 2 \tp{\frac{\eps}{12 T}} = \frac{\eps}{6 T}.
    \]
    Taking the supremum over all edges $\set{\sigma, \sigma'} \in E$, we have $\max_{x \neq y} \abs{\frac{p_{s'}\tp{x, y}}{p_s\tp{x, y}} - 1} \le \frac{\eps}{6 T}$.
\end{proof}






\end{document}